\numberwithin{equation}{section}
\theoremstyle{plain}\newtheorem{definition}{Definition}[section]
\newtheorem{lem}[definition]{Lemma}
\newtheorem{proposition}[definition]{Proposition}
\newtheorem{cor}[definition]{Corollary}
\theoremstyle{remark}\newtheorem{remark}[definition]{Remark}
\theoremstyle{plain}\newtheorem{claim}{Claim}
\theoremstyle{plain}\newtheorem{assumption}{Assumption}
\theoremstyle{plain}\newtheorem{theorem}{Theorem}
\newcommand{\lemit}[1]{\begin{enumerate}[label={(\alph*)}, ref={\thelem\alph*}]{#1}\end{enumerate}}
\newcommand{\remit}[1]{\begin{enumerate}[label={(\alph*)}, ref={\theremark\alph*}]{#1}\end{enumerate}}
\newcommand{\D}{\mathcal{D}}
\newcommand{\R}{\mathbb{R}}
\newcommand{\C}{\mathbb{C}}
\newcommand{\N}{\mathbb{N}}
\newcommand{\fH}{\mathfrak{H}}
\newcommand{\Fock}{\mathcal{F}}
\newcommand{\Number}{\mathcal{N}}
\newcommand{\vac}{|\Omega\rangle}
\newcommand{\id}{\mathbbm{1}}
\newcommand{\cJ}{\mathcal{J}}
\newcommand{\cS}{\mathcal{S}}
\newcommand{\cL}{\mathcal{L}}
\newcommand{\fV}{\mathfrak{V}}
\newcommand{\cU}{\mathcal{U}}
\newcommand{\cQ}{\mathcal{Q}}
\newcommand{\fC}{\mathfrak{C}}
\newcommand{\fE}{\mathfrak{E}}
\newcommand{\HS}{\mathrm{HS}}
\newcommand{\op}{\mathrm{op}}
\let\textl\l
\renewcommand{\l}{\ell}
\renewcommand{\i}{\mathrm{i}}
\newcommand{\e}{\mathrm{e}}
\newcommand{\hc}{\mathrm{h.c.}}
\newcommand{\sym}{\mathrm{sym}}
\newcommand{\wf}{\mathrm{wf}}
\newcommand{\Tr}{\mathrm{Tr}}
\newcommand{\bPhi}{{\boldsymbol{\phi}}}
\newcommand{\bj}{\boldsymbol{j}}
\newcommand{\bk}{\boldsymbol{k}}
\newcommand{\bm}{\boldsymbol{m}}
\renewcommand{\hat}[1]{\widehat{#1}}
\renewcommand{\tilde}[1]{\widetilde{#1}}
\newcommand{\lr}[1]{\left\langle #1 \right\rangle}
\newcommand{\norm}[1]{\lVert#1\rVert}
\newcommand{\onorm}[1]{\lVert#1\rVert_\mathrm{op}}
\renewcommand{\d}{\mathop{}\!\mathrm{d}}
\newcommand{\dx}{\d x}
\newcommand{\dy}{\d y}
\newcommand{\dz}{\d z}
\newcommand{\xk}{x^{(k)}}
\newcommand{\Ubar}{\overline{U}}
\newcommand{\Vbar}{\overline{V}}
\newcommand{\ad}{a^\dagger}
\newcommand{\Ad}{A^\dagger}
\newcommand\mydots{,\makebox[1em][c]{.\hfil.\hfil.},}
\newcommand\mycdots{\makebox[1em][c]{$\cdot$\hfil$\cdot$\hfil$\cdot$}}
\DeclareFontFamily{OMX}{MnSymbolE}{}
\DeclareSymbolFont{MnLargeSymbols}{OMX}{MnSymbolE}{m}{n}
\DeclareFontShape{OMX}{MnSymbolE}{m}{n}{
    <-6>  MnSymbolE5
   <6-7>  MnSymbolE6
   <7-8>  MnSymbolE7
   <8-9>  MnSymbolE8
   <9-10> MnSymbolE9
  <10-12> MnSymbolE10
  <12->   MnSymbolE12
}{}
\DeclareFontShape{OMX}{MnSymbolE}{b}{n}{
    <-6>  MnSymbolE-Bold5
   <6-7>  MnSymbolE-Bold6
   <7-8>  MnSymbolE-Bold7
   <8-9>  MnSymbolE-Bold8
   <9-10> MnSymbolE-Bold9
  <10-12> MnSymbolE-Bold10
  <12->   MnSymbolE-Bold12
}{}
\let\llangle\@undefined
\let\rrangle\@undefined
\DeclareMathDelimiter{\llangle}{\mathopen}
                     {MnLargeSymbols}{'164}{MnLargeSymbols}{'164}
\DeclareMathDelimiter{\rrangle}{\mathclose}
                     {MnLargeSymbols}{'171}{MnLargeSymbols}{'171}
\newcommand\smallO[1]{
        \mathchoice
            {
                \ensuremath{\mathop{}\mathopen{}{\scriptstyle\mathcal{O}}\mathopen{}\left(#1\right)}
            }
            {
                \ensuremath{\mathop{}\mathopen{}{\scriptstyle\mathcal{O}}\mathopen{}\left(#1\right)}
            }
            {
                \ensuremath{\mathop{}\mathopen{}{\scriptscriptstyle\mathcal{O}}\mathopen{}\left(#1\right)}
            }
            { 
                \ensuremath{\mathop{}\mathopen{}{o}\mathopen{}\left(#1\right)}
            }
    }
\newcommand{\fHN}{{\fH^N}}
\newcommand{\fHp}{\fH_\perp}
\newcommand{\HN}{H_N}
\newcommand{\Vext}{V^\mathrm{ext}}
\newcommand{\lN}{\lambda_N}
\newcommand{\PsiN}{\Psi_N}
\newcommand{\PsiNn}{\PsiN^{(n)}}
\newcommand{\EN}{\mathscr{E}_N}
\newcommand{\ENn}{\mathscr{E}_N^{(n)}}
\newcommand{\ENz}{\mathscr{E}_N^{(0)}}
\newcommand{\ENnu}{\mathscr{E}_N^{(\nu)}}
\newcommand{\fENn}{\fE_N^{(n)}}
\newcommand{\tfENnu}{\tilde{\fE}_N^{(\nu)}}
\newcommand{\fEn}{\fE^{(n)}}
\newcommand{\fEmo}{\fE^{(-1)}}
\newcommand{\fEzn}{\fEn_0}
\newcommand{\tfEnu}{\tilde{\fE}^{(\nu)}}
\newcommand{\PNn}{P^{(n)}_N}
\newcommand{\dnu}{\delta^{(\nu)}_N}
\newcommand{\dzn}{\delta^{(n)}_0}
\newcommand{\hH}{h}
\newcommand{\eH}{e_\mathrm{H}}
\newcommand{\mH}{\mu_\mathrm{H}}
\newcommand{\cEH}{\mathcal{E}_\mathrm{H}}
\newcommand{\gapH}{g_\mathrm{H}}
\renewcommand{\DH}{\D_\mathrm{H}}
\newcommand{\Om}{O^{(m)}}
\newcommand{\Omj}{\Om_{j_1\mydots j_m}}
\newcommand{\Omom}{\Om_{1\mydots m}}
\newcommand{\UNp}{U_{N,\varphi}}
\newcommand{\FNp}{{\Fock_\perp^{\leq N}}}
\newcommand{\FgNp}{{\Fock_\perp^{>N}}}
\newcommand{\ChiN}{\Chi_{\leq N}}
\newcommand{\ChiNn}{\Chi^{(n)}_{\leq N}}
\newcommand{\FockHN}{\FockH_{\leq N}}
\newcommand{\Fp}{{\Fock_\perp}}
\newcommand{\FockH}{\mathbb{H}}
\newcommand{\FockHz}{\FockH_0}
\newcommand{\FockHo}{\FockH_1}
\newcommand{\FockHt}{\FockH_2}
\newcommand{\FockHj}{\FockH_j}
\newcommand{\FockHnu}{\FockH_\nu}
\newcommand{\FockHjo}{\FockH_{j_1}}
\newcommand{\FockHjt}{\FockH_{j_2}}
\newcommand{\FockHjnu}{\FockH_{j_\nu}}
\newcommand{\tFockH}{\FockH^<}
\newcommand{\FockHminus}{\FockH^<}
\newcommand{\FockHplus}{\FockH^>}
\newcommand{\FockR}{\mathbb{R}}
\newcommand{\FockRz}{\FockR_0}
\newcommand{\FockRza}{\FockRz^{(1)}}
\newcommand{\FockRoa}{\FockRo^{(1)}}
\newcommand{\FockRo}{\FockR_1}
\newcommand{\FockRj}{\FockR_j}
\newcommand{\FockRnu}{\FockR_\nu}
\newcommand{\tFockRt}{\tilde{\FockR}^{(2)}}
\newcommand{\tFockRth}{\tilde{\FockR}^{(3)}}
\newcommand{\tFockRtha}{\tilde{\FockR}^{(3)}_a}
\newcommand{\tFockRta}{\tilde{\FockR}^{(2)}_a}
\newcommand{\Chi}{{\boldsymbol{\chi}}}
\newcommand{\Chin}{\Chi^{(n)}}
\newcommand{\Chinl}{\Chi^{(n,\l)}}
\newcommand{\Chinm}{\Chi^{(n,m)}}
\newcommand{\Chiz}{\Chi_0}
\newcommand{\Chizn}{\Chiz^{(n)}}
\newcommand{\Chizz}{\Chiz^{(0)}}
\newcommand{\Chiznm}{\Chiz^{(n,m)}}
\newcommand{\Chiznmu}{\Chiz^{(n,\mu)}}
\renewcommand{\P}{\mathbb{P}}
\newcommand{\Pn}{\P^{(n)}}
\newcommand{\Q}{\mathbb{Q}}
\newcommand{\Qn}{\Q^{(n)}}
\newcommand{\Pz}{\P_0}
\newcommand{\Pzn}{\Pz^{(n)}}
\newcommand{\Qz}{\Q_0}
\newcommand{\Qzn}{\Qz^{(n)}}
\newcommand{\Ez}{E_0}
\newcommand{\Ezo}{\Ez^{(1)}}
\newcommand{\Ezz}{\Ez^{(0)}}
\newcommand{\Ezn}{\Ez^{(n)}}
\newcommand{\En}{E^{(n)}}
\newcommand{\Ezero}{E^{(0)}}
\newcommand{\Eo}{E^{(1)}}
\newcommand{\Emo}{E^{(-1)}}
\newcommand{\Enu}{E^{(\nu)}}
\newcommand{\In}{\iota^{(n)}}
\newcommand{\gz}{g_0}
\newcommand{\gzn}{\gz^{(n)}}
\newcommand{\gzz}{g_0^{(0)}}
\newcommand{\fgn}{\mathfrak{g}^{(n)}}
\newcommand{\Np}{\Number_\perp}
\newcommand{\Ko}{K_1}
\newcommand{\Kt}{K_2}
\newcommand{\Kth}{K_3}
\newcommand{\Kf}{K_4}
\newcommand{\boldKz}{\mathbb{K}_0}
\newcommand{\boldKo}{\mathbb{K}_1}
\newcommand{\boldKt}{\mathbb{K}_2}
\newcommand{\boldKth}{\mathbb{K}_3}
\newcommand{\boldKf}{\mathbb{K}_4}
\newcommand{\boldKtbar}{\mathbb{K}_2^*}
\newcommand{\boldKthbar}{\mathbb{K}_3^*}
\renewcommand{\dj}{d^{(j)}}
\newcommand{\UP}{\cU_P}
\newcommand{\Pnl}{\Pn_\l}
\newcommand{\ResHz}{\frac{1}{z-\FockH}}
\newcommand{\RestHz}{\frac{1}{z-\tFockH}}
\newcommand{\ResHzz}{\frac{1}{z-\FockHz}}
\newcommand{\FockT}{\mathbb{T}}
\newcommand{\FockS}{\mathbb{S}}
\newcommand{\FockE}{\mathbb{E}}
\newcommand{\FockO}{\mathbb{O}}
\newcommand{\FockOn}{\FockO^{(n)}}
\newcommand{\FockOnk}{\FockOn_k}
\newcommand{\FockOnz}{\FockOn_0}
\newcommand{\FockOno}{\FockOn_1}
\newcommand{\tFockOn}{\tilde{\FockO}^{(n)}}
\newcommand{\tFockOnz}{\tilde{\FockO}^{(n)}_0}
\newcommand{\tFockOno}{\tilde{\FockO}^{(n)}_1}
\newcommand{\tFockOnt}{\tilde{\FockO}^{(n)}_2}
\newcommand{\FockCn}{\mathbb{C}^{(n)}}
\newcommand{\FockB}{\mathbb{B}}
\newcommand{\FockBPn}{\FockB_P^{(n)}}
\newcommand{\FockBQn}{\FockB_Q^{(n)}}
\newcommand{\FockAnjnu}{\FockA^{(n)}_{\bj}}
\newcommand{\FockBnjm}{\FockB^{(n)}_{\bj}}
\newcommand{\tFockAnkjnu}{\tilde{\FockA}^{(n)}_{k,\,\bj}}
\newcommand{\FockI}{\mathbb{I}}
\newcommand{\FockIn}{\FockI^{(n)}}
\newcommand{\FockInk}{\FockIn_k}
\newcommand{\ResInz}{\frac{\FockIn}{z-\FockHz}}
\newcommand{\ResInzbar}{\frac{\FockIn}{\overline{z}-\FockHz}}
\newcommand{\gan}{\gamma^{(n)}}
\newcommand{\goint}{\oint_{\gan}}
\newcommand{\Am}{A^{(m)}}
\newcommand{\Amj}{\Am_{j_1\mydots j_m}}
\newcommand{\Amom}{\Am_{1\mydots m}}
\newcommand{\cAm}{\mathcal{A}^{(m)}_N}
\newcommand{\expAm}{\langle A\rangle^{(m)}}
\newcommand{\expAo}{\langle A\rangle^{(1)}}
\newcommand{\FockA}{\mathbb{A}}
\newcommand{\FockAmred}{\FockA^{(m)}_\mathrm{red}}
\newcommand{\FockAm}{\FockA^{(m)}_N}
\newcommand{\BogU}{\mathbb{U}_\BogV}
\newcommand{\BogV}{\mathcal{V}}
\newcommand{\BogUz}{\mathbb{U}_\BogVz}
\newcommand{\BogVz}{{\mathcal{V}_0}}
\newcommand{\CVz}{C_\BogVz}
\title{Asymptotic expansion of low-energy excitations for weakly interacting bosons}
\author{Lea Boßmann\thanks{Institute of Science and Technology Austria, Am Campus 1, 3400 Klosterneuburg, Austria. \texttt{lea.bossmann@ist.ac.at}},\;
Sören Petrat\thanks{Department of Mathematics and Logistics, Jacobs University Bremen, Campus Ring 1, 28759 Bremen, Germany; and University of Bremen, Department 3 – Mathematics, Bibliothekstr. 5, 28359 Bremen, Germany. \texttt{s.petrat@jacobs-university.de}},\;
and Robert Seiringer\thanks{Institute of Science and Technology Austria, Am Campus 1, 3400 Klosterneuburg, Austria. \texttt{robert.seiringer@ist.ac.at}}}
\date{\today}
\begin{document}
\maketitle
 
\begin{abstract}
We consider a system of $N$ bosons in the  mean-field scaling regime for a class of interactions including the repulsive Coulomb potential. We derive an asymptotic expansion of the low-energy eigenstates  and the corresponding energies, which provides corrections to Bogoliubov theory to any order in $1/N$.
\end{abstract}

\section{Introduction}
We consider a system of $N$ interacting bosons in $\R^d$, $d\geq1$, which are described by the $N$-body Hamiltonian
\begin{equation}\label{HN}
H_N=\sum\limits_{j=1}^N\left(-\Delta_j+\Vext(x_j)\right)+\lN\sum\limits_{1\leq i<j\leq N}v(x_i-x_j)
\end{equation}
with coupling parameter
$$ \lN:=\frac{1}{N-1}\,,$$
corresponding to a mean-field (or Hartree) regime of weak and long-range interactions. The Hamiltonian $\HN$ acts on the Hilbert space of square integrable, permutation symmetric functions on $\R^{dN}$,
\begin{equation*}
\fH^N_\sym:=\bigotimes\limits_\sym^N\fH\,,\qquad \fH:=L^2(\R^d)\,.
\end{equation*}
Our assumptions on the interaction $v$ include the repulsive Coulomb potential ($d=3$), and our conditions on the external trap $\Vext$ are satisfied, e.g., by  harmonic potentials.
We study the spectrum\footnote{We follow the convention to count eigenvalues without multiplicity.} of $\HN$, 
$$\EN^{(0)}<\EN^{(1)}<\dots<\ENn<\dots\,,$$
for excitation energies of order one above the ground state, as well as the corresponding eigenfunctions.
Our main result is an asymptotic expansion of the eigenvalues of $\HN$, which, in the case where the degeneracy does not change in the limit $N\to\infty$, reads
\begin{equation}\label{intro:as:expansion}
\ENn=N\eH + \Ezn + \lN \En_1 + \lN^2\En_2+ \lN^3\En_3+\dots\,,
\end{equation}  
where the $N$-dependence is exclusively in the prefactors $N$ and $\lN$. More precisely, we construct  an asymptotic expansion of the spectral projectors of $\HN$, which implies \eqref{intro:as:expansion}. For eigenvalues whose degeneracy increases in the limit $N\to\infty$, we obtain a comparable result for the sum of those eigenvalues that become degenerate in the limit.
\smallskip

Let us explain the different contributions in \eqref{intro:as:expansion}.
It is well known (see, e.g., \cite{seiringer2011, grech2013, lewin2014, lewin2015_2,LSSY}) that, for  any fixed $n\in\N_0$, the eigenstates $\PsiNn$ of $\HN$ associated with $\ENn$ exhibit Bose--Einstein condensation (BEC) in the minimizer $\varphi$ of the Hartree functional. As \eqref{HN} describes a mean-field regime, the leading order in \eqref{intro:as:expansion} is given by
\begin{equation}\label{intro:leading:order}
\ENn= \lr{\PsiNn,\HN\PsiNn}=\lr{\varphi^{\otimes N},\HN\varphi^{\otimes N}}+\mathcal{O}(1)\,, 
\end{equation}
with
\begin{equation}
\lr{\varphi^{\otimes N},\HN\varphi^{\otimes N}}
=N\lr{\varphi,\left(-\Delta+\Vext+\tfrac12v*\varphi^2\right)\varphi} =:N\eH\,.
\end{equation}
For corresponding results in more singular scaling limits, see  \cite{lieb1999,lieb2002,lieb2006,nam2016_2,boccato2017,boccato2018_2,nam2020,adhkari2020} and \cite{lieb1998,lieb_solovej2001,lieb_solovej2004,solovej2006,erdoes_GS_2008,giuliani2009, yau2009,brietzke2020,brietzke2020_2,fournais2019}.
 
The error in  \eqref{intro:leading:order} is caused by $\mathcal{O}(1)$ particles  which are excited from the condensate.
To compute their energy, one decomposes $\PsiNn$ into contributions from  condensate and  excitations, as was first proposed in \cite{lewin2015_2}. The excitations form a vector in a truncated Fock space over the orthogonal complement of $\varphi$, and the relation between $\PsiNn$ and the corresponding excitation vector is given by a unitary map
\begin{equation}\label{intro:UNp}
\UNp:\fHN\to\FNp:=\bigoplus\limits_{k=0}^N\bigotimes\limits_\sym^k\left\{\varphi\right\}^\perp\,,\qquad \PsiN\mapsto \UNp\PsiNn=:\ChiNn\,,
\end{equation}
with the usual notation $\left\{\varphi\right\}^\perp :=\left\{\phi\in \fH: \lr{\phi,\varphi}_{\fH}=0\right\}$. Hence, 
\begin{equation}
\ENn\;=\;N\eH+\lr{\UNp\PsiNn,\FockHN\UNp\PsiNn}_{\FNp}\,,
\end{equation}
where
\begin{equation}
\FockHN\;:=\;\UNp\left(\HN-N\eH\right)\UNp^*:\FNp\to\FNp
\end{equation}
describes the energy due to excitations from the condensate.

By construction, the excitation Hamiltonian $\FockHN$ is explicitly $N$-dependent. 
To extract the contributions to the energy to each order in $\lN$, we extend $\FockHN$ trivially to an operator $\FockH$ acting on the  full excitation Fock space $\Fp$ and expand it formally as
\begin{equation}\label{intro:taylor}
\FockH \;=\;\FockHz+\sum\limits_{j\geq 1}\lN^\frac{j}{2}\FockHj\,.
\end{equation}
The coefficients $\FockHj$ are $N$-independent operators on $\Fp$, which are explicitly given in terms of $\varphi$ and $v$ (see  Definitions \ref{def:FockHz} and \ref{def:Pna}). In particular, $\FockHj$ contains an even number of creation/annihilation operators for $j$ even, and an odd number for $j$ odd.

The leading order term $\FockHz$ is the well-known Bogoliubov Hamiltonian, which was first proposed by Bogoliubov in 1947 \cite{bogoliubov1947}. 
It is quadratic in the number of creation/annihilation operators and can be diagonalized by Bogoliubov transformations.
The spectrum of $\FockHz$ gives the $\mathcal{O}(1)$ contribution in \eqref{intro:as:expansion}, i.e., for any $\nu\in\N_0$, there exists an eigenvalue $\Ezn$ of $\FockHz$ such that
\begin{equation}\label{intro:Bogoliubov}
\lim\limits_{N\to\infty}\left(\ENnu-N\eH\right)\;=\;\Ezn\,,
\end{equation}
with
$$\Ezz<\Ezo<\dots<\Ezn<\dots$$ 
the eigenvalues of $\FockHz$. For bounded interactions $v$, this was shown in \cite{seiringer2011} for the homogeneous setting, and in \cite{grech2013} for the inhomogeneous case. Lewin, Nam, Serfaty and Solovej \cite{lewin2015_2}  proved \eqref{intro:Bogoliubov} for a larger class of models, including a class of unbounded interaction potentials as well as a variety of one-particle operators.  Moreover, related results on the torus were obtained  in \cite{mitrouskas_PhD,nam2018}.
All error estimates proven in \cite{seiringer2011,grech2013,lewin2015_2,mitrouskas_PhD} are at best of the order $\mathcal{O}(N^{-1/2})$.  We refer to \cite{derezinski2014,nam2014,boccato2017_2,boccato2018} for similar results in more singular scaling limits.\\

In this paper, we derive the remaining terms in the expansion \eqref{intro:as:expansion}.
To keep the notation simple, we restrict---for the remainder of this introduction---to the (non-degenerate) ground state.
Formally, the coefficients in \eqref{intro:as:expansion} can be determined by Rayleigh--Schrödinger perturbation theory in the small parameter $\lN^{1/2}$.
Let us denote by $\Chiz$ the (non-degenerate) normalized ground state of $\FockHz$, and by $\Pz$ and $\Qz$ the corresponding orthogonal projections on $\Fp$, i.e.,
\begin{equation}
\FockHz\Chiz=\Ezz\Chiz\,,\qquad \Pz=|\Chiz\rangle\langle\Chiz|\,,\qquad  \Qz=\id-\Pz\,.
\end{equation}
By \eqref{intro:taylor}, the first order perturbation of $\FockHz$ is
\begin{equation}
\FockH=\FockHz+\lN^\frac12\FockHo+\mathcal{O}(\lN)\,,
\end{equation}
hence first order perturbation theory yields (see, e.g., \cite[Chapter 5]{sakurai})
\begin{equation}\label{intro:first:order:PT}
\ENz-N\eH\; =\; \Ezz+\lN^\frac12\lr{\Chiz,\FockHo\Chiz}_\Fp + \mathcal{O}(\lN) 
\;=\; \Ezz+\mathcal{O}(\lN)\,.
\end{equation}
Here, the $\mathcal{O}(\lN^{1/2})$ contribution vanishes by Wick's rule because $\FockHo$ contains an odd number of creation/annihilation operators and $\Chiz$ is quasi-free. 
For the next order, second order perturbation theory for the Hamiltonian
\begin{equation}
\FockH=\FockHz+\lN^\frac12\FockHo+\lN\FockHt+\mathcal{O}\big(\lN^{3/2}\big)
\end{equation}
yields
\begin{equation}\label{intro:second:order:PT}
\ENz-N\eH=\Ezz+\lN\bigg\langle\Chiz,\bigg(\FockHt+\FockHo\frac{\Qz}{\Ezz-\FockHz}\FockHo\bigg)\Chiz\bigg\rangle_\Fp + \mathcal{O}(\lN^2)\,,
\end{equation}
and the higher orders are constructed similarly. In particular, all terms in the expansion corresponding to half-integer powers of $\lN$ vanish. 

In our main result, we make this formal argument rigorous by proving an asymptotic expansion for the ground state projector $\P$ of $\FockH$.
Recall that
\begin{equation}\label{intro:functional:calculus}
\P=\frac{1}{2\pi\i}\oint_\gamma\ResHz\dz\,,\qquad \Pz=\frac{1}{2\pi\i}\oint_\gamma\ResHzz\dz\,,
\end{equation}
for some closed contour $\gamma$ which encloses both $\ENz-N\eH$ and $\Ezz$ and leaves the remaining spectra of $\FockH$ and $\FockHz$ outside. The existence of such a contour with length of order one is, for sufficiently large $N$, guaranteed by \eqref{intro:Bogoliubov}.
Using  \eqref{intro:taylor}, we expand the resolvent of $\FockH$ around the resolvent of $\FockHz$, which results in an expansion of $\P$, and the trace against $\FockH$ recovers  \eqref{intro:first:order:PT} and \eqref{intro:second:order:PT} (see Theorem~\ref{thm:energy}). 
Finally, we show that the error is sub-leading with respect to the order of the approximation.

In fact, we prove a more general statement, which can be understood as asymptotic expansion of the ground state of $\HN$:
For any operator $A^{(m)}$ on $\fH^m$ that is relatively bounded with respect to $\sum_{j=1}^m(-\Delta_j+\Vext(x_j))$, it holds that
\begin{equation}
\Tr_\fHN\cAm P_N \;=\; 
\Tr_\Fp\FockAm \Pz
+\sum\limits_{\l=1}^a \lN^\frac{\l}{2}\Tr_\Fp\FockAm \P_\l+\mathcal{O}\big(\lN^{\frac{a+2}{2}}\big) 
\,,
\end{equation}
where $P_N$ denotes the projector onto the ground state of $\HN$, $\cAm$ is the symmetrized version of $\Am$ on $\fHN$, $\FockAm$ denotes the conjugation of $\cAm$ with $\UNp$, and $\P_\l$ is the $\l$'th order in the expansion of the projector $\P$. The full statement, which extends to excited states with energies of order one above the ground state, is given in Theorem \ref{thm:exp:P}.

Our analysis is restricted to the mean-field regime. It is an open question whether a similar statement holds true for interaction potentials that converge to a delta distribution as $N\to\infty$. \medskip

In the physics literature, higher order corrections to the Lee--Huang--Yang formula for the ground state energy of a low-density Bose gas with short-range interactions have been studied already in the 1950's in \cite{brueckner1957_1,brueckner1957_2,beliaev1958_1,beliaev1958_2,wu1959}, and a series expansion for the ground state energy was conjectured in \cite{sawada1959,hugenholtz1959}. We refer to \cite{braaten1999,braaten2001,weiss2004} for more recent contributions.
However, to the best of our knowledge,  the rigorous derivation of higher order corrections to the Bogoliubov energy in the mean-field scaling has not been studied before.
Other approaches to perturbations around Bogoliubov theory are based on the ideas of renormalization group and constructive field theory, which is very different from our rather direct approach. We refer to \cite{cenatiempo2014} for recent results and a review of the literature, which mostly treats more singular scalings than the mean-field regime. 

Another approach was proposed by Pizzo in \cite{pizzo2015,pizzo2015_2,pizzo2015_3}, where he considers a Bose gas on a torus in the mean-field regime. He constructs an expansion for the ground state and a fixed-point equation for the ground state energy, first for a simpler three-modes Bogoliubov Hamiltonian \cite{pizzo2015}, and subsequently, building on these results, for a Bogoliubov Hamiltonian \cite{pizzo2015_2} and the full Hamiltonian \cite{pizzo2015_3}. The main result is norm convergence of the expansion to the ground state to arbitrary precision. This expansion is based on a multi-scale
analysis in the number of excitations around a product state using Feshbach maps. In contrast to our work, this is done in the $N$-particle space, whereas we make use of the $N$-dependent unitary map $\UNp$ to work in the excitation Fock space $\Fp$.

Finally, we remark that our work is inspired by \cite{QF}, where an analogous expansion of the dynamics generated by $\HN$ was constructed. Related results for the mean-field dynamics in Fock space have been obtained in \cite{ginibre1980,ginibre1980_2}, and different approaches characterizing the dynamics to any order in $1/N$ were discussed in \cite{paul2019,corr}. Let us also note that there are many recent results on the derivation of the Bogoliubov dynamics in the mean-field  regime \cite{grillakis2010,grillakis2011,lewin2015,mitrouskas2016} as well as in more singular scaling limits \cite{grillakis2013,boccato2015,nam2015,grillakis2017,kuz2017,nam2017,chong2016,
brennecke2017_2,petrat2017}.

\subsection*{Notation}
\begin{itemize}
\item 
We denote by $\fC$ an expression which may depend on constants fixed by the model, i.e., constants whose values depend on $\hH$ and $\FockHz$, such as norms of the Hartree minimizer $\varphi$, the gap $\gapH$ above the ground state of $\hH$, and norms $\onorm{U_0}$ (the operator norm), $\norm{V_0}_\HS$ (the Hilbert--Schmidt norm) of the Bogoliubov transformation diagonalizing $\FockHz$.
The notation $\fC(n)$ indicates that the constant may also depend on the number $n$ of the corresponding eigenvalue of $\FockHz$, such as $|\Ezn|$, its degeneracy $\dzn$, and the spectral gap above it.
Finally, $\fC(n,a)$ implies the dependence on an additional parameter $a$.
Constants may vary from line to line.
\item Eigenvalues are always counted without multiplicity, i.e., the (discrete) spectrum of an operator $T$ is denoted as $t^{(0)}< t^{(1)}< t^{(2)}<\dots$, where each eigenvalue $t^{(j)}$ has some finite multiplicity $\delta^{(j)}\geq1$.
\item We denote by $\bj:=(j_1\mydots j_{n})$ a multi-index  and  define $|\bj|:=j_1+\dots+j_n$. Moreover, we abbreviate
\begin{equation}
\xk:=(x_1\mydots x_k)\,,\qquad \d\xk:=\dx_1\mycdots\dx_k
\end{equation} 
for $k\geq 1$ and $x_j\in{\R^d}$.
\end{itemize}

\section{Preliminaries}\label{sec:prelim}
\subsection{Assumptions}\label{sec:assumptions}

We make the following assumptions on the external potential $\Vext$ and the interaction $v$:

\begin{assumption}\label{ass:V}
Let $\Vext:\R^d\to\R$ be measurable, locally bounded and  non-negative and let $\Vext(x)$ tend to infinity as $|x|\to  \infty$, i.e.,
\begin{equation}
\inf\limits_{|x|>R}\Vext(x)\to\infty \text{ as } R\to \infty\,.
\end{equation}
\end{assumption}

Assumption \ref{ass:V} implies that $\Vext$ must be a confining potential. It is, for example, satisfied by $\Vext(x)=\omega x^2$ for  $\omega>0$. 
Let us introduce the abbreviation
\begin{equation}\label{def:T}
 T:\fH\supset\D(T)\to\fH\,,\qquad T:=-\Delta+\Vext\,.
\end{equation}
We denote by 
$$T_j:=\id\otimes\mycdots\otimes\id\otimes T\otimes\id\otimes\mycdots\otimes\id$$
the operator acting as $T$ on the $j$th coordinate.

\begin{assumption}\label{ass:v}
Let $v:\R^d\to\R$ be measurable with $v(-x)=v(x)$ and $v\not\equiv 0$, and assume that there exists a constant $C>0$ such that, in the sense of operators on $\cQ(-\Delta)=H^1(\R^d)$,
\begin{equation}
|v|^2\leq  C\left(1-\Delta\right)\,.  \label{eqn:ass:v:2:Delta:bound}
\end{equation}
Besides, assume that $v$ is of positive type, i.e., that it has a non-negative Fourier transform.
\end{assumption}

Assumption \ref{ass:v} is clearly satisfied by any bounded potential with positive Fourier transform. Moreover, by Hardy's inequality, it is fulfilled by the repulsive Coulomb potential in $d= 3$ dimensions.

\begin{remark}
\remit{
\item 
Note that \eqref{eqn:ass:v:2:Delta:bound} implies that
\begin{equation}
2|v(x_1-x_2)|\leq 1+|v(x_1-x_2)|^2\leq \fC\left(-\Delta_1-\Delta_2+1\right)\;\leq\;\fC(T_1+T_2+1)  \label{eqn:ass:v:2:T:bound}
\end{equation}
in the sense of operators on $\cQ(T_1+T_2)\subset\fH^2$ because $\Vext\geq 0$.
In particular, 
\begin{eqnarray}\label{eqn:ass:v:v*phi^2}
\norm{v*\phi^2}_\infty &\leq& \fC\left(\norm{\nabla\phi}^2+1\right)\,,\\[5pt]
\lr{\phi\otimes\phi,|v(x-y)|^2\phi\otimes\phi}_{\fH^2}
&\leq& \fC\lr{\phi,(T+1)\phi}  
\label{eqn:HS:norm:K}
\end{eqnarray}
for any normalized $\phi\in\cQ(T)$. Moreover, $v$ being of positive type implies that
\begin{equation}\label{eqn:ass:v:positivity}
\int_{\R^{2d}}\dx\dy \,\overline{\phi(x)}v(x-y)\phi(y) \geq 0\,.
\end{equation}
\item
Assumptions \ref{ass:V} and \ref{ass:v} imply that $|v|^2\leq \varepsilon T^2 + C^2\varepsilon^{-1}+C$ for any $\varepsilon>0$, hence $\HN$ is (for each $N$) self-adjoint on its domain $\D\big(\sum_{j=1}^N T_j\big)$ by the Kato--Rellich theorem.

\item Since $\Vext$ is measurable and locally bounded and tends to infinity, it is bounded below, and we take its lower bound to be zero only for convenience.
}
\end{remark}

Next, we recall the Hartree energy functional, which is defined as
\begin{equation}\label{def:Hartree:functional}
\cEH[\phi]:=\int\limits_{\R^d}\left(|\nabla \phi(x)|^2+\Vext(x)|\phi(x)|^2\right)\dx
+\tfrac12\int\limits_{\R^{2d}}v(x-y)|\phi(x)|^2|\phi(y)|^2\dx\dy
\end{equation}
for $\phi\in\DH$ with
\begin{equation}
\DH:=\left\{\phi\in\cQ(T)\,:\, \norm{\phi}_\fH=1\right\}\subset \fH\,.
\end{equation}
Its infimum is denoted by
\begin{equation}
\eH:=\inf\limits_{\phi\in\DH} \cEH[\phi]\,.
\end{equation}
Under Assumptions \ref{ass:V} and \ref{ass:v}, $\cEH$ admits a unique, strictly positive minimizer $\varphi$, which solves the stationary Hartree equation:

\begin{lem}\label{lem:hH}
Let Assumptions \ref{ass:V} and \ref{ass:v} hold. 
\lemit{
\item\label{lem:hH:minimizer}
There exists a unique (up to a  phase) $\varphi\in\DH$ such that 
$$\cEH[\varphi]=\eH\,,$$
and we choose $\varphi$ strictly positive. The minimizer $\varphi$ solves the stationary Hartree equation,
\begin{equation}
\hH\varphi=0\,,
\end{equation}
in the sense of distributions,
where
\begin{equation}\label{h:H}
\hH:\fH\supset\D(T)\to\fH\,,\qquad \hH:T+v*\varphi^2-\mH\,
\end{equation}
with Lagrange multiplier $\mH\in\R$ given by
\begin{equation}
\mH:=\lr{\varphi,\left(T+v*\varphi^2\right)\varphi}\,.
\end{equation} 
\item \label{lem:hH:spectrum}
The operator $\hH$ is self-adjoint on its domain $\D(T)$ and its spectrum is purely discrete.
The minimizer $\varphi$ of $\cEH$ is the unique ground state $\varphi$ of $\hH$, and there exists a complete set of normalized eigenfunctions $\{\varphi_j\}_{j\geq 0}$ for $\hH$. 
Spectrum and eigenstates of $\hH$ are denoted as
\begin{equation}\label{eqn:spectrum:hH}
\hH\varphi_j=\varepsilon^{(j)}\varphi_j\,,
\qquad 0=\varepsilon^{(0)}<\varepsilon^{(1)}<\dots\,,\qquad
\varphi_0:=\varphi\,.
\end{equation} 
In particular, the spectral gap $\gapH$ above the ground state of $\hH$ is positive,
\begin{equation}\label{eqn:gap:hH}
\gapH:=\varepsilon^{(1)}-\varepsilon^{(0)}=\varepsilon^{(1)}>0\,.
\end{equation}
\item \label{lem:hH:K}
Define $K:\fH\to\fH$ as the  operator with kernel
\begin{equation}\label{def:K:kernel}
 K(x;y):=v(x-y)\varphi(x)\varphi(y)\,.
\end{equation}
Then $K$ is positive and  Hilbert--Schmidt. Moreover,
\begin{equation}
\mathcal{A}:=\begin{pmatrix} \hH+qKq & qKq \\[2pt] qKq &\hH+qKq \end{pmatrix} \geq \gapH>0\;\text{ on }\;\fHp\oplus\fHp
\end{equation}
for $\fHp:=\left\{\varphi\right\}^\perp$ and where $q$ denotes the orthogonal projection onto $\fHp$, i.e., 
\begin{equation}\label{p_and_q}
p:=|\varphi\rangle\langle\varphi|\,,\qquad q:=\id_{\fH}-p\,.
\end{equation}
}
\end{lem}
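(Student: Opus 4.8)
The plan is to prove the three parts in order: (a) by the direct method of the calculus of variations together with a convexity argument, (b) by Kato--Rellich combined with the compactness afforded by the confining potential, and (c) by a short direct computation once (a) and (b) are in hand. \emph{Part (a).} On $\DH$ one has $\cEH\geq0$, since the kinetic and external terms are non-negative and the interaction term $\tfrac12\int_{\R^{2d}}v(x-y)|\phi(x)|^2|\phi(y)|^2\,\dx\,\dy$ is non-negative by \eqref{eqn:ass:v:positivity}. Along a minimizing sequence $(\phi_n)$ the interaction stays non-negative, so $\lr{\phi_n,T\phi_n}$ is bounded; since $\Vext(x)\to\infty$, the form domain $\cQ(T)$ embeds compactly into $\fH$, so a subsequence converges weakly in $\cQ(T)$ and strongly in $\fH$, and the normalization passes to the limit. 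Weak lower semicontinuity of the $T$-form together with convergence of the interaction term---obtained by splitting $v$ into a bounded part, handled by strong $\fH$-convergence, and a tail dominated via \eqref{eqn:ass:v:2:T:bound} uniformly in $n$---yields a minimizer $\varphi$, which may be taken $\geq0$ since $|\varphi|$ has no larger energy. Uniqueness follows from convexity: writing $\rho=|\phi|^2$, the kinetic term $\int|\nabla\sqrt\rho|^2$ is strictly convex in $\rho$, the external term is linear, and the interaction term is convex in $\rho$ because $v$ is of positive type, so the minimizing density, hence $\varphi$ up to a constant phase, is unique. The constrained Euler--Lagrange equation is exactly $\hH\varphi=0$ with the stated $\mH$; since $v*\varphi^2\in L^\infty$ by \eqref{eqn:ass:v:v*phi^2}, one has $T\varphi=(\mH-v*\varphi^2)\varphi\in\fH$, so elliptic regularity gives $\varphi\in\D(T)$, and a maximum-principle/Harnack argument for $-\Delta+\Vext+v*\varphi^2$ upgrades $\varphi\geq0$ to $\varphi>0$ a.e.

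\emph{Part (b).} As $v*\varphi^2$ is a bounded real function, $\hH=T+v*\varphi^2-\mH$ is a bounded symmetric perturbation of the self-adjoint operator $T$, hence self-adjoint on $\D(T)$ by Kato--Rellich; since $\Vext(x)\to\infty$, $T$ has compact resolvent and therefore so does $\hH$, giving purely discrete spectrum together with an orthonormal eigenbasis of $\fH$. Because $\varphi>0$ solves $\hH\varphi=0$, the Perron--Frobenius argument for Schrödinger operators shows that $0$ is the non-degenerate lowest eigenvalue of $\hH$, so $\gapH=\varepsilon^{(1)}>0$.

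\emph{Part (c).} By \eqref{eqn:HS:norm:K}, $\norm{K}_\HS^2=\lr{\varphi\otimes\varphi,|v(x-y)|^2\,\varphi\otimes\varphi}\leq\fC\lr{\varphi,(T+1)\varphi}<\infty$, so $K$ is Hilbert--Schmidt. For $\psi\in\fH$, since $\varphi$ is bounded one has $\varphi\psi\in\fH$ and $\lr{\psi,K\psi}=\int_{\R^{2d}}\overline{(\varphi\psi)(x)}\,v(x-y)\,(\varphi\psi)(y)\,\dx\,\dy\geq0$ by \eqref{eqn:ass:v:positivity}, hence $K\geq0$ and $qKq\geq0$. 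For the block operator, expanding the quadratic form on $(f,g)\in\fHp\oplus\fHp$ gives
\[
\lr{(f,g),\mathcal{A}(f,g)}=\lr{f,\hH f}+\lr{g,\hH g}+\lr{f+g,qKq(f+g)}\geq\gapH\big(\norm{f}^2+\norm{g}^2\big),
\]
using $\hH\geq\gapH$ on $\fHp$ (from (b), as $\varphi$ is the ground state with eigenvalue $0$ and $\fHp$ is $\hH$-invariant) and $\lr{f+g,qKq(f+g)}=\lr{f+g,K(f+g)}\geq0$ since $q(f+g)=f+g$; this is the claimed bound $\mathcal{A}\geq\gapH>0$ on $\fHp\oplus\fHp$.

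\emph{Main obstacle.} The only genuinely delicate points lie in part (a): verifying the compact embedding $\cQ(T)\hookrightarrow\fH$ and passing to the limit in the $T$-form-bounded interaction term, and then establishing strict positivity of $\varphi$. Everything downstream, including all of part (c), is routine; in fact (a) and (b) constitute classical Hartree theory and could alternatively be quoted from the references listed in the introduction.
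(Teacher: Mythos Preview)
Your proof is correct and follows essentially the same route as the paper's: direct method with convexity for (a), Kato--Rellich plus compact resolvent for (b), and the quadratic-form computation for (c). The only minor deviations are that the paper handles lower semicontinuity of the interaction by exploiting the positive definiteness of $D(f,g)=\tfrac12\int \overline{f}\, v\, g$ directly (via $D(\rho_n,\rho_n)\ge 2D(\rho_n-\rho,\rho)+D(\rho,\rho)$ and $v*\rho\in L^\infty$) rather than splitting $v$, and defers the uniqueness/strict-positivity details to \cite{lieb1999}; in (c) you do not actually need $\varphi$ bounded, since positivity of the Hilbert--Schmidt operator $K$ can be checked on $\psi\in C_c^\infty$ and extended by density.
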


\begin{proof}
For part (a), note first that $\cEH\geq 0$ on $\DH$, hence there exists a sequence $\{\phi_n\}_n\subset\DH$ such that $\cEH[\phi_n]\to\eH$.
Moreover,  $\lr{\phi_n,T\phi_n}\leq C$  because $D(|\phi_n|^2,|\phi_n|^2)\geq 0$ by \eqref{eqn:ass:v:positivity}, where  $D(f,g):=\frac12\int_{\R^{2d}}\dx\dy \overline{f(x)}v(x-y)g(y)$.
Since $T$ has a compact resolvent by Assumption \ref{ass:V},  $\D_C:=\{\psi\in\cQ(T):\norm{\psi}\leq1\,,\lr{\psi,T\psi}\leq C\}$ is compact  \cite[Theorems XIII.16 and XIII.64]{rs4} and there exists a subsequence such that $\phi_n\to\phi\in\D_C$ strongly in $\fH$.
For $\varrho:=|\phi|^2$ and $\varrho_n:=|\phi_n|^2$, $\norm{\varrho*v}_\infty\leq C$ by \eqref{eqn:ass:v:v*phi^2} and $\int\rho_n\to\int\rho$, hence
\begin{equation}
\lim\limits_{n\to\infty} D(\varrho_n,\varrho_n)\geq 2\lim\limits_{n\to\infty}D(\varrho_n-\varrho,\varrho)+D(\varrho,\varrho)=D(\varrho,\varrho)\,.
\end{equation}
Since $\D_C$ is weakly compact in both $H^1(\R^d)$ and the $L^2$-space with norm $\norm{\psi}_V^2:=\int\Vext|\psi|^2$, we find, passing again to a subsequence, that $\liminf_{n\to\infty}\lr{\phi_n,T\phi_n}\geq \lr{\phi,T\phi}$ by weak lower semi-continuity of both norms. With this, part (a) can be shown as in \cite[Lemmas A.1--4]{lieb1999}.
We denote the unique strictly positive minimizer  by $\varphi$.

Part (b) is a consequence of \eqref{eqn:ass:v:v*phi^2} and Assumption \ref{ass:V}, by Kato--Rellich and \cite[Theorems XIII.16 and XIII.64]{rs4}. Finally, the first part of (c) is implied by \eqref{eqn:HS:norm:K}, and the second part follows since $K\geq0$ by \eqref{eqn:ass:v:positivity} and $\hH\geq\gapH$ on $\fHp$ by part~(b).
\end{proof}

In summary, Assumptions \ref{ass:V} and \ref{ass:v} provide all necessary properties of the effective one-body operator $\hH$, in particular the existence of a finite spectral gap above the ground state.
In addition, we require the Hartree functional to be a valid description for the $N$-body energy as $N\to\infty$. Put differently, we assume that $N$-body states with an energy of order one above the ground state exhibit complete BEC in the Hartree minimizer $\varphi$. This is implied by the following statement:

\begin{assumption}\label{ass:cond}
Assume that there exist constants $C_1\geq0$ and $0<C_2\leq 1$, as well as a function $\varepsilon:\N\to\R_0^+$ with 
$$\lim\limits_{N\to\infty} N^{-\frac13}\varepsilon(N) \leq C_1\,,$$
such that 
\begin{equation}\label{eqn:ass:cond}
\HN-N\eH\geq C_2 \sum\limits_{j=1}^N\hH_j-\varepsilon(N)
\end{equation}
in the sense of operators on $\D(\HN)$.
\end{assumption}

We do not know how to prove \eqref{eqn:ass:cond} under our generic Assumptions \ref{ass:V} and \ref{ass:v}. However,  \eqref{eqn:ass:cond} is known to be true for the examples we have in mind:
Any bounded and positive definite interaction potential $v$ satisfies Assumption \ref{ass:cond} with optimal rate $\varepsilon(N)=\mathcal{O}(1)$ \cite[Lemma 1 and Remark 2]{grech2013}. Moreover, the repulsive three-dimensional Coulomb potential fulfils Assumption \ref{ass:cond} with $\varepsilon(N)=\mathcal{O}(N^{1/3})$ \cite[Lemma 3.1]{lewin2015_2}.

\subsection{Excitation Fock space and excitation Hamiltonian}\label{subsec:pre:Fock:space}
In this section, we review the excitation map $\UNp$ from \eqref{intro:UNp}, which was introduced in \cite{lewin2015_2} and maps an $N$-body wave function to the corresponding excitation vector.
Recall that any $\Psi\in\fH^N_\sym$ can be decomposed into condensate and excitations as
\begin{equation}
\Psi=\sum\limits_{k=0}^N{\varphi}^{\otimes (N-k)}\otimes_s\chi^{(k)}\,,
\qquad \chi^{(k)}\in \bigotimes\limits_\sym^k \fHp\,, 
\end{equation}
with $\otimes_s$ the symmetric tensor product, which is for $\psi_a\in \fH^a$ and $\psi_b\in \fH^b$ defined as 
\begin{equation}\begin{split}\label{symm:tensor:prod}
(\psi_a\otimes_s\psi_b)&(x_1\mydots x_{a+b}):=\\
& \frac{1}{\sqrt{a!\,b!\,(a+b)!}}\sum\limits_{\sigma\in \mathfrak{S}_{a+b}}\psi_a(x_{\sigma(1)}\mydots x_{\sigma(a)})\,\psi_b(x_{\sigma(a+1)}\mydots x_{\sigma(a+b)})\,,
\end{split}\end{equation}
with $\mathfrak{S}_{a+b}$ the set of all permutations of $a+b$ elements.
The sequence
\begin{equation}
\ChiN:=\big(\chi^{(k)}\big)_{k=0}^N
\end{equation}
of $k$-particle excitations forms a vector in the truncated excitation Fock space over $\fHp$,
\begin{equation}\label{Fock:space}
\FNp=\bigoplus_{k=0}^N\bigotimes_\sym^k \fHp
\;\subset \;
\Fp=\bigoplus_{k=0}^\infty\bigotimes_\sym^k \fHp\,,
\end{equation}
and vectors in $\Fp$ are denoted as
\begin{equation}
\bPhi=\big(\phi^{(0)},\phi^{(1)},\dots,\phi^{(k)}\dots\big)\,,\qquad 
\bPhi_{\leq N}=\big(\phi^{(0)},\phi^{(1)},\dots,\phi^{(N)}\big)\,.
\end{equation}
We consider the decomposition of $\Fp$ into the subspaces
\begin{equation}
\Fp=\FNp\oplus \FgNp\,,
\end{equation}
and in the following all direct sums are understood with respect to this decomposition.
The creation and annihilation operators on $\Fp$ are
\begin{equation}
(\ad(f)\bPhi)^{(k)}(x_1\mydots x_k)=\frac{1}{\sqrt{k}}\sum\limits_{j=1}^kf(x_j)\phi^{(k-1)}(x_1\mydots x_{j-1},x_{j+1}\mydots x_k)
\end{equation}
for $k\geq 1$, and
\begin{equation}
(a(f)\bPhi)^{(k)}(x_1\mydots x_k)=\sqrt{k+1}\int\d x\overline{f(x)}\phi^{(k+1)}(x_1\mydots x_k,x)
\end{equation} 
for $k\geq 0$, where $f\in\fHp$ and $\bPhi\in\Fp$. They can be expressed in terms of the operator-valued distributions $\ad_x$ and $a_x$,
\begin{equation}
\ad(f)=\int\d x f(x)\,\ad_x\,,\qquad a(f)=\int\d x\overline{f(x)}\,a_x\,,
\end{equation}
which satisfy the canonical commutation relations 
\begin{equation}
[a_x,\ad_y]=\delta(x-y)\,,\qquad [a_x,a_y]=[\ad_x,\ad_y]=0\,.
\end{equation}
We denote the second quantization in $\Fp$  of an $m$-body operator $T^{(m)}$  by $\d\Gamma_\perp$, i.e.,
\begin{eqnarray}
&&\hspace{-0.7cm}\d\Gamma_\perp(T^{(m)})\nonumber\\
&=&0\oplus\dots\oplus 0  \oplus\bigoplus\limits_{k\geq 0}\;\sum\limits_{1\leq j_1<\dots< j_m\leq m+k}T^{(m)}_{j_1\mydots j_m}\\
&=&\frac{1}{m!}\sum\limits_{\substack{i_1\mydots i_m\geq1\\j_1\mydots j_m\geq1}}\lr{\psi_{i_1}\otimes\mycdots\otimes\psi_{i_m}, T^{(m)}\psi_{j_1}\otimes\mycdots\otimes\psi_{j_m}}\nonumber\\
&&\hspace{4cm}\times\,\ad(\psi_{i_1})\mycdots\ad(\psi_{i_m})a(\psi_{j_1})\mycdots a(\psi_{j_m})\nonumber
\end{eqnarray}
for any orthonormal basis $\{\psi_j\}_{j\geq1}$ of $\fHp$.
Equivalently, 
\begin{equation}
\d\Gamma_\perp(T^{(m)})
=\d\Gamma_\perp(q^{\otimes m}\, T^{(m)} q^{\otimes m})
=\d\Gamma(q^{\otimes m}\, T^{(m)} q^{\otimes m})\,,
\end{equation}
where $\d\Gamma$ denotes the usual second quantization in the Fock space over the full space $\fH$.
Finally,  the number operator on $\Fp$ is given by
\begin{equation}
\Np:=\d\Gamma_\perp(\id)=\d\Gamma_\perp(q)\,,\qquad (\Np\bPhi)^{(k)}=k\phi^{(k)}\;\text{ for }
\bPhi\in\Fp\,.
\end{equation}
An $N$-body state $\Psi$ is mapped onto its corresponding excitation vector $\ChiN$  by
\begin{eqnarray}\label{map:U}
\UNp:\fHN \to  \FNp\;, \quad
 \Psi  \mapsto  \UNp \Psi:= \ChiN\,,
\end{eqnarray}
which is unitary and acts as
\begin{equation}\label{eqn:map:U:explicit}
\UNp \Psi=\bigoplus\limits_{j=0}^N q ^{\otimes j}\left(\frac{a({\varphi})^{N-j}}{\sqrt{(N-j)!}}\, \Psi\right)\;\text{ for }\; \Psi\in\fHN
\end{equation}
by \cite[Proposition 4.2]{lewin2015_2}. Note that the product state ${\varphi}^{\otimes N}$ is mapped to the vacuum of $\FNp$,
\begin{equation}
\UNp \,{\varphi}^{\otimes N}=(1,0,0,\dots0)=:\vac\,.
\end{equation}
For $f,g\in\fHp$, \eqref{eqn:map:U:explicit} yields the substitution rules
\begin{subequations}\label{eqn:substitution:rules}
\begin{eqnarray}
\UNp\, \ad({\varphi})a({\varphi})\UNp^*&=&N-\Np\,,\\
\UNp\, \ad(f)a({\varphi}) \UNp^*&=&\ad(f)\sqrt{N-\Np}\,,\\
\UNp\, \ad({\varphi})a(g)\UNp^*&=&\sqrt{N-\Np}a(g)\,,\\
\UNp\, \ad(f)a(g)\UNp^*&=&\ad(f)a(g)
\end{eqnarray}
\end{subequations}
as identities on $\FNp$.
As explained in the introduction, conjugating $\HN$ with $\UNp$ extracts the contribution to the energy which is due to excitations from the condensate. 
\begin{definition}\label{def:FockHN}
Define 
\begin{equation}
\FockHN:=\UNp(\HN-N\eH)\UNp^*
\end{equation}
as operator on $\FNp$.
The eigenvalues $\En$ of $\FockHN$ relate to the eigenvalues $\ENn$ of $\HN$ as
\begin{equation}\label{En}
\En=\ENn-N\eH\,,\qquad n\in\N_0\,.
\end{equation}
\end{definition}
As a consequence of the substitution rules \eqref{eqn:substitution:rules}, $\FockHN$ can be expressed  as 
\begin{eqnarray}
\FockHN
&=&  \boldKz + \left(\frac{N-\Np}{N-1}\right) \boldKo\nonumber\\
&&+\left( \boldKt\frac{\sqrt{(N-\Np)(N-\Np-1)}}{N-1}+ \frac{\sqrt{(N-\Np)(N-\Np-1)}}{N-1} \boldKtbar\right)\nonumber\\
&&+\left( \boldKth\frac{\sqrt{N-\Np}}{N-1}+ \frac{\sqrt{N-\Np}}{N-1} \boldKthbar\right) 
+\frac{1}{N-1} \boldKf\,,\label{eqn:FockHN}
\end{eqnarray} 
where we used the shorthand notation
\begin{subequations}\label{eqn:K:notation}
\begin{eqnarray}
 \boldKz&:=&\int\dx\,\ad_x  \hH_x a_x\,,\label{eqn:K:notation:0}\\
 \boldKo&:=&\int\dx_1\dx_2\, \Ko(x_1;x_2)\ad_{x_1} a_{x_2}\,,\label{eqn:K:notation:1}\\
 \boldKt&:=&\tfrac12\int\dx_1\dx_2\, \Kt(x_1,x_2)\ad_{x_1}\ad_{x_2}\,,\label{eqn:K:notation:2}\\
 \boldKth&:=&\int\dx^{(3)}\, \Kth(x_1,x_2;x_3)\ad_{x_1}\ad_{x_2}a_{x_3}\,\label{eqn:K:notation:3}\\
 \boldKf&:=&\tfrac12\int\dx^{(4)}\, \Kf(x_1,x_2;x_3,x_4)\ad_{x_1}\ad_{x_2}a_{x_3}a_{x_4}\,,\label{eqn:K:notation:4}
\end{eqnarray}
\end{subequations}
with
\begin{subequations}\label{K}
\begin{align}
\label{K:1}
& \Ko:\fHp\to \fHp, \qquad
 \Ko :=   q   K  q \,, 
 \\[7pt]
\label{K:2}
& \Kt \in \fHp\otimes \fHp,\qquad
 \Kt(x_1,x_2) := 
   (q_1q_2 K)(x_1,x_2)\,,
\\[7pt]
\label{K:3}
& \Kth: \fHp \to \fHp\otimes \fHp, \nonumber\\
&\hspace{8mm}\psi\mapsto (\Kth\psi)(x_1,x_2):=q_1q_2 W(x_1,x_2)\varphi(x_1)(q_2\psi)(x_2)
\,,  \\[3pt]
& \Kth^*: \fHp\otimes \fHp \to \fHp, \nonumber\\
&\hspace{8mm}\psi\mapsto (\Kth^*\psi)(x_1)=q_1\int\dx_2\varphi(x_2) W(x_1,x_2)(q_1q_2\psi)(x_1,x_2)
\,,  \\[5pt]
\label{K:4}
& \Kf:\fHp\otimes \fHp\to\fHp\otimes \fHp,\nonumber\\
&\hspace{8mm}\psi\mapsto (\Kf\psi)(x_1,x_2):=q_1q_2W(x_1,x_2)(q_1q_2\psi)(x_1,x_2)\,.
\end{align}
\end{subequations}
Here, $K(x_1,x_2)$ is defined as in \eqref{def:K:kernel}, $K$  is the operator with kernel $K(x_1,x_2)$, and $W$ is the multiplication operator on $\fHp\otimes\fHp$ defined by
\begin{equation}\label{eqn:W(x,y)}
W(x_1,x_2) := v(x_1-x_2) - \left(v*\varphi^2\right)(x_1) - \left(v*\varphi^2\right)(x_2) + \lr{\varphi,v*\varphi^2\varphi}.
\end{equation}
The notation is understood such that the projections $q_1,q_2$ act on the respective functions on their right. For example, the function $\Kth\psi\in\fHp\otimes\fHp$  is obtained from $\psi\in\fHp$ by taking the tensor product of $q\psi$ and $\varphi$, acting on it with the multiplication operator $W$, and finally projecting the resulting function onto the subspace $\fHp\otimes\fHp$. Note that $q\psi=\psi$ for $\psi\in\fHp$, hence the projection $q$ in front of $\psi$ is not necessary here but allows to extend $\Kth$ to a map on the full space $\fH$. An analogous observation applies to $\Ko$, $\Kth^*$ and $\Kf$.
An explicit formula for $\FockHN$ was first derived in \cite[Section 4]{lewin2015_2}, and we rewrote it in a way that is more convenient for our analysis (see Appendix \ref{appendix:Hamiltonian}).\medskip

Finally, we recall the Bogoliubov Hamiltonian $\FockHz$ and introduce some notation:
\begin{definition}\label{def:FockHz}
The Bogoliubov Hamiltonian $\FockHz$ for the model \eqref{HN} is defined as
\begin{equation}
\FockHz:=\boldKz+\boldKo+\boldKt+\boldKtbar\,,
\end{equation}
with $\mathbb{K}_j$ as defined in \eqref{eqn:K:notation}.
The eigenvalues of $\FockHz$ are denoted as
\begin{equation}
\Ezz<\Ezo<\dots<\Ezn<\dots
\end{equation}
with associated eigenspaces
\begin{equation}
\fEzn:=\left\{\bPhi\in\Fp\,:\,\FockHz\bPhi=\Ezn\bPhi\right\}\,,\qquad
\dzn:=\dim\fEzn\,.
\end{equation}
The spectral gap of $\FockHz$ above $\Ezn$ is defined as
\begin{equation}\label{eqn:gap:FockHz}
\gzn:=\Ez^{(n+1)}-\Ezn\,, \qquad n\in\N_0\,,
\end{equation}
and the projections onto $\fEzn$ and its orthogonal complement are given by
\begin{equation}
\Pzn:=\id_{\fEzn}\,,\qquad \Qzn:=\id_{\Fp}-\Pzn\,.
\end{equation}
We denote normalized elements of $\fEzn$ as $\Chizn$.
\end{definition}

\section{Results}
\subsection{Main results}

Our goal is a perturbative expansion of the spectral projectors of $\FockHN=\UNp(\HN-N\eH)\UNp^*$ around the spectral projectors of $\FockHz$. 
For our analysis, it is crucial that the low-energy eigenvalues of $\FockHN$ converge to the corresponding eigenvalues of $\FockHz$, and the same holds true (in a suitable sense) for the respective eigenstates. This was proven in \cite{seiringer2011,grech2013,lewin2015_2}, and we collect the rigorous results in Lemma \ref{lem:known:results:FockHN}.
If different eigenvalues of $\HN-N\eH$ converge to the same limiting eigenvalue of $\FockHz$ as $N\to\infty$, we consider the sum of all corresponding spectral projections of $\HN$:

\begin{definition}\label{def:PNn}
Define 
\begin{equation}\label{In}
\In:=\left\{\nu\in\N_0:\, \lim\limits_{N\to\infty}\big(\ENnu-N\eH\big)=\Ezn\right\}
\end{equation}
and
\begin{equation}
\fENn:=\bigoplus\limits_{\,\nu\in\In}\tfENnu
\end{equation} 
with
\begin{equation}\label{dnu}
\tfENnu:=\left\{\Psi\in\fH^N_\sym\,:\, \HN\Psi=\ENnu\Psi\right\}\,,\qquad
\dnu:=\dim\tfENnu\,.
\end{equation} 
The corresponding orthogonal projections are denoted as
\begin{equation}\label{PNn}
\PNn:=\id_{\fENn}\,.
\end{equation}
\end{definition}
By \cite{lewin2015_2}, the set $\In$, which collects all eigenvalues of $\HN-N\eH$ that converge to the eigenvalue $\Ezn$ of $\FockHz$, is of the form $\{\l\mydots\l+j\}$ for some $\l,j\geq 0$. Moreover,  $1\leq|\In|\leq \dzn$, where the second inequality is strict if at least one of the eigenvalues $\ENnu$ is degenerate. 
The space $\fENn$ is the direct sum of all eigenspaces of $\HN$ associated with eigenvalues with label $\nu\in\In$, hence $\sum_{\nu\in\In}\dnu=\dzn$.\medskip

We consider expectation values with respect to $\PNn$ for a natural class of $m$-body operators, namely for all operators that are relatively bounded with respect to $\sum_{j=1}^m T_j$. We use the following notation:

\begin{definition}\label{def:A}
For $m\in\N$, let $\Am$ be some  operator acting on $\fH^m$.
We denote the corresponding symmetrized operator on $\fHN$ by
\begin{equation}
\cAm:=\binom{N}{m}^{-1}\sum\limits_{1\leq j_1<\dots<j_m\leq N} \Amj\,,
\end{equation}
where $\Amj$ is  the operator acting as $\Am$ on the variables $x_{j_1}\mydots x_{j_m}$ and as identity on all other variables.
Further, we define the corresponding operator $\FockAm$ on $\Fp$ as
\begin{equation}
\FockAm:=\UNp\, \cAm \UNp^*\oplus0\,.
\end{equation}
\end{definition}

We construct an asymptotic expansion of $\PNn$, in the sense that  
$$\Tr_\fHN\Amom\PNn = \Tr_\Fp\FockAm\Pzn + \lN^\frac12 \Tr_\Fp\FockAm\Pn_1 + \lN\Tr_\Fp\FockAm\Pn_2 + \dots \,.$$
The coefficients $\Pnl$ in the expansion of the projector are defined as follows:

\begin{definition}\label{def:Pna}
Define
\begin{equation}\label{eqn:Pna}
\Pnl:=\begin{cases}
\quad \Pzn & \text{ if }\; \l=0\,,\\[5pt]
\displaystyle -\sum\limits_{\nu=1}^\l\,
\sum\limits_{\substack{\bj\in\N^\nu\\[2pt]|\bj|=\l}} \,
\sum\limits_{\substack{\bk\in\N_0^{\nu+1}\\|\bk|=\nu}} 
\FockOn_{k_1}\FockHjo\FockOn_{k_2}\FockHjt
\mycdots
\FockOn_{k_\nu} \FockHjnu\FockOn_{k_{\nu+1}} & \text{ if }\;\l\geq 1\,,
\end{cases}
\end{equation}
with $\Pzn$ as in Definition \ref{def:FockHz}. Here, we abbreviated
\begin{equation}\label{FockO}
\FockOnk:=\begin{cases}
\displaystyle\quad-\Pzn &\quad k=0\,,\\[7pt]
\displaystyle\frac{\Qzn}{\big(\Ezn-\FockHz\big)^k} &\quad k>0\,,
\end{cases}
\end{equation}
and
\begin{subequations}\label{FockHj}
\begin{eqnarray}
\FockHo&:=& \boldKth + \boldKthbar\,,\label{FockHo}\\[5pt]
\FockHt
&:=& -(\Np-1)\boldKo -\Big(\boldKt(\Np-\tfrac12)+\hc\Big)+\boldKf\,,\label{FockHt}\\[5pt]
\FockH_{2j-1}
&:=&c_{j-1}\Big(\boldKth (\Np-1)^{j-1} + \hc\Big)\,,\label{FockHt:2n-1}\\
\FockH_{2j}&:=&
\sum\limits_{\nu=0}^j d_{j,\nu}\Big(\boldKt(\Np-1)^\nu +\hc\Big)\,\label{FockHt:2n}
\end{eqnarray}
\end{subequations}
for $j\geq 2$, with $\mathbb{K}_j$ as in \eqref{eqn:K:notation}. The coefficients $c_j$ and $d_{j,\nu}$ are given as
\begin{subequations}\label{taylor:coeff}
\begin{eqnarray}\label{eqn:taylor:coeff}
c^{(\l)}_0&:=&1\,,\\
c^{(\l)}_j &:=& \frac{(\l-\frac12)(\l+\frac12)(\l+\frac32)\mycdots(\l+j-\frac32)}{j!}\,,\quad c_j\;:=\;c_j^{(0)} \quad (j\geq1),\qquad\\
\label{eqn:taylor:coeff:2}
d_{j,\nu}&:=&\sum\limits_{\l=0}^\nu c_\l^{(0)} c_{\nu-\l}^{(0)} c_{j-\nu}^{(\l)} \qquad (j\geq \nu \geq 0)\,.
\end{eqnarray}
\end{subequations}
\end{definition}

Our main result is the following:

\begin{theorem}\label{thm:exp:P}
Let Assumptions \ref{ass:V}, \ref{ass:v} and \ref{ass:cond} be satisfied
and let $a\in\N_0$.
Let $m\in\N$ and let $\Am$ be a self-adjoint operator on $\fH^m$ such that
\begin{equation}\label{eqn:thm:A:unbounded}
\norm{\Am\psi}_{\fH^m}\leq \fC \Big\|\sum_{j=1}^m (T_j+1)\psi\Big\|_{\fH^m} \qquad \text{ for }\;\psi\in \D\Big(\sum_{j=1}^mT_j\Big) \,.
\end{equation}
Then, for sufficiently large $N$, there exists a constant $\fC(n,m,a)$ such that 
\begin{equation}\label{eqn:thm:exp:P}
\left|\Tr_\fHN \cAm\PNn
-\sum\limits_{\l=0}^a \lN^\frac{\l}{2}\Tr_\Fp\FockAm \Pnl
 \right|\leq \fC(n,m,a) \lN^\frac{a+2}{2} 
\,.
\end{equation}
\end{theorem}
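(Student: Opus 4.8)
The plan is to transport the statement to the excitation Fock space and expand the relevant spectral projector with a Cauchy integral, in the manner of analytic perturbation theory. Since $\UNp$ is unitary and $\FockAm=\UNp\cAm\UNp^*\oplus0$, one has $\Tr_\fHN\cAm\PNn=\Tr_\Fp\FockAm\,\Pi$, where $\Pi:=\UNp\PNn\UNp^*\oplus0$ is the spectral projector of $\FockHN$ onto the cluster $\{\Enu:\nu\in\In\}$, extended by zero to $\Fp$. By the a priori facts collected in Lemma~\ref{lem:known:results:FockHN} (convergence $\Enu\to\Ezn$ for $\nu\in\In$, and a uniform gap separating this cluster from the rest of $\sigma(\FockHN)$), for $N$ large there is a closed contour $\gan$ of length of order one enclosing $\Ezn$ and every $\Enu$ with $\nu\in\In$ and staying at distance $\geq c(n)>0$ from $\sigma(\FockHz)\cup\sigma(\FockHN)$ otherwise; then $\Pi=\tfrac{1}{2\pi\i}\goint(z-\FockHN)^{-1}\dz$ and $\Pzn=\tfrac{1}{2\pi\i}\goint(z-\FockHz)^{-1}\dz$. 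The task is thus to expand $(z-\FockHN)^{-1}$ around $R_0(z):=(z-\FockHz)^{-1}$ uniformly for $z\in\gan$, integrate, and trace against $\FockAm$.

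First I would assemble the relative bounds needed, all with $N$-independent constants. On $\FNp$ the operator $\FockHN$ equals the operator $\FockH$ on $\Fp$ obtained from \eqref{eqn:FockHN} by expanding the number-dependent prefactors $\tfrac{N-\Np}{N-1}$, $\tfrac{\sqrt{(N-\Np)(N-\Np-1)}}{N-1}$ and $\tfrac{\sqrt{N-\Np}}{N-1}$ in powers of $\lN$; this produces precisely the coefficients $\FockHj$ of Definition~\ref{def:Pna}, cf.~\eqref{intro:taylor} and \eqref{FockHj}. From $\mathcal A\geq\gapH>0$ (Lemma~\ref{lem:hH:K}) one obtains $\Np+1\leq\fC(\FockHz+\fC)$ and $\d\Gamma_\perp(\hH)\leq\fC(\FockHz+\fC)$, so $\FockHz$ controls the particle number. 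Each $\FockHj$ is a fixed polynomial in $\Np$ times one of $\boldKo,\boldKt,\boldKtbar,\boldKth,\boldKthbar,\boldKf$, hence $\norm{(\Np+1)^{-p_j}\FockHj(\Np+1)^{-q_j}}\leq\fC(j)$ for suitable finite $p_j,q_j$. Finally, uniformly for $z\in\gan$, $(\Np+1)\,\Qzn R_0(z)$ and $(\Np+1)^k\FockOnk$ are bounded, $\Pzn(\Np+1)^p$ is bounded for every $p$ since $\Pzn$ is finite rank with range in $\bigcap_p\D((\Np+1)^p)$, and $[\Np,\FockHz]$ is bounded relative to $(\FockHz+\fC)$, so that $\Np$-weights may be commuted through resolvents.

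With these in hand I would run the resolvent expansion $(z-\FockH)^{-1}=\sum_{k=0}^{M}R_0(z)\big(BR_0(z)\big)^k+(z-\FockH)^{-1}\big(BR_0(z)\big)^{M+1}$, $B:=\FockH-\FockHz$, justified when sandwiched between $\Np$-weights for $N$ large; inserting $B=\sum_{j\geq1}\lN^{j/2}\FockHj$ (truncated at a fixed order) and collecting powers of $\lN^{1/2}$ identifies the coefficient of $\lN^{\l/2}$ as $\sum_{\nu=1}^\l\sum_{\bj\in\N^\nu,\,|\bj|=\l}R_0\FockHjo R_0\cdots\FockHjnu R_0$, with a remainder of controlled order in $\lN$. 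Carrying out $\tfrac{1}{2\pi\i}\goint\cdot\,\dz$ term by term, the only singularity of $z\mapsto R_0\FockHjo R_0\cdots\FockHjnu R_0$ inside $\gan$ is the pole at $z=\Ezn$, and the classical residue computation of Kato's analytic perturbation theory turns $\tfrac{1}{2\pi\i}\goint R_0\FockHjo R_0\cdots\FockHjnu R_0\,\dz$ into $\sum_{\bk\in\N_0^{\nu+1},\,|\bk|=\nu}\FockOn_{k_1}\FockHjo\FockOn_{k_2}\cdots\FockHjnu\FockOn_{k_{\nu+1}}$ with $\FockOnk$ as in \eqref{FockO}, the alternating signs carried by $\FockOn_0=-\Pzn$; summing over $\nu$ and $\bj$ reproduces exactly $\Pnl$ of \eqref{eqn:Pna}. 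This gives $\Pi=\sum_\l\lN^{\l/2}\Pnl+\mathcal R_N$; since in each summand of each $\Pnl$ (and in $\Pi$) at least one $k_i=0$ by the pigeonhole principle, a rank-$\leq\dzn$ projector is always present, so $\mathcal R_N$ is finite rank with range and co-range in $\bigcap_p\D((\Np+1)^p)$, and $\norm{(\Np+1)^{-p}\mathcal R_N}+\norm{\mathcal R_N(\Np+1)^{-p}}$ is small as $N\to\infty$ for a suitable $p=p(a)$.

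The final step is the trace against $\FockAm$. By \eqref{eqn:thm:A:unbounded} and the substitution rules \eqref{eqn:substitution:rules}, $\FockAm$ is bounded relative to a fixed power of $(\Np+1)$ times $\d\Gamma_\perp(T+1)$ with an $N$-independent constant; together with $\d\Gamma_\perp(\hH)\leq\fC(\FockHz+\fC)$ and the a priori moment/energy bounds $\langle\bPhi,(\Np+1)^p\bPhi\rangle+\langle\bPhi,\FockHz\bPhi\rangle\leq\fC(n,p)\norm{\bPhi}^2$ on $\operatorname{ran}\Pi$ and $\operatorname{ran}\Pzn$ (Lemma~\ref{lem:known:results:FockHN}), this makes $\Tr_\Fp\FockAm\Pnl$ finite for each $\l$ and yields a first bound on $|\Tr_\Fp\FockAm\mathcal R_N|$ by inserting $(\Np+1)^{\pm p}$ around $\FockAm$. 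The sharper exponent in \eqref{eqn:thm:exp:P} then comes from two structural facts: $\Tr_\Fp\Pnl=0$ for $\l\geq1$, because $\Tr_\Fp\Pi=\dzn$ is independent of $N$, which annihilates the leading ($\propto\id$) part of $\FockAm$; and a parity cancellation, since $\FockAm$---like $\FockHN$---admits an expansion $\sum_k\lN^{k/2}\FockA^{(m)}_k$ whose $k$-th term has number-parity $(-1)^k$, while every summand of $\Pnl$ has number-parity $(-1)^\l$, so that $\Tr_\Fp(\FockA^{(m)}_k\Pnl)=0$ unless $k\equiv\l\pmod 2$. The same moment bounds also dispose of the truncation, since $\FockHN$ and $\FockH$ differ only where $\Np$ is comparable to $N$. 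The obstacle I expect to dominate the work is the uniform-in-$(z,N)$ system of relative and commutator bounds---balancing the powers of $\Np$ carried by the higher $\FockHj$ and by $\FockAm$ against the smoothing of $\FockOnk=\Qzn(\Ezn-\FockHz)^{-k}$---which closes only thanks to the spectral gap of $\hH$ and the uniform-in-$N$ $\Np$-moment bounds on the low-lying eigenstates of $\FockHN$ (hence the need for Lemma~\ref{lem:known:results:FockHN}); with these secured, the contour bookkeeping and the matching with $\Pnl$ are a routine order-by-order application of Kato's perturbation theory.
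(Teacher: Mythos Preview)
Your overall architecture---Cauchy integral for the projector, iterated resolvent expansion, Kato residue computation yielding $\Pnl$---matches the paper, and your identification of $\Tr_\Fp\Pnl=0$ for $\l\geq1$ as what kills the scalar part of $\FockAm$ is correct. But two genuine gaps remain. First, the claim $\norm{(\Np+1)^{-p_j}\FockHj(\Np+1)^{-q_j}}\leq\fC(j)$ fails under the paper's assumptions: $\FockHt$ contains $\boldKf$, and for unbounded $v$ (e.g.\ Coulomb) no choice of $\Np$-weights makes this bounded. The paper bounds $\boldKf$ only relative to $(\Np+1)^2+\FockHz(\Np+1)^{3/2}$ (Lemma~\ref{lem:K}), so each $\FockHj$ in the chain costs a factor of $\FockHz$ that must then be absorbed into the adjacent resolvent via the Bogoliubov diagonalization (Lemma~\ref{lem:aux}). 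Second, and more seriously, Lemma~\ref{lem:known:results:FockHN} gives a uniform-in-$N$ bound only on the \emph{first} moment $\langle\Chin,(\Np+1)\Chin\rangle$; for $b\geq2$ one has merely the a priori estimate $\langle\Chin,(\Np+1)^b\Chin\rangle\lesssim N^{(b-1)/3}$. Closing the remainder estimate requires the higher moments (and $\norm{\boldKf\Chin}$) to be bounded uniformly in $N$, which the paper obtains only through a bootstrap (Lemma~\ref{lem:bootstrap}): one expands $\Tr\Pn(\Np+1)^{b+1}$ itself to zeroth order via Proposition~\ref{prop:exp:P}, and uses the a priori $N^{1/3}$ bound together with the explicit $\lN^{1/2}$ prefactor on the remainder to upgrade iteratively. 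This is precisely where the restriction $\varepsilon(N)\lesssim N^{1/3}$ in Assumption~\ref{ass:cond} enters; without this step the argument does not close.

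Finally, the parity cancellation you invoke is not the source of the sharper exponent $\lN^{(a+2)/2}$. The extra half-power comes from the operator bound $\norm{\FockAmred\bPhi}\leq\fC(m)N^{-1/2}\big(\norm{(\Np+1)\bPhi}+\norm{\FockHz\bPhi}\big)$ (Lemma~\ref{lem:A}): after subtracting the condensate expectation $\langle A\rangle^{(m)}$, every contribution to $\FockAmred$ carries at least one projection $q$, hence a factor $N^{-1/2}$ via $\norm{q_1\UNp^*\bPhi}_\fHN=N^{-1/2}\norm{\Np^{1/2}\bPhi}$. This bound is then applied directly to the remainder terms $\FockBPn(a)$ and $\FockBQn(a)$ of Proposition~\ref{prop:exp:P}; no parity is used. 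Parity does appear later, in the proof of Theorem~\ref{thm:energy}, to show that only integer powers of $\lN$ survive in the energy expansion---but that is a separate statement.
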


In particular, Theorem \ref{thm:exp:P} proves the validity of Bogoliubov theory up to an error of order $\mathcal{O}(N^{-1})$, i.e., 
\begin{equation}
\Tr_{\fHN}\cAm\PNn = \Tr_\Fp\FockAm\Pzn + \mathcal{O}(\lN)\,,
\end{equation} 
which improves previously known error estimates of order $\mathcal{O}(\lN^{1/2})$. 

The coefficients $\Tr_\Fp\FockAm\Pnl$ in \eqref{eqn:thm:exp:P} are not necessarily $N$-independent because $\FockAm$ arises from conjugating an operator $\cAm$ on the $N$-body Hilbert space with the $N$-dependent unitary map $\UNp$. Unless $\Am$ is an operator acting only on $\fHp^m$ (such as, for example, $A^{(1)}=q$), 
this conjugation yields factors $\sqrt{N-\Np}$ comparable to \eqref{eqn:FockHN}. Hence, to extract the $N$-independent contributions in each order, one needs to expand $\FockAm$ in $\lN^{1/2}$ up to the  order of the approximation. Equivalently, one derives in this way an expansion of the reduced $m$-particle density matrices of $\PNn$.
For example, the one-particle reduced density matrix 
$$\gamma^{(n)}_{1;N}:=\Tr_{\fH^{N-1}}\PNn$$
admits the asymptotic expansion 
\begin{equation}\label{eqn:1:p:RDM:expansion}
\Tr_\fH\Big|\gamma^{(n)}_{1;N}-\sum\limits_{\l=0}^a \lN^\l\,\tilde{\gamma}_{1;\l}^{(n)}\Big|\leq \fC(n,a)\lN^{a+1}\,,
\end{equation}
where the coefficients $\tilde{\gamma}^{(n)}_{1;\l}\in\cL(\fH)$ are independent of $N$ and can be retrieved as described above. For example, the first correction to the leading order $\tilde{\gamma}_{1;0}^{(n)}=\dzn|\varphi\rangle\langle\varphi|$ is given by
\begin{equation}\label{eqn:1:p:RDM}\begin{split}
\tilde\gamma^{(n)}_{1;1} (x;y)\;=\;& \varphi(x)\Tr_\Fp \ad_y \Pn_1 + \varphi(y)\Tr_\Fp a_x\Pn_1 \\
&+ \Tr_\Fp  \ad_y a_x\Pzn -\varphi(x)\varphi(y)\Tr_\Fp\Pzn\Np 
\end{split}\end{equation}
(see also \cite[Theorem 2]{QF} for the dynamical counterpart of this statement).
For the ground state of a homogeneous Bose gas on the torus, a corresponding result was recently shown in \cite{nam2020_2}, using different methods. Note that in this case, the first line in \eqref{eqn:1:p:RDM} vanishes by translation invariance\footnote{
In this case, one computes $\tilde{\gamma}^{(0)}_{1;1} = -\sum_{k\neq 0}\gamma_k^2|\varphi\rangle\langle\varphi| + \sum_{k\neq 0}\gamma_k^2|\varphi_k\rangle\langle\varphi_k|$, where $\varphi_k =\e^{\i k\cdot x}$, $\varphi=\varphi_0$, $\gamma_k= \alpha_k(1-\alpha_k^2)^{-1/2}$, and $\alpha_k=\hat{v}(k)(k^2+\hat{v}(k)+\sqrt{k^4+2k^2\hat{v}(k)})^{-1}$, where $\hat{v}$ denotes the Fourier transform of $v$.
}.

Theorem \ref{thm:exp:P} yields an asymptotic expansion of the projector $\Pn$ onto the subspace $\fEn$ of the excitation Fock space, which is defined as
$$
\fEn=\bigoplus\limits_{\nu\in\In}\tfEnu\,,\qquad \tfEnu=\left\{\bPhi\oplus 0\,:\, \bPhi\in\FNp\,,\,\FockHN\bPhi= \Enu\bPhi\right\}
$$
(see Definition \ref{def:Pn}). The following statement is proven in Section \ref{subsec:proofs:cor}:

\begin{cor}\label{cor:trace:norm}
Let $a\in\N_0$. Under Assumptions \ref{ass:V}, \ref{ass:v} and \ref{ass:cond}, there exists a constant $\fC(n,a)$ such that
\begin{equation}
\Tr_{\Fp}\Big|\Pn-\sum\limits_{\l=0}^a\lN^\frac{\l}{2}\Pnl\Big| \leq \fC(n,a)\lN^\frac{a+1}{2}
\end{equation}
for sufficiently large $N$.
\end{cor}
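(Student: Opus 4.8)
\textbf{Proof strategy for Corollary \ref{cor:trace:norm}.}
The plan is to deduce the trace-norm estimate for $\Pn - \sum_{\l=0}^a \lN^{\l/2}\Pnl$ from the expectation-value estimate of Theorem \ref{thm:exp:P} by choosing the test operators $\Am$ cleverly and then transferring from the excitation Fock space back to a genuine trace-norm bound. First I would observe that $\Pn$ and the $\Pnl$ all act on the \emph{truncated} Fock space $\FNp$ (the $\Pnl$ only up to the truncation error, which is itself exponentially small by the condensation estimates in Lemma \ref{lem:known:results:FockHN}), so it suffices to control the operator $R_a := \Pn - \sum_{\l=0}^a\lN^{\l/2}\Pnl$ as an operator on $\FNp$. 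Since $\Pn$ has finite rank $\dzn$ (for large $N$) and the $\Pnl$ are built from $\Pzn$, $\Qzn$, resolvents of $\FockHz$ and the operators $\FockHj$, the operator $R_a$ maps into a fixed finite-dimensional subspace up to controllable tails; the key point is therefore to bound $\norm{R_a}$ in a suitable sense and multiply by the (bounded, $N$-independent to leading order) rank.

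The main mechanism is duality: $\Tr_\Fp |R_a| = \sup\{ \Tr_\Fp B R_a : \norm{B}_\op \le 1\}$, and I would like to realize each such $B$ — or at least a spanning family — as $\FockAm$ for an admissible $\Am$, so that $\Tr_\Fp B R_a$ is exactly the quantity controlled by \eqref{eqn:thm:exp:P} together with the corresponding expansion of $\Tr_\Fp\FockAm\Pnl$ on the $N$-body side. Concretely, since $\UNp$ is unitary from $\fHN$ onto $\FNp$, any bounded operator $B$ on $\FNp$ pulls back to $\UNp^* B\, \UNp$ on $\fHN$, but that operator is \emph{not} of the symmetrized form $\cAm$ in general. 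The fix is to use that the relevant projectors live on the low-excitation sectors: by the a priori bounds $\lr{\Chizn,(\Np+1)^k\Chizn}\le\fC(n,k)$ and the analogous moment bounds for the eigenstates of $\FockHN$ (again from Lemma \ref{lem:known:results:FockHN} and the structure of the $\Pnl$), the operator $R_a$ is, up to errors of the desired order, supported on sectors with $\Np \le C\log N$ or so. On such sectors one can expand a general bounded $B$ into a sum over $m$ of normal-ordered monomials $\ad(\psi_{i_1})\cdots a(\psi_{j_m})$ with $\fHp$-vectors, and each such monomial is, after conjugating back by $\UNp^*$ and multiplying by the appropriate power of $(N-\Np)^{-1/2}$-type factors, precisely the Fock-space image $\FockAm$ of a symmetrized $m$-body operator satisfying \eqref{eqn:thm:A:unbounded} (indeed, one can take $\Am$ bounded). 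Thus Theorem \ref{thm:exp:P} applied to this countable family of monomials, with the $B$-coefficients summed, yields $|\Tr_\Fp B R_a| \le \fC(n,a)\lN^{(a+1)/2}$ uniformly over $\norm{B}_\op\le 1$, and taking the supremum gives the claim. The loss of one half-power of $\lN$ relative to \eqref{eqn:thm:exp:P} is the natural price of passing from a fixed test operator to the operator norm of the remainder.

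A cleaner and essentially equivalent route, which I would actually prefer to carry out, avoids the combinatorics of expanding arbitrary $B$: it suffices to test against $B$ of the form $B = |\bPhi\rangle\langle\Chizm|$ (and sums thereof) with $\bPhi$ in a fixed orthonormal basis of the range of $R_a$ and $\Chizm$ eigenstates of $\FockHz$. Because $R_a$ has finite rank $\le 2\dzn$ up to a tail of order $\lN^{(a+1)/2}$, $\Tr_\Fp|R_a|$ is comparable to $\sum_{\bPhi,\bPhi'}|\lr{\bPhi,R_a\bPhi'}|$ over such a finite basis, and each matrix element $\lr{\bPhi,R_a\bPhi'}$ equals $\Tr_\Fp (|\bPhi'\rangle\langle\bPhi|) R_a$; the rank-one operator $|\bPhi'\rangle\langle\bPhi|$ on $\FNp$ is of the form $\FockAm$ for a bounded $\Am$ by the same pull-back argument, because $\bPhi,\bPhi'$ have rapidly decaying excitation-number tails. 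Then \eqref{eqn:thm:exp:P} bounds each matrix element by $\fC(n,a)\lN^{(a+1)/2}$ once one checks that the expansion coefficients $\Tr_\Fp\FockAm\Pnl$ assemble, order by order in $\lN^{1/2}$, into $\lr{\bPhi,(\sum_\l \lN^{\l/2}\Pnl)\bPhi'}$ — which is immediate from Definition \ref{def:Pna} since $\FockAm = |\bPhi'\rangle\langle\bPhi|$ is $N$-independent in this case.

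\textbf{Main obstacle.} The delicate point is not the duality argument but the claim that rank-one (or monomial) operators on the truncated space $\FNp$ are captured, to the required precision in $\lN$, by symmetrized $N$-body operators $\cAm$ via $\UNp$ — i.e.\ controlling the discrepancy between $\UNp^* B\,\UNp$ and an honest $\cAm$, which comes from the number-operator-dependent factors $\sqrt{N-\Np}$ in the substitution rules \eqref{eqn:substitution:rules}. Making this quantitative requires the uniform moment bounds $\lr{\Chizn,(\Np+1)^k\Chizn}\le\fC(n,k)$ for the Bogoliubov eigenstates and the corresponding bounds for the true eigenstates underlying $\Pn$ and for all the building blocks of $\Pnl$; these should follow from the spectral gap of $\FockHz$, the explicit quasi-free structure of $\Chizn$, and the resolvent bounds already used to define $\Pnl$, but verifying that the errors are genuinely of order $\lN^{(a+2)/2}$ — one order better than the target — so that summing over the (slowly growing) number of relevant excitation sectors still leaves $\lN^{(a+1)/2}$, is where the real work lies.
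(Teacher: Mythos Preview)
Your duality step is exactly right: $\Tr_{\Fp}|R_a|=\sup_{\onorm{\FockA}\le1}|\Tr_{\Fp}\FockA R_a|$ is precisely how the paper proceeds. The gap is everything that follows. You try to feed the duality back into Theorem~\ref{thm:exp:P} by realising an arbitrary bounded (or rank-one) $\FockA$ on $\FNp$ as $\FockAm=\UNp\cAm\UNp^*\oplus0$ for some symmetrized $m$-body operator $\Am$. This cannot be made to work: a rank-one operator $|\bPhi'\rangle\langle\bPhi|$ on $\FNp$ pulls back under $\UNp$ to a rank-one operator on $\fHN$, which is \emph{not} of the form $\binom{N}{m}^{-1}\sum_{j_1<\dots<j_m}\Amj$ for any fixed $m$. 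Your monomial expansion over varying $m$ does not close either, since the constant $\fC(n,m,a)$ in Theorem~\ref{thm:exp:P} is explicitly non-uniform in $m$ (see Remark~\ref{rem:growth:of:const}), so summing over $m$ up to $C\log N$ would destroy the $\lN^{(a+1)/2}$ rate. The ``main obstacle'' you identify---controlling the $\sqrt{N-\Np}$ factors to reconcile Fock and $N$-body pictures---is therefore not a technical hurdle but a symptom of the wrong route.

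The paper bypasses Theorem~\ref{thm:exp:P} entirely and goes back one step to Proposition~\ref{prop:exp:P}, which gives the \emph{exact} identity
\[
R_a=\lN^{\frac{a+1}{2}}\big(\FockBPn(a)+\FockBQn(a)\big)
\]
as operators on $\Fp$. Both $\FockBPn(a)$ and $\FockBQn(a)$ are manifestly trace-class: every summand in $\FockBPn(a)$ contains the rank-$\dzn$ projector $\Pn$, and every summand in $\FockBQn(a)$ contains at least one $\Pzn$ by construction. The estimates carried out in Section~\ref{subsubsec:proofs:thm:exp:P} for $\FockA\in\{\FockAmred,\id\}$ use only the bound $\norm{\FockA\bPhi}\le \fC N^\alpha(\norm{(\Np+1)\bPhi}+\norm{\FockHz\bPhi})$; for a general bounded $\FockA$ one simply replaces this by $\norm{\FockA\bPhi}\le\onorm{\FockA}\norm{\bPhi}$ and the same chain of Lemmas~\ref{lem:K:a:norms:estimate:remainders}, \ref{lem:aux}, \ref{lem:bootstrap} yields $|\Tr\,\FockA\FockBPn(a)|+|\Tr\,\FockA\FockBQn(a)|\le\onorm{\FockA}\,\fC(n,a)$. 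Taking the supremum over compact $\FockA$ with $\onorm{\FockA}\le1$ finishes the proof in three lines. No pull-back to $\fHN$, no monomial expansion, no truncation in $\Np$ is needed.
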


By means of Bogoliubov transformations, the operators $\Pnl$  can be brought into a more explicit form. For example, the first order correction for the ground state ($n=0$) is given by
\begin{equation}\label{eqn:Pl:explicit:1}
\P^{(0)}_1 = \BogUz^*\left(\BogUz\FockO^{(0)}_1\BogUz^*\right)\left(\BogUz\FockH_1\BogUz^*\right)\vac\langle\Chiz^{(0)}|+\hc\,,
\end{equation}
where $\BogUz$ is the Bogoliubov transformation diagonalizing $\FockHz$ such that $\Chiz^{(0)}=\BogUz^*\vac$. As the action of $\BogUz$ on creation/annihilation operators is  known (see \eqref{eqn:trafo:ax}), it follows that $\BogUz\FockH_1\BogUz^*\vac$ is a superposition of one- and three-particle states. Moreover, 
\begin{equation}\label{eqn:Pl:explicit:2}
\BogUz\FockO_1^{(0)}\BogUz^* = \sum\limits_{\l>0}\sum\limits_{m=1}^{\delta^{(\l)}_0} \frac{1}{\Ezz-\Ez^{(\l)}} \BogUz|\Chiz^{(\l,m)}\rangle\langle\Chiz^{(\l,m)}|\BogUz^*\,,
\end{equation}
where $\{\Chiz^{(\l,m)}\}_{1\leq m\leq \delta^{(\l)}_0}$ denotes a basis of the eigenspace $\mathfrak{E}_0^{(\l)}$ of $\FockHz$ and can be written as
\begin{equation}\label{eqn:Pl:explicit:3}
\Chiz^{(\l,m)}=\BogUz^*
\frac{\big(\ad(\xi_0)\big)^{\nu_0}}{\sqrt{\nu_0!}}
\frac{\big(\ad(\xi_1)\big)^{\nu_1}}{\sqrt{\nu_1!}}
\,\mycdots\,
\frac{\big(\ad(\xi_k)\big)^{\nu_{k}}}{\sqrt{\nu_{k}!}}
\vac
\end{equation}
for suitable $\xi_j\in\fHp$, $k\in\N_0$, and $(\nu_0\mydots \nu_k)\in\N_0^{k+1}$ depending on $\l$ and $m$
(see Lemma~\ref{lem:known:results:FockHz:2}).
Since $\BogUz\FockO^{(0)}_1\BogUz^*$ is particle-number preserving, only the basis elements $\Chiz^{(\l,m)}$ with one and three particles contribute to \eqref{eqn:Pl:explicit:1}, and  applying $\BogUz^*$ to the result yields an explicit formula for $\P^{(0)}_1$. The general case ($n\geq 0$, $\l\geq 1$) can be treated analogously.
\medskip

In our second main result, we derive from Theorem \ref{thm:exp:P} an expansion of the low-energy spectrum of $\HN$ with $N$-independent coefficients.

\begin{theorem}\label{thm:energy}
Let $n\in\N_0$.
Under Assumptions \ref{ass:V}, \ref{ass:v} and \ref{ass:cond}, it holds for any $a\in\N_0$ and sufficiently large $N$ that
\begin{equation}\label{eqn:cor:expansion}
\left|\frac{1}{\dzn}{\sum\limits_{\;\nu\in\In}\dnu\ENnu} -  N\eH -  \sum\limits_{\l=0}^a\lN^\l\En_\l \right|\; \leq\; \fC(n,a) \lN^{a+1}
\end{equation}
for some constant $\fC(n,a)$ and for $\In$, $\dnu$, $\ENnu$ and $\dzn$ as in Definitions \ref{def:FockHz} and \ref{def:PNn}.
The coefficients are given as
\begin{equation}\label{eqn:cor:general}
\En_\l\;:=\;\frac{1}{\dzn}
\sum\limits_{\nu=1}^{2\l}\sum\limits_{\substack{\bj\in\N^\nu\\|\bj|=2\l}}
\sum\limits_{\substack{\bm\in\N_0^{\nu-1}\\|\bm|=\nu-1}}\frac{1}{\kappa(\bm)}
\Tr_\Fp\Pzn\FockHjo\FockOn_{m_1}\mycdots\FockH_{j_{\nu-1}}\FockOn_{m_{\nu-1}}\FockHjnu
\end{equation}
for $\FockOn_m$ as in Definition \ref{def:Pna} and where 
\begin{equation}
\kappa(\bm):=1+\left|\left\{\mu\,:\, m_\mu=0\right\}\right| \in\{1\mydots \nu-1\}
\end{equation} 
is the number of operators $\Pzn$ within the trace.
\end{theorem}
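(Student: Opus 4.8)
The plan is to turn the weighted eigenvalue sum into a trace of the excitation Hamiltonian and then expand that trace via the resolvent. Since $\PNn$ is the orthogonal projection onto $\bigoplus_{\nu\in\In}\tfENnu$, where $\tfENnu$ is the $\ENnu$-eigenspace and $\sum_{\nu\in\In}\dnu=\dzn$, unitarity of $\UNp$ gives
\[
\sum_{\nu\in\In}\dnu\ENnu \;=\; \Tr_\fHN\HN\PNn \;=\; N\eH\,\dzn + \Tr_\Fp\FockH\Pn\,,
\]
with $\Pn:=(\UNp\PNn\UNp^*)\oplus 0$ supported on $\FNp$, $\FockH:=\FockHN\oplus 0$ the trivial extension, and $\FockH\Pn$ of finite rank. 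For sufficiently large $N$ there is, by Lemma~\ref{lem:known:results:FockHN}, a contour $\gan$ of length $\mathcal{O}(1)$ enclosing exactly $\{\Enu:\nu\in\In\}$ and leaving the remainders of the spectra of $\FockH$ and $\FockHz$ outside; since $\frac{1}{2\pi\i}\goint(z-\FockH)^{-1}\dz=\Pn$ and $\frac{1}{2\pi\i}\goint\id\,\dz=0$, one has $\FockH\Pn=\frac{1}{2\pi\i}\goint z(z-\FockH)^{-1}\dz$, so
\[
\frac{1}{\dzn}\sum_{\nu\in\In}\dnu\ENnu-N\eH \;=\; \frac{1}{\dzn}\,\Tr_\Fp\Big(\frac{1}{2\pi\i}\goint z\,\frac{1}{z-\FockH}\dz\Big)\,.
\]

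To expand, write $V:=\FockH-\FockHz$ and insert the Neumann series $(z-\FockH)^{-1}=\sum_{r\geq 0}(z-\FockHz)^{-1}[V(z-\FockHz)^{-1}]^r$, truncated at order $\lN^{a+1}$; the tail is controlled on $\gan$ by the uniform resolvent and trace-class bounds of Lemma~\ref{lem:known:results:FockHN}, by the $\FockHz$-relative bounds for $V$ and the $\FockHj$, and by $\Np$-moment bounds on the low-energy sector --- the same ingredients already used for Theorem~\ref{thm:exp:P}. One further expands $V$ itself: using $N-\Np=(N-1)(1-\lN(\Np-1))$ in \eqref{eqn:FockHN} and Taylor-expanding the factors $\sqrt{(N-\Np)(N-\Np-1)}/(N-1)$, $\sqrt{N-\Np}/(N-1)$ and $(N-\Np)/(N-1)$ yields $V=\sum_{j\geq 1}\lN^{j/2}\FockHj$ with the $\FockHj$ of Definition~\ref{def:Pna} and an $\Np$-weighted remainder. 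Collecting powers, $\Tr_\Fp\FockH\Pn$ reduces to $\Ezn\dzn$ (the $r=0$ term, i.e.\ $\En_0=\Ezn$) plus a finite sum over $r\geq 1$ and $\bj\in\N^r$ of
\[
\lN^{|\bj|/2}\;\frac{1}{2\pi\i}\goint z\,\Tr_\Fp\Big[\frac{1}{z-\FockHz}\FockHjo\frac{1}{z-\FockHz}\FockHjt\mycdots\FockH_{j_r}\frac{1}{z-\FockHz}\Big]\dz \;+\;\mathcal{O}(\lN^{a+1})\,.
\]

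The crux is the integration-by-parts identity
\[
\frac{1}{2\pi\i}\goint z\,\Tr_\Fp\Big[\frac{1}{z-\FockHz}\Big(V\frac{1}{z-\FockHz}\Big)^{\!r}\Big]\dz
\;=\;\frac1r\,\frac{1}{2\pi\i}\goint \Tr_\Fp\Big[\Big(\frac{1}{z-\FockHz}V\Big)^{\!r}\Big]\dz\qquad(r\geq 1)\,,
\]
which follows from $\partial_z\Tr_\Fp[((z-\FockHz)^{-1}V)^r]=-r\,\Tr_\Fp[(z-\FockHz)^{-1}(V(z-\FockHz)^{-1})^r]$ together with the closedness of $\gan$, and expresses that the averaged eigenvalue is a primitive of the projector. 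Substituting $V=\sum_j\lN^{j/2}\FockHj$, extracting the coefficient of $\lN^\l$, and taking the residue at the isolated eigenvalue $z=\Ezn$ via the Laurent expansion $\frac{1}{z-\FockHz}=\sum_{m\geq 0}(-1)^{m-1}(z-\Ezn)^{m-1}\FockOn_m$ produces a sum over cyclic words $\FockOn_{m_1}\FockHjo\FockOn_{m_2}\FockHjt\mycdots\FockOn_{m_\nu}\FockHjnu$ with $\nu$ factors $\FockHj$ (indices summing to $2\l$), $\FockOn$-indices summing to $\nu-1$, and weight $-\tfrac1\nu$; each such word is finite rank because at least one $\FockOn$-index is $0$, contributing a $\Pzn=-\FockOn_0$. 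Matching the $\nu$ cyclic representatives produced here against the $\kappa(\bm)$ representatives appearing in \eqref{eqn:cor:general} (those in which one $\Pzn$ is singled out as the leading projection) and bookkeeping the signs, the factors $-\tfrac1\nu$ and $\kappa(\bm)^{-1}$ reconcile and the coefficient of $\lN^\l$ equals $\dzn\,\En_\l$. Finally, all half-integer powers of $\lN$ vanish: $\FockHz$ commutes with the parity $(-1)^{\Np}$, hence so do $\Pzn$, $\Qzn$ and $\FockOn_m$, while $\FockHj$ anticommutes with $(-1)^{\Np}$ for $j$ odd; thus every term with an odd total number of creation and annihilation operators (equivalently with $|\bj|$ odd) is the trace of a parity-odd operator and vanishes.

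The analytic estimates needed --- uniform resolvent and trace-class bounds along $\gan$, $\FockHz$-relative bounds for the $\FockHj$ and for the Taylor remainder of $\FockHN$, and $\Np$-moment bounds for low-energy states --- are exactly those underlying Theorem~\ref{thm:exp:P}, so the genuinely new part of the argument is the combinatorial identification: verifying that the residue calculus, with the $\tfrac1r$ supplied by the integration by parts, reproduces \eqref{eqn:cor:general} with the stated $\kappa(\bm)$ weights and signs. The main point to be careful about is that $\FockHN$ is unbounded, so one cannot simply multiply the trace-norm expansion of $\Pn$ from Corollary~\ref{cor:trace:norm} by $\FockHN$; the argument is run at the level of the resolvent (equivalently, with $\Np$-weighted trace norms) from the outset.
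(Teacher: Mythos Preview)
Your proposal follows essentially the same route as the paper: write $\Tr\FockH\Pn=\frac{1}{2\pi\i}\Tr\goint z(z-\FockH)^{-1}\dz$, expand the resolvent via Lemma~\ref{lem:expansion:resolvent}, bound the remainder exactly as in the proof of Theorem~\ref{thm:exp:P} with $\FockA=\id$, kill half-integer powers by parity, and compute residues. The combinatorial step is packaged differently: the paper writes $z=\Ezn+(z-\Ezn)$, decomposes every resolvent as $\Pzn+\Qzn$, and tracks the residues through repeated use of cyclicity and the identity $\goint f(z)(z-\Ezn)^{-(\nu-k)}\dz=\frac{1}{\nu-k-1}\goint f'(z)(z-\Ezn)^{-(\nu-k-1)}\dz$, arriving at $1/(\nu-k)$ weights that become $1/\kappa(\bm)$. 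Your global integration-by-parts identity collapses this to a single step (producing $-1/\nu$ weights) followed by cyclic orbit counting; the two are equivalent --- an orbit of period $d$ contributes $d/\nu$ times the common trace value on both sides --- and your packaging is arguably cleaner.

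Two points to tighten. First, your integration-by-parts identity relies on cyclicity of the trace, hence on $\Tr\big[((z-\FockHz)^{-1}V)^r\big]$ being well-defined for each fixed $z\in\gan$; since $(z-\FockHz)^{-1}$ is not compact and $V$ is unbounded, this is not automatic. The paper sidesteps this by always carrying a finite-rank projector $\Pzn$ or $\Pn$; you can do the same by splitting each resolvent as $\frac{\Pzn}{z-\Ezn}+\frac{\Qzn}{z-\FockHz}$ \emph{before} integrating by parts, discarding the all-$\Qzn$ contribution (holomorphic, hence integrating to zero), and running the identity on the remaining finite-rank pieces. Second, the extension $\FockH=\FockHN\oplus 0$ places an infinitely degenerate eigenvalue at $0$, which for some $n$ may lie inside $\gan$; the paper extends by $\Emo<\Ezero$ precisely to avoid this.
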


All half-integer powers of $\lN$ vanish by parity. Equivalently, this can be understood as a consequence of Wick's rule (Lemma \ref{lem:wick}) and of the fact that the eigenstates of $\FockHz$ are given explicitly  as Bogoliubov transformations of states with fixed particle number (Lemma~\ref{lem:known:results:FockHz:2}).
Moreover, note that the contribution to \eqref{eqn:cor:general} from each $\nu$ decomposes into products of $\kappa(\bm)$ inner products.

Theorem \ref{thm:energy} recovers the  expressions from perturbation theory as discussed in the introduction. In particular, for any $n\in\N_0$ such that $\dzn=1$ (which applies, e.g., to the ground state), $\ENn$ is a non-degenerate eigenvalue of $\HN$, and \eqref{eqn:cor:expansion} reduces to
\begin{equation}
\ENn=N\eH+\sum\limits_{\l=0}^a \lN^\l E_\l^{(n)} + \mathcal{O}(\lN^{a+1})\,.
\end{equation}
In this case, the first two coefficients in \eqref{eqn:cor:expansion} simplify  to
\begin{subequations}\label{eqn:cor:explicit}
\begin{eqnarray}
E_1^{(n)}&=& \lr{\Chizn,\FockHt\Chizn} +\lr{\Chizn,\FockHo\frac{\Qzn}{\Ezn-\FockHz}\FockHo\Chizn}\,,\\
E_2^{(n)}&=& \sum\limits_{\nu=1}^4\sum\limits_{\substack{\bj\in\N^\nu\\|\bj|=4}}\lr{\Chizn,\FockHjo\frac{\Qzn}{\Ezn-\FockHz}\FockHjt\mycdots\frac{\Qzn}{\Ezn-\FockHz}\FockH_{j_\nu}\Chizn}\nonumber\\
&&-E_1^{(n)}\lr{\Chizn,\FockHo\frac{\Qzn}{(\Ezn-\FockHz)^2}\FockHo\Chizn}\,. 
\end{eqnarray}
\end{subequations}

\begin{remark}\label{rem:growth:of:const}
Theorem \ref{thm:exp:P} holds for any fixed $n\in\N_0$, $a\in\N_0$ and $m\in\N$ for sufficiently large $N$, with an error $\fC(n,m,a)$ that is neither uniform in $n$ nor in $m$ or $a$. In particular, $\fC(n,m,a)$ depends on $|\Ezn|$, hence the statement is non-trivial only for eigenvalues of $\HN$ of order one above the ground state energy.

Moreover, $\fC(n,m,a)$ grows rapidly in the order $a$ of the approximation. In the special case where $v\in L^\infty(\R^d)$, our estimates imply that 
\begin{equation*}
\fC(n,m,a)\leq  \big(\fC(n,m)(a+1)\big)^{(a+6)^2} \,,
\end{equation*} 
 and the bound is certainly worse in the general case (see Remark \ref{rem:v:bd} below). We do not expect this estimate to be optimal, especially as Borel summability was proven for a comparable perturbative expansion of the mean-field dynamics on Fock space for bounded interactions \cite{ginibre1980}. Also in that setting, the available estimates for unbounded potentials are worse and, in particular, insufficient to conclude Borel summability \cite{ginibre1980_2}.
\end{remark}
\medskip

\begin{remark}
As explained in Section \ref{sec:assumptions}, Assumptions \ref{ass:V}, \ref{ass:v} and \ref{ass:cond} are satisfied, e.g., by bounded positive definite potentials and by the repulsive Coulomb potential in $d=3$.
These assumptions ensure that Bogoliubov theory is valid for our model, i.e., that all assumptions in \cite{lewin2015_2} are satisfied.
In that work, it is shown that $\FockHz$ approximates $\FockH$ to leading order for any self-adjoint $T$
that is bounded from below, and for interaction potentials 
$$-c_1(T_1+T_2+c_2) \leq v(x_1-x_2) \leq c_3(T_1+T_2+1)\,,\qquad 0<c_1<1\,,\quad c_2,c_3>0$$
\cite[(A1)]{lewin2015_2} such that there exists a unique non-degenerate minimizer for the Hartree functional, and such that the operators $\Ko$ and $\Kt$ from \eqref{K} ($\Kt$ as operator $\fH^*\to\fH$) are Hilbert--Schmidt \cite[(A2)]{lewin2015_2}. Moreover, it is required that
$$\HN-N\eH\geq c\sum\limits_{j=1}^Nh_j+\smallO{N}$$
for some $0<c<1$ \cite[(A3s)]{lewin2015_2}.
Our analysis, which can be understood as a perturbative expansion of $\FockH$ around the leading order $\FockHz$, relies on the result proven in \cite{lewin2015_2}: We need $\Enu\approx\Ezn$ (for sufficiently large $N$) to find a suitable contour $\gan$ enclosing $\Ezn$ as well as all $\Enu$ with $\nu\in\In$, and we require  that $\Chin\to\Chizn$ strongly in the norm induced by the quadratic form of $\FockHz$  to conclude that $\lr{\Chin,\Np\Chin}$ is bounded uniformly in $N$ (see Lemma~\ref{lem:known:results:FockHN}).

In contrast to the generic setting from \cite{lewin2015_2}, we choose  $T=-\Delta+\Vext$ and consider a positive definite interaction $v$ satisfying the stronger bound \eqref{eqn:ass:v:2:Delta:bound}, which implies \cite[(A1--A2)]{lewin2014} (see Lemma \ref{lem:hH}). 
In particular, \eqref{eqn:ass:v:2:Delta:bound} is crucial to bound $\boldKth$ by powers of $\Np$, and $\boldKf$ in terms of $\d\Gamma_\perp(\hH)^{1/2}$ and powers of $\Np$. 
Moreover, Assumption \ref{ass:cond} is stronger than \cite[(A3s)]{lewin2015_2} since we require an error of at most $\mathcal{O}(N^{1/3})$ to control arbitrary moments of $\Np$ with respect to $\Chin$, as explained below.

Our analysis generalises to certain interactions $v$ which are not of positive type, and to a class of confining potentials $\Vext$ that do not diverge at infinity. More precisely, we can cover all potentials $v$ and $\Vext$ which are such that all assumptions in \cite{lewin2015_2} and Assumption~\ref{ass:cond} are satisfied. 
For example, it is shown in \cite[Section 3.2]{lewin2015_2} that  a trapped two-dimensional gas with repulsive Coulomb  interactions and $\Vext$ diverging sufficiently fast at infinity, 
$$\HN=\sum\limits_{j=1}^N\left(-\Delta_j+\Vext(x_j)\right) - \lN\sum\limits_{i<j}\ln|x_i-x_j|\,,\qquad d=2\,,$$
satisfies \cite[(A1--A3s)]{lewin2015_2} as well as Assumption \ref{ass:cond} \cite[Lemma 3.7]{lewin2015_2} although $v(x)=-\ln|x|$ is not of positive type.
Moreover, it is explained in \cite[Section~3.2]{lewin2015_2} that bosonic atoms below a critical binding number $t_c$, which are described by the rescaled Hamiltonian
$$H_{t,N}=\sum_{j=1}^N\left(-\Delta_j-\frac{1}{t|x_j|}\right)+\lN\sum_{i<j}\frac{1}{|x_i-x_j|}\,,\qquad t<t_c\in(1,2)\,,\qquad d=3\,,$$ 
meet all criteria, including our Assumption \ref{ass:cond}. Other viable choices for $T$ are the Laplace operator on a bounded subset of $\R^d$ with Dirichlet, Neumann or periodic boundary conditions, or relativistic kinetic terms.
\end{remark}
\medskip

Finally, we construct an asymptotic expansion of the $N$-body eigenstates $\Psi_N^{(n)}$ of $\HN$ that correspond to non-degenerate eigenvalues of $\FockHz$.

\begin{theorem}\label{thm:wave:fctn}
Let $a\in\N_0$ and let Assumptions \ref{ass:V}, \ref{ass:v} and \ref{ass:cond} be satisfied. Assume that $n\in\N_0$ such that $\dzn=1$ and let $\PsiNn\in\fENn$. Then, for a suitable choice of the phase of $\Chin_0$, there exists a constant $\fC(n,a)$ such that
\begin{equation}\label{norm_conv_wf}
\Big\|\PsiNn-\sum\limits_{\l=0}^a \lN^\frac{\l}{2}\sum\limits_{k=0}^N\varphi^{\otimes (N-k)}\otimes_s\big(\Chin_\l\big)^{(k)}\Big\|_{\fH^N}\leq \fC(n,a)\lN^\frac{a+1}{2}
\end{equation}
for sufficiently large $N$,
where
\begin{subequations}
\begin{eqnarray}
\Chin_\l&:=&\sum\limits_{j=0}^\l \alpha_j\,\tilde{\Chi}^{(n)}_{\l-j} \qquad (\l\geq 1)\,,\label{chi_def_thm}\\
\tilde{\Chi}^{(n)}_\l&:=&\sum\limits_{\nu=1}^\l\sum\limits_{\substack{\bj\in\N^\nu\\|\bj|=\l}} \Pn_{j_1}\,\mycdots \Pn_{j_\nu}\,\Chin_0 \qquad (\l\geq 1)\,\label{chi_tilde_def_thm},\end{eqnarray}
and with
\begin{equation}
\alpha_0:=1\,,\qquad
\alpha_{2n-1}:=0\,,\qquad
\alpha_{2n}:=-\frac12\sum\limits_{\substack{\bj\in\N_0^4\\j_1,j_2<2n\\|\bj|=2n}} \alpha_{j_1}\alpha_{j_2}\lr{\tilde{\Chi}^{(n)}_{j_3},\tilde{\Chi}^{(n)}_{j_4}} 
\end{equation}
\end{subequations}
for $n\geq 1$.
\end{theorem}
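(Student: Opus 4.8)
The plan is to transport \eqref{norm_conv_wf} to the excitation Fock space via the unitary $\UNp$ and then read it off from the expansion of the spectral projector already established in Corollary~\ref{cor:trace:norm}. Writing $\Chin:=\UNp\PsiNn$ and noting that $\sum_{k=0}^N\varphi^{\otimes(N-k)}\otimes_s(\Chin_\l)^{(k)}=\UNp^*(\Chin_\l)_{\leq N}$, with $(\Chin_\l)_{\leq N}\in\FNp$ the truncation of $\Chin_\l\in\Fp$, the left-hand side of \eqref{norm_conv_wf} equals $\big\|\Chin-\sum_{\l=0}^a\lN^{\l/2}(\Chin_\l)_{\leq N}\big\|_{\FNp}$, which is at most
\[
\Big\|\Chin-\sum_{\l=0}^a\lN^{\l/2}\Chin_\l\Big\|_{\Fp}+\sum_{\l=0}^a\lN^{\l/2}\big\|(\Chin_\l)_{>N}\big\|_{\Fp}\,.
\]
The truncation term is negligible: $\Chin_\l$ is a finite combination of vectors of the form (products of resolvent powers of $\FockHz$ and of the operators $\FockHj$) applied to the quasi-free vector $\Chin_0=\Chizn$, and the bounds used in the proof of Theorem~\ref{thm:exp:P} — relative bounds of $\boldKt,\boldKth,\boldKf$ in terms of $\Np$ and $\d\Gamma_\perp(\hH)$, plus the fact that Bogoliubov transformations and bounded functions of $\FockHz$ preserve all such moments — give $\langle\Chin_\l,\Np^M\Chin_\l\rangle\leq\fC(n,a,M)$ for every $M$; hence $\|(\Chin_\l)_{>N}\|\leq N^{-M/2}\fC(n,a,M)\leq\fC(n,a,M)\lN^{M/2}$, and $M=a+1$ suffices. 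It remains to bound the first term.

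Since $\dzn=1$, Lemma~\ref{lem:known:results:FockHN} gives, for $N$ large, that $\In$ is a singleton and $\ENn$ is simple, so $\Pn=|\Chin\rangle\langle\Chin|$; moreover $|\langle\Chin_0,\Chin\rangle|^2=\langle\Chin_0,\Pn\Chin_0\rangle\to1$ by Corollary~\ref{cor:trace:norm} with $a=0$, so after fixing the phase of $\Chin_0$ (equivalently of $\PsiNn$) such that $\langle\Chin_0,\Chin\rangle\geq0$ we have $\Chin=\Pn\Chin_0/\|\Pn\Chin_0\|$ for $N$ large. Set $\mathcal{P}:=\Pn-\Pzn$. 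Using $\Pn^2=\Pn$, $\Pzn=|\Chin_0\rangle\langle\Chin_0|$ and $\Pzn\Chin_0=\Chin_0$ one finds $\mathcal{P}^2\Chin_0=(1-\|\Pn\Chin_0\|^2)\Chin_0=:t\,\Chin_0$, hence $\mathcal{P}^{2k}\Chin_0=t^k\Chin_0$ and $\mathcal{P}^{2k+1}\Chin_0=t^k\mathcal{P}\Chin_0$; since $\mathcal{P}=\lN^{1/2}\Pn_1+O(\lN)$ with $\Pn_1\Chin_0=\tfrac{\Qzn}{\Ezn-\FockHz}\FockHo\Chin_0\perp\Chin_0$, we get $|t|\leq\fC(n)\lN$, so for $N$ large $\sum_{\nu\geq0}\mathcal{P}^\nu\Chin_0$ converges absolutely to $(1-t)^{-1}(\Chin_0+\mathcal{P}\Chin_0)=\|\Pn\Chin_0\|^{-2}\Pn\Chin_0$. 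On the other hand Corollary~\ref{cor:trace:norm} gives $\mathcal{P}=\sum_{j=1}^a\lN^{j/2}\Pn_j+O_{\op}(\lN^{(a+1)/2})$, each $\Pnl$ being bounded — indeed of finite rank, since every summand in \eqref{eqn:Pna} contains a factor $\FockOn_0=-\Pzn$ — with $\|\Pnl\|_{\op}\leq\fC(n,\l)$; expanding $\mathcal{P}^\nu$, discarding the $O(\lN^{(a+1)/2})$ remainders and the tail $\sum_{\nu>a}\|\mathcal{P}^\nu\Chin_0\|\leq\fC(n,a)\lN^{(a+1)/2}$, and collecting powers of $\lN^{1/2}$ gives
\[
\frac{\Pn\Chin_0}{\|\Pn\Chin_0\|^2}=\sum_{\l=0}^a\lN^{\l/2}\,\tilde{\Chi}^{(n)}_\l+O\big(\lN^{(a+1)/2}\big)
\]
with $\tilde{\Chi}^{(n)}_\l$ exactly as in \eqref{chi_tilde_def_thm} (and $\tilde{\Chi}^{(n)}_0:=\Chin_0$).

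Taking norms in this identity yields $\|\Pn\Chin_0\|^{-2}=\sum_{\l=0}^a\lN^{\l/2}\sum_{k+k'=\l}\langle\tilde{\Chi}^{(n)}_k,\tilde{\Chi}^{(n)}_{k'}\rangle+O(\lN^{(a+1)/2})$, and one checks directly that the coefficients $\alpha_j$ of Theorem~\ref{thm:wave:fctn} are precisely the Taylor coefficients of the reciprocal square root of this (formal) series — i.e. $\big(\sum_j\alpha_js^j\big)^2\big(\sum_\l s^\l\sum_{k+k'=\l}\langle\tilde{\Chi}^{(n)}_k,\tilde{\Chi}^{(n)}_{k'}\rangle\big)=1$ reproduces the stated recursion for $\alpha_j$ — the vanishing of $\alpha_{2j-1}$ following by a parity count: $\FockHj$ changes the particle-number parity by $j\bmod 2$, so $\tilde{\Chi}^{(n)}_\l$ has parity $(-1)^\l$ times that of $\Chin_0$ and $\langle\tilde{\Chi}^{(n)}_k,\tilde{\Chi}^{(n)}_{k'}\rangle=0$ unless $k\equiv k'\pmod 2$. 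Hence $\|\Pn\Chin_0\|=\sum_{j=0}^a\alpha_j\lN^{j/2}+O(\lN^{(a+1)/2})$, and multiplying the two bounded expansions gives
\[
\Chin=\|\Pn\Chin_0\|\cdot\frac{\Pn\Chin_0}{\|\Pn\Chin_0\|^2}=\Big(\sum_{j=0}^a\alpha_j\lN^{j/2}\Big)\Big(\sum_{\l=0}^a\lN^{\l/2}\tilde{\Chi}^{(n)}_\l\Big)+O\big(\lN^{(a+1)/2}\big)=\sum_{\l=0}^a\lN^{\l/2}\Chin_\l+O\big(\lN^{(a+1)/2}\big)
\]
by $\Chin_\l=\sum_{j=0}^\l\alpha_j\tilde{\Chi}^{(n)}_{\l-j}$. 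Combined with the first paragraph this yields \eqref{norm_conv_wf}.

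The conceptual content is modest; the crux, and the main obstacle, is making the third step rigorous. One must (i) upgrade the formal power-series manipulation of $\mathcal{P}=\Pn-\Pzn$ to genuine norm estimates, which rests on the a priori boundedness (finite-rankness) of the individual $\Pnl$ and on $\Np$- and $\d\Gamma_\perp(\hH)$-moment control of every intermediate vector $\FockHjo\cdots\FockHjnu\Chin_0$ and of its images under resolvent powers of $\FockHz$ — all part of the apparatus already built for Theorem~\ref{thm:exp:P} — and (ii) carry out the combinatorial bookkeeping that identifies the coefficients produced by the Neumann-type series with the $\tilde{\Chi}^{(n)}_\l$ of \eqref{chi_tilde_def_thm} and verifies that the $\alpha_j$ of the theorem solve the correct quadratic recursion, including the parity vanishing. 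Neither ingredient is deep, but together they account for essentially all of the proof.
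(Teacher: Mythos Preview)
Your proof is correct and takes a different route from the paper's. The paper packages the result as an abstract lemma (Appendix~\ref{appendix:wf}, Theorem~\ref{thm:wave:fctn:general}): whenever a rank-one projector admits a trace-norm expansion $P=\sum_\l\varepsilon^\l P_\l+O(\varepsilon^{a+1})$, the underlying unit vector expands as $\sum_\l\varepsilon^\l\chi_\l$ with exactly the formulas you write down. Their mechanism is indirect: they set $P_\l^\wf:=\sum_{k}|\chi_k\rangle\langle\chi_{\l-k}|$, prove the \emph{operator identity} $P_\l^\wf=P_\l$ by induction from $P^2=P$ (Lemma~\ref{lem:thm:wf} and Claim~\ref{claim:Pl=Plwf}), and then pass from the trace-norm bound on the projector to a norm bound on the state via $\|f-g\|\leq\tfrac{1}{\sqrt2}\Tr\big||f\rangle\langle f|-|g\rangle\langle g|\big|$. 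You instead exploit the explicit formula $\Chin=\Pn\Chin_0/\|\Pn\Chin_0\|$ together with the two-dimensional geometry of two rank-one projectors: the identity $\mathcal P^2\Chin_0=t\,\Chin_0$ collapses $\sum_\nu\mathcal P^\nu\Chin_0$ to a geometric series equal to $\|\Pn\Chin_0\|^{-2}\Pn\Chin_0$, after which $\tilde\Chi^{(n)}_\l$ and $\alpha_j$ drop out by substituting the Corollary~\ref{cor:trace:norm} expansion of $\mathcal P$ and taking a reciprocal square root. Your argument is more constructive and bypasses the operator-identity step; the paper's version has the virtue of being cleanly decoupled from the model and reusable. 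One minor simplification on your side: the separate discussion of the high-sector tail $\|(\Chin_\l)_{>N}\|$ is unnecessary, since $\Chin\in\FNp\oplus0$ already gives $\big\|\Chin-\sum_\l\lN^{\l/2}(\Chin_\l)_{\leq N}\big\|_{\FNp}\leq\big\|\Chin-\sum_\l\lN^{\l/2}\Chin_\l\big\|_{\Fp}$ by orthogonality of the sectors.
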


Theorem \ref{thm:wave:fctn} is an immediate consequence of a much more general statement: if a rank-one projector admits an asymptotic expansion in a small parameter $\varepsilon$, this implies an asymptotic expansion of the corresponding wave function. Since we could not find any proof of this seemingly obvious assertion, we prove it for a generic perturbative setting in Appendix \ref{appendix:wf}.
By parity, the parameters $\alpha_\l$ vanish for $\l$ odd, which can be seen analogously to the vanishing of the half-integer powers of $\lN$ in Theorem \ref{thm:energy}. Note that \eqref{norm_conv_wf} also holds with $\Chin_\l$ replaced by $\tilde{\Chi}^{(n)}_\l$ times an overall factor $\alpha^{(a)} = \sum_{\l=0}^a \lN^{\l/2} \alpha_\l$.

\medskip

\begin{remark}
Recall that each Bogoliubov eigenstate $\Chizn$ can be expressed as Bogoliubov transformation $\BogUz^*$ of a wave function with fixed particle number $m_n\in\N_0$ (see  Lemma \ref{lem:known:results:FockHz:2}). Consequently, $\Chin_\l$ can be written as $\BogUz^*$ acting on a superposition of wave functions with $\mu\leq m_n+3\l$ particles with $\mu+\l+m_n$ even. 
To see this, note that $\BogUz\FockOnk\BogUz^*$ is particle number preserving, and $\BogUz\FockH_j\BogUz^*$ has even/odd parity for $j$ even/odd and contains at most $j+2$ creation operators. 
Hence, the maximum number of creation operators in \eqref{chi_tilde_def_thm} is contributed by $\nu = 1$, namely by the term containing exclusively operators $\FockHo$ and exactly one operator $\Pz$ (i.e., $\nu=\l$, $\bj = (1,1\mydots 1)$ and $\bk=(1,1\mydots 1,0)$ in \eqref{eqn:Pna}). 
Such initial data are used for a perturbative expansion of the dynamics of the Bose gas in the mean-field limit in \cite{QF}.
\end{remark}\medskip

\begin{remark}
For any given $\l\in\N$, the formula \eqref{chi_tilde_def_thm} can be simplified further since many terms vanish by parity and most of the remaining terms can be grouped into summands which only differ by a prefactor (compare \eqref{wf_and_energies_iteratively} below), e.g.,
\begin{equation}
\tilde{\Chi}^{(n)}_2 = \left(\Pn_2+\Pn_1\Pn_1\right)\Chizn = \left(\FockOn_2\FockH_2 + \FockOn_1\FockH_1\FockOn_1\FockH_1  \right)\Chizn\,.
\end{equation}
The approximating wave functions in Theorem~\ref{thm:wave:fctn} are constructed via  the eigenvalue equation $\sum_{\l=0}^{\infty} \lN^{\l/2} \Pnl\Chin = \Chin$ (see Appendix \ref{appendix:wf}). 
Alternatively, one can (formally) derive simpler formulas for both $\tilde{\Chi}^{(n)}_\l$ and the coefficients $\En_\l$ from Theorem \ref{thm:energy} by an analogous construction for the eigenvalue equation $\FockH\Chin = \En\Chin$.
A formal computation yields
\begin{subequations}\label{wf_and_energies_iteratively}
\begin{eqnarray}
\tilde{\Chi}^{(n)}_{\l}& =& \sum_{\nu=1}^{\l} \sum\limits_{\substack{\bj\in\N^{\nu}\\|\bj|=\l}} \FockOno \FockH^{\,'}_{j_1} \mycdots\, \FockOno \FockH^{\,'}_{j_{\nu-1}} \FockOno \FockH_{j_{\nu}} \Chizn\,,\\
\En_{\l} &=& \sum_{\nu=1}^{2\l} \sum\limits_{\substack{\bj\in\N^{\nu}\\|\bj|=2\l}} \lr{\Chizn, \FockH_{j_1} \FockOno \FockH^{\,'}_{j_2} \mycdots\,\FockOno \FockH^{\,'}_{j_{\nu-1}} \FockOno \FockH_{j_{\nu}} \Chizn} \,,
\end{eqnarray}
\end{subequations}
where $\FockH^{\,'}_j=\FockH_j$ for $j$ odd and $\FockH^{\,'}_j=\FockH_j -E_{j/2}^{(n)}$ for $j$ even.
Here, $\tilde{\Chi}^{(n)}_{\l}$ is given in terms of the coefficients $\En_\l$, which  are determined iteratively. For the first few orders, one easily verifies that \eqref{wf_and_energies_iteratively} coincides with the expressions from Theorems~\ref{thm:wave:fctn} and \ref{thm:energy} for $\dzn=1$.
\end{remark}

\subsection{Strategy of proof}

In the remainder of this section, we explain the proof of Theorems \ref{thm:exp:P} and \ref{thm:energy}.
We begin with extending $\FockHN$ to the full excitation Fock space $\Fp$ in the following way:

\begin{definition}
We extend $\FockHN$ (see Definition \ref{def:FockHN}) from $\FNp$ to the full Fock space $\Fp$ as
\begin{equation}\label{eqn:FockH:extension}
\FockH:=\FockHN\oplus\Emo\,,
\end{equation} 
where
\begin{equation}\label{eqn:Emo}
\Emo:=\Ezero-(\Eo-\Ezero)
\end{equation}
with $\En$ the eigenvalues of $\FockHN$ (see Definition \ref{def:FockHN}).
Consequently, the low-energy spectrum of $\FockH$ consists of the eigenvalues
\begin{equation}
\Emo<\Ezero<\Eo<\dots<\En<\dots\,.
\end{equation}
\end{definition}

Note that we could have extended $\FockHN$ to $\Fp$ in many ways.
To motivate the choice \eqref{eqn:FockH:extension}, recall  that our aim is to expand the spectral projectors of $\FockH$ around the corresponding spectral projectors of $\FockHz$, which we do by expressing them as contour integrals over the resolvent of $\FockH$ and subsequently expanding $(z-\FockH)^{-1}$ around $(z-\FockHz)^{-1}$.
Let us first consider the case where the eigenvalues $\En$ and $\Ezn$ of $\FockH$ and $\FockHz$, respectively, are non-degenerate. In view of \eqref{intro:functional:calculus}, we require an $\mathcal{O}(1)$ contour $\gan$ that encloses both $\En$ and $\Ezn$ and leaves the remaining spectrum of $\FockH$ outside. 
The choice $\FockH=\FockHN\oplus c$, for $c$ a finite distance away from any point in the spectrum of $\FockHN$, ensures that $\FockH$ has precisely one (infinitely degenerate) additional eigenvalue $c$ compared to $\FockHN$.
Since the spectrum of $\FockHN$ is bounded from below by $\Ezero$, we place $c$ at a finite distance below $\Ezero$, for simplicity such that the spectral gaps below and above $\Ezero$ have the same size.

If $\Ezn$ is degenerate, the expansion must be done carefully because we cannot exclude that non-degenerate eigenvalues of $\FockH$ become degenerate in the limit $N\to\infty$.
By \cite{lewin2015_2},  every low-energy eigenvalue of $\FockH$ converges to an eigenvalue of $\FockHz$ (see Lemma \ref{lem:known:results:FockHN:E}), but the situation may occur that  an eigenvalue $\Ezn$ of $\FockHz$ is, for instance, twice degenerate, and there exist (for any finite $N$) two eigenvalues $E^{(n_1)}\neq E^{(n_2)}$ of $\FockH$ such that 
$$\lim\limits_{N\to\infty} E^{(n_1)} =\Ezn =\lim\limits_{N\to\infty} E^{(n_2)}\,.$$
In this case, it makes sense to expand  the sum of the corresponding projectors around $\Pzn$, which becomes apparent when recalling the formula \eqref{intro:functional:calculus}: Since each closed curve of order one around $\Ezn$ must enclose both poles $E^{(n_1)}$ and $E^{(n_2)}$ of $(z-\FockH)^{-1}$, the contour integral gives precisely the sum of the two spectral projections.
This motivates the following definition:

\begin{definition}\label{def:Pn}
For any $n\in\N_0$, we define the path
\begin{equation}\label{eqn:gamma:n}
\gan:=\left\{\Ezn+\fgn\e^{\i t}\,:\, t\in[0,2\pi)\right\}\subset\C\,,
\end{equation}
where 
\begin{equation}\label{eqn:fgn}
\fgn:= \tfrac12\min\left\{\gz^{(n-1)}\,,\,\gz^{(n)}\right\} 
\end{equation}
for $\gzn$ as in \eqref{eqn:gap:FockHz}.
For $n\in\N_0$, define
\begin{equation} \label{eqn:def:Pn}
\Pn:=\frac{1}{2\pi\i}\goint\ResHz\dz\,,\qquad \Qn:=\id_\Fp-\Pn
\end{equation}
and
\begin{equation}
\fEn\;:=\;\Pn\Fp \;=\; \bigoplus\limits_{\,\nu\in\In} \tfEnu \;\subset\FNp\oplus 0\,,
\end{equation}
with $\In$ as in \eqref{In} and where $\tfEnu$ denotes the eigenspace of $\FockH$ at $\Enu$,
\begin{equation}
\tfEnu:=\left\{\bPhi\oplus0 \,:\,\bPhi\in\FNp\,,\,\FockHN\bPhi=\Enu\bPhi\right\}\,,
\end{equation}
with dimension $\dnu$ as in \eqref{dnu}.
We denote normalized elements of $\fEn$ as
\begin{equation}\label{eqn:Chin}
\Chin:=\ChiNn\oplus0\,.
\end{equation}
For $n=-1$, we define $\P^{(-1)}$ as the projector onto the eigenspace of $\FockH$ associated with $\Emo$, i.e.,
\begin{equation}
\fEmo:=\left\{0\oplus\bPhi\,:\,\bPhi\in\FgNp\right\}\,,\qquad \P^{(-1)}:=\id_{\fEmo}\,.
\end{equation}
\end{definition}

Next, we expand  $\FockH$ in powers of $\lN^{1/2}$.
The $N$-dependence in $\FockH$ has two sources: first, $\FockH$ is  defined as the direct sum of $\FockHN$ on $\FNp$ and a conveniently chosen constant on $\Fock_\perp^{>N}$; second, the operators in $\FockHN$  come with $N$-dependent prefactors.
To deal with the first point, we write $\FockH$ on $\Fp$ as
\begin{equation}\label{FockHplusminus}
\FockH=\FockHminus+\FockHplus
\end{equation}
with
\begin{subequations}
\begin{eqnarray}
\FockHminus
&:=&  \boldKz +  \left(1-\frac{\Np-1}{N-1}\right)  \boldKo
+\left( \boldKt\frac{\sqrt{\big[(N-\Np)(N-\Np-1)\big]_+}}{N-1}+\hc\right)\nonumber\\
&&+\left( \boldKth\frac{\sqrt{\big[N-\Np\big]_+}}{N-1}+ \hc\right) \label{FockHminus}
+\frac{1}{N-1} \boldKf\,,\\
\FockHplus &:= &0\oplus\left(\Emo-\boldKz-\left(1-\frac{\Np-1}{N-1}\right) \boldKo -\frac{1}{N-1}\boldKf\right)\,,\label{FockHplus}
\end{eqnarray}
\end{subequations}
where $[\cdot]_+$ denotes the positive part. Note that $\boldKz$, $\boldKo$ and $\boldKf$ conserve the particle number, hence the restriction to $\FgNp$ in \eqref{FockHplus} makes sense.
The first term $\FockHminus$ is defined on the full space $\Fp$. 
To obtain $\FockHminus$, we added to $\FockH$ the missing contributions to $\boldKz$, $\boldKo$ and $\boldKf$ on the sectors $\FgNp$, and subtracted them again in $\FockHplus$.
Finally, we expand the square roots from $\FockHminus$ in a Taylor series (see \cite[Appendix C]{QF} for a proof).

\begin{lem}\label{lem:calculus}
Let $a\in\N_0$ and $c_\l^{(j)}$ and $d_{\l,j}$ as in \eqref{taylor:coeff}.
\lemit{
\item \label{lem:calculus:1}
Define the operator $\tFockRtha$ on $\Fp$ via the identity
\begin{equation}\label{eqn:lem:calculus:R3:1}
\frac{\sqrt{\big[N-\Np\big]_+}}{N-1}
\;=\; \sum\limits_{\l=0}^a c_\l\, \lN^{\l+\frac12}
(\Np-1)^\l + \lN^{a+\frac32}\,\tFockRtha\,.
\end{equation}
Then $[\tFockRtha,\Np]=0$ and
\begin{equation}\label{eqn:lem:calculus:R3:2}
\norm{\tFockRtha\bPhi}_\Fp
\;\leq\; 2^{a+1}\norm{(\Np+1)^{a+1}\bPhi}_\Fp
\end{equation} 
for $\bPhi\in\Fp$.
\item \label{lem:calculus:2}
Define the operator $\tFockRta$ on $\Fp$ through 
\begin{equation}\label{eqn:lem:calculus:R2:1}
\frac{\sqrt{\big[(N-\Np)(N-\Np-1)\big]_+}}{N-1}
\;=\; \sum\limits_{\l=0}^a \lN^\l \sum\limits_{j=0}^\l d_{\l,j} (\Np-1)^j\,
+\lN^{a+1}\tFockRta\,.
\end{equation}
Then $[\tFockRta,\Np]=0$ and
\begin{equation}\label{eqn:lem:calculus:R2:2}
\norm{\tFockRta\bPhi}_\Fp\;\leq\; (a+1)^24^{a+1} \norm{(\Np+1)^{a+1}\bPhi}_\Fp
\end{equation} 
for $\bPhi\in\Fp$.

}
\end{lem}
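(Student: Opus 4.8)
The plan is to use that \eqref{eqn:lem:calculus:R3:1} and \eqref{eqn:lem:calculus:R2:1} are identities between functions of $\Np$ alone: the left-hand sides are bounded Borel functions of $\Np$ (the radicands are non-negative thanks to the cut-off $[\,\cdot\,]_+$), and the subtracted polynomials are again functions of $\Np$. So $\tFockRtha$ and $\tFockRta$ are a priori well defined by the functional calculus of $\Np$ on the dense domain $\D(\Np^{a+1})$, they commute with $\Np$ for free, and \eqref{eqn:lem:calculus:R3:2}, \eqref{eqn:lem:calculus:R2:2} reduce to scalar estimates on each sector $\Np=k$: one must show that $\tFockRtha$ acts there as a number $\rho_3(k)$ with $|\rho_3(k)|\le 2(k+1)^{a+1}$, and $\tFockRta$ as a number $\rho_2(k)$ with $|\rho_2(k)|\le (a+1)^2 4^{a+1}(k+1)^{a+1}$; squaring and summing $\|\bPhi^{(k)}\|^2$ then gives the claimed operator bounds.

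For part (a) I would first record the elementary identity $[1-\lN(\Np-1)]_+=\lN\,[N-\Np]_+$ (check it on each eigenvalue of $\Np$), which rewrites the left-hand side of \eqref{eqn:lem:calculus:R3:1} as $\lN^{1/2}\,\sqrt{[1-\lN(\Np-1)]_+}$. Everything then reduces to the scalar function $g(t):=\sqrt{[1-t]_+}$ near $t=0$. For $t\le 1$ we have the binomial series $g(t)=\sqrt{1-t}=\sum_{\l\ge 0}c_\l t^\l$ with the coefficients $c_\l=c^{(0)}_\l$ of \eqref{eqn:taylor:coeff}, converging absolutely on $[-1,1]$ with $\sum_{\l\ge 0}|c_\l|=2$ (since $c_\l<0$ for $\l\ge1$ and $\sum_{\l\ge0}c_\l=g(1)=0$). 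Setting $R_a(t):=g(t)-\sum_{\l=0}^a c_\l t^\l$, one has $R_a(t)=\sum_{\l>a}c_\l t^\l$ for $|t|\le 1$ and $R_a(t)=-\sum_{\l=0}^a c_\l t^\l$ for $t>1$, and in either case $|R_a(t)|\le 2|t|^{a+1}$ because $|t|^{\l-a-1}\le 1$ in the relevant range. Evaluating at $t=\lN(k-1)\ge -\lN\ge -1$ and comparing with \eqref{eqn:lem:calculus:R3:1} identifies $\rho_3(k)=\lN^{-(a+1)}R_a(\lN(k-1))$, whence $|\rho_3(k)|\le 2|k-1|^{a+1}\le 2(k+1)^{a+1}\le 2^{a+1}(k+1)^{a+1}$, which is \eqref{eqn:lem:calculus:R3:2}.

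For part (b) the same reduction applies, together with the observation that $(N-\Np)(N-\Np-1)\ge 0$ on every sector (two integers differing by $1$ are never strictly of opposite sign), so the cut-off is inactive and the left-hand side of \eqref{eqn:lem:calculus:R2:1} equals $\sqrt{(1-\lN(\Np-1))(1-\lN\Np)}$. On the sectors $\Np=k\le N-1$ both factors are $\ge 0$, so this is $AB$ with $A=\sqrt{1-\lN(k-1)}$, $B=\sqrt{1-\lN k}$; expanding each factor via the series of part (a) — for $B$ through $B=\sqrt{(1-\lN(k-1))-\lN}=\sum_{p\ge0}c_p\lN^p(1-\lN(k-1))^{1/2-p}=\sum_{p,q}c_p c^{(p)}_q\lN^{p+q}(k-1)^q$, using $(1-x)^{1/2-p}=\sum_q c^{(p)}_q x^q$ — and collecting the Cauchy product by total power of $\lN$, one checks that the coefficient of $\lN^\l(\Np-1)^j$ is $d_{\l,j}$ from \eqref{eqn:taylor:coeff:2}. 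One should not estimate the remainder by truncating this re-collected series: its coefficients are not summable, because re-expanding $k^s=((k-1)+1)^s$ introduces binomial factors. Instead I would return to the product and write, with $A_a,B_a$ the $\lN$-degree-$a$ truncations and $A_a^{>}:=A-A_a=R_a(\lN(k-1))$, $B_a^{>}:=B-B_a=R_a(\lN k)$,
\[
AB-P_a \;=\; \big(A_aB_a-P_a\big)+A_aB_a^{>}+A_a^{>}B_a+A_a^{>}B_a^{>}\,,\qquad P_a:=\sum_{\l=0}^a\lN^\l\sum_{j=0}^\l d_{\l,j}(\Np-1)^j\,;
\]
here $A_a^{>},B_a^{>}$ are $\mathcal{O}(\lN^{a+1}(k+1)^{a+1})$ by part (a), $|A_a|,|B_a|\le a+1$ on these sectors since $\lN|k-1|,\lN k\le 1$, and $A_aB_a-P_a$ collects the $\mathcal{O}(a^2)$ monomials of $A_aB_a$ of $\lN$-degree $>a$, each bounded by $(\lN k)^{a+1}\le\lN^{a+1}(k+1)^{a+1}$. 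These elementary estimates give $|\rho_2(k)|\le\fC(a)(k+1)^{a+1}$ for $k\le N-1$; a slightly more careful count (or, for small $a$, a direct estimate as in part (a)) sharpens $\fC(a)$ to the stated $(a+1)^2 4^{a+1}$. On the remaining sectors $k\ge N$ the square root is $0$ (if $k=N$) or $\sqrt{(\lN(k-1)-1)(\lN k-1)}$, which grows only linearly in $k$, whereas $P_a(k)$ grows at most like $k^a$; since $\lN k\ge1$ there, a direct comparison yields the same bound, proving \eqref{eqn:lem:calculus:R2:2}.

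The main obstacle is not a deep one: it is the uniformity of the remainder estimate over \emph{all} of $\Fp$, in particular the high-occupation sectors $\Np>N$, where the binomial series of $\sqrt{1-t}$ diverges and where the quantities being expanded have been cut off to $0$ (and, in part (b), turned into genuine square roots of products of two negative numbers). This is precisely why the $[\,\cdot\,]_+$ must be kept and why the sectors $\Np\ge N$ are handled separately — there the discrepancy is dominated by the truncated polynomial itself, of size $\mathcal{O}((\lN\Np)^a)\le\lN^{a+1}\Np^{a+1}$ once $\lN\Np\ge1$. The only other genuine point is the one flagged above in (b): not to truncate the re-collected series but to go back to the product $AB$ and apply part (a) to each factor; identifying the coefficients with the $c_\l$ and $d_{\l,j}$ is then a routine generating-function bookkeeping.
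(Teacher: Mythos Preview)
The paper does not prove this lemma itself; it simply cites \cite[Appendix~C]{QF}. So there is no in-paper argument to compare against, and your task is really to produce a self-contained proof. Your reduction to sector-wise scalar estimates via the functional calculus of $\Np$ is exactly the right framework, and part~(a) is clean and correct: the identity $\sqrt{[N-\Np]_+}/(N-1)=\lN^{1/2}\sqrt{[1-\lN(\Np-1)]_+}$, the binomial series $\sqrt{1-t}=\sum_\l c_\l t^\l$ with $\sum_\l|c_\l|=2$, and the two-regime remainder bound $|R_a(t)|\le 2|t|^{a+1}$ (using $|t|^{\l-a-1}\le 1$ whichever side of $|t|=1$ you are on) are all valid. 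You actually obtain the sharper constant $2$ in place of $2^{a+1}$.

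Part~(b) is correct in outline but several steps are left implicit. First, the identification of your product expansion with the paper's $d_{\l,j}$ is asserted (``one checks'') but not carried out; it does follow from uniqueness of the double power series in $(\lN,\Np-1)$, but your detour through $(1-\lN(k-1))^{1/2-p}$ is more elaborate than necessary --- expanding $B=\sum_s c_s(\lN k)^s$ directly and then writing $k^s=((k-1)+1)^s$ gives the same coefficients with less machinery. Second, in bounding $A_aB_a-P_a$ you should make explicit that each surviving monomial $c_mc_s(\lN(k-1))^m(\lN k)^s$ with $m+s>a$ satisfies $(\lN(k+1))^{m+s}\le 2^{m+s-a-1}\lN^{a+1}(k+1)^{a+1}$ on the sectors $k\le N-1$ (using $\lN(k+1)\le 2$), which is the mechanism behind the estimate. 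Third, you only obtain a generic $\fC(a)$ rather than the stated $(a+1)^24^{a+1}$; this is harmless for the paper's purposes, but you should say so rather than defer to ``a slightly more careful count''. Finally, the one-line treatment of $k\ge N$ is right in spirit --- there $\lN k\ge 1$, the square root is $O(\lN(k+1))$, and each term $\lN^\l(k-1)^j$ in $P_a$ satisfies $\lN^{-(a+1)}\lN^\l(k-1)^j\le (k+1)^{a+1}$ since $N-1\le k+1$ and $j\le\l$ --- but it deserves two more lines to be a proof rather than a sketch.
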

With this, we can expand $\FockHminus$  in powers of $\lN^{1/2}$:

\begin{proposition}\label{lem:expansion:HN}
Let $a\in\N_0$.
In the sense of operators on $\Fp$, it holds that
\begin{eqnarray}
\FockHminus
&=&\sum\limits_{j=0}^{a}\lN^{\frac{j}{2}}\FockHj +\lN^{\frac{a+1}{2}}\FockR_a
\end{eqnarray}
with  $\FockH_j$ as in Definitions \ref{def:FockHz} and \ref{def:Pna}.
The remainders are given as
\begin{subequations}\label{FockR}
\begin{eqnarray}
\FockRz&:=&\FockRza+\lN^\frac12\boldKf\,,\label{FockRz}\\
\FockRo&:=&\FockRoa+\boldKf
\end{eqnarray}
and
\begin{eqnarray}
\FockRza&:=&\left(\boldKth \sqrt{\frac{\big[N-\Np\big]_+}{N-1}}+\hc\right)  + \lN^{\frac12}\bigg((\boldKt \tFockRt_0+\hc)\nonumber\\
&&\hspace{4.8cm}-(\Np-1)\boldKo\bigg)\,,\qquad\\
\FockRoa &:=&-(\Np-1)\boldKo
+\left(\boldKt \tFockRt_0+\hc\right) +\lN^{\frac12}\left(\boldKth \tFockRth_0+\hc\right),\quad\\
\FockR_{2j} &:=& 
\boldKth\tFockRth_{j-1}
+\lN^{\frac12} \boldKt \tFockRt_{j}
+\hc,\label{FockHt:R:2n-1}\\
\FockR_{2j+1} &:= &
\boldKt \tFockRt_{j}
+\lN^{\frac12}\boldKth \tFockRth_{j}
+\hc\label{FockHt:R:2n}
\end{eqnarray}
\end{subequations} 
for $j\geq1$, with $\tFockRt_j$ and $\tFockRth_j$ from Lemma \ref{lem:calculus}.
\end{proposition}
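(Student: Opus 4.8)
The plan is to prove the identity by a direct algebraic rearrangement, valid as an operator identity on, say, the dense set of vectors in $\Fp$ with finitely many particles, so that no convergence question arises: $\FockHminus$ as given in \eqref{FockHminus} is a finite sum of terms, each of the form (one of $\boldKz,\boldKo,\boldKt,\boldKtbar,\boldKth,\boldKthbar,\boldKf$) times an explicit function of $\Np$ and $N$ sitting to its right. First I would substitute $\lN=(N-1)^{-1}$, then replace the two scalar operator-functions $\frac{\sqrt{[N-\Np]_+}}{N-1}$ and $\frac{\sqrt{[(N-\Np)(N-\Np-1)]_+}}{N-1}$ by their order-$a$ Taylor polynomials in $\Np$ plus remainders according to Lemma~\ref{lem:calculus}, multiply everything out (keeping the $\Np$-polynomials to the right of $\boldKt$, $\boldKth$, exactly as in \eqref{FockHminus}), and finally collect the result by powers of $\lN^{1/2}$.

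It then remains to match the collected coefficients with Definitions~\ref{def:FockHz} and~\ref{def:Pna}. I would group the terms of $\FockHminus$ into three blocks. \emph{(i) Leading block:} $\boldKz$, together with $\boldKo$ coming from the $1$ in the prefactor $1-(\Np-1)\lN$, and $\boldKt+\boldKtbar$ coming from the $\l=0$ term $d_{0,0}(\Np-1)^0=1$ of the Lemma~\ref{lem:calculus:2} expansion, assemble into $\FockHz=\FockH_0$. \emph{(ii) $\boldKth$-block:} by Lemma~\ref{lem:calculus:1}, $\boldKth\frac{\sqrt{[N-\Np]_+}}{N-1}=\sum_{\l\ge 0}c_\l\lN^{\l+1/2}\boldKth(\Np-1)^\l+(\text{tail})$, and adding the adjoint produces precisely the odd-order contributions $\lN^{(2j-1)/2}c_{j-1}\big(\boldKth(\Np-1)^{j-1}+\hc\big)$ of \eqref{FockHt:2n-1}, with $\FockHo=\boldKth+\boldKthbar$ at $\l=0$. \emph{(iii) Even-order block:} the piece $-\lN(\Np-1)\boldKo$ from the prefactor, the piece $\lN\boldKf$, and the $\l\ge1$ terms $\lN^\l\sum_{j\le\l}d_{\l,j}\boldKt(\Np-1)^j+\hc$ of the Lemma~\ref{lem:calculus:2} expansion combine into the even-order contributions $\lN^{j}\FockH_{2j}$; here one uses that \eqref{eqn:taylor:coeff:2} is exactly the convolution identity producing \eqref{FockHt:2n}, and at order $\lN^1$ one checks $d_{1,0}=c_1=-\tfrac12$, $d_{1,1}=2c_1=-1$, so the $\lN^1$ coefficient equals $-(\Np-1)\boldKo-\big(\boldKt(\Np-\tfrac12)+\hc\big)+\boldKf=\FockHt$. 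The parity statement is then immediate, since $\boldKz,\boldKo,\boldKf$ preserve particle number, $\boldKt,\boldKtbar$ change it by $\pm2$, and $\boldKth,\boldKthbar$ by $\pm1$.

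Whatever is not captured by $\sum_{j=0}^{a}\lN^{j/2}\FockH_j$ is, by construction, of order $\lN^{(a+1)/2}$ and defines $\FockR_a$; its explicit form \eqref{FockR} is read off by inserting the remainder operators $\tFockRth_j$, $\tFockRt_j$ of Lemma~\ref{lem:calculus} in place of the truncated square roots. The one genuine subtlety is that $\boldKf$ carries the prefactor $\lN=\lN^{2/2}$ and so naturally belongs to $\FockH_2$: for $a=0$ and $a=1$ it must instead be absorbed into the remainder, which is why \eqref{FockRz} carries the extra $\lN^{1/2}\boldKf$ and the analogous formula $\FockR_1=\FockRoa+\boldKf$ carries $\boldKf$. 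Likewise, for $a=0$ one does not expand the $\boldKth$ square root at all but only factors out $\lN^{1/2}$ via $\frac{\sqrt{[N-\Np]_+}}{N-1}=\lN^{1/2}\sqrt{\frac{[N-\Np]_+}{N-1}}$, which explains the term $\boldKth\sqrt{[N-\Np]_+/(N-1)}$ in $\FockRza$. I would therefore treat the cases $a=0$, $a=1$, and $a\ge2$ separately (they are entirely parallel).

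The main obstacle is not conceptual but combinatorial: one must verify that the three families of scalars $c_j$, $c_j^{(\l)}$, $d_{j,\nu}$ from \eqref{taylor:coeff} are mutually consistent — i.e.\ that after expansion the $\boldKt$-block reproduces \eqref{FockHt:2n} with the convolution weights \eqref{eqn:taylor:coeff:2} and the $\boldKth$-block reproduces \eqref{FockHt:2n-1} — and keep the edge cases $a\in\{0,1\}$ straight so that $\boldKf$, $(\Np-1)\boldKo$, and the unexpanded $\boldKth$-tail land precisely where \eqref{FockR} prescribes. Since Lemma~\ref{lem:calculus} already supplies the two scalar Taylor expansions together with the bounds on their remainders, this reduces to matching polynomials in $(\Np-1)$ coefficient by coefficient, which is tedious but routine.
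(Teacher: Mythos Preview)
Your proposal is correct and matches the paper's own treatment: the paper does not spell out a separate proof of this proposition but presents it as the immediate result of inserting the Taylor expansions of Lemma~\ref{lem:calculus} into the expression \eqref{FockHminus} for $\FockHminus$ and collecting powers of $\lN^{1/2}$. Your block-by-block decomposition (leading $\FockHz$, odd-order $\boldKth$-block, even-order $\boldKt/\boldKo/\boldKf$-block) and your handling of the edge cases $a\in\{0,1\}$ for $\boldKf$ and the unexpanded $\boldKth$-tail are exactly the computation the paper has in mind.
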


The next step is to expand $\Pn$ around $\Pzn$, using that
\begin{equation} \label{eqn:functional:calculus}
 \Pzn=\frac{1}{2\pi\i}\goint\ResHzz\dz
\end{equation}
because $\gan$ from \eqref{eqn:gamma:n} encloses $\Ezn$. In view of the definition \eqref{eqn:def:Pn} of $\Pn$, we first expand $(z-\FockH)^{-1}$ around $(z-\FockHz)^{-1}$ and  integrate the resulting expressions along $\gan$.

\begin{lem}\label{lem:expansion:resolvent}
Let $a\in\N_0$ and $z\in\varrho(\FockH)\cap\varrho(\FockHz)$, where $\varrho$ denotes the resolvent set.
Then
\begin{equation} \label{eqn:lem:exp:resolvent}
\ResHz
\;=\;\ResHzz\sum\limits_{\l=0}^a\lN^\frac{\l}{2}\,\FockT_\l (z)
+ \lN^\frac{a+1}{2}\RestHz\,\FockS_a(z)
+\RestHz\,\FockHplus\ResHz\,,
\end{equation}
where
\begin{eqnarray}
\FockT_\l(z)&=&\sum\limits_{\nu=1}^\l \,\sum\limits_{\substack{\bj\in\N^\nu\\[2pt]|\bj|=\l}} \FockH_{j_1}\ResHzz\FockH_{j_2}\ResHzz\mycdots \FockH_{j_\nu}\ResHzz \,,\label{eqn:T:l}\\
\FockS_a(z)&=&\sum\limits_{\nu=0}^{a} \FockRnu\ResHzz\FockT_{a-\nu}(z)\,.
\label{eqn:S:a}
\end{eqnarray}
Here, the notation is understood such that $\FockT_0(z)=\id$.
\end{lem}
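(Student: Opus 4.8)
The plan is to derive the identity \eqref{eqn:lem:exp:resolvent} by a purely algebraic manipulation of resolvents, starting from the second resolvent identity and then iterating it a controlled number of times. First I would write, using $\FockH=\FockHminus+\FockHplus$ from \eqref{FockHplusminus},
\[
\ResHz=\RestHz+\RestHz\,\FockHplus\ResHz\,,
\]
which already produces the last term of \eqref{eqn:lem:exp:resolvent} and isolates the ``bulk'' resolvent $\RestHz$. The remaining work is then to expand $\RestHz$ around $\ResHzz$, using that by Proposition~\ref{lem:expansion:HN} we have $\FockHminus-\FockHz=\sum_{j=1}^a\lN^{j/2}\FockHj+\lN^{(a+1)/2}\FockR_a$ (the $j=0$ term cancels against $\FockHz$).

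Next I would apply the second resolvent identity again, now between $\tFockH$ and $\FockHz$: writing $\Delta:=\FockHminus-\FockHz$, we get $\RestHz=\ResHzz+\ResHzz\,\Delta\,\RestHz$. Iterating this $a$ times (Neumann-type expansion) gives
\[
\RestHz=\sum_{\nu=0}^{a}\ResHzz\big(\Delta\,\ResHzz\big)^{\nu}+\ResHzz\big(\Delta\,\ResHzz\big)^{a}\Delta\,\RestHz\,.
\]
Now I would substitute the expansion of $\Delta$ into each factor, collect terms by total power of $\lN^{1/2}$, and sort: the terms of order $\lN^{\l/2}$ with $0\le\l\le a$ are exactly those where one chooses, from $\nu$ factors of $\Delta$, the coefficients $\FockH_{j_1},\dots,\FockH_{j_\nu}$ with $|\bj|=\l$, which reproduces $\ResHzz\FockT_\l(z)$ with $\FockT_\l$ as in \eqref{eqn:T:l} (and $\FockT_0=\id$ handles the $\nu=0$, $\l=0$ case). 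All contributions of total order $\lN^{(a+1)/2}$ or higher must be shown to reorganize into $\lN^{(a+1)/2}\RestHz\,\FockS_a(z)$ with $\FockS_a$ as in \eqref{eqn:S:a}: these come both from the explicit remainder term $\ResHzz(\Delta\ResHzz)^a\Delta\RestHz$ and from the ``leftover'' high-order pieces when expanding $\Delta$ inside the finite sum $\sum_{\nu=0}^a$. The bookkeeping is to peel off one $\RestHz$ on the left (using $\ResHzz(\Delta\ResHzz)^a\Delta\RestHz$ and resumming the tail back into $\RestHz$) and to recognize that a single factor $\FockR_\nu$ contributes $\lN^{(\nu+1)/2}$ while the accompanying $\FockT_{a-\nu}$ contributes $\lN^{(a-\nu)/2}$, giving total $\lN^{(a+1)/2}$; summing over $\nu=0,\dots,a$ yields precisely $\FockS_a(z)=\sum_{\nu=0}^a\FockR_\nu\ResHzz\FockT_{a-\nu}(z)$.

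The main obstacle is the combinatorial reorganization of the remainder: one must be careful that after substituting $\Delta=\sum_{j\ge1}\lN^{j/2}\FockH_j+\lN^{(a+1)/2}\FockR_a$ into the $a$-fold iterate, \emph{every} monomial of total order $>a/2$ is accounted for exactly once and lands either in the $\FockS_a$ term or in the trailing $\RestHz$-term, with no double counting and no missing pieces — the natural way to see this cleanly is to group the ``high-order'' part of $\Delta$ as a single object, i.e. write $\Delta=\Delta_{\le a}+\lN^{(a+1)/2}\widetilde{\FockR}$ where $\Delta_{\le a}=\sum_{j=1}^a\lN^{j/2}\FockH_j$, and track the first place (reading left to right) where the high-order part or the explicit tail resolvent appears. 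A secondary point is domain/well-definedness: since $z\in\varrho(\FockH)\cap\varrho(\FockHz)$ the resolvents $\ResHzz$ and $\RestHz$ are bounded, but $\FockH_j$, $\FockR_\nu$ and $\FockHplus$ are unbounded, so strictly speaking \eqref{eqn:lem:exp:resolvent} should be read as an identity of quadratic forms on a suitable dense domain (vectors of finite particle number in the range of these resolvents), which is standard and I would simply remark; all the operator identities used above are the second resolvent identity, which holds at the level of quadratic forms here. No convergence is needed since every expansion is finite.
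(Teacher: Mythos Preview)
Your first step (splitting off $\FockHplus$ via $\ResHz=\RestHz+\RestHz\FockHplus\ResHz$) matches the paper exactly. The divergence is in how you expand $\RestHz$.

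You apply the second resolvent identity in the form $\RestHz=\ResHzz+\ResHzz\,\Delta\,\RestHz$ and iterate, which places the full resolvent $\RestHz$ on the \emph{right} of the remainder. But the stated lemma has $\RestHz$ on the \emph{left} in $\lN^{(a+1)/2}\RestHz\,\FockS_a(z)$. Your Neumann iteration therefore produces a mirror-image remainder of the form $\tilde\FockS_a(z)\,\RestHz$; while this is an equally valid expansion, it is not the formula asserted, and your ``peel off one $\RestHz$ on the left and resum'' is precisely the step that does not go through without further work. Concretely, already for $a=1$ your scheme yields the remainder $\lN\big(\ResHzz\FockR_1\ResHzz+\ResHzz\FockR_0\ResHzz\FockR_0\RestHz\big)$, and turning this into $\lN\,\RestHz\big(\FockR_0\ResHzz\FockH_1\ResHzz+\FockR_1\ResHzz\big)$ requires exactly the non-trivial reorganization you flag as the main obstacle but do not carry out.

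The paper sidesteps all of this by using the other form of the resolvent identity, $\RestHz=\ResHzz+\lN^{1/2}\RestHz\,\FockR_0\,\ResHzz$, which keeps $\RestHz$ on the left from the start, and then inducts on $a$ via the single algebraic relation $\FockR_\nu=\FockH_{\nu+1}+\lN^{1/2}\FockR_{\nu+1}$ (a direct consequence of Proposition~\ref{lem:expansion:HN}). At each step one splits every $\FockR_\nu$ in the current remainder into $\FockH_{\nu+1}$ plus the next $\FockR$, applies the base-case identity once more to the $\FockH$-piece, and reads off the recursion $\FockT_a=\sum_{\nu}\FockH_{\nu+1}\ResHzz\FockT_{a-\nu-1}$. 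No multinomial expansion, no combinatorial bookkeeping, and the form $\RestHz\FockS_a$ is maintained throughout. If you switch your resolvent identity to this left-sided version and replace the Neumann iterate by this induction, your proof becomes the paper's proof.
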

The proof of Lemma \ref{lem:expansion:resolvent} is postponed to Section \ref{subsec:proofs:resolvent}.
Essentially, one uses the identities
$$\FockHminus=\FockHz+\lN^\frac12\FockRz\,,\qquad \FockRz=\FockHo+\lN^\frac12\FockRo,
$$ 
which follow from Proposition \ref{lem:expansion:HN}, to conclude that
\begin{equation}\begin{split}
\RestHz\;=\;&\ResHzz+\lN^\frac12\RestHz\FockRz\ResHzz\\ 
\;=\;& \ResHzz+\lN^\frac12\ResHzz\FockHo\ResHzz +\mathcal{O}(\lN)\,,
\end{split}\end{equation}
and iterating this procedure up to order $\mathcal{O}\big(\lN^{(a+1)/2}\big)$ concludes the proof.\medskip

The next step is to integrate \eqref{eqn:lem:exp:resolvent} along the contour $\gan$ as in \eqref{eqn:functional:calculus}.
The first term in \eqref{eqn:lem:exp:resolvent} gives an integral over products of alternately  $(z-\FockHz)^{-1}$ and $\FockHj$. After decomposing $1=\Pzn+\Qzn$ in each resolvent, we note that the term with exclusively $\Qzn$ vanishes because the integrand is, by construction, holomorphic in the interior of $\gan$. The remaining terms, all of which contain at least one projection $\Pzn$, can be simplified using the residue theorem. Note that $\Pzn/(z-\FockHz)=\Pzn/(z-\Ezn)$, hence the number of operators $\Pzn$ determines the order of the pole at $z=\Ezn$.  

The second term in \eqref{eqn:lem:exp:resolvent} is of the same structure as the first one but starts with $(z-\FockHminus)^{-1}$ instead of $(z-\FockHz)^{-1}$. For later convenience, we decompose the first identity as $1=\Pn+\Qn$. Moreover, in case of $\Qn$, we resolve all remaining identities as $1=\Pzn+\Qzn$  and note that the contribution with $\Qn$ and exclusively $\Qzn$ vanishes  as the integrand is holomorphic.

Finally, in the last term of \eqref{eqn:lem:exp:resolvent}, we decompose both identities as $1=\Pn+\Qn$ and observe that $\Pn\FockHplus=0$ because $\Pn$ projects onto a subset of $\FNp$, where $\FockHplus$ equals zero. This leaves only the term with twice $\Qn$, which vanishes upon integration.
In summary, we obtain the following formula for $\Pn$:

\begin{proposition}\label{prop:exp:P}
Let $a\in\N_0$, $n\in\N_0$, and $\gan$ as in \eqref{eqn:gamma:n}.
Then
\begin{eqnarray}
\Pn 
&=& \sum\limits_{\l=0}^a \lN^\frac{\l}{2}\Pnl+\lN^\frac{a+1}{2} \left(\FockBPn(a)+\FockBQn(a)\right)
\label{eqn:exp:P}
\end{eqnarray}
for $\Pnl$ as in Definition \ref{def:Pna} and where
\begin{equation}
\FockBPn(a)=\sum\limits_{\nu=0}^{a}\sum\limits_{m=1}^{a-\nu}\sum\limits_{\substack{\bj\in\N^m\\[2pt]|\bj|=a-\nu}}\frac{1}{2\pi\i}\goint\frac{\Pn}{z-\FockHminus}\,\FockRnu\ResHzz\FockHjo\ResHzz\mycdots\FockH_{j_m} \ResHzz\dz\,\label{B:P}
\end{equation}
and
\begin{equation}\begin{split}
\FockBQn(a)
\;=\;&\hspace{-0.1cm}\sum\limits_{\nu=0}^{a} 
\sum\limits_{m=1}^{a-\nu}
\sum\limits_{\substack{\bj\in\N^m\\[2pt]|\bj|=a-\nu}}
\sum\limits_{\l=0}^{m} 
\sum\limits_{\substack{\bk\in\{0,1\}^{m+1}\\|\bk|=\l}}
\frac{1}{2\pi\i}\goint
\frac{\Qn}{z-\tFockH}\FockRnu \\
&\hspace{1.5cm}\times\, 
\frac{\FockIn_{k_1}}{z-\FockHz}\FockHjo\frac{\FockIn_{k_2}}{z-\FockHz}\mycdots\FockH_{j_m}\frac{\FockIn_{k_{m+1}}}{z-\FockHz}\dz \qquad\quad
\label{B:Q}
\end{split}\end{equation}
with
\begin{equation}\label{tFockO}
\FockInk=\begin{cases}
\displaystyle\Pzn &\quad k=0\,,\\[7pt]
\displaystyle\Qzn &\quad k=1\,.
\end{cases}
\end{equation}
\end{proposition}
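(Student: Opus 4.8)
The strategy is to substitute the resolvent expansion of Lemma~\ref{lem:expansion:resolvent} into the Riesz projection formula $\Pn=\frac{1}{2\pi\i}\goint\ResHz\dz$ of \eqref{eqn:def:Pn} and integrate each of the three resulting groups of terms along $\gan$. First I would record that, for $N$ large enough that $\gan$ is admissible (cf.\ Definition~\ref{def:Pn}): by the choice of radius $\fgn$ in \eqref{eqn:fgn} the interior of $\gan$ meets $\sigma(\FockHz)$ only in $\Ezn$; and by Lemma~\ref{lem:known:results:FockHN}, for $N$ large the interior of $\gan$ meets $\sigma(\FockH)$ only in $\{\Enu:\nu\in\In\}$, while the rest of $\sigma(\FockH)$, in particular $\Emo$, lies outside. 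Hence $\gan\subset\varrho(\FockH)\cap\varrho(\FockHz)$, so \eqref{eqn:lem:exp:resolvent} may be applied pointwise on $\gan$ and integrated.

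\textbf{The main term.} For the first group $\frac{1}{2\pi\i}\goint\ResHzz\sum_{\l=0}^a\lN^{\l/2}\FockT_\l(z)\,\dz$ (with $\FockT_\l$ from \eqref{eqn:T:l}) I would insert in each of the $\nu+1$ resolvents of a generic summand the splitting $\ResHzz=\Pzn/(z-\Ezn)+\Qzn(z-\FockHz)^{-1}$. Every resulting term with no $\Pzn$-factor is holomorphic in the interior of $\gan$ and integrates to zero; for a term with $p\geq1$ such factors the residue theorem at $z=\Ezn$ (a pole of order $p$, whose residue distributes $p-1$ derivatives among the $\Qzn(z-\FockHz)^{-1}$-factors, producing the powers $(\Ezn-\FockHz)^{-k}$) evaluates the $\nu$-fold product $\frac{1}{2\pi\i}\goint\ResHzz\FockHjo\ResHzz\mycdots\FockHjnu\ResHzz\,\dz$ as $-\sum_{\bk\in\N_0^{\nu+1},\,|\bk|=\nu}\FockOn_{k_1}\FockHjo\FockOn_{k_2}\mycdots\FockHjnu\FockOn_{k_{\nu+1}}$ with $\FockOnk$ as in \eqref{FockO} (the factor $\FockOn_0=-\Pzn$ and the overall minus sign absorb the usual Kato bookkeeping; one checks that every surviving product enters with coefficient $\pm1$). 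This is exactly the Rayleigh--Schr\"odinger/Kato expansion of the spectral projector for the perturbation $\FockHminus-\FockHz=\sum_{j\geq1}\lN^{j/2}\FockHj$; summing over $1\leq\nu\leq\l$ and over $\bj\in\N^\nu$ with $|\bj|=\l$ gives $\Pnl$ of Definition~\ref{def:Pna}, and the $\l=0$ term gives $\Pzn=\Pn_0$. Hence the first group equals $\sum_{\l=0}^a\lN^{\l/2}\Pnl$.

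\textbf{The two remainders.} For the second group, $\lN^{(a+1)/2}\frac{1}{2\pi\i}\goint\RestHz\FockS_a(z)\,\dz$ with $\FockS_a$ from \eqref{eqn:S:a}, I would split only the leading resolvent as $\RestHz=\Pn(z-\FockHminus)^{-1}+\Qn(z-\FockHminus)^{-1}$ (legitimate because $\FockHminus$ preserves $\FNp$, so $\Pn$ commutes with it and $\Qn(z-\FockHminus)^{-1}$ is holomorphic in the interior of $\gan$). Keeping the remaining resolvents as $\ResHzz$, the $\Pn$-contribution is precisely $\FockBPn(a)$ of \eqref{B:P}. In the $\Qn$-contribution I would resolve every remaining $\ResHzz$ through $\FockIn_0=\Pzn$, $\FockIn_1=\Qzn$ as in \eqref{tFockO}, which produces the sum over $\bk\in\{0,1\}^{m+1}$ in \eqref{B:Q}, and then discard the single summand with all $\FockIn_{k_i}=\Qzn$: together with the $\Qn$ in front its integrand is holomorphic in the interior of $\gan$, so it integrates to zero, leaving $\FockBQn(a)$. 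For the third group, $\frac{1}{2\pi\i}\goint\RestHz\FockHplus\ResHz\,\dz$, I would split both outer resolvents into their $\Pn$- and $\Qn$-parts; since $\FockHplus=0\oplus(\cdots)$ vanishes on $\FNp$, which contains the range of $\Pn$, and the range of $\FockHplus$ lies in $\FgNp$, which is orthogonal to $\fEn$, we have $\Pn\FockHplus=0$ and $\FockHplus\Pn=0$, so only the term $\Qn(z-\FockHminus)^{-1}\FockHplus\Qn(z-\FockH)^{-1}$ survives, and it is holomorphic in the interior of $\gan$, hence integrates to zero. Collecting the three groups yields \eqref{eqn:exp:P}.

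\textbf{Main difficulty.} The combinatorics of the residue computation is routine --- it amounts to the classical Kato formula and can be verified by a short induction on $\nu$. The genuinely delicate point is the repeated use of the phrase ``the integrand is holomorphic in the interior of $\gan$'': since the operators $\FockHj$, $\FockRnu$, $\FockHplus$ are unbounded, one must verify that the relevant operator-valued functions (e.g.\ $z\mapsto\Qzn(z-\FockHz)^{-1}$, $z\mapsto\Qn(z-\FockHminus)^{-1}$, and the finite products $\FockT_\l(z)$, $\FockS_a(z)$) extend analytically to a neighbourhood of the closed disc bounded by $\gan$ as bounded maps between the appropriate scales of spaces (domains of powers of $\Np$, or of $\d\Gamma_\perp(\hH)^{1/2}$). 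This requires the a priori spectral information of Lemma~\ref{lem:known:results:FockHN} together with the relative-boundedness estimates for $\FockHj$ and $\FockRnu$ with respect to powers of $\Np$ established elsewhere in the paper (building on \cite{lewin2015_2}); once these are available, the term-by-term integration and the vanishing of all ``$\Qn$- and $\Qzn$-only'' contributions are fully justified.
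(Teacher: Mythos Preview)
Your proposal is correct and follows essentially the same route as the paper's own proof: insert Lemma~\ref{lem:expansion:resolvent} into the Riesz formula, evaluate the $\FockT_\l$-terms by the $\Pzn/\Qzn$ splitting and the residue theorem to obtain the Kato coefficients $\Pnl$, split the $\FockS_a$-remainder via $\Pn/\Qn$ (then $\Pzn/\Qzn$ in the $\Qn$-part, dropping the all-$\Qzn$ summand), and kill the $\FockHplus$-term using $\Pn\FockHplus=\FockHplus\Pn=0$. One small refinement: your justification that ``$\FockHminus$ preserves $\FNp$, so $\Pn$ commutes with it'' is not quite the right mechanism---the point is rather that $\FockHminus|_{\FNp}=\FockH|_{\FNp}$, so every $\Chinl\in\fEn\subset\FNp$ is an eigenvector of the self-adjoint $\FockHminus$, whence $\Pn$ commutes with $(z-\FockHminus)^{-1}$ and $\Qn(z-\FockHminus)^{-1}$ is analytic inside $\gan$.
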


To derive the coefficients $\En_\l$ of the energy expansion in Theorem \ref{thm:energy}, we observe that
\begin{eqnarray}
\Tr_\Fp\FockH\Pn & =& \frac{1}{2\pi\i}\Tr_\Fp\goint\frac{\FockH}{z-\FockH}\dz \;=\; \frac{1}{2\pi\i}\Tr_\Fp\goint\frac{z}{z-\FockH}\dz\nonumber\\
& =& \dzn \Ezn+\frac{1}{2\pi\i}\Tr_\Fp\goint\frac{z-\Ezn}{z-\FockH}\dz\,,
\end{eqnarray}
expand $(z-\FockH)^{-1}$ as in Lemma \ref{lem:expansion:resolvent}, and use the residue theorem to evaluate the  resulting expressions.

It remains to show that the difference
$$\left|\Tr_\Fp\FockAm\Pn -\sum\limits_{\l=0}^a\lN^\frac{\l}{2}\Tr_\Fp\FockAm\Pnl\right|$$
is of order $\lN^{(a+2)/2}$. We prove this in four steps.\medskip

\textbf{Step 1.}
First, recall that all low-energy eigenstates of $\HN$ exhibit condensation in $\varphi$, hence  the leading order contribution to $\Tr_\fHN\Am\PNn$ is determined by the condensate.
To take this into account, we define the auxiliary operator
\begin{equation}\label{Amred}
\FockAmred:=\FockAm-\expAm \oplus0\,, \qquad
\expAm:=\lr{\varphi^{\otimes m},\Am\varphi^{\otimes m}}_{\fH^m}\,,
\end{equation} 
where we already subtracted the leading order, i.e.,
\begin{eqnarray}
\Tr_\fHN\Am \PNn
&=& \Tr_\Fp\FockAm\Pn
\;=\;\dzn\expAm+ \Tr_\Fp\FockAmred\Pn\,.
\end{eqnarray}
Our goal is to conclude from Proposition \ref{prop:exp:P} that
\begin{equation}\label{eqn:thm:claim}
\Tr_\Fp\FockAmred \Pn= \sum\limits_{\l=0}^a\lN^\frac{\l}{2}\Tr_\Fp\FockAmred\Pnl +\mathcal{O}\big(\lN^\frac{a+2}{2}\big)\,,
\end{equation}
i.e., we must show that the error terms in \eqref{eqn:exp:P} are of the right order.
Given \eqref{eqn:thm:claim}, the statement of the theorem can be inferred as follows: By definition of $\FockAmred$, \eqref{eqn:thm:claim} implies that
\begin{equation}\label{eqn:thm:claim:1}
\Tr_\Fp\FockAmred\Pn
=\sum\limits_{\l=0}^a\lN^\frac{\l}{2}\Tr_\Fp\FockAm\Pnl -\expAm\sum\limits_{\l=0}^a\lN^\frac{\l}{2}\Tr_\Fp\Pnl
 +\mathcal{O}\big(\lN^\frac{a+2}{2}\big)\,.
\end{equation}
Due to Proposition \ref{prop:exp:P} and since $\Tr_\Fp\Pn=\Tr_\Fp\Pzn=\dzn$ by definition,  one can show that
\begin{equation}\label{eqn:thm:claim:2}
\dzn=\Tr_\Fp\Pn = \dzn+\sum\limits_{\l=1}^a\lN^\frac{\l}{2}\Tr_\Fp\Pnl +\mathcal{O}\big(\lN^\frac{a+1}{2}\big)
\end{equation} 
for any $a\in\N$, which implies that
$\Tr_\Fp\Pnl=0$ for any $\l\geq1$. Alternatively, this can be inferred  directly from the definition of  $\Pnl$. Hence, \eqref{eqn:thm:claim:1} yields
\begin{equation}\label{eqn:odd:lambda:vanish}
\Tr_\Fp\FockAm\Pn
=\Tr_\Fp\FockAmred\Pn +\dzn\expAm
=\sum\limits_{\l=0}^a\lN^\frac{\l}{2}\Tr_\Fp\FockAm\Pnl 
+\mathcal{O}\big(\lN^\frac{a+2}{2}\big)\,.
\end{equation}
It remains to prove the two estimates \eqref{eqn:thm:claim} and \eqref{eqn:thm:claim:2}.
To deal with both problems simultaneously, let us consider 
$$\FockA\in\big\{\FockAmred\,,\,\id\big\}\,.$$\smallskip

\noindent\textbf{Step 2.} 
First,  we show that $\FockA$ satisfies an estimate of the form 
\begin{equation}\label{eqn:A}
\norm{\FockA\bPhi}_\Fp 
\leq\fC N^{\alpha}\left( \norm{(\Np+1)\bPhi}_\Fp+\big\|\FockHz\bPhi\big\|_\Fp\right)\,.
\end{equation}
For $\FockA=\id$, this holds trivially with $\alpha=0$; for $\FockA=\FockAmred$, we prove \eqref{eqn:A} with $\alpha=-\frac12$ (Lemma \ref{lem:A}). 
Let us  explain the main idea of the proof for the simplest case  $m=1$. 
First, we use $\UNp$ to reduce the problem to an estimate on $\fHN$ and  insert identities $1=p_1+q_1$ (see \eqref{p_and_q}), i.e.,
\begin{equation}\label{eqn:A:explanation}
\norm{\FockA^{(1)}_\mathrm{red}\bPhi}_\Fp = 
\Big\|\left(p_1A^{(1)}_1p_1-\expAo + (q_1A^{(1)}_1p_1+\hc)+q_1A^{(1)}_1q_1\right)\UNp^*\bPhi\Big\|_{\fHN}
\end{equation} 
for any $\bPhi\in\FNp\oplus 0$.
For the first term, one observes that
\begin{equation}
p_1A^{(1)}_1p_1-\expAo  = -q_1 \expAo \,,
\end{equation}
hence every contribution to \eqref{eqn:A:explanation} contains at least one projection $q$ onto the orthogonal complement of the condensate wave function. This gives a prefactor $N^{-1/2}$ because
\begin{equation}
\norm{q_1\UNp^*\bPhi}_\fHN = N^{-\frac12}\norm{\d\Gamma_\perp(q)^\frac12\UNp^*\bPhi}_\fHN
= N^{-\frac12}\norm{\Np^\frac12\bPhi}_\FNp\,.
\end{equation}
To control the action of $A^{(1)}$ on $\UNp^*\bPhi$, note that $A^{(1)}$ is relatively bounded by $\hH$ by assumption, and, for any $\psi_N\in\fH^N_\sym$,
\begin{eqnarray}
\norm{\hH_1\psi_N}_\fHN^2 &=& N^{-1}\sum\limits_{j=1}^N\lr{\psi_N,\hH_j\hH_j\psi_N}_\fHN\nonumber\\
&\leq& N^{-1}\sum\limits_{1\leq j,\l\leq N}\lr{\psi_N,\hH_j\hH_\l\psi_N}_\fHN
\;=\; N^{-1}\norm{\boldKz\psi_N}^2_\fHN
\end{eqnarray}
by permutation symmetry of $\psi_N$ and as $\hH\geq0$.
The full argument is given in Section \ref{subsec:proofs:A}.\medskip

\noindent\textbf{Step 3.}
Proposition \ref{prop:exp:P} implies that
\begin{equation}\label{eqn:proof:thm:exp:P}
\Tr_\Fp\FockA\Pn-\sum\limits_{\l=0}^a \lN^\frac{\l}{2}\Tr_\Fp \FockA\Pnl 
= \lN^{\frac{a+1}{2}}\left(\Tr_\Fp\FockA\FockBPn(a) + \Tr_\Fp\FockA\FockBQn(a)\right)\,,
\end{equation}
with $\FockBPn$ and $\FockBQn$ as defined in \eqref{B:P} and \eqref{B:Q}.
Let us sketch the estimate of the remainders for the leading order $a=0$ and the simplest case of a non-degenerate eigenvalue of $\FockHz$ (and thus $\FockH$). In this case, 
\begin{subequations}
\begin{eqnarray}
\Tr_\Fp\FockA\FockBQn(0)&=&
\frac{1}{2\pi\i}\Tr_\Fp\goint\frac{1}{z-\Ezn}\frac{\Qn}{z-\tFockH}\FockRz\Pzn\FockA\dz\,,\\
\Tr_\Fp\FockA\FockBPn(0)&=&
\frac{1}{2\pi\i}\Tr_\Fp\goint\frac{\Pn}{z-\En}\FockRz\ResHzz\FockA\dz\,,
\end{eqnarray}
\end{subequations}
both of which contain at least one rank-one projection. By construction, the circumference of  $\gan$  as well as its distance to   $\En$ and $\Ezn$ are of order one. Hence, after interchanging trace and integral, it remains to control
\begin{subequations}
\begin{eqnarray}
\left|\lr{\Chizn,\FockA\frac{\Qn}{z-\FockHminus}\FockRz\Chizn}\right|_\Fp
&\leq& \norm{\FockA\Chizn}_\Fp\Big\|\frac{\Qn}{z-\FockHminus}\Big\|_\op\norm{\FockRz\Chizn}_\Fp\,,\label{eqn:explanation:BQ}\\
\left|\lr{\Chin,\FockRz\ResHzz\FockA\Chin}\right|_\Fp
&\leq&\norm{\Chin}_\Fp\Big\|\FockA\ResHzz\FockRz\Chin\Big\|_\Fp\label{eqn:explanation:BP}
\end{eqnarray}
\end{subequations}
for $z\in\gan$. To estimate these expressions, recall that $\FockRz$ is constructed out of the operators $\mathbb{K}_j$ from \eqref{eqn:K:notation} and the Taylor remainders in Lemma \ref{lem:calculus}. By \eqref{eqn:ass:v:v*phi^2} and \eqref{eqn:HS:norm:K}, $\boldKo$ to $\boldKth$ are bounded by powers of $(\Np+1)$. Concerning $\boldKf$, note that it can be written as
\begin{equation}
\boldKf=\d\Gamma_\perp(v)+\d\Gamma_\perp\left(v*\varphi^2\otimes\id+\id\otimes v*\varphi^2+\id\otimes\id \lr{\varphi,v*\varphi^2\varphi}\right)\,.
\end{equation}
Whereas the second term can be controlled by powers of $(\Np+1)$, this is not true for $\d\Gamma_\perp(v)$ since $v$ may be unbounded. However, due to \eqref{eqn:ass:v:2:Delta:bound}, it can be bounded in terms of $\boldKz^{1/2}$ and $(\Np+1)$ (Lemma \ref{lem:K:commutators}).
In summary, we find (see Lemma \ref{lem:K:a:norms:estimate:remainders}) that
\begin{equation}
\norm{\FockRz\Chizn}_\Fp \leq \fC\left(\norm{(\Np+1)^2\Chizn}+\norm{(\Np+1)^\frac32\FockHz\Chizn}\right)\leq\fC(n)
\end{equation}
since $\norm{(\Np+1)^\frac32\FockHz\Chizn}\leq\fC\norm{(\Np+1)^\frac32\Chizn}$ and because finite moments of $\Np$ with respect to $\Chizn$ are bounded uniformly in $N$ (Lemma \ref{lem:moments:Chiz}). Analogously, \eqref{eqn:A} yields
\begin{equation}
\eqref{eqn:explanation:BQ}\leq \fC(n) N^\alpha\,,
\end{equation}
with $\alpha=-1/2$ for $\FockA=\FockAmred$ and $\alpha=0$ for $\FockA=\id$. Moreover,
\begin{equation}
\eqref{eqn:explanation:BP}\leq \fC N^\alpha\Big\|(\Np+1)\ResHzz\FockRz\Chin\Big\|_\Fp\leq \fC N^\alpha\norm{\FockRz\Chin}_\Fp\,.\label{eqn:explanation:Chin}
\end{equation} 
The last inequality, which is proven in Lemma \ref{lem:aux}, follows essentially from the observation that $\Np\leq \fC\,\BogUz(\FockHz-\Ezz+1)\BogUz^*$, for $\BogUz$ the Bogoliubov transformation diagonalizing $\FockHz$ (Lemma \ref{lem:Np:ls:FockHz}), because one can control the action of $\BogUz$ on the number operator (Lemma~\ref{lem:number:BT}) sufficiently well. 
As opposed to \eqref{eqn:explanation:BQ}, we do not \textit{a priori} know this to be of order $N^\alpha$, since we do not have sufficient control of $(\Np+1)^b\Chin$ for $b>1/2$ and of $\FockR_0\Chin$, which contains a contribution $\boldKf\Chin$. 
\\

\noindent\textbf{Step 4.}
To prove a uniform bound for $\Tr_\Fp(\Np+1)^b\Pn$ for any $b\geq1$, we make use of the \textit{a priori} bound
\begin{equation}\begin{split}\label{eqn:explanation:a:priori}
\Tr_\Fp(\Np+1)\Pn & \leq\fC(n)\,,\\
\Tr_\Fp(\Np+1)^b\Pn & \leq \fC(b,n) N^\frac13 \Tr_\Fp(\Np+1)^{b-1}\Pn
\end{split}\end{equation} 
(Lemma \ref{lem:moments:Chi}) to close a bootstrap argument.  
Let us explain the strategy for  the simplest case $b=2$ and a non-degenerate eigenvalue $\Ezn$.
First, we expand $\Pn$ one step around $\Pzn$, i.e.,  we apply \eqref{eqn:proof:thm:exp:P} to $\FockA=(\Np+1)^2$ for $a=0$. Since  $\Tr_\Fp(\Np+1)^2\Pzn$ is bounded uniformly in $N$, it remains to show that the error terms corresponding to \eqref{eqn:explanation:BQ} and \eqref{eqn:explanation:BP} are bounded. Whereas \eqref{eqn:explanation:BQ} is clearly bounded uniformly in $N$, we make use of the above \textit{a priori} bound to estimate \eqref{eqn:explanation:BP}. The positive powers of $N$ arising from this can be compensated for by the prefactor $\lN^{1/2}$ in \eqref{eqn:proof:thm:exp:P}, which, however, requires some manipulations since we do not yet have a sufficient bound for $\boldKf\Chin$. 
This cancellation is precisely the point where the restriction $\varepsilon(N)\leq CN^\frac13$ in Assumption \ref{ass:cond} enters. The full argument is given in Lemma \ref{lem:bootstrap}. Note that for the $d$-dimensional torus, a uniform bound for $\Tr_\Fp(\Np+1)^b$ was shown in \cite[Corollary~3.2]{mitrouskas_PhD} by a different argument.

Finally, the estimate $\Tr_\Fp\boldKf^2\Pn\leq\fC$ follows from a similar bootstrap argument, using the \textit{a priori} bound 
\begin{equation}
\boldKf\leq\fC\left((\Np+1)^\frac32\d\Gamma_\perp(\hH)(\Np+1)^\frac32 + (\Np+1)^4\right)
\end{equation}
together with Assumption \ref{ass:cond} and the previous estimate of $\Tr_\Fp(\Np+1)^b\Pn$.
\medskip

\begin{remark}\label{rem:v:bd}
For interactions $v\in L^\infty(\R^d)$, Step 4 is not necessary.
In this case, Assumption~\ref{ass:cond} holds with $\varepsilon(N)=\mathcal{O}(1)$ \cite[Lemma 1]{grech2013}, hence the \textit{a priori} bound \eqref{eqn:explanation:a:priori} is already uniform in $N$ (see Lemma \ref{lem:moments:Chi}), and, moreover, $\boldKf$ is bounded by powers of $\Np$. 

The latter also explains why the estimate of the growth of $\fC(n,m,a)$ in $a$ is better than for generic $v$ (Remark \ref{rem:growth:of:const}): since all operators $\FockHj$ and $\FockR_j$ from the expansion of $\FockHminus$ are bounded by powers of $\Np$ (and not by $\FockHz$), each commuting with a resolvent $(z-\FockHz)^{-1}$ cancels one of these powers as in \eqref{eqn:explanation:Chin}. Consequently, the final power of $\Np$ acting on $\Chin$ and $\Chizn$ is less than in the generic case, where this effect is cancelled by $\FockHz$ hitting the resolvent. 
Since conjugating powers of $\Np$ with Bogoliubov transformations is the main source for the growth in $a$ (see Lemma \ref{lem:number:BT}), this leads to a better estimate.
\end{remark}

\section{Bogoliubov theory}\label{sec:Bog_theory}
In this section, we summarize some known results concerning the Bogoliubov Hamiltonian $\FockHz$ and its connection to the $N$-body Hamiltonian $\HN$. 
As a preparation,  recall that
\begin{equation} \label{eqn:aux:1}
\ad_x F(\Np)= F(\Np-1)\ad_x \,,\qquad a_x F(\Np)= F(\Np+1)a_x 
\end{equation} 
for any function $F$.
Moreover,  normal ordered expressions can be bounded in terms of $\Np$:
\begin{lem}\label{lem:aux:new}
Let $n,p\geq 0$, $f:\fHp^p\to\fHp^n$ a bounded operator with (Schwartz) kernel $f(x^{(n)};y^{(p)})$, and $\bPhi\in\Fp$. Then
\begin{equation}
\left\|\int\dx^{(n)}\dy^{(p)}f(x^{(n)};y^{(p)})\ad_{x_1}\,\mycdots\,\ad_{x_n}\,a_{y_{1}}\,\mycdots\,a_{y_{p}}\bPhi\right\|_\Fp 
\leq \norm{f}_{\fHp^{p}\to\fHp^n}\norm{(\Np+n)^\frac{n+p}{2}\bPhi}_\Fp\,.
\end{equation}
\end{lem}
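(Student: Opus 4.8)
The plan is to treat this as a standard $\Np$-estimate for a normal-ordered monomial, proved sector by sector, the key tool being the identity $\|V\|=\|V^*\|$ for the creation-operator part. First I would replace the formal integral by an orthonormal basis $\{\psi_j\}_{j\ge 1}$ of $\fHp$, exactly as in the definition of $\d\Gamma_\perp$ above, so that the operator in question reads
\begin{equation*}
A=\sum_{\bk\in\N^n}\sum_{\bj\in\N^p}\big\langle\psi_{k_1}\otimes\cdots\otimes\psi_{k_n},\,f\,(\psi_{j_1}\otimes\cdots\otimes\psi_{j_p})\big\rangle\,\ad(\psi_{k_1})\cdots\ad(\psi_{k_n})\,a(\psi_{j_1})\cdots a(\psi_{j_p})\,.
\end{equation*}
Since $A$ shifts the particle number by $n-p$, the vectors $A\phi^{(j_0)}$ obtained from the sectors $\phi^{(j_0)}$ of $\bPhi$ lie in mutually orthogonal sectors of $\Fp$; hence it suffices to establish the bound for $\bPhi$ supported in a single sector $\bigotimes_\sym^{j_0}\fHp$ — where $(\Np+n)^{(n+p)/2}$ acts as the scalar $(j_0+n)^{(n+p)/2}$ — and then sum the squares over $j_0$.

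For such a $\bPhi$ I would first handle the annihilation operators. Setting $\eta_\bj:=a(\psi_{j_1})\cdots a(\psi_{j_p})\bPhi\in\bigotimes_\sym^{j_0-p}\fHp$ (which is $0$ if $j_0<p$), the identity $\sum_j\ad(\psi_j)a(\psi_j)=\Np$ together with repeated use of the pull-through rules \eqref{eqn:aux:1} gives
\begin{equation*}
\sum_{\bj\in\N^p}\|\eta_\bj\|_\Fp^2=\big\langle\bPhi,\Np(\Np-1)\cdots(\Np-p+1)\bPhi\big\rangle_\Fp=j_0(j_0-1)\cdots(j_0-p+1)\,\|\bPhi\|_\Fp^2\le(j_0+n)^p\,\|\bPhi\|_\Fp^2\,.
\end{equation*}
Next I would let $f$ act: writing $\xi_\bk:=\sum_{\bj}\langle\psi_{k_1}\otimes\cdots\otimes\psi_{k_n},f(\psi_{j_1}\otimes\cdots\otimes\psi_{j_p})\rangle\,\eta_\bj$, the map $(\eta_\bj)_\bj\mapsto(\xi_\bk)_\bk$ is precisely the action of $\id_\Fp\otimes f$ on $\sum_\bj\eta_\bj\otimes(\psi_{j_1}\otimes\cdots\otimes\psi_{j_p})\in\Fp\otimes\fHp^{\otimes p}$, so, since $\{\psi_{j_1}\otimes\cdots\otimes\psi_{j_p}\}_\bj$ is an orthonormal basis of $\fHp^{\otimes p}$,
\begin{equation*}
\sum_{\bk\in\N^n}\|\xi_\bk\|_\Fp^2\le\|f\|_{\fHp^p\to\fHp^n}^2\sum_{\bj\in\N^p}\|\eta_\bj\|_\Fp^2\,.
\end{equation*}

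For the creation operators I would use the adjoint trick. We have $A\bPhi=\sum_{\bk}\ad(\psi_{k_1})\cdots\ad(\psi_{k_n})\xi_\bk$ with all $\xi_\bk\in\bigotimes_\sym^{j_0-p}\fHp$, so I would consider
\begin{equation*}
V\colon\ \bigoplus_{\bk\in\N^n}\bigotimes_\sym^{j_0-p}\fHp\ \longrightarrow\ \bigotimes_\sym^{j_0-p+n}\fHp\,,\qquad (\zeta_\bk)_\bk\mapsto\sum_\bk\ad(\psi_{k_1})\cdots\ad(\psi_{k_n})\zeta_\bk\,,
\end{equation*}
whose adjoint is $\eta\mapsto(a(\psi_{k_n})\cdots a(\psi_{k_1})\eta)_\bk$. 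The same pull-through computation as before yields $\|V^*\eta\|^2=\langle\eta,\Np(\Np-1)\cdots(\Np-n+1)\eta\rangle\le(j_0+n)^n\|\eta\|^2$ on the sector $\bigotimes_\sym^{j_0-p+n}\fHp$, hence $\|V\|=\|V^*\|\le(j_0+n)^{n/2}$ (which also shows $V$, and thus $A\bPhi$, is well defined). Combining the three displays,
\begin{equation*}
\|A\bPhi\|_\Fp=\big\|V\big((\xi_\bk)_\bk\big)\big\|_\Fp\le(j_0+n)^{n/2}\Big(\sum_{\bk}\|\xi_\bk\|_\Fp^2\Big)^{1/2}\le\|f\|_{\fHp^p\to\fHp^n}\,(j_0+n)^{\frac{n+p}{2}}\,\|\bPhi\|_\Fp\,,
\end{equation*}
which is the claim on a single sector; summing the squares over $j_0$ and using orthogonality of the output sectors finishes the proof. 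The only genuinely delicate point is the bookkeeping that turns the formal integral into the basis sum and guarantees convergence of the series involved, but once $\bPhi$ is restricted to one sector this is immediate from the estimates above, so I do not expect a real obstacle — the argument is essentially a careful count of particle numbers together with $\|V\|=\|V^*\|$.
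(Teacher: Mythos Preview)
Your proof is correct; the sector-by-sector reduction, the annihilation estimate $\sum_\bj\|\eta_\bj\|^2=\langle\bPhi,\Np(\Np-1)\cdots(\Np-p+1)\bPhi\rangle$, the identification of the $f$-step with $\id\otimes f$, and the adjoint trick $\|V\|=\|V^*\|$ for the creation part all go through as written, yielding exactly the stated bound. The paper itself does not prove this lemma but simply cites \cite[Lemma~5.1]{QF}, so there is nothing to compare beyond noting that your argument is the standard one and is essentially what that reference contains.
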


A proof is given in \cite[Lemma 5.1]{QF}.
In the following, we will always assume that Assumptions~\ref{ass:V}, \ref{ass:v} and \ref{ass:cond} are satisfied.

\subsection{Bogoliubov transformations}
\label{subsec:pre:Bogoliubov}
We begin with briefly recalling the concept of Bogoliubov transformations, mainly following \cite{solovej_lec,QF}.
Let us consider
\begin{equation}
F=f\oplus Jg=f\oplus \overline{g}=\begin{pmatrix}f \\ \overline{g} \end{pmatrix} \in\fHp\oplus\fHp\,,
\end{equation}
where  $J:\fHp\to\fHp$,  $(Jf)(x)=\overline{f(x)}$, denotes complex conjugation, and define the generalized creation and annihilation operators $A(F)$ and $\Ad(F)$ as
\begin{equation}\label{eqn:A(F)}
A(F)=a(f)+\ad(g)\,, \quad \Ad(F)=A(\cJ F)=\ad(f)+a(g)
\end{equation}
for
$ \cJ=\begin{pmatrix}0 & J\\J&0\end{pmatrix}$. 
An operator $\BogV$ on $\fHp\oplus\fHp$  such that  $F\mapsto A(\BogV F)$ has the same properties as  $F\mapsto A(F)$, i.e., 
\begin{equation}
\Ad(\BogV F)=A(\BogV\mathcal{J}F)\,,\qquad [A(\BogV F_1),\Ad(\BogV F_2)]=[A(F_1),\Ad(F_2)]\,,
\end{equation}
is called a \emph{(bosonic) Bogoliubov map}.

\begin{definition}
A bounded operator $\BogV:\fHp\oplus \fHp\to \fHp\oplus \fHp$
is a Bogoliubov map if 
\begin{equation}
\BogV^*\mathcal{S}\BogV=\mathcal{S}=\BogV\mathcal{S}\BogV^*\,,\qquad \mathcal{J}\BogV\mathcal{J}=\BogV
\end{equation}
for $\cS=\begin{pmatrix}
1 & 0 \\ 0 & -1
\end{pmatrix}$.
Equivalently, $\BogV$ has the block form
\begin{equation}\label{BogV:block:form}
\BogV:=\begin{pmatrix}U & \Vbar\\V & \Ubar\end{pmatrix}\,,\quad U,V:\fHp\to \fHp\,,
\end{equation}
where $U$ and $V$ satisfy the relations
\begin{equation}\label{eqn:rel:U:V}
U^*U=\id+V^*V\,, \qquad UU^*=\id+\Vbar\,\Vbar^*\,, \quad V^*\Ubar=U^*\Vbar\,,\quad UV^*=\Vbar\,\Ubar^*\,.
\end{equation}
We denote the set of Bogoliubov maps on $\fHp\oplus\fHp$ as
\begin{equation}
\fV(\fHp):=\left\{\BogV\in\mathcal{L}\left(\fHp\oplus\fHp\right)\,|\,\BogV \text{ is a Bogoliubov map }\right\}.
\end{equation}
\end{definition}
The adjoint and inverse of  $\BogV\in\fV(\fHp)$  with block form \eqref{BogV:block:form} are given as
\begin{equation}\label{eqn:BogV:-1}
\BogV^*=\begin{pmatrix}
U^* & V^*\\\Vbar^{\,*} & \Ubar^{\,*} \end{pmatrix}\,,\quad
\BogV^{-1}=\cS\BogV^*\cS=\begin{pmatrix}
U^* & -V^*\\ -\Vbar^{\,*} & \Ubar^{\,*} \end{pmatrix}\,.
\end{equation}
Under certain conditions, Bogoliubov maps can be unitarily implemented on $\Fp$ (see, e.g., \cite[Theorem 9.5]{solovej_lec}):
\begin{lem}
Let $\BogV\in\fV(\fHp)$. Then there exists a unitary transformation $\BogU:\Fp\to\Fp$ such that
\begin{equation}\label{eqn:unit:impl}
\BogU A(F)\BogU^*=A(\BogV F)
\end{equation}
for all  $F\in\fHp\oplus\fHp$
if and only if 
\begin{equation}
 \norm{V}_{\HS(\fHp)}^2 :=\Tr_{\fHp}(V^*V)<\infty
\end{equation}
(Shale--Stinespring condition).
In this case,  $\BogV$ is called (unitarily) im\-ple\-men\-ta\-ble. We refer to the unitary implementation of a Bogoliubov map as Bogoliubov transformation.
\end{lem}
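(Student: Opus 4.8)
The plan is to prove both directions of the Shale--Stinespring criterion. The hard part is sufficiency; necessity is short once one identifies $\BogU\vac$ with the Bogoliubov vacuum of $\BogV$.

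\emph{Necessity.} Suppose a unitary $\BogU$ on $\Fp$ satisfies \eqref{eqn:unit:impl}. Taking $F=f\oplus 0$ (so that $A(F)=a(f)$) and reading off the block form \eqref{BogV:block:form} gives $\BogU\,a(f)\,\BogU^*=a(Uf)+\ad(\overline{Vf})$ for every $f\in\fHp$. Hence $\Psi:=\BogU\vac$ is normalized and obeys the killing relation $\big(a(Uf)+\ad(\overline{Vf})\big)\Psi=\BogU a(f)\BogU^*\BogU\vac=\BogU a(f)\vac=0$ for all $f$. By \eqref{eqn:rel:U:V} one has $U^*U=\id+V^*V\geq\id$ and $UU^*=\id+\Vbar\,\Vbar^*\geq\id$, so $U$ is bounded and boundedly invertible; substituting $f=U^{-1}g$ turns the killing relation into $a_x\Psi=-\int\alpha(x;y)\,\ad_y\Psi\,\dy$ for a symmetric kernel $\alpha$ (built from $V$ and $U^{-1}$, and symmetric by \eqref{eqn:rel:U:V}) which lies in $\HS(\fHp)$ if and only if $V$ does. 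Comparing Fock-space sectors, the $(k-1)$-particle component of this identity determines $\Psi^{(k)}$ from $\Psi^{(k-2)}$; in particular $\Psi^{(0)}=0$ forces all components of $\Psi$ to vanish, contradicting $\norm{\Psi}=1$. Thus $\Psi^{(0)}\neq 0$, and the one-particle sector of the identity yields $\Psi^{(2)}=-\tfrac{1}{\sqrt2}\,\Psi^{(0)}\,\alpha$. Since $\Psi^{(2)}\in\fHp\otimes\fHp$, the kernel $\alpha$ is Hilbert--Schmidt, and therefore so is $V$.

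\emph{Sufficiency.} Assume $\norm{V}_{\HS}<\infty$. Then $V$ is compact, and the symplectic relation $\BogV^*\cS\BogV=\cS$ together with $\cJ\BogV\cJ=\BogV$ yields a Bloch--Messiah (Euler) decomposition: there exist unitaries $u_1,u_2$ on $\fHp$ and a self-adjoint $\kappa\geq0$ on $\fHp$ with $J\kappa J=\kappa$ and $\norm{\sinh\kappa}_{\HS}=\norm{V}_{\HS}<\infty$, such that $\BogV$ equals the product of the three Bogoliubov maps with block forms $\mathrm{diag}(u_1,\overline{u_1})$, $\left(\begin{smallmatrix}\cosh\kappa&\sinh\kappa\\\sinh\kappa&\cosh\kappa\end{smallmatrix}\right)$, and $\mathrm{diag}(u_2,\overline{u_2})$. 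Each gauge factor $\mathrm{diag}(u,\overline u)$ is implemented by the second quantization $\Gamma(u)=\bigoplus_{k\geq0}u^{\otimes k}$, a unitary on $\Fp$ with $\Gamma(u)a(f)\Gamma(u)^*=a(uf)$. For the middle factor, fix an orthonormal basis $\{e_j\}$ of $\fHp$ of real eigenvectors of $\kappa$, $\kappa e_j=\kappa_j e_j$; since $\kappa\in\HS$ we have $\kappa_j\to0$ and $\sum_j\kappa_j^2<\infty$. Let $S_j$ be the single-mode squeezing unitary generated by $\tfrac12\kappa_j\big(\ad(e_j)^2-a(e_j)^2\big)$, which fixes all modes $i\neq j$, satisfies $S_j^*a(e_j)S_j=\cosh(\kappa_j)a(e_j)+\sinh(\kappa_j)\ad(e_j)$, and has vacuum overlap $\lr{\vac,S_j\vac}=(\cosh\kappa_j)^{-1/2}$. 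Because $\log\cosh\kappa_j\leq\tfrac12\kappa_j^2$ and $\sum_j\kappa_j^2<\infty$, the tail products $\prod_{j>n}S_j$ tend strongly to the identity, so $\BogU_\kappa:=\prod_jS_j$ converges strongly to a unitary on $\Fp$ that implements the middle Bogoliubov map. Hence $\BogU:=\Gamma(u_1)\,\BogU_\kappa\,\Gamma(u_2)$ is a unitary on $\Fp$ implementing $\BogV$, since a product of implementers implements the product of the corresponding Bogoliubov maps.

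\emph{Main obstacle.} The substantive work is on the sufficiency side: (i) producing the Bloch--Messiah decomposition with the correct $\cJ$-symmetry, which is precisely where the Hilbert--Schmidt (hence compactness) hypothesis on $V$ enters, turning the polar/singular-value analysis of $\BogV$ into a genuine discrete eigen-decomposition of $\kappa$ with a $J$-real eigenbasis; and (ii) the strong convergence of the infinite product $\prod_jS_j$ to a \emph{unitary} on all of $\Fp$, for which the elementary computation $\lr{\vac,\prod_{j\leq n}S_j\vac}=\prod_{j\leq n}(\cosh\kappa_j)^{-1/2}$ exhibits $\sum_j\kappa_j^2<\infty$, i.e. $\norm{V}_{\HS}<\infty$, as exactly the condition that makes the construction close (one also checks $\onorm{\sinh\kappa\,(\cosh\kappa)^{-1}}=\sup_j\tanh\kappa_j<1$, which keeps the overlap product positive). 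Necessity, by contrast, requires only the inspection of the two-particle component of $\BogU\vac$ carried out above.
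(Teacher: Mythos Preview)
The paper does not give a proof of this lemma; it is stated as a known result with a reference to \cite[Theorem 9.5]{solovej_lec}. Your proposal, by contrast, supplies a complete argument along the standard lines: necessity via inspection of the two-particle component of the Bogoliubov vacuum $\BogU\vac$, and sufficiency via a Bloch--Messiah (Euler) factorisation of $\BogV$ into gauge factors (implemented by $\Gamma(u)$) and a diagonal squeezing (implemented as an infinite product of single-mode squeezers). This is essentially the textbook proof and is correct in outline.

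Two small points worth tightening. In the necessity argument, the recursion $\Psi^{(k)}\mapsto\Psi^{(k-2)}$ alone does not kill the odd sectors when $\Psi^{(0)}=0$; you should note separately that the sector-zero component of the killing relation forces $\Psi^{(1)}=0$, after which the recursion does the rest. In the sufficiency argument, the vacuum-overlap computation $\prod_{j\leq n}(\cosh\kappa_j)^{-1/2}\to c>0$ shows convergence of $\prod_{j\leq n}S_j$ only on $\vac$; to conclude that the tail $\prod_{j>n}S_j\to\id$ strongly on all of $\Fp$ (and hence that the limit is unitary, not merely isometric) one should extend the convergence to a dense set, e.g.\ to finite-excitation states $\ad(e_{i_1})\cdots\ad(e_{i_k})\vac$, where the tail acts as the identity for $n$ large. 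Also, with your sign convention $S_j^*a(e_j)S_j=\cosh(\kappa_j)a(e_j)+\sinh(\kappa_j)\ad(e_j)$, it is $\prod_jS_j^*$ (equivalently, $S_j$ with the generator's sign flipped) that implements the middle factor in the direction $\BogU A(F)\BogU^*=A(\BogV F)$.
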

If $V$ is Hilbert--Schmidt, the map
$\BogV\mapsto \BogU$
is a group homomorphism, which, in particular, implies that
\begin{equation}
\mathcal{U}_{\BogV^{-1}}=\left(\BogU\right)^{-1}=\BogU^*\,.
\end{equation}
Writing $U$, $V$  as integral operators with (Schwartz) kernels $U(x;y)$ and $V(x;y)$, i.e.,
\begin{equation}
(Uf)(x)=\int U(x;y)f(y)\dy\,, \qquad (Vf)(x)=\int V(x;y)f(y)\dy 
\end{equation} 
for any $f\in \fHp$, we can express the transformation rule \eqref{eqn:unit:impl} as
\begin{equation}\begin{split}
\BogU \,a_x\,\BogU^*&\;=\;\int \dy\, \overline{U(y;x)}\,a_y
+\int \dy\, \overline{V(y;x)}\,\ad_y\,,\\
\BogU\,\ad_x\,\BogU^*&\;=\;\int \dy\, V(y;x)\,a_y + \int\dy\, U(y;x)\,\ad_y\,.\label{eqn:trafo:ax}
\end{split}\end{equation}
In particular, powers of $\Np$ conjugated with $\BogU$ can be bound as follows (see \cite[Lemma 4.4]{QF} for a proof):
\begin{lem}\label{lem:number:BT}
Let $\BogV\in\mathfrak{V}(\fHp)$ be unitarily implementable and denote by $\BogU$ the corresponding Bogoliubov transformation on $\Fp$. Then it holds for any $b\in\N$ that
\begin{equation*}
\BogU(\Np+1)^b\BogU^* \leq C_\BogV^b\, b^b(\Np+1)^b
\end{equation*}
in the sense of operators on $\Fp$, where
\begin{equation}\label{CV}
C_\BogV:=2\norm{V}_\HS^2+\onorm{U}^2+1
\end{equation}
for $\BogV=\begin{pmatrix}U & \Vbar\\ V&\Ubar\end{pmatrix}$ and with $\onorm{\cdot}:=\norm{\cdot}_{\cL(\fHp)}$ and $\norm{\cdot}_\HS:=\norm{\cdot}_{\HS(\fHp)}$.
\end{lem}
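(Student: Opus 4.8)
The plan is to reduce the operator inequality to a moment bound in the transformed state. Since $\BogU$ is unitary and $(\Np+1)^{b}$ is a nonnegative self-adjoint operator, for every $\bPhi$ in the relevant domain one has
\begin{equation*}
\lr{\bPhi,\BogU(\Np+1)^{b}\BogU^*\bPhi}_{\Fp}\;=\;\norm{(\Np+1)^{b/2}\BogU^*\bPhi}_{\Fp}^{2}\,,
\end{equation*}
so the asserted estimate is equivalent to $\norm{(\Np+1)^{b/2}\BogU^*\bPhi}_{\Fp}\leq C_\BogV^{b/2}\,b^{b/2}\,\norm{(\Np+1)^{b/2}\bPhi}_{\Fp}$. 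Moreover $\BogU^*=\mathcal{U}_{\BogV^{-1}}$ implements the Bogoliubov map $\BogV^{-1}=\cS\BogV^{*}\cS$, whose off-diagonal block $-V^{*}$ satisfies $\norm{-V^{*}}_{\HS}=\norm{V}_{\HS}$ and whose diagonal block $U^{*}$ satisfies $\onorm{U^{*}}=\onorm{U}$, so the constant associated with $\BogV^{-1}$ equals $C_\BogV$; it therefore suffices to bound $\norm{(\Np+1)^{b/2}\BogU\bPhi}_{\Fp}$ by $(C_\BogV b)^{b/2}\norm{(\Np+1)^{b/2}\bPhi}_{\Fp}$ for an arbitrary unitarily implementable $\BogV\in\fV(\fHp)$.

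Next I would conjugate $\Np=\d\Gamma_{\perp}(\id)$. Inserting the transformation rules \eqref{eqn:trafo:ax} and normal ordering yields a second-order expression of the form
\begin{equation*}
\BogU\Np\BogU^{*}\;=\;\Np+2\,\d\Gamma_{\perp}\!\bigl(\Vbar\Vbar^{\,*}\bigr)+\norm{V}_{\HS}^{2}+\mathcal{R}_{+}+\mathcal{R}_{+}^{*}\,,
\end{equation*}
where $\mathcal{R}_{+}$ is the purely creation part built from the kernel of $UV^{*}=\Vbar\,\Ubar^{*}$ (which is Hilbert--Schmidt with $\norm{UV^{*}}_{\HS}\leq\onorm{U}\norm{V}_{\HS}$), and where I used $UU^{*}=\id+\Vbar\Vbar^{\,*}$ from \eqref{eqn:rel:U:V}. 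Estimating the diagonal part by $\d\Gamma_{\perp}(A)\leq\onorm{A}\,\Np$ (with $\onorm{\Vbar\Vbar^{\,*}}\leq\norm{V}_{\HS}^{2}$) and the off-diagonal part by Lemma~\ref{lem:aux:new} together with Cauchy--Schwarz in the mode index (using $\onorm{UU^{*}}=\onorm{U}^{2}$), one obtains the $b=1$ case $\BogU(\Np+1)\BogU^{*}\leq C_\BogV(\Np+1)$, and likewise the operator-norm-relative bounds $\norm{\mathcal{R}_{\pm}\bPhi}_\Fp,\norm{\mathcal{R}_{0}\bPhi}_\Fp\leq \fC(C_\BogV)\norm{(\Np+1)\bPhi}_\Fp$ for the number-conserving part $\mathcal{R}_{0}$ and the number-changing parts $\mathcal{R}_{\pm}$ of $\mathcal{R}:=\BogU\Np\BogU^{*}-\Np$. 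The crucial additional feature, visible from the mode expansion, is that $\mathcal{R}_{\pm}$ shift $\Np$ by exactly $2$, i.e.\ $[\Np,\mathcal{R}_{\pm}]=\pm 2\mathcal{R}_{\pm}$.

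For general $b$ I would expand $\BogU(\Np+1)^{b}\BogU^{*}=\bigl((\Np+1)+\mathcal{R}\bigr)^{b}$ into the at most $4^{b}$ ordered words in the letters $(\Np+1)$, $\mathcal{R}_{0}$, $\mathcal{R}_{+}$ and $\mathcal{R}_{-}=\mathcal{R}_{+}^{*}$, commute the number-conserving letters to the outside of each word by means of $[\Np,\mathcal{R}_{\pm}]=\pm 2\mathcal{R}_{\pm}$, and bound the resulting quadratic form by splitting each word in the middle and applying Cauchy--Schwarz together with the $b=1$ estimate letter by letter. Along a word of length $b$ the particle number seen by the letters deviates from the ambient value of $\Np$ by at most $2b$, so each letter contributes one factor $C_\BogV$ and one factor $(\Np+2b+1)$; using $\Np+2b+1\leq(2b+1)(\Np+1)$ this produces $\BogU(\Np+1)^{b}\BogU^{*}\leq(4C_\BogV)^{b}(2b+1)^{b}(\Np+1)^{b}$, and a careful accounting of the constants (in particular not squandering $\norm{V}_{\HS}^{2}$ against $\onorm{U}^{2}$ in the cross terms) collapses the prefactor to $C_\BogV^{b}b^{b}$. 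The main obstacle is precisely this bookkeeping: one must ensure that the exponential-in-$b$ contributions coming from the number of words, from the commutator shifts, and from the relative bound on $\mathcal{R}$ combine into no more than $b^{b}$, and that the model-independent constants assemble into exactly $C_\BogV$. A cleaner, though more computational, alternative is to replace $(\Np+1)^{b}$ by the falling factorials $\Np^{\underline{j}}=\int\ad_{x_{1}}\!\cdots\ad_{x_{j}}a_{x_{1}}\!\cdots a_{x_{j}}\,\d x^{(j)}$ for $j\leq b$, conjugate each factor through \eqref{eqn:trafo:ax} and Wick-order: the $\sim j^{j}$ partial contractions of the $2j$ operators each carry a kernel norm bounded by a power of $C_\BogV$, while Lemma~\ref{lem:aux:new} controls the surviving normal-ordered monomials by powers of $(\Np+1)$. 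Either way the essential mechanism is the same—Bogoliubov conjugation raises the particle number by only a bounded amount per application—which is what turns a naive $(\Np+2b)^{b}$ into the clean $b^{b}(\Np+1)^{b}$.
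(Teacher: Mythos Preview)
The paper does not give its own proof of this lemma; it cites \cite{QF} (Lemma~4.4 there). So there is no in-paper argument to compare your proposal against, and your sketch must be judged on its own.

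Your structural approach is sound and would yield \emph{a} bound of the form $\BogU(\Np+1)^b\BogU^*\leq C(b,\BogV)\,(\Np+1)^b$: the $b=1$ computation is correct (using $UU^*=\id+\Vbar\,\Vbar^*$ and the Hilbert--Schmidt control of the pair-creation kernel $UV^*$), and the expansion of $\bigl((\Np+1)+\mathcal{R}\bigr)^b$ into words together with the commutator identity $[\Np,\mathcal{R}_\pm]=\pm2\mathcal{R}_\pm$ is the right mechanism for pushing powers of $\Np$ through. For every use of this lemma in the present paper (Lemmas~\ref{lem:moments:Chiz} and~\ref{lem:aux}), any bound of the shape $(\fC\,b)^b(\Np+1)^b$ with $\fC$ depending only on $\BogVz$ suffices, so your argument would be adequate for the paper's purposes.

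The genuine gap is the claim that ``careful accounting'' collapses your honest bound $(4C_\BogV)^b(2b+1)^b$ to $C_\BogV^b\,b^b$. Your word expansion produces $4^b$ terms, and in each word the commutator shifts contribute factors of the form $(\Np+2b+1)$; crudely this gives at best a constant of order $(8C_\BogV)^b\,b^b$, not $C_\BogV^b\,b^b$. Nothing in your outline explains how the $4^b$ from the word count and the $2^b$ from the shift bound $(\Np+2b+1)\leq(2b+1)(\Np+1)$ disappear. The alternative you mention---conjugating $\Np^{\underline{j}}=\int\ad_{x_1}\!\cdots\ad_{x_j}a_{x_1}\!\cdots a_{x_j}\d x^{(j)}$ directly via \eqref{eqn:trafo:ax} and Wick-ordering---is indeed cleaner for tracking constants, because one sees immediately that each contraction contributes a factor bounded by $\norm{V}_\HS^2$, each surviving $\ad$ a factor $\onorm{U}$ or $\norm{V}_\HS$, and the combinatorics of pairings gives at most $j!\leq j^j$ terms; this is the route more likely to yield the precise constant $C_\BogV^b\,b^b$ as stated, but you have not carried it out.
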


Finally, we recall the notion of quasi-free states.
\begin{definition}\label{def:QF}
A normalized state $\bPhi\in\Fp$ is called a quasi-free (pure) state if  there exists some $\BogV\in\fV(\fHp)$ such that 
\begin{equation}
\bPhi=\BogU|\Omega\rangle\,.
\end{equation}
\end{definition}
Alternatively, quasi-free states can be defined via  Wick's rule (e.g.\ \cite[Theorem 1.6]{nam2011}):

\begin{lem}\label{lem:wick}
Let $\bPhi\in\Fp$ be normalized. Then $\bPhi$ is quasi-free if and only if 
\begin{equation}
\lr{\bPhi,\Number\bPhi}_\Fp<\infty
\end{equation}
and
\begin{subequations}\label{eqn:Wick}
\begin{align}
\lr{\bPhi,a^\sharp(f_1)\mycdots a^\sharp(f_{2n-1})\bPhi}_{\Fp} &= 0\,,\\
\lr{\bPhi,a^\sharp(f_1)\mycdots a^\sharp(f_{2n})\bPhi}_{\Fp} &= \sum\limits_{\sigma\in P_{2n}}\prod\limits_{j=1}^n \lr{\bPhi,a^\sharp(f_{\sigma(2j-1)})a^\sharp(f_{\sigma(2j)})\bPhi}_{\Fp}
\end{align}
\end{subequations}
for $a^\sharp\in\{\ad,a\}$, $n\in\N$ and $f_1\mydots f_{2n}\in \fHp$.
Here, $P_{2n}$ denotes the set of pairings
\begin{equation}
P_{2n}:=\{\sigma\in\mathfrak{S}_{2n}:\sigma(2a-1)<\min\{\sigma(2a),\sigma(2a+1)\} \;\forall a\in \{1,2\mydots 2n\} \}\,,
\end{equation}
where $\mathfrak{S}_{2n}$ denotes the symmetric group on the set $\{1,2\mydots 2n\}$.
\end{lem}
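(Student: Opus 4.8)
\emph{Approach.} The plan is to prove the two implications of the equivalence separately, the common tool being \eqref{eqn:unit:impl}--\eqref{eqn:trafo:ax}: a Bogoliubov transformation conjugates each $a^\sharp(f)$ into a finite linear combination of creation and annihilation operators. For the implication \emph{quasi-free} $\Rightarrow$ \eqref{eqn:Wick}, let $\bPhi=\BogU\vac$ with $\BogV\in\fV(\fHp)$ implementable. By Lemma~\ref{lem:number:BT} (applied to $\BogV$ and to $\BogV^{-1}$) the vector $\bPhi$ lies in the domain of every power of $\Np$, so all moments below are well defined and $\lr{\bPhi,\Np\bPhi}_\Fp=\Tr_{\fHp}(\overline{V}\,\overline{V}^{\,*})=\norm{V}_{\HS}^2<\infty$. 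For the moment identities, set $F_i:=f_i\oplus0$ or $0\oplus\overline{f_i}$ according to whether $a^\sharp(f_i)$ is an annihilation or a creation operator, so that $\BogU^* a^\sharp(f_i)\BogU=A(\BogV^{-1}F_i)=:A(G_i)$ is of the form $a(g_i)+\ad(h_i)$ with $g_i,h_i\in\fHp$ linear in $f_i$. Then
\[
\lr{\bPhi,a^\sharp(f_1)\mycdots a^\sharp(f_k)\bPhi}_\Fp=\lr{\Omega,A(G_1)\mycdots A(G_k)\Omega}\,,
\]
and one evaluates the right-hand side by normal ordering: expanding each $A(G_i)=a(g_i)+\ad(h_i)$ and commuting annihilation operators to the right, every term in which an annihilation operator reaches $\Omega$ on the right or a creation operator reaches $\Omega$ on the left drops out, so only the fully contracted terms survive. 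These are indexed by the perfect matchings of $\{1,\mydots,k\}$, each contributing $\prod\lr{\Omega,A(G_i)A(G_j)\Omega}$ over its pairs (with $i<j$); hence the sum vanishes for $k$ odd, and for $k=2n$ it equals $\sum_{\sigma\in P_{2n}}\prod_{j=1}^n\lr{\Omega,A(G_{\sigma(2j-1)})A(G_{\sigma(2j)})\Omega}$, the canonical enumeration of matchings being precisely $P_{2n}$. Undoing the conjugation in each two-operator factor gives $\lr{\Omega,A(G_i)A(G_j)\Omega}=\lr{\bPhi,a^\sharp(f_i)a^\sharp(f_j)\bPhi}_\Fp$, which is \eqref{eqn:Wick}; the combinatorial identification of the contractions with $P_{2n}$ is the only routine point.

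Conversely, assume \eqref{eqn:Wick} and $\lr{\bPhi,\Np\bPhi}_\Fp<\infty$. Then the operators on $\fHp$ with kernels $\gamma(x;y):=\lr{\bPhi,\ad_y a_x\bPhi}_\Fp$ and $\alpha(x;y):=\lr{\bPhi,a_x a_y\bPhi}_\Fp$ are trace class and Hilbert--Schmidt, respectively, and the block operator $\Gamma:=\begin{pmatrix}\gamma & \alpha\\ \alpha^* & \id+\overline{\gamma}\end{pmatrix}$ on $\fHp\oplus\fHp$ is bounded, self-adjoint and non-negative (testing against $f\oplus\overline{g}$ gives $\norm{(a(f)+\ad(g))\bPhi}_\Fp^2\ge0$). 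The point at which both \eqref{eqn:Wick} and the fact that $\bPhi$ is a \emph{unit vector} (a pure state) enter is that $\Gamma$ then saturates the purity relation $\alpha\alpha^*=\gamma(\id+\gamma)$ characterising generalized density matrices of pure Gaussian states; this is obtained by expanding four-point functions of $\bPhi$ once directly and once via \eqref{eqn:Wick}. Such a $\Gamma$ is Bogoliubov-equivalent to the vacuum one $\begin{pmatrix}0&0\\0&\id\end{pmatrix}$, and since $\alpha\in\HS(\fHp)$ the diagonalising Bogoliubov map satisfies the Shale--Stinespring condition, hence is implemented by a unitary $\BogU$ on $\Fp$. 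Then $\Psi:=\BogU^*\bPhi$ has $\gamma_\Psi=0$ and $\alpha_\Psi=0$ (conjugation by $\BogU^*$ acts on $\Gamma$ by the corresponding similarity transformation), so $\norm{a(f)\Psi}_\Fp^2=\lr{f,\gamma_\Psi f}=0$ for every $f\in\fHp$; thus $\Psi$ lies in the joint kernel of all annihilation operators, i.e.\ $\Psi=c\,\vac$ with $|c|=1$ by normalization, and $\bPhi=c\,\BogU\vac$ is quasi-free in the sense of Definition~\ref{def:QF}.

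The main obstacle is the structural claim in the converse: that \eqref{eqn:Wick} for a \emph{vector} state forces $\Gamma$ to saturate the purity relation and hence to be Bogoliubov-conjugate to the vacuum density matrix, and that the conjugating Bogoliubov map is unitarily implementable. Both facts are classical and may alternatively be quoted from the theory of quasi-free states (e.g.\ \cite[Theorem 1.6]{nam2011} or \cite{solovej_lec}), but neither is a one-line statement; the essential content one would have to supply is the passage from the algebraic identities \eqref{eqn:Wick} to the operator identity on $\Gamma$. The remaining parts---the CCR bookkeeping in the first direction and the trace-class/positivity estimates for $\gamma,\alpha,\Gamma$---are routine.
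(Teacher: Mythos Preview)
The paper does not prove this lemma: it is stated as a known characterization of quasi-free pure states and simply attributed to the literature (e.g.\ \cite[Theorem~1.6]{nam2011}, with \cite{solovej_lec} in the background). There is therefore no ``paper's proof'' to compare against; your proposal goes well beyond what the paper supplies.

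Your forward direction is the standard argument and is correct as sketched: conjugating by $\BogU^*$ reduces everything to vacuum expectations of products of $A(G_i)$, and Wick's theorem for the Fock vacuum (equivalently, normal ordering via the CCR) produces exactly the pairing sum indexed by $P_{2n}$. The finiteness of $\lr{\bPhi,\Np\bPhi}$ via Lemma~\ref{lem:number:BT} is also fine.

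For the converse, the overall architecture you describe---form $\Gamma$ from $(\gamma,\alpha)$, diagonalize it by a Bogoliubov map, check Shale--Stinespring from $\alpha\in\HS$, and then show $\BogU^*\bPhi$ is annihilated by all $a(f)$---is the right one and is what the references prove. The one point I would flag is your claim that the purity relation $\alpha\alpha^*=\gamma(\id+\gamma)$ is ``obtained by expanding four-point functions of $\bPhi$ once directly and once via \eqref{eqn:Wick}''. A direct four-point identity yields inequalities (positivity of $\Gamma$) and moment relations, but not this operator equality by itself; the saturation genuinely uses that $\bPhi$ is a \emph{vector} state in a more global way (after diagonalizing $\Gamma$ one has $\alpha'=0$ and $\gamma'$ diagonal, Wick's rule forces all moments to coincide with those of the product of single-mode thermal states, and purity then forces each $\gamma'_i=0$). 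You yourself identify this as the main obstacle and, like the paper, defer to \cite{nam2011,solovej_lec}; that is appropriate, but the four-point remark oversells how direct that step is.
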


\subsection{Properties of $\FockH$ and $\FockHz$}\label{subsec:pre:FockHz}

Since $\FockHz$ is a quadratic Hamiltonian, it can be diagonalized by Bogoliubov transformations, which makes it possible to compute its spectrum:
\begin{lem} \label{lem:known:results:FockHz}
\lemit{
\item \label{lem:BT:diag}
There exists a unitarily implementable Bogoliubov map 
$$\BogVz
=\begin{pmatrix} U_0 & \Vbar_0 \\ V_0 & \Ubar_0 \end{pmatrix}
\in\mathfrak{V}(\fHp)$$ 
such that the corresponding Bogoliubov transformation $\BogUz:\Fp\to\Fp$ diagonalizes $\FockHz$, i.e., there exists a self-adjoint operator $D>0$ on $\fHp$ such that 
\begin{equation}\label{eqn:BT:diag}
\BogUz\FockHz\BogUz^*=\d\Gamma_\perp(D)+\inf\sigma(\FockHz)\,.
\end{equation} 
The spectrum of $D$  is purely discrete and we denote its eigenvalues as
\begin{equation}
0<d^{(0)}<d^{(1)}<\dots<d^{(j)}<\dots\,.
\end{equation}
In particular, $D$ admits a complete set of normalized eigenfunctions, denoted as $\{\xi_j\}_{j\geq0}$.
\item \label{lem:known:results:FockHz:3}
The spectrum of $\FockHz$ is purely discrete, and the ground state energy of $\FockHz$ is negative.
For any $n\in\N$, there exists some $k\in\N_0$ and some tuple $(\nu_0\mydots\nu_{k})\in\N_0^{k+1}$ such that
\begin{equation}\label{eqn:eigenvalues:FockHz}
\Ezn=\Ezz+\nu_0\,d^{(0)}+\nu_1d^{(1)}+\dots+\nu_{k}d^{(k)}\,.
\end{equation}
Further,  $\fgn>0$, for $\fgn$ as in \eqref{eqn:fgn}.
\item \label{lem:known:results:FockHz:2}
The ground state of $\FockHz$ is unique and given by
\begin{equation}
\Chizz=\BogUz^*\vac\,.
\end{equation}
For each $n\in\N$, there exists a basis $\big\{\Chiznm\big\}_{1\leq m\leq \dzn}$ of $\fEzn$ such that 
\begin{equation}\label{eqn:excited:states:FockHz}
\Chiznm=\BogUz^*
\frac{\big(\ad(\xi_0)\big)^{\nu_0}}{\sqrt{\nu_0!}}
\frac{\big(\ad(\xi_1)\big)^{\nu_1}}{\sqrt{\nu_1!}}
\,\mycdots\,
\frac{\big(\ad(\xi_k)\big)^{\nu_{k}}}{\sqrt{\nu_{k}!}}
\vac
\end{equation}
for some $k\in\N_0$ and some tuple $(\nu_0\mydots \nu_{k})\in\N_0^{k+1}$  depending on $m$.
\item \label{lem:moments:Chiz}
Let $b\in\N_0$ and let  $\Chiznm\in\fEzn$ be given by \eqref{eqn:excited:states:FockHz}. Then
\begin{eqnarray}\label{eqn:moments:Chiz:nm}
\lr{\Chiznm,(\Np+1)^b\Chiznm}_\Fp\leq  \left(\fC\, b\, (1+\nu_0+\dots+\nu_{k})\right)^b \leq (\fC(n) b)^b\,,
\end{eqnarray} 
and
\begin{equation}\label{eqn:moments:Chiz}
\norm{(\Np+1)^b\Pzn}_{\cL(\Fp)}\leq (\fC(n) b)^b\,.
\end{equation}
\item \label{lem:Np:ls:FockHz}
In the sense of operators on $\Fp$, it holds that
\begin{equation}
\Np+1\;\leq\;\fC\,\BogUz^*(\Np+1)\BogUz \;\leq \;\fC\left(\FockHz-\Ezz+1\right)\,.
\end{equation}
}
\end{lem}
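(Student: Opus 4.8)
The plan is to reduce both bounds to elementary operator inequalities on $\Fp$ by conjugating with the Bogoliubov transformation $\BogUz$, using that conjugation by a unitary preserves operator inequalities.

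For the lower bound, $\Np+1\leq\fC\,\BogUz^*(\Np+1)\BogUz$, I would conjugate both sides with $\BogUz$, which turns the claim into $\BogUz(\Np+1)\BogUz^*\leq\fC(\Np+1)$. This is exactly Lemma~\ref{lem:number:BT} specialised to $b=1$ and $\BogV=\BogVz$, so it holds with $\fC=\CVz=2\norm{V_0}_\HS^2+\onorm{U_0}^2+1$, which depends only on quantities fixed by the model, as required by our convention for $\fC$.

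For the second inequality I would start from the diagonalisation \eqref{eqn:BT:diag}. Since $\FockHz$ has purely discrete spectrum by Lemma~\ref{lem:known:results:FockHz:3}, its infimum is the lowest eigenvalue $\Ezz$, hence $\FockHz-\Ezz+1=\BogUz^*\big(\d\Gamma_\perp(D)+1\big)\BogUz$. By Lemma~\ref{lem:BT:diag} the self-adjoint operator $D>0$ has purely discrete spectrum, so its smallest eigenvalue $d^{(0)}$ is strictly positive and $D\geq d^{(0)}\id_{\fHp}$; second quantisation preserves this, giving $\d\Gamma_\perp(D)\geq d^{(0)}\Np$ and therefore $\Np+1\leq\max\{1/d^{(0)},1\}\,\big(\d\Gamma_\perp(D)+1\big)$ on $\Fp$. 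Conjugating this bound with $\BogUz^*$ and inserting the previous identity yields $\BogUz^*(\Np+1)\BogUz\leq\fC\,(\FockHz-\Ezz+1)$, completing the argument.

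There is no genuine obstacle in this lemma: the substantive work is already encapsulated in Lemma~\ref{lem:number:BT}, whose proof in turn rests on the combinatorics of conjugating powers of $\Np$ by a Bogoliubov transformation. The only points requiring a word of care are the identification of $\inf\sigma(\FockHz)$ with the eigenvalue $\Ezz$ and the strict positivity of $d^{(0)}$ --- both furnished by Lemma~\ref{lem:known:results:FockHz} --- together with the bookkeeping that the constants $\fC$ depend only on $\onorm{U_0}$, $\norm{V_0}_\HS$ and $d^{(0)}$.
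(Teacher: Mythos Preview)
Your proof of part~(e) is correct and essentially identical to the paper's argument: the first inequality via Lemma~\ref{lem:number:BT} with $b=1$, the second via $\d\Gamma_\perp(D)\geq d^{(0)}\Np$ conjugated by $\BogUz^*$. The paper's only cosmetic difference is that it writes the constant in the second bound as the spectral gap $\gzz$ of $\FockHz$ rather than your $d^{(0)}$, but these coincide since the first excited state of $\d\Gamma_\perp(D)$ lies at energy $d^{(0)}$.
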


All statements of Lemma \ref{lem:known:results:FockHz} are well known and are proven for various models in, e.g., \cite{solovej_lec,nam_PhD,lewin2015_2,nam2016,napiorkowski2017}.  In the following, we summarize a proof for our model.

\begin{proof}
\noindent\textbf{Part (a).}
Let us abbreviate $\tilde{K}:=qKq$ for $K$ as in \eqref{def:K:kernel}. By Lemma~\ref{lem:hH},
$\tilde{K}(\hH+\tilde{K})^{-1}$ is Hilbert--Schmidt on $\fHp$ since
\begin{equation}
\norm{\tilde{K}(\hH+\tilde{K})^{-1}}_{\HS}\leq \norm{K}_{\HS}\onorm{(\hH+\tilde{K})^{-1}}\leq \gapH^{-1}\norm{K}_\HS
\end{equation}
as $K\geq 0$ and $\hH\geq\gapH>0$ on $\fHp$. Moreover,  $G:=(\hH+\tilde{K})^{-\frac12}\tilde{K}(\hH+\tilde{K})^{-\frac12}$ is Hilbert--Schmidt on $\fHp$ since
\begin{eqnarray}
\Tr\,(G^*G)&=&\Tr\left(\left(\tilde{K}(\hH+\tilde{K})^{-1}\right)^2\right)\leq \norm{\tilde{K}(\hH+\tilde{K})^{-1}}^2_\HS\,,
\end{eqnarray}
and  $\onorm{G}=\onorm{\tilde{K}^\frac12(\hH+\tilde{K})^{-1}\tilde{K}^{\frac12}}<1$ because
\begin{equation}
\tilde{K}^\frac12(\hH+\tilde{K})^{-1}\tilde{K}^\frac12 \;\leq\;\frac{\tilde{K}}{\gapH+\tilde{K}} \;\leq\;\frac{\onorm{\tilde{K}}}{\gapH+\onorm{\tilde{K}}}\id\,,
\end{equation}
where we used that the inverse is operator monotone and that $x\mapsto x(\gapH+x)^{-1}$ is increasing.
Hence, by \cite[Theorems 1 and 2]{nam2016}, there exists a unitarily implementable $\BogVz\in\mathfrak{V}(\fHp)$ such that 
\begin{equation}
\BogV_0\mathcal{A}\BogV_0^*
=\BogVz\begin{pmatrix} \hH+\tilde{K} & \tilde{K} \\ \tilde{K}& \hH+\tilde{K} \end{pmatrix}\BogVz^* =\begin{pmatrix}D & 0 \\ 0 & J D J \end{pmatrix}
\end{equation}
for some self-adjoint operator $D>0$ on $\fHp$, and
\begin{equation}
\BogUz\FockHz\BogUz^*=\d\Gamma_\perp(D)+\inf\sigma(\FockHz)\,,
\end{equation}
where $\BogUz$ denotes the unitary implementation of $\BogVz$ on $\Fp$. 
Finally, one can show as in step~(6) in the proof of \cite[Theorem A.1]{lewin2015_2} that $D$ has purely discrete spectrum.\medskip

\noindent\textbf{Parts (b) and (c).}
By \cite[Theorem A.1(iii-iv)]{lewin2015_2},  $\sigma(\FockHz)=\sigma_\mathrm{disc}(\FockHz)$ and $\inf\sigma(\FockHz)<0$.
Since $D>0$, $\vac$ is the unique ground state of $\d\Gamma_\perp(D)$ with eigenvalue zero, hence $\BogUz^*\vac$ is the unique ground state of $\FockHz$ with eigenvalue $\Ezz=\inf\sigma(\FockHz)$ by \eqref{eqn:BT:diag}.
By part (a), there is a complete set of normalized eigenstates $\{\xi_j\}_{j\geq0}$ for $D$, hence
\begin{equation}
\d\Gamma_\perp(D)=\sum\limits_{j\geq0}\dj\ad(\xi_j)a(\xi_j)\,.
\end{equation}
Consequently, all eigenstates of $\d\Gamma_\perp(D)$ can be written as
\begin{equation}
\frac{\big(\ad(\xi_0)\big)^{\nu_0}}{\sqrt{\nu_0!}}\,\mycdots\, \frac{\big(\ad(\xi_k)\big)^{\nu_k}}{\sqrt{\nu_k!}}\vac
\end{equation}
for some $k\in\N_0$, and all eigenvalues of $\d\Gamma_\perp(D)$ are of the form
\begin{equation}
\nu_0 d^{(0)}+\nu_1d^{(1)}+\dots+\nu_{k}d^{(k)}\,
\end{equation}
for some $k\in\N_0$ and $(\nu_0\mydots \nu_{k})\in\N_0^{k+1}$.
Finally, \eqref{eqn:excited:states:FockHz} and \eqref{eqn:eigenvalues:FockHz} follow from \eqref{eqn:BT:diag}.
\medskip

\noindent\textbf{Part (d).}
For $\Chiznm$  as in \eqref{eqn:excited:states:FockHz}, we compute by Lemma \ref{lem:number:BT} that
\begin{eqnarray}
&&\hspace{-1cm}\lr{\Chiznm,(\Np+1)^b\Chiznm}_\Fp \nonumber\\
&=&
\Big\|(\Np+1)^\frac{b}{2}\BogUz^*\frac{\big(\ad(\xi_0)\big)^{\nu_0}}{\sqrt{\nu_0!}}
\,\mycdots\,
\frac{\big(\ad(\xi_k)\big)^{\nu_{k}}}{\sqrt{\nu_{k}!}}
\vac\Big\|^2_\Fp \nonumber\\
&\leq&b^b\,\CVz^b\Big\|(\Np+1)^\frac{b}{2}\frac{\big(\ad(\xi_0)\big)^{\nu_0}}{\sqrt{\nu_0!}}
\,\mycdots\,
\frac{\big(\ad(\xi_k)\big)^{\nu_k}}{\sqrt{\nu_k!}}
\vac\Big\|^2_\Fp \,,
\end{eqnarray}
where $\CVz$ denotes the constant from Lemma \ref{lem:number:BT} for $\BogV=\BogVz$.
This proves \eqref{eqn:moments:Chiz:nm} because
\begin{eqnarray}
&&(\Np+1)^\frac{b}{2}\frac{\big(\ad(\xi_0)\big)^{\nu_0}}{\sqrt{\nu_0!}}
\,\mycdots\,
\frac{\big(\ad(\xi_k)\big)^{\nu_k}}{\sqrt{\nu_k!}}
\vac\nonumber\\
&&\quad=
(\nu_0+\dots+\nu_k+1)^\frac{b}{2}\frac{\big(\ad(\xi_0)\big)^{\nu_0}}{\sqrt{\nu_0!}}
\,\mycdots\,
\frac{\big(\ad(\xi_k)\big)^{\nu_k}}{\sqrt{\nu_k!}}\vac\,,
\end{eqnarray}
and \eqref{eqn:moments:Chiz} follows from the decomposition 
$ \Pzn=\sum_{m=1}^{\dzn}|\Chiznm\rangle\langle\Chiznm|$.
\medskip

\noindent\textbf{Part (e).}
This follows from parts (a) and (c) and by Lemma \ref{lem:number:BT} since
\begin{eqnarray}
\lr{\bPhi,(\FockHz-\Ezz)\bPhi}_\Fp 
&=&\lr{\BogUz\bPhi,\sum\limits_{j\geq0}d^{(j)}\ad(\xi_j)a(\xi_j)\BogUz\bPhi}_\Fp\nonumber \\
& \geq& \gzz\lr{\bPhi,\BogUz^*\Np\BogUz\bPhi}_\Fp \,.
\end{eqnarray}
\end{proof}

Next, we recall that for excitation energies of order one, the eigenvalues of $\FockHN$ converge to  eigenvalues of $\FockHz$ as $N\to\infty$. Statements of this kind were proven in \cite{seiringer2011,grech2013,lewin2015_2,mitrouskas_PhD}.

\begin{lem}
\label{lem:known:results:FockHN}
\lemit{
\item \label{lem:known:results:FockHN:E}
For any $\nu\in\N_0$ and $\Enu$ as in Definition \ref{def:FockHN}, there exists some $n\in\N_0$ such that
\begin{equation}\label{eqn:known:results:FockHN:E}
\lim\limits_{N\to\infty}\Enu=\Ezn\,.
\end{equation}
\item \label{lem:known:results:FockHN:geq:Np}
In the sense of operators on $\FNp$, 
\begin{equation}\label{eqn:Np:bd:by:HN}
\Np+1\;\leq\;  \fC\left(\FockHN+ N^\frac13\right)\,.
\end{equation}
\item \label{lem:moments:Chi}
Let $\Chin\in\fEn$ for $n\in\N_0$. Then 
\begin{equation}\label{eqn:moments:Chi:1}
\lr{\Chin,(\Np+1)\Chin}_\Fp \leq \fC(n)\,,
\end{equation}
and
\begin{equation}\label{eqn:moments:Chi:a:priori}
\lr{\Chin,(\Np+1)^b\Chin}_\Fp
\leq \fC(b,n) N^\frac{\l}{3}\lr{\Chin,(\Np+1)^{b-\l}\Chin}_\Fp
\end{equation}
for $b\in\N_0$ and any $0\leq \l\leq  b$.
If $\varepsilon(N)=\mathcal{O}(1)$ in Assumption~\ref{ass:cond}, one obtains the improved bound
\begin{equation}\label{eqn:moments:Chi:improved}
\lr{\Chin,(\Np+1)^b\Chin}_\Fp
\leq \left(\fC(n)+3^\frac{b}{2}\right)^b\,.
\end{equation}
}
\end{lem}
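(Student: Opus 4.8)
The plan is to treat the three parts separately. Part~(a) is a deep input taken from the literature: one checks that our hypotheses imply those of \cite{lewin2015_2} --- $T=-\Delta+\Vext$ is bounded below, the interaction satisfies the two-sided form bound and the Hilbert--Schmidt conditions required there (both contained in Lemma~\ref{lem:hH}, in particular \eqref{eqn:HS:norm:K} and the positivity of $K$), and the coercivity hypothesis \cite[(A3s)]{lewin2015_2} holds because it is weaker than Assumption~\ref{ass:cond} --- so that \eqref{eqn:known:results:FockHN:E} becomes the eigenvalue-convergence statement of \cite{lewin2015_2} (and of \cite{seiringer2011,grech2013} for bounded $v$). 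Parts~(b) and (c) are then comparatively short consequences of Assumption~\ref{ass:cond}, the explicit form \eqref{eqn:FockHN} of $\FockHN$, and the properties of $\FockHz$ collected in Lemma~\ref{lem:known:results:FockHz}.

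For part~(b) I would conjugate the operator inequality of Assumption~\ref{ass:cond} with $\UNp$. Since $\hH\varphi=0$ forces $p\hH=\hH p=0$, i.e.\ $\hH=q\hH q$, we have $\sum_{j=1}^N\hH_j=\d\Gamma(q\hH q)$ on $\fHN$, and by the substitution rules \eqref{eqn:substitution:rules} --- which act trivially on $\ad(f)a(g)$ for $f,g\in\fHp$ --- this conjugates to $\UNp\big(\sum_j\hH_j\big)\UNp^*=\d\Gamma_\perp(\hH)$ on $\FNp$. As $\hH\geq\gapH$ on $\fHp$ by \eqref{eqn:gap:hH} we get $\d\Gamma_\perp(\hH)\geq\gapH\Np$, hence $\FockHN\geq C_2\gapH\Np-\varepsilon(N)$ on $\FNp$. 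Rearranging, and using $\varepsilon(N)\leq\fC N^{1/3}$ (from $\lim_{N}N^{-1/3}\varepsilon(N)\leq C_1$ plus boundedness on the finitely many remaining $N$) together with the fact that $\inf\sigma(\FockHN)=\Ezero\to\Ezz$ is bounded uniformly in $N$, gives \eqref{eqn:Np:bd:by:HN}.

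For part~(c) it is enough to take $\Chin$ a normalized eigenstate, $\FockHN\Chin=\Enu\Chin$ with $\nu\in\In$ and hence $|\Enu|\leq\fC(n)$ for $N$ large by part~(a); the general case $\Chin\in\fEn$ follows since $\fEn$ is finite-dimensional and $(\Np+1)^b$ is positive, and on $\FNp\oplus0$ we have $\Np\leq N$ so all moments below are a priori finite. The core step is the bound $\langle\Chin,(\Np+1)^b\Chin\rangle\leq\fC(b,n)N^{1/3}\langle\Chin,(\Np+1)^{b-1}\Chin\rangle$ for $b\geq2$, from which \eqref{eqn:moments:Chi:a:priori} follows by iteration in $\l$. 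To prove it, put $s=(b-1)/2$, sandwich part~(b) between two factors $(\Np+1)^s$, and write $(\Np+1)^s\FockHN(\Np+1)^s=(\Np+1)^{b-1}\FockHN+(\Np+1)^s[\FockHN,(\Np+1)^s]$; taking the expectation in $\Chin$, the first term equals $\Enu\langle\Chin,(\Np+1)^{b-1}\Chin\rangle$, and in the commutator only the off-diagonal pieces $\boldKt,\boldKtbar,\boldKth,\boldKthbar$ of $\FockHN$ contribute. Using $\ad_xF(\Np)=F(\Np-1)\ad_x$, $a_xF(\Np)=F(\Np+1)a_x$ and Lemma~\ref{lem:aux:new} (with $\norm{\Kt}_\HS,\norm{\Kth}_{\cL(\fHp\to\fHp\otimes\fHp)}<\infty$ from \eqref{eqn:HS:norm:K}, \eqref{eqn:ass:v:2:T:bound}), the $\boldKt,\boldKtbar$ contributions are $\leq\fC(b)\langle\Chin,(\Np+1)^{b-1}\Chin\rangle$, while the $\boldKth,\boldKthbar$ contributions carry the prefactor $\sqrt{[N-\Np]_+}/(N-1)$ of operator norm $\leq\fC N^{-1/2}$ and are $\leq\fC(b)N^{-1/2}\langle\Chin,(\Np+1)^b\Chin\rangle$, which is absorbed on the left for $N$ large. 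For the base case \eqref{eqn:moments:Chi:1}, use $\langle\Chin,\FockHz\Chin\rangle=\Enu-\langle\Chin,(\FockHN-\FockHz)\Chin\rangle$ and estimate $\FockHN-\FockHz$ (whose terms come with prefactors of order $\lN$ or $\lN^{1/2}$) by Lemma~\ref{lem:aux:new}, Assumption~\ref{ass:cond} (to control the $\d\Gamma_\perp(\hH)$ piece of $\boldKf$) and the moment bound just obtained, yielding $|\langle\Chin,(\FockHN-\FockHz)\Chin\rangle|\leq\fC(n)N^{-1/3}\langle\Chin,(\Np+1)\Chin\rangle+\fC(n)$; then Lemma~\ref{lem:Np:ls:FockHz} gives $\langle\Chin,(\Np+1)\Chin\rangle\leq\fC(\langle\Chin,\FockHz\Chin\rangle-\Ezz+1)\leq\fC(n)+\fC(n)N^{-1/3}\langle\Chin,(\Np+1)\Chin\rangle$, and absorbing the last term for $N$ large proves \eqref{eqn:moments:Chi:1}. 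Finally, if $\varepsilon(N)=\mathcal{O}(1)$ then part~(b) holds without the $N^{1/3}$, and the same iteration --- with a more careful bookkeeping of the $b$-dependent constants --- gives \eqref{eqn:moments:Chi:improved}; cf.\ \cite[Corollary~3.2]{mitrouskas_PhD}.

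The step I expect to be the main obstacle is the commutator estimate in part~(c): one has to isolate precisely the pieces of $\FockHN$ that fail to commute with $\Np$ and, above all, exploit the $N^{-1/2}$ gain carried by the cubic term $\boldKth$, since without it the contribution $\fC(b)N^{-1/2}\langle\Chin,(\Np+1)^b\Chin\rangle$ could not be absorbed. Closely related, the uniform-in-$N$ base case \eqref{eqn:moments:Chi:1} is genuinely delicate --- it does not follow from any single coercivity bound, but only from the interplay of part~(b), the iterated moment estimate, and $\Np+1\leq\fC(\FockHz-\Ezz+1)$.
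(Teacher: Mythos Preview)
Your argument is correct. Parts (a), (b), and the iteration step of (c) coincide with the paper's proof --- the paper simply packages the commutator estimate you spell out as Lemma~\ref{lem:prelim:commutators}, with the same mechanism (the $N^{-1/2}$ prefactor of $\boldKth$ compensating its extra half-power of $\Np$).

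The one genuine difference is the base case $\lr{\Chin,(\Np+1)\Chin}\leq\fC(n)$. The paper does \emph{not} estimate $\FockHN-\FockHz$ directly; instead it invokes the stronger output of \cite[Theorem~2.2(iv)]{lewin2015_2}, namely convergence $\Chin\to\Chizn$ (along a subsequence) in the form norm of $\FockHz$, and then simply writes
\[
\lr{\Chin,(\Np+1)\Chin}\;\leq\;\fC\,\lr{\Chin,(\FockHz-\Ezz+1)\Chin}\;\leq\;\fC(n)
\]
by splitting $\Chin=(\Chin-\Chizn)+\Chizn$. Your route is more self-contained --- it uses only the eigenvalue convergence from (a), not eigenstate convergence --- but the $\boldKf$ contribution is more delicate than your sketch suggests. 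The form bound $|\lr{\phi,\boldKf\phi}|\leq\fC\lr{\phi,\Np\boldKz\phi}+\fC\lr{\phi,\Np^2\phi}$ (from $|v|\leq\fC(T_1+T_2+1)$) leaves you with $\tfrac1N\lr{\Chin,\Np\boldKz\Chin}$; to control this you must sandwich Assumption~\ref{ass:cond} as $\lr{(\Np+1)^{1/2}\Chin,\boldKz(\Np+1)^{1/2}\Chin}\leq C_2^{-1}\lr{(\Np+1)^{1/2}\Chin,(\FockHN+\varepsilon(N))(\Np+1)^{1/2}\Chin}$ and then use both the eigenvalue equation and the commutator bound for $[\FockHN,(\Np+1)^{1/2}]$ to obtain $\leq\fC(n)N^{1/3}\lr{\Chin,(\Np+1)\Chin}$ plus lower order. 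This closes and is exactly the interplay you anticipate in your final paragraph; the paper's shortcut simply trades this computation for a deeper citation.
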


\begin{proof}
By Lemma \ref{lem:hH} and Assumption \ref{ass:cond}, all assumptions (A1), (A2) and (A3s) in \cite{lewin2015_2} are satisfied, hence part (a) follows from \cite[Theorem 2.2(ii)]{lewin2015_2}.
\medskip

\noindent \textbf{Part (b).}
By Assumption \ref{ass:cond}, there exist constants $C_1\geq0$ and $0<C_2\leq1$ such that, for sufficiently large $N$,
\begin{eqnarray}
\HN-N\eH\geq C_2\d\Gamma_\perp(\hH)-C_1N^\frac13
\end{eqnarray}
in the sense of operators on $\fHN$.
Since $\varphi$ is the unique ground state of $\hH$ with eigenvalue zero, it follows that
\begin{eqnarray}
\d\Gamma_\perp(\hH)=\sum\limits_{j\geq0}\varepsilon^{(j)}\ad(\varphi_j)a(\varphi_j)
=\sum\limits_{j\geq1}\varepsilon^{(j)}\ad(\varphi_j)a(\varphi_j)
&\geq \gapH\Np
\end{eqnarray}
on $\fHN$, where $0<\varepsilon^{(1)}\leq\varepsilon^{(2)}\leq\dots$ .
Consequently, it holds for $\bPhi\in\FNp$ that
\begin{eqnarray}
\lr{\bPhi,\Np\bPhi}_{\FNp}
&=&\lr{\UNp^*\bPhi,\Np\UNp^*\bPhi}_{\fHN}\nonumber\\
&\leq&\frac{1}{C_2\gapH}\lr{\bPhi,\left(\FockHN+C_1N^\frac13\right)\bPhi}_{\FNp}\,.
\end{eqnarray}

\noindent\textbf{Part (c).}
By Lemma \ref{lem:hH} and Assumption \ref{ass:cond}, \cite[Theorem 2.2(iv)]{lewin2015_2} implies that there exists some $\Chizn\in\fEzn$ such that, up to a subsequence,
\begin{equation}
\lim\limits_{N\to\infty}\norm{\Chin-\Chizn}_\Fp =0\,,\qquad
\lim\limits_{N\to\infty}\lr{(\Chin-\Chizn),\FockHz(\Chin-\Chizn)}_\Fp =0\,,
\end{equation}
hence, by Lemma \ref{lem:Np:ls:FockHz},
\begin{eqnarray}
&&\hspace{-1.5cm}\lr{\Chin,(\Np+1)\Chin}_\Fp  \nonumber\\
&\leq&\fC\lr{(\Chin-\Chizn),(\FockHz-\Ezz+1)(\Chin-\Chizn)}_\Fp \nonumber\\
&& + \fC\lr{\Chizn,(\FockHz-\Ezz+1)\Chizn}_\Fp\nonumber\\
&& +2\fC\norm{\Chin-\Chizn}_\Fp\norm{(\FockHz-\Ezz+1)\Chizn}_\Fp \nonumber\\
&\leq&\fC(\Ezn-\Ezz+1)
\end{eqnarray}
for sufficiently large $N$. 
Further, part (b) implies that
\begin{eqnarray}
&&\hspace{-1.5cm}\lr{\Chin,(\Np+1)^{b+1}\Chin} _\Fp\nonumber\\
&=&\lr{(\Np+1)^\frac{b}{2}\Chin,(\Np+1)(\Np+1)^\frac{b}{2}\Chin}_{\FNp}\nonumber\\
&\leq&\fC\lr{(\Np+1)^\frac{b}{2}\Chin,(\FockHN+N^\frac13)(\Np+1)^\frac{b}{2}\Chin}_{\FNp}\nonumber\\
&\leq&\fC\lr{(\Np+1)^b\Chin,(\FockHN+N^\frac13)\ChiNn}_{\FNp}\nonumber\\
&&+\fC\norm{(\Np+1)^\frac{b}{2}\Chin}_\FNp\big\|\big[\FockHN,(\Np+1)^\frac{b}{2}\big]\Chin\big\|_{\FNp}\nonumber\\
&\leq&\fC\left(|\Ezn|+ N^\frac13+ 3^\frac{b}{2}\right)\lr{\Chin,(\Np+1)^b\Chin} _\Fp
\end{eqnarray}
by Lemma \ref{lem:prelim:commutators} and since $\Chin\in\fEn$. Iterating over $b$ concludes the proof.
\end{proof}

\section{Proofs}
In the remainder of the paper, we abbreviate
$$\norm{\cdot}_\Fp\equiv\norm{\cdot}\,,\qquad \lr{\cdot,\cdot}_\Fp\equiv\lr{\cdot,\cdot}\,,
\qquad\norm{\cdot}_{\cL(\Fp)}\equiv\onorm{\cdot}\,,\qquad \Tr_\Fp\equiv\Tr\,.$$
We will always assume that Assumptions \ref{ass:V}, \ref{ass:v} and \ref{ass:cond} are satisfied.

\subsection{Asymptotic expansion of $\Pn$}
\subsubsection{Proof of Lemma \ref{lem:expansion:resolvent}}
\label{subsec:proofs:resolvent}
Recall that $\FockH=\FockHminus+\FockHplus$ by \eqref{FockHplusminus}, hence
\begin{eqnarray}\label{eqn:lem:exp:resolvent:0}
\ResHz=\RestHz(z-\FockH+\FockHplus)\ResHz
=\RestHz+\RestHz\FockHplus\ResHz\,.
\end{eqnarray}
Next, we prove by induction over $a\in\N_0$ that
\begin{equation}\label{eqn:lem:resolvent:induction:hyp}
\RestHz=\ResHzz\sum\limits_{\l=0}^a\lN^\frac{\l}{2}\,\FockT_\l (z)
+ \lN^\frac{a+1}{2}\RestHz\sum\limits_{\nu=0}^{a} \FockRnu\ResHzz\FockT_{a-\nu}(z)
\end{equation}
where
\begin{equation}
\FockT_\l(z)=\sum\limits_{\nu=1}^\l\FockHnu\ResHzz\FockT_{a-\nu}(z)\,,\qquad \FockT_0(z)=\id\,.
\end{equation}
\textbf{Base case.}
Proposition \ref{lem:expansion:HN} implies that
\begin{equation}\label{eqn:lem:exp:resolvent:1}
\tFockH=\FockHz+\lN^\frac12\FockRz\,,
\end{equation}
hence 
\begin{eqnarray}
\RestHz
&=& \RestHz\left(z-\FockHminus+\lN^\frac12\FockRz\right)\ResHzz\nonumber\\
&=&\ResHzz+\lN^\frac12\RestHz\FockRz\ResHzz\,.\label{eqn:lem:exp:resolvent:2}
\end{eqnarray}
\textbf{Induction step.} Assume \eqref{eqn:lem:resolvent:induction:hyp} holds for $a-1\in\N$. 
Since 
\begin{equation}
\tFockH
=\sum\limits_{j=0}^{\nu}\lN^\frac{j}{2}\FockHj+\lN^\frac{\nu+1}{2}\FockR_{\nu}
=\sum\limits_{j=0}^{\nu}\lN^\frac{j}{2}\FockHj
+\lN^{\frac{\nu+1}{2}}\FockH_{\nu+1}+\lN^\frac{\nu+2}{2}\FockR_{\nu+1}\,,
\end{equation}
it follows that
\begin{equation}\label{eqn:lem:exp:resolvent:3}
\FockR_{\nu}=\FockH_{\nu+1}+\lN^\frac12\FockR_{\nu+1}\,,
\end{equation}
hence we conclude with  \eqref{eqn:lem:exp:resolvent:2} and by the induction hypothesis that
\begin{eqnarray}
\RestHz
&=&\ResHzz\sum\limits_{\l=0}^{a-1}\lN^\frac{\l}{2}\FockT_\l(z) \nonumber\\
&&+\lN^\frac{a}{2}\RestHz\sum\limits_{\nu=0}^{a-1}\left(\FockH_{\nu+1}+\lN^\frac12\FockR_{\nu+1}\right)\ResHzz\FockT_{a-\nu-1}(z)\nonumber\\
&=&\ResHzz\sum\limits_{\l=0}^{a-1}\lN^\frac{\l}{2}\FockT_\l(z) 
+\lN^\frac{a}{2}\ResHzz\sum\limits_{\nu=0}^{a-1}\FockH_{\nu+1}\ResHzz\FockT_{a-\nu-1}(z)\nonumber\\
&&+\lN^\frac{a+1}{2}\RestHz\FockR_0\ResHzz\sum\limits_{\nu=0}^{a-1}\FockH_{\nu+1}\ResHzz\FockT_{a-\nu-1}(z)\nonumber\\
&&+\lN^\frac{a+1}{2}\RestHz\sum\limits_{\nu=0}^{a-1}\FockR_{\nu+1}\ResHzz\FockT_{a-\nu-1}(z)\nonumber\\
&=&\ResHzz\sum\limits_{\l=0}^{a}\lN^\frac{\l}{2}\FockT_\l(z)\nonumber\\
&&+\lN^\frac{a+1}{2}\RestHz\Big(\FockRz\ResHzz\FockT_a+\sum\limits_{\nu=0}^{a-1}\FockR_{\nu+1}\ResHzz\FockT_{a-\nu-1}(z)\Big)\nonumber\\
&=&\ResHzz\sum\limits_{\l=0}^{a}\lN^\frac{\l}{2}\FockT_\l(z)
+\lN^\frac{a+1}{2}\RestHz\sum\limits_{\nu=0}^{a}\FockRnu\ResHzz\FockT_{a-\nu}(z)\,,
\end{eqnarray}
which concludes the induction.
Finally, we rewrite  $\FockT_a(z)$ as
\begin{eqnarray}
\FockT_a(z)
&=&\sum\limits_{j_1=1}^a\FockH_{j_1}\ResHzz\FockT_{a-j_1}(z)\nonumber\\
&=&\sum\limits_{j_1=1}^a\sum\limits_{j_2=1}^{a-j_1}\FockH_{j_1}\ResHzz\FockH_{j_2}\ResHzz\FockT_{a-(j_1+j_2)}(z)\nonumber\\
&=&\sum\limits_{\nu=1}^a \,\sum\limits_{\substack{\bj\in\N^\nu\\|\bj|=a}}\FockH_{j_1}\ResHzz\mycdots\FockH_{j_\nu}\ResHzz\FockT_0(z)\,,
\end{eqnarray}
which concludes the proof.
\qed

\subsubsection{Proof of Proposition \ref{prop:exp:P}}
Let $n\in\N_0$. The expansion of the resolvent from Lemma \ref{lem:expansion:resolvent} yields
\begin{eqnarray}
\Pn&=&\Pzn+
\sum\limits_{\l=1}^a\lN^\frac{\l}{2}\sum\limits_{\nu=1}^\l\sum\limits_{\substack{\bj\in\N^\nu\\[2pt]|\bj|=\l}} \FockAnjnu 
\;+\;\lN^\frac{a+1}{2}\sum\limits_{\nu=0}^{a}\sum\limits_{m=1}^{a-\nu}\sum\limits_{\substack{\bj\in\N^m\\[2pt]|\bj|=a-\nu}}\FockBnjm\;+\;\FockCn\,,\quad\label{eqn:proof:prop:exp:P:1a}
\end{eqnarray}
where
\begin{eqnarray}
\FockAnjnu&:=&\frac{1}{2\pi\i}\goint \ResHzz\FockHjo \ResHzz\FockHjt \ResHzz\mycdots \FockHjnu \ResHzz\dz\,,\label{eqn:Tjn}\\
\FockBnjm&:=&\frac{1}{2\pi\i}\goint \RestHz\FockRnu\ResHzz\FockHjo \ResHzz\mycdots \FockH_{j_m} \ResHzz\dz\,,\qquad\label{eqn:Bjn}\\
\FockCn&:=&\frac{1}{2\pi\i}\goint\RestHz\,\FockHplus\ResHz\dz\,.\label{eqn:FockC}
\end{eqnarray}
\medskip

\noindent\textbf{Computation of $\FockAnjnu$}.
We decompose $1=\Pzn+\Qzn$ in each term in \eqref{eqn:Tjn} and sort according to the number of projections $\Qzn$, which takes the values $k=0\mydots \nu+1$. This yields
\begin{eqnarray}
\FockAnjnu\hspace{-2pt}
&=&\hspace{-2pt}\sum\limits_{k=0}^{\nu+1} \,\sum\limits_{\substack{\bm\in\{0,1\}^{\nu+1}\\|\bm|=k}}
\frac{1}{2\pi\i}\goint\frac{1}{(z-\Ezn)^{\nu+1-k}}
\tFockOn_{m_1}(z)\FockH_{j_1}\mycdots
\tFockOn_{m_\nu}(z)\FockHjnu\tFockOn_{m_{\nu+1}}(z)\dz\nonumber\\
&=:&\hspace{-2pt}\sum\limits_{k=0}^{\nu+1}\tFockAnkjnu\,,
\end{eqnarray}
where we  abbreviated
\begin{equation}\label{eqn:tFockOno}
\tFockOnz(z):=\Pzn\,,\qquad
\tFockOno(z):=\frac{\Qzn}{z-\FockHz}\,.
\end{equation}
Observe first that the contributions  with exclusively $\Pzn$ ($k=\nu+1$) or exclusively $\Qzn$ ($k=0$) vanish: in case of only $\Pzn$, 
\begin{equation}
\tilde{\FockA}^{(n)}_{0,\,\bj} =\frac{1}{2\pi\i}\left(\goint \frac{1}{(z-\Ezn)^{\nu+1}}\dz\right)\Pzn\FockHjo \Pzn\mycdots \Pzn\FockHjnu \Pzn 
=0\,,
\end{equation}
and in case of only $\Qzn$, the integrand is holomorphic in the area enclosed by $\gamma^{(n)}$, hence $\tilde{\FockA}^{(n)}_{\nu+1,\,\bj}=0$.

For $1\leq k\leq \nu$, the integrand in $\tFockAnkjnu$ has a pole of order $\nu+1-k$ at $z=\Ezn$, hence
the residue theorem implies that
\begin{eqnarray}
\tFockAnkjnu
&=&\hspace{-0.5cm}\sum\limits_{\substack{\bm\in\{0,1\}^{\nu+1}\\|\bm|=k}}\frac{1}{(\nu-k)!}\lim\limits_{z\to\Ezn}\frac{\d^{\nu-k}}{\dz^{\nu-k}}\left(\tFockOn_{m_1}(z)\FockH_{j_1}\mycdots
\tFockOn_{m_\nu}(z)\FockHjnu\tFockOn_{m_{\nu+1}}(z)\right)\,.\qquad
\end{eqnarray}
Let us consider the case where $m_j=1$ for $j=1\mydots k$ and $m_j=0$ for $j=k+1\mydots \nu+1$. By the Leibniz rule and since
\begin{equation}
\left.\frac{\d^m}{\dz^m}\tFockOno(z) \right|_{z=\Ezn}= (-1)^m m!\,\FockOn_{m+1}
\end{equation}
with $\FockOnk$  and  $\FockOnz$ as defined in \eqref{FockO}, i.e.,
\begin{equation}
\FockOnz=-\Pzn\,,\qquad \FockOnk=\frac{\Qzn}{(\Ezn-\FockHz)^k}\,,
\end{equation}
we obtain for this case
\begin{eqnarray}
&&\hspace{-1.5cm}\frac{1}{(\nu-k)!}\lim\limits_{z\to\Ezn}\frac{\d^{\nu-k}}{\dz^{\nu-k}}\left(\tFockOno(z)\FockHjo\mycdots
\tFockOno(z)\right)\FockH_{j_k}
\Pzn\FockH_{j_{k+1}}\mycdots\Pzn\FockHjnu\Pzn\nonumber\\
&=&\frac{1}{(\nu-k)!}\sum\limits_{\substack{\bm\in\N_0^k\\|\bm|=\nu-k}}\binom{\nu-k}{\bm}
\left(\left.\frac{\d^{m_1}}{\dz^{m_1}}\tFockOno(z)\right|_{z=\Ezn}\right)\FockHjo
\mycdots\nonumber\\
&&\mycdots \left(\left.\frac{\d^{m_{k}}}{\dz^{m_{k}}}\tFockOno(z)\right|_{z=\Ezn}\right)\FockH_{j_k}\Pzn\FockH_{j_{k+1}}\mycdots\Pzn\FockHjnu\Pzn\nonumber\\
&=&-\sum\limits_{\substack{\bm\in\N_0^k\\|\bm|=\nu-k}}\FockOn_{m_1+1}\FockHjo\mycdots \FockOn_{m_{k}+1}\FockH_{j_k}\FockOnz\FockH_{j_{k+1}}\mycdots\FockOnz\FockHjnu\FockOnz\nonumber\\
&=&-\sum\limits_{\substack{\bm\in\N^k\\|\bm|=\nu}}\FockOn_{m_1}\FockHjo\mycdots \FockOn_{m_{k}}\FockH_{j_k}\FockOnz\FockH_{j_{k+1}}\mycdots\FockOnz\FockHjnu\FockOnz\,.\label{eqn:Tjn:ones:zeros}
\end{eqnarray}
The other contributions to $\tFockAnkjnu$ are related to \eqref{eqn:Tjn:ones:zeros} through permutations, hence
\begin{eqnarray}
\tFockAnkjnu = -\sum\limits_{\substack{\bm\in\N^k\times\{0\}^{\nu-k+1}\\|\bm|=\nu}}
\FockOn_{m_1}\FockHjo \mycdots \FockOn_{m_{\nu}}\FockHjnu\FockOn_{m_{\nu+1}}
\end{eqnarray}
and consequently
\begin{eqnarray}
\FockAnjnu
=\sum\limits_{k=1}^\nu\tFockAnkjnu
=-\sum\limits_{\substack{\bm\in\N_0^{\nu+1}\\|\bm|=\nu}}\FockOn_{m_1}\FockHjo \mycdots \FockOn_{m_{\nu }}\FockHjnu\FockOn_{m_{\nu+1}}\,.
\end{eqnarray}

\noindent\textbf{Computation of $\FockBnjm$.}
Decomposing the first identity in \eqref{eqn:Bjn} as $1=\Pn+\Qn$, one notes that the term with $\Pn$  yields $\FockBPn$. For the term with $\Qn$, we decompose in each resolvent of $\FockHz$ the identity as $1=\Pzn+\Qzn $. Note that the term containing exclusively $\Qn$ and $\Qzn$ vanishes since the integrand has no poles in the area enclosed by $\gamma^{(n)}$.\medskip

\noindent\textbf{Computation of $\FockCn$.}
Recall that $\Pn$ projects onto a subspace of $\FNp\oplus0$, hence
\begin{equation}
\Pn\FockHplus=\FockHplus\Pn=0\,.
\end{equation}
Consequently, decomposing both identities in $\FockCn$ yields
\begin{eqnarray}
\FockCn = \frac{1}{2\pi\i}\goint\frac{\Qn}{z-\tFockH}\FockHplus\frac{\Qn}{z-\FockH}\dz =0
\end{eqnarray}
since the integrand is holomorphic in the area enclosed by $\gamma^{(n)}$.

\subsection{Auxiliary estimates}
\subsubsection{Preliminaries}
In this section, we collect some preliminary estimates.
First, we provide  bounds for second-quantized $m$-body operators; subsequently, we estimate $\mathbb{K}_j$, $\FockHj$ and $\FockRj$ as well as commutators of $\Np$ with $\FockHN$ and $\FockH$.

\begin{lem}\label{lem:prelim:On}
Let $m\in\N$ and let $\Om$ be an operator on $\fH^m$. Assume that there exist constants $c_1,c_2\geq0$ such that
\begin{equation}
\norm{\Om\psi}_{\fH^m}^2\leq c_1\Big\|\sum\limits_{j=1}^m T_j\psi\Big\|^2_{\fH^m}+c_2\norm{\psi}^2_{\fH^m}
\end{equation}
for any $\psi\in\D(\sum_{j=1}^m T_j)$ and with $T$ as in \eqref{def:T}.
\lemit{
\item \label{lem:prelim:On:1}
Let $\psi\in\fH^m$. Then
\begin{equation}
\norm{\Om\psi}^2_{\fH^m}\leq 2c_1\Big\|\sum\limits_{j=1}^m\hH_j\psi\Big\|_{\fH^m}^2 +2c_3\norm{\psi}^2_{\fH^m}\,,
\end{equation}
where $c_3=\fC c_1m^2+\frac{c_2}{2}$.
\item \label{lem:prelim:On:new}
Let $k\geq m$ and $\psi\in\fH^k_\sym$. Then
\begin{equation}
\bigg\|\sum\limits_{j=1}^m \hH_j\psi\bigg\|^2_{\fH^k} \leq \frac{m}{k}\bigg\|\sum\limits_{j=1}^k\hH_j\psi\bigg\|^2_{\fH^k}\,.
\end{equation}
\item \label{lem:prelim:On:2}
Let $k\geq m$. Then it follows for $\psi_k\in\fH^k_\sym$ that
\begin{equation}
\bigg\|\sum\limits_{1\leq j_1<\dots<j_m\leq k}\Omj\psi_k\bigg\|^2_{\fH^k}
\leq \binom{k}{m}^2\left(\frac{2c_1m}{k}\Big\|\sum\limits_{j=1}^k\hH_j\psi_k\Big\|^2_{\fH^k}+2c_3\norm{\psi_k}^2_{\fH^k}\right)\,.
\end{equation}
}
\end{lem}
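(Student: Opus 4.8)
The plan is to reduce all three parts to the identity $\hH=T+v*\varphi^2-\mH$ from Lemma~\ref{lem:hH}, which exhibits $\hH$ and $T$ as differing by the \emph{bounded} operator $v*\varphi^2-\mH$, together with permutation symmetry and the positivity $\hH\geq0$.

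\textbf{Part (a).} For $\psi$ in the common domain $\D(\sum_{j=1}^m T_j)=\D(\sum_{j=1}^m\hH_j)$ (outside this domain the claimed right-hand side is $+\infty$, so there is nothing to prove; here one uses that $\hH-T$ is bounded, so the two domains coincide), I would write
\[
\sum_{j=1}^m T_j\psi=\sum_{j=1}^m\hH_j\psi-\sum_{j=1}^m\big((v*\varphi^2)(x_j)-\mH\big)\psi\,.
\]
By \eqref{eqn:ass:v:v*phi^2} one has $\norm{v*\varphi^2}_\infty\leq\fC$ (recall $\varphi\in\DH$), and $|\mH|\leq\fC$, so the last sum has $\fH^m$-norm at most $\fC m\norm{\psi}$. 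The triangle inequality and $(a+b)^2\leq2a^2+2b^2$ then give $\big\|\sum_{j=1}^m T_j\psi\big\|^2\leq2\big\|\sum_{j=1}^m\hH_j\psi\big\|^2+\fC m^2\norm{\psi}^2$, and inserting this into the hypothesis yields the assertion with $c_3=\fC c_1m^2+\tfrac{c_2}{2}$.

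\textbf{Part (b).} For $\psi\in\fH^k_\sym$ (again assuming $\psi\in\D(\sum_{j=1}^k\hH_j)$, else the bound is trivial) I would expand
\[
\Big\|\sum_{j=1}^m\hH_j\psi\Big\|^2=\sum_{i,j=1}^m\lr{\psi,\hH_i\hH_j\psi}=m\,B+m(m-1)\,A\,,
\]
where $B:=\lr{\psi,\hH_1^2\psi}$ and $A:=\lr{\psi,\hH_1\hH_2\psi}$ are independent of the chosen indices by permutation symmetry of $\psi$. Since $\hH\geq0$ and $\hH_1^{1/2},\hH_2^{1/2}$ commute (they act on different coordinates), $A=\norm{\hH_1^{1/2}\hH_2^{1/2}\psi}^2\geq0$, and $B\geq0$ trivially. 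The same expansion gives $\big\|\sum_{j=1}^k\hH_j\psi\big\|^2=k\,B+k(k-1)\,A$, whence $\tfrac{m}{k}\big\|\sum_{j=1}^k\hH_j\psi\big\|^2-\big\|\sum_{j=1}^m\hH_j\psi\big\|^2=m(k-m)\,A\geq0$.

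\textbf{Part (c).} First I would extend the hypothesis from $\Om$ on $\fH^m$ to $\Omom$ on $\fH^k_\sym$: fixing $(x_{m+1}\mydots x_k)$, applying the $\fH^m$-estimate in the variables $(x_1\mydots x_m)$, and integrating over the remaining ones gives $\norm{\Omom\psi_k}_{\fH^k}^2\leq c_1\big\|\sum_{j=1}^m T_j\psi_k\big\|_{\fH^k}^2+c_2\norm{\psi_k}_{\fH^k}^2$, hence, by the computation in part (a), $\norm{\Omom\psi_k}^2\leq2c_1\big\|\sum_{j=1}^m\hH_j\psi_k\big\|^2+2c_3\norm{\psi_k}^2$. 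Next, the Cauchy--Schwarz inequality for a sum of $\binom{k}{m}$ vectors together with permutation symmetry of $\psi_k$ (which gives $\norm{\Omj\psi_k}=\norm{\Omom\psi_k}$ for every multi-index $\bj$) yields
\[
\Big\|\sum_{1\leq j_1<\dots<j_m\leq k}\Omj\psi_k\Big\|^2\leq\binom{k}{m}\sum_{1\leq j_1<\dots<j_m\leq k}\norm{\Omj\psi_k}^2=\binom{k}{m}^2\norm{\Omom\psi_k}^2\,.
\]
Combining the last two displays with part (b) produces the stated bound. I do not expect a genuine obstacle: the only points requiring care are the domain bookkeeping (the inequalities are vacuous off $\D(\sum T_j)=\D(\sum\hH_j)$), the explicit value of $c_3$ (which needs only the crude estimates $\norm{v*\varphi^2}_\infty\leq\fC$ and $|\mH|\leq\fC$), and the fibering/Fubini step extending the relative bound to $\Omom$ acting on $\fH^k$, which is routine.
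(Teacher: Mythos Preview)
Your proposal is correct and follows essentially the same approach as the paper: part (a) by substituting $T_j=\hH_j-(v*\varphi^2)(x_j)+\mH$ and using $\norm{v*\varphi^2}_\infty\leq\fC$, part (b) by expanding the square via permutation symmetry into diagonal and off-diagonal pieces and using $\hH\geq0$ to control the latter, and part (c) by combining (a) and (b) with a Cauchy--Schwarz/triangle-inequality bound on the symmetrized sum. The only cosmetic differences are that the paper writes the triangle inequality $\big(\sum\|\cdot\|\big)^2$ in (c) where you invoke Cauchy--Schwarz for sums, and in (b) the paper bounds $\tfrac{m(m-1)}{k(k-1)}\leq\tfrac{m}{k}$ on the off-diagonal sum rather than computing the difference $m(k-m)A$ directly; both routes are equivalent.
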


\begin{proof}
Part (a) follows since $\hH_j=T_j+(v*\varphi^2)(x_j)-\mH$ and by \eqref{eqn:ass:v:v*phi^2} because
\begin{eqnarray}
\norm{\Om\psi}^2_{\fH^m}
&\leq &c_1\bigg(\Big\|\sum\limits_{j=1}^m\hH_j\psi\Big\|_{\fH^m}+\Big\|\sum\limits_{j=1}^m\left(v*\varphi^2(x_j)-\mH\right)\psi\Big\|_{\fH^m}\bigg)^2+c_2\norm{\psi}^2_{\fH^m}\nonumber\\
&\leq&2c_1\Big\|\sum\limits_{j=1}^m\hH_j\psi\Big\|^2+(2c_1\fC m^2+c_2)\norm{\psi}^2_{\fH^m}\,.
\end{eqnarray}
For part (b),  the permutation symmetry of $\psi$ leads to the estimate
\begin{eqnarray}
\Big\|\sum\limits_{j=1}^m\hH_j\psi\Big\|^2_{\fH^k}
&=& m\lr{\psi, h_1h_1\psi}_{\fH^k} + m(m-1)\lr{\psi,h_1 h_2\psi}_{\fH^k}
\nonumber\\
&=&\frac{m}{k}\sum\limits_{j=1}^k\lr{\psi,\hH_j\hH_j\psi}_{\fH^k}
+ \frac{m(m-1)}{k(k-1)}\sum\limits_{\substack{1\leq j,\l\leq k\\\l\neq j}}\lr{\psi,\hH_j\hH_\l\psi}_{\fH^k}\nonumber\\
&\leq&\frac{m}{k}\sum\limits_{1\leq j,\l\leq k}\lr{\psi,\hH_j\hH_\l\psi}_{\fH^k}
\end{eqnarray}
since $m\leq k$ and $\hH\geq 0$.
For part (c), we obtain with parts (a) and (b)
\begin{equation}
\norm{\Omom\psi_k}^2_{\fH^k}
\;\leq\;\frac{2c_1 m}{k} \Big\|\sum\limits_{j=1}^k\hH_j\psi_k\Big\|^2_{\fH^k}+2c_3\norm{\psi_k}^2_{\fH^k}\,,
\end{equation}
which proves the claim since
\begin{equation*}
\bigg\|\sum\limits_{1\leq j_1<\dots<j_m\leq k}\Omj\psi_k\bigg\|^2_{\fH^k}
\;\leq\;\left(\sum\limits_{1\leq j_1<\dots<j_m\leq k}\Big\|\Omj\psi_k\Big\|\right)^2\,. \qedhere
\end{equation*}
\end{proof}

In the next lemma, we collect bounds for the operators $\boldKo$ to $\boldKf$ from \eqref{eqn:K:notation}.
\begin{lem}\label{lem:K:commutators}
Let $\bPhi\in\Fp$.
\lemit{
\item \label{lem:K}
For $\mathbb{K}_j^{(*)}\in\{\mathbb{K}_j\,,\,\mathbb{K}_j^*\}$,
\begin{subequations}
\begin{eqnarray}
\norm{\boldKo\bPhi}  
&\leq& \fC\norm{(\Np+1)\bPhi} \,,\\[5pt]
\norm{\boldKt^{(*)}\bPhi} 
&\leq&\fC\norm{(\Np+1)\bPhi} \,,\\[5pt]
\norm{\boldKth^{(*)}\bPhi} 
&\leq& \fC\norm{(\Np+1)^\frac{3}{2}\bPhi} \,, \\[5pt]
\norm{\boldKf\bPhi} 
&\leq&\fC\left( \norm{(\Np+1)^2\bPhi} +\norm{\boldKz^\frac12(\Np+1)^\frac32\bPhi} \right)\label{eqn:K4:bound:K0^1/2}\\
&\leq&\fC\left( \norm{(\Np+1)^2\bPhi} +\norm{\FockHz(\Np+1)^\frac32\bPhi} \right)\,.\label{eqn:K4:bound:H0}
\end{eqnarray}
\end{subequations}
\item \label{lem:prelim:commutators}
Let $\l\geq 0$. Then
\begin{subequations}
\begin{eqnarray}
\Big\|\left[\FockHN,(\Np+1)^\l\right]\bPhi\Big\|_\FNp &\leq& 3^\l\,\fC\,\norm{(\Np+1)^\l\bPhi}_\FNp \,,\\
\Big\|\left[\FockHz,(\Np+1)^\l \right]\bPhi\Big\|  &\leq&3^\l\l\,\fC\,\norm{(\Np+1)^\l\bPhi} \,.
\end{eqnarray}
\end{subequations}
}
\end{lem}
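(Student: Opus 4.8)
The plan is to prove the two parts by separate, essentially standard arguments. For part~(a) I would put each $\mathbb{K}_j$ into normal order and apply the normal-ordering estimate of Lemma~\ref{lem:aux:new}; the coefficient kernels define bounded operators on tensor powers of $\fHp$ in every case except $\boldKf$, whose kernel $W$ contains the possibly unbounded factor $v(x_1-x_2)$ and therefore needs a sector-by-sector argument using \eqref{eqn:ass:v:2:Delta:bound}. For part~(b) I would directly compute the commutators of the particle-number-changing pieces of $\FockHz$ and $\FockHN$ with $(\Np+1)^\l$ via the intertwining relations \eqref{eqn:aux:1}. First, for part~(a), I would record that $\onorm{\Ko}$, $\norm{\Kt}_{\fHp\otimes\fHp}$ and $\onorm{\Kth}=\onorm{\Kth^*}$ are all at most $\fC$: for $\Ko=qKq$ and $\Kt$ this is immediate from $\norm{K}_\HS^2=\lr{\varphi\otimes\varphi,|v(x-y)|^2\,\varphi\otimes\varphi}\le\fC$, which is exactly \eqref{eqn:HS:norm:K}; for $\Kth$, reading off \eqref{K:3} gives $\norm{\Kth\psi}_{\fHp\otimes\fHp}^2\le\big\|\,x_2\mapsto\int|W(x_1,x_2)\varphi(x_1)|^2\dx_1\big\|_{L^\infty}\norm{\psi}^2$, and the $L^\infty$-norm is finite by $\norm{v*\varphi^2}_\infty\le\fC$ (from \eqref{eqn:ass:v:v*phi^2}) together with $\int|v(x_1-x_2)|^2\varphi(x_1)^2\dx_1=\lr{\varphi,|v(\cdot-x_2)|^2\varphi}\le\fC$ uniformly in $x_2$ (translation invariance and \eqref{eqn:ass:v:2:Delta:bound}). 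Given these bounds, Lemma~\ref{lem:aux:new} yields the claimed estimates at once: the numbers $(n,p)$ of creation/annihilation operators in normal order are $(1,1)$ for $\boldKo$, $(2,0)$ and $(0,2)$ for $\boldKt$ and $\boldKt^*$, and $(2,1)$ and $(1,2)$ for $\boldKth$ and $\boldKth^*$, producing the factors $(\Np+1)^{1}$, $(\Np+1)^{1}$ and $(\Np+1)^{3/2}$.

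For $\boldKf$ I would use the decomposition $\boldKf=\d\Gamma_\perp(v)+\d\Gamma_\perp\big(v*\varphi^2\otimes\id+\id\otimes v*\varphi^2+\lr{\varphi,v*\varphi^2\varphi}\,\id\otimes\id\big)$. The second summand is $\d\Gamma_\perp$ of a bounded two-body operator, hence bounded in norm by $\fC\norm{(\Np+1)^2\bPhi}$. For $\d\Gamma_\perp(v)$ I would pass to the $k$-particle sector, where $\big(\d\Gamma_\perp(v)\bPhi\big)^{(k)}=\sum_{1\le i<j\le k}q^{\otimes k}v_{ij}q^{\otimes k}\phi^{(k)}$, so by the triangle inequality and permutation symmetry $\norm{(\d\Gamma_\perp(v)\bPhi)^{(k)}}\le\binom k2\norm{v_{12}\phi^{(k)}}$. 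The key input is \eqref{eqn:ass:v:2:T:bound}: via $T\le\hH+\mH$ it gives $v_{12}^2\le\fC(\hH_1+\hH_2+1)$ as operators, whence $\norm{v_{12}\phi^{(k)}}^2\le\fC\lr{\phi^{(k)},(\hH_1+\hH_2+1)\phi^{(k)}}=\fC\big(\tfrac2k\lr{\phi^{(k)},\boldKz\phi^{(k)}}+\norm{\phi^{(k)}}^2\big)$, using $\hH\ge0$, permutation symmetry, and that $\sum_{j=1}^k\hH_j$ equals $\boldKz$ on the $k$-sector. Reassembling the sectors (using $[\boldKz,\Np]=0$) gives \eqref{eqn:K4:bound:K0^1/2}. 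Finally, \eqref{eqn:K4:bound:H0} follows from it by writing $\boldKz=\FockHz-\boldKo-\boldKt-\boldKt^*$ and estimating, for $\bPsi=(\Np+1)^{3/2}\bPhi$, $\norm{\boldKz^{1/2}\bPsi}^2=\lr{\bPsi,\boldKz\bPsi}\le\norm{\FockHz\bPsi}\norm{\bPsi}+\fC\lr{\bPsi,(\Np+1)\bPsi}$ with the $\boldKo,\boldKt^{(*)}$ bounds just proven, followed by Young's inequality.

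For part~(b), write $\FockHz=\boldKz+\boldKo+\boldKt+\boldKt^*$; the first two terms preserve particle number, so $[\FockHz,(\Np+1)^\l]=[\boldKt,(\Np+1)^\l]+[\boldKt^*,(\Np+1)^\l]$. By \eqref{eqn:aux:1}, $\boldKt\,F(\Np)=F(\Np-2)\boldKt$ and $\boldKt^*F(\Np)=F(\Np+2)\boldKt^*$, so $[\boldKt,(\Np+1)^\l]=\big((\Np-1)^\l-(\Np+1)^\l\big)\boldKt$ and $[\boldKt^*,(\Np+1)^\l]=\big((\Np+3)^\l-(\Np+1)^\l\big)\boldKt^*$. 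Estimating the scalar differences on the sectors where $\boldKt\bPhi$, $\boldKt^*\bPhi$ live by $\fC\l(\Np+1)^{\l-1}$ and $\fC\l(\Np+3)^{\l-1}$, applying the $(\Np+1)$-bounds from part~(a), and reindexing the particle-number sectors (which for the $\boldKt$-term produces the factor $3^\l$) yields $\norm{[\FockHz,(\Np+1)^\l]\bPhi}\le 3^\l\l\,\fC\,\norm{(\Np+1)^\l\bPhi}$. The commutator with $\FockHN$ is handled the same way from its explicit form \eqref{eqn:FockHN}: the $\boldKz$-, $\big(\tfrac{N-\Np}{N-1}\big)\boldKo$- and $\tfrac1{N-1}\boldKf$-terms commute with $(\Np+1)^\l$; the $\boldKt^{(*)}$-terms carry bounded functions of $\Np$ (on $\FNp$ one has $0\le\Np\le N$) and are estimated as above; the $\boldKth^{(*)}$-terms are the only subtle ones, because $\boldKth$ has ``degree $3/2$'', so its commutator with $(\Np+1)^\l$ carries one extra half-power of $\Np$, which is compensated by the prefactor $\tfrac{\sqrt{[N-\Np]_+}}{N-1}=\mathcal{O}(N^{-1/2})$ together with $(\Np+1)^{1/2}\le(N+1)^{1/2}$ on $\FNp$.

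The only real difficulty is $\boldKf$ (and, in part~(b), the $\boldKth^{(*)}$-terms of $\FockHN$): these are exactly the places where the pointwise unboundedness of $v$ enters, and the only available handle is the relative form bound \eqref{eqn:ass:v:2:Delta:bound}/\eqref{eqn:ass:v:2:T:bound}, which unavoidably brings $\boldKz^{1/2}$ (equivalently $\FockHz$) and a higher power of $\Np$ onto the right-hand side rather than $(\Np+1)^2$ alone. Everything else is bookkeeping with Lemma~\ref{lem:aux:new} and the intertwining relations \eqref{eqn:aux:1}.
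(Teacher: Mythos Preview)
Your proposal is correct and follows essentially the same approach as the paper: both use Lemma~\ref{lem:aux:new} together with the kernel bounds from \eqref{eqn:HS:norm:K}, \eqref{eqn:ass:v:v*phi^2} and \eqref{eqn:ass:v:2:Delta:bound} for $\boldKo,\boldKt^{(*)},\boldKth^{(*)}$, split $\boldKf$ into $\d\Gamma_\perp(v)$ plus a bounded two-body piece and estimate $\d\Gamma_\perp(v)$ sector by sector via \eqref{eqn:ass:v:2:T:bound}, and handle part~(b) by the intertwining relations \eqref{eqn:aux:1} with the $\sqrt{[N-\Np]_+}/(N-1)$ prefactor compensating the extra half-power from $\boldKth^{(*)}$. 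One harmless slip: in your decomposition of $\boldKf$ the sign on the $v*\varphi^2$ terms is reversed (it should be $\boldKf=\d\Gamma_\perp(v)-\d\Gamma_\perp\big((v*\varphi^2)_1+(v*\varphi^2)_2-\lr{\varphi,v*\varphi^2\varphi}\big)$, matching $W$ in \eqref{eqn:W(x,y)}), but this does not affect the norm estimate.
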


\begin{proof}
Since $\norm{K}_{\fH\to\fH} \leq \norm{K}_\HS \leq \fC$ by \eqref{eqn:HS:norm:K} and as \eqref{eqn:ass:v:2:Delta:bound} and \eqref{eqn:ass:v:v*phi^2} imply that
\begin{equation}
\norm{\Kth\psi}_{\fHp^2} \leq \norm{v(x_1-x_2)\varphi(x_1) \psi(x_2)}_{\fHp^2} + \norm{(v*\varphi^2)(x_1)\varphi(x_1)\psi(x_2)}_{\fHp^2} 
\leq \fC \norm{\psi}
\end{equation}
for any $\psi\in\fHp$, the bounds for $\boldKo$, $\boldKt^{(*)}$ and $\boldKth^{(*)}$ follow from Lemma \ref{lem:aux:new}. Finally, note that
\begin{equation}
\boldKf=\d\Gamma_\perp(v) - \d\Gamma_\perp(\tilde{\Kf})\,,
\end{equation}
where $\tilde{\Kf}$ denotes the multiplication operator on $\fHp\otimes\fHp$ corresponding to
\begin{equation}
\tilde{\Kf}(x_1,x_2):=(v*\varphi^2)(x_1)+(v*\varphi^2)(x_2)-\lr{\varphi,v*\varphi^2\varphi}\,.
\end{equation}
As above, \eqref{eqn:ass:v:v*phi^2} and Lemma \ref{lem:aux:new} imply that
$
\norm{\d\Gamma_\perp(\tilde{\Kf})\bPhi} \leq \fC\norm{(\Np+1)^2\bPhi} \,.
$
Moreover,
\begin{eqnarray}
\lr{\psi,|v(x_1-x_2)|^2\psi}_{\fH^k}
&\leq&\fC\left(\norm{\psi}^2_{\fH^k}+\lr{\psi,\hH_1\psi}_{\fH^k}+\norm{v*\varphi^2-\mH}_\infty\norm{\psi}^2_{\fH^k}\right)\nonumber\\
&\leq&\fC\Big(\norm{\psi}^2_{\fH^k}+\frac{1}{k}\Big\langle\psi,\sum\limits_{j=1}^k\hH_j\psi\Big\rangle_{\fH^k}\Big)
\end{eqnarray}
for $\psi\in\fH^k$ by  \eqref{eqn:ass:v:2:T:bound}  and \eqref{eqn:ass:v:v*phi^2}, hence it follows from Lemmas \ref{lem:aux:new} and \ref{lem:prelim:On:2} that
\begin{eqnarray}
\norm{\d\Gamma_\perp(v)\bPhi}^2
&\leq& \sum\limits_{k\geq 0}\Big\|\sum\limits_{1\leq i<j\leq k}v(x_i-x_j)\phi^{(k)}\Big\|_{\fHp^k}^2 
\nonumber\\
&\leq&\fC\left(\sum\limits_{k\geq0} k(k-1)^2 \lr{\phi^{(k)},\boldKz\phi^{(k)}}_{\fH^k} 
+\norm{(\Np+1)^{2}\bPhi}^2 \right)\nonumber\\
&\leq&\fC\left(\norm{\boldKz^\frac12(\Np+1)^{\frac32}\bPhi} ^2+ \norm{(\Np+1)^{2}\bPhi}^2 \right)\,,
\end{eqnarray}
where we used that $\d\Gamma_\perp(\hH)=\boldKz$.
Moreover, $\boldKz^\frac12\leq 1+\boldKz=1+\FockHz-\boldKo-\boldKt-\boldKtbar$ implies 
\begin{eqnarray}
\norm{\boldKz^\frac12(\Np+1)^{\frac32}\bPhi}^2
&\leq&\norm{(\FockHz+1)(\Np+1)^{\frac32}\bPhi}^2\nonumber\\
&&+\left|\lr{(\Np+1)^\frac32\bPhi,\boldKo(\Np+1)^\frac32\bPhi}\right|\nonumber\\
&&+2\left|\lr{(\Np+1)^\frac32\bPhi,\boldKt(\Np+1)^\frac32\bPhi}\right| \nonumber\\
&\leq&\fC\big(\norm{\FockHz(\Np+1)^{\frac32}\bPhi}  + \norm{(\Np+1)^{2}\bPhi} \big)^2\,,
\end{eqnarray}
where we used that $|\lr{\bPhi,\mathbb{K}_j\bPhi}|\leq \fC\norm{(\Np+1)^\frac12\bPhi}^2$ for $j=1,2$ by  \eqref{eqn:HS:norm:K}.
\medskip

\noindent\textbf{Part (b).}
Since $[\boldKz,\Np]=[\boldKo,\Np]=[\boldKf,\Np]=0$, \eqref{eqn:FockHN} implies that
\begin{eqnarray}
[\FockHN,(\Np+1)^\l]
&=&\left[\boldKt,(\Np+1)^\l\right]g_{\Np}+g_{\Np}\left[\boldKtbar,(\Np+1)^\l\right]\nonumber\\
&&+\left[\boldKth,(\Np+1)^\l\right]\tilde{g}_{\Np}+\tilde{g}_{\Np}\left[\boldKthbar,(\Np+1)^\l\right]\,,\label{eqn:proof:moments:3}
\end{eqnarray}
where $g_{\Np}:=\frac{\sqrt{[(N-\Np)(N-\Np-1)]_+}}{N-1}$ and $\tilde{g}_{\Np}:=\frac{\sqrt{[N-\Np]_+}}{N-1}$. For $N\geq2$,
\begin{eqnarray}
\norm{g_{\Np}\bPhi}_{\FNp}&\leq&2\norm{\bPhi}_{\FNp}\,,\qquad 
\norm{\tilde{g}_{\Np}\bPhi}_{\FNp}\leq 3(N+1)^{-\frac12}\norm{\bPhi}_{\FNp}\,.\label{eqn:proof:moments:6}
\end{eqnarray}
By \eqref{eqn:aux:1}, we find that
\begin{equation}
[\boldKt,(\Np+1)^\l]=-\boldKt\left((\Np+3)^\l-(\Np+1)^\l\right)\,,\label{eqn:proof:moments:4}
\end{equation}
and analogously for $\boldKtbar$, $\boldKth$ and $\boldKthbar$. 
Since it holds for $a,k\geq0$ and $c\geq1$ that
\begin{equation}\label{eqn:proof:moments:5}
(k+a)^c-k^c\leq c\, a(k+a)^{c-1}\leq c\, a^c(k+1)^{c-1}\,,
\end{equation}
we conclude with part (a)  that 
\begin{eqnarray}
\big\|[\boldKt,(\Np+1)^\l]g_{\Np}\bPhi\big\|_{\FNp}
&\leq&\fC\, \norm{\big((\Np+3)^{\l+1}-(\Np+1)^{\l+1}\big)g_{\Np}\bPhi}_{\FNp}\nonumber\\
&\leq& \l\,3^{\l}\fC\,\norm{(\Np+1)^\l\bPhi}_{\FNp}\,,\\
\big\|[\boldKth,(\Np+1)^\l]\tilde{g}_{\Np}\bPhi\big\|_{\FNp}
&\leq&3^\l\l\,\fC\bigg\|\left(\frac{\Np+1}{N+1}\right)^\frac12(\Np+1)^\l\bPhi\bigg\|_{\FNp}\nonumber\\
&\leq& 3^\l\l\,\fC\norm{(\Np+1)^\l\bPhi}_{\FNp}\,, 
\end{eqnarray}
and similarly for $\boldKtbar$ and $\boldKthbar$. The proof for $\FockHz$ works analogously.
\end{proof}

Next, we observe that the operators $\FockHj$ and $\FockRj$ can be bounded in terms of $\Np$ and $\FockHz$, which follows immediately from  Lemma \ref{lem:K}.

\begin{lem}\label{lem:K:a:norms:estimate:remainders}
Let $\bPhi\in\Fp$ and $b\geq0$.
\lemit{
\item \label{lem:H:norms}
For any $j\in\N$, it holds that
\begin{eqnarray}
&&\hspace{-0.5cm}\norm{(\Np+1)^b\FockHj\bPhi} \nonumber \\
&&\leq
\fC(b+j)\left( \big\|(\Np+1)^{b+\frac{j}{2}+1}\bPhi\big\| 
+\norm{(\Np+1)^b\FockHz(\Np+1)^\frac32\bPhi}\right)\,.\qquad\quad
\end{eqnarray}

\item \label{lem:estimate:remainders}
Further,
\begin{subequations}
\begin{eqnarray}
\norm{(\Np+1)^b\FockRza\bPhi}  &\leq& \fC(b)\Big(\norm{(\Np+1)^{b+\frac32}\bPhi}\nonumber\\
&&\qquad +\lN^\frac12\norm{(\Np+1)^{b+2}\bPhi} \Big)\,,\qquad\\[5pt]
\norm{(\Np+1)^b\FockRz\bPhi} &\leq& \fC(b)\Big(\norm{(\Np+1)^{b+\frac32}\bPhi} +\lN^\frac12\norm{(\Np+1)^{b+2}\bPhi} \nonumber\\
&& \qquad +\lN^\frac12 \norm{(\Np+1)^b\FockHz(\Np+1)^\frac32\bPhi} \Big)\,,\\[3pt]
\norm{(\Np+1)^b\FockRo\bPhi}  &\leq&
\fC(b)\Big(\norm{(\Np+1)^{b+2}\bPhi}  + \lN^\frac12\norm{(\Np+1)^{b+\frac52}\bPhi}  
\nonumber\\ 
&&\qquad+ \norm{(\Np+1)^b\FockHz(\Np+1)^\frac32\bPhi} \Big)\,,
\end{eqnarray}
\end{subequations} 
and, for any $j\in\N_0$,
\begin{eqnarray}
\norm{(\Np+1)^b\FockRj\bPhi} 
&\leq&  \fC(b,j) \Big(\norm{(\Np+1)^{b+\frac{j+3}{2}}\bPhi} 
+\lN^\frac12\norm{(\Np+1)^{b+\frac{j+4}{2}}\bPhi} \nonumber\\ 
&&\qquad\qquad+\norm{(\Np+1)^b\FockHz(\Np+1)^\frac32\bPhi} \Big)\,.
\end{eqnarray}
}
\end{lem}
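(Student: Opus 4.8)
The plan is to reduce everything to Lemma \ref{lem:K} together with the explicit formulas \eqref{FockHj} for $\FockHj$ and \eqref{FockR} for $\FockRj$, commuting powers of $(\Np+1)$ to the right past the operators $\boldKo,\dots,\boldKf$ by means of \eqref{eqn:aux:1}. For part (a) I would distinguish the shape of $\FockHj$: $\FockH_{2j-1}$ ($j\ge1$) is a linear combination of $\boldKth(\Np-1)^{j-1}$ and its adjoint, $\FockH_{2j}$ ($j\ge2$) is a linear combination of $\boldKt(\Np-1)^\nu$, $0\le\nu\le j$, and adjoints, and $\FockHt$ consists of $(\Np-1)\boldKo$, $\boldKt(\Np-\tfrac12)$, their adjoints, and $\boldKf$. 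Since $\boldKth$ raises the particle number by one, $(\Np+1)^b\boldKth(\Np-1)^{j-1}=\boldKth(\Np+2)^b(\Np-1)^{j-1}$, and now all factors to the right of $\boldKth$ commute with $\Np$, so Lemma \ref{lem:K} gives the bound $\fC\norm{(\Np+1)^{b+\frac32+(j-1)}\bPhi}=\fC\norm{(\Np+1)^{b+\frac{2j-1}{2}+1}\bPhi}$, exactly the claimed exponent. The same step applied to $\boldKt(\Np-1)^\nu$ (which raises the particle number by two) produces $\boldKt(\Np+3)^b(\Np-1)^\nu$ and the bound $\fC\norm{(\Np+1)^{b+1+\nu}\bPhi}\le\fC\norm{(\Np+1)^{b+\frac{2j}{2}+1}\bPhi}$, and likewise for $(\Np-1)\boldKo$. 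The adjoint terms are treated identically, the particle-number shift going the other way.

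The only summand that does not reduce to a pure power of $\Np$ is the $\boldKf$ in $\FockHt$. Since $\boldKf$ commutes with $\Np$, one writes $(\Np+1)^b\boldKf\bPhi=\boldKf(\Np+1)^b\bPhi$ and applies the second inequality in Lemma \ref{lem:K} to $(\Np+1)^b\bPhi$, which yields $\fC\big(\norm{(\Np+1)^{b+2}\bPhi}+\norm{\FockHz(\Np+1)^{b+\frac32}\bPhi}\big)$. To put this into the form of the statement I would commute one factor $(\Np+1)^b$ back through $\FockHz$ using Lemma \ref{lem:prelim:commutators}, i.e.\ $\norm{\FockHz(\Np+1)^{b+\frac32}\bPhi}\le\norm{(\Np+1)^b\FockHz(\Np+1)^{\frac32}\bPhi}+\fC(b)\norm{(\Np+1)^{b+\frac32}\bPhi}$, and absorb the commutator error into the pure-power term. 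Summing the contributions of all summands of $\FockHj$ proves part (a).

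Part (b) is of the same type. By \eqref{FockR} each $\FockRj$ is a combination of $\boldKth$ (resp.\ $\boldKt$) composed with one of the Taylor remainders $\tFockRth_k$, $\tFockRt_k$ of Lemma \ref{lem:calculus}, together with the bounded operator $\sqrt{[N-\Np]_+/(N-1)}$ (which is $\le\sqrt2$ for $N\ge2$) in $\FockRza$, the pieces $(\Np-1)\boldKo$ in $\FockRza$ and $\FockRoa$, and the $\boldKf$ contributions in $\FockRz$ and $\FockRo$. Since $\tFockRt_k$, $\tFockRth_k$ commute with $\Np$ and obey $\norm{\tFockRt_k\bPhi},\norm{\tFockRth_k\bPhi}\le\fC(k)\norm{(\Np+1)^{k+1}\bPhi}$ by Lemma \ref{lem:calculus}, one again commutes $(\Np+1)^b$ through the $\boldK$-operator, then through the number-conserving remainder, and applies Lemma \ref{lem:K}: for example $(\Np+1)^b\boldKth\tFockRth_{k-1}\bPhi=\boldKth(\Np+2)^b\tFockRth_{k-1}\bPhi$ is bounded by $\fC(k)\norm{(\Np+1)^{b+\frac32+k}\bPhi}$, which is the term $\norm{(\Np+1)^{b+\frac{j+3}{2}}\bPhi}$ with $j=2k$, while $\lN^{1/2}\boldKt\tFockRt_k$ gives the term $\lN^{1/2}\norm{(\Np+1)^{b+\frac{j+4}{2}}\bPhi}$; the $\sqrt{[N-\Np]_+/(N-1)}$ and $(\Np-1)\boldKo$ pieces are bounded by powers of $(\Np+1)$ as in part (a), and the $\boldKf$ pieces produce the $\FockHz$-term exactly as above. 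The analogous count for odd $j=2k+1$ uses $\boldKt\tFockRt_k$ and $\lN^{1/2}\boldKth\tFockRth_k$, again matching the stated exponents.

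The computation is routine; the only point that needs care is to track the exact shifts $\Np\mapsto\Np+1,\Np+2,\Np+3$ (and their analogues for the adjoint terms) introduced when commuting $(\Np+1)^b$ past $\boldKo$, $\boldKt$, $\boldKth$, so that the resulting powers of $(\Np+1)$ match the stated exponents, and to handle the noncommutativity of $\FockHz$ with $\Np$ in the $\boldKf$ contributions via Lemma \ref{lem:prelim:commutators}. I do not expect any genuine obstacle beyond this bookkeeping.
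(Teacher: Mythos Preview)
Your proposal is correct and follows exactly the approach the paper takes: the paper's proof consists solely of the remark that the bounds ``follow immediately from Lemma \ref{lem:K}'', and your argument spells out precisely that computation---commuting $(\Np+1)^b$ past the $\boldK_j$ via \eqref{eqn:aux:1}, invoking Lemma \ref{lem:K} and Lemma \ref{lem:calculus} for the Taylor remainders, and handling the $\boldKf$ contribution through Lemma \ref{lem:prelim:commutators}.
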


\subsubsection{Bound for $\FockAmred$}
\label{subsec:proofs:A}
In this section, we show that $\FockAmred $ as defined in \eqref{Amred} is bounded in terms of $\FockHz$ and $\Np$.

\begin{lem}\label{lem:A}
For $\Am$ as in \eqref{eqn:thm:A:unbounded} and the corresponding operator $\FockAmred$ as in \eqref{Amred}, it holds that
\begin{equation}
\norm{\FockAmred\bPhi}  
\leq\fC(m) N^{-\frac12}\left( \norm{(\Np+1)\bPhi} +\big\|\FockHz\bPhi\big\| \right)\,.
\end{equation}
\end{lem}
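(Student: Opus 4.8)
The plan is to transport the estimate to the $N$-body space by unitarity of $\UNp$, so that $\norm{\FockAmred\bPhi}=\norm{(\cAm-\expAm)\UNp^*\bPhi}_{\fHN}$ for $\bPhi\in\FNp\oplus0$, and then to exploit that, once the condensate expectation is subtracted, every remaining contribution contains at least one excited particle, each of which produces a factor $N^{-1/2}$.

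First I would write $\cAm=\binom{N}{m}^{-1}\sum_{\bj}\Amj$, with $\bj=(j_1\mydots j_m)$ running over $1\leq j_1<\dots<j_m\leq N$, and insert $\id=p+q$ (with $p,q$ as in \eqref{p_and_q}, on the relevant coordinate) on each of the $2m$ legs of every $\Amj$. Since $p^{\otimes m}\Am p^{\otimes m}=\expAm p^{\otimes m}$, the leg-pattern with $p$'s on all legs contributes exactly $\expAm\, p_{j_1}\mycdots p_{j_m}$, and one obtains
\begin{equation*}
\cAm-\expAm\;=\;\binom{N}{m}^{-1}\sum_{\bj}\big(\Am-p^{\otimes m}\Am p^{\otimes m}\big)_{\bj}\;-\;\expAm\Big(\id-\binom{N}{m}^{-1}\sum_{\bj}p_{j_1}\mycdots p_{j_m}\Big)\,,
\end{equation*}
where $(\,\cdot\,)_{\bj}$ acts on $x_{j_1}\mydots x_{j_m}$ (as $\Amj$ does). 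For the last term, $\binom{N}{m}^{-1}\sum_{\bj}p_{j_1}\mycdots p_{j_m}$ conjugates under $\UNp$, by iterating the substitution rules \eqref{eqn:substitution:rules}, to a function of $\Np$ which on $\FNp$ is non-negative and differs from $\id$ by an operator bounded above by $\fC(m)N^{-1}(\Np+1)$; since $|\expAm|\leq\fC$ by \eqref{eqn:thm:A:unbounded}, this part is controlled by $\fC(m)N^{-1}\norm{(\Np+1)\bPhi}$, which is of the required order.

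For the remaining term the crucial algebraic fact is $p^{\otimes m}\big(\Am-p^{\otimes m}\Am p^{\otimes m}\big)p^{\otimes m}=0$: the all-$p$ leg-pattern is absent, so every surviving summand carries at least one $q$. Performing the sum over $\bj$ and conjugating with $\UNp$ via \eqref{eqn:substitution:rules} turns each leg-pattern into a Fock-space operator equal to $\binom{N}{m}^{-1}$ times a polynomial in $\Np$ (bounded on $\FNp$ by $N^{c/2}$, where $c\in\{0,1,\dots,2m-1\}$ counts the legs carrying $\varphi$) times a normal-ordered product of $2m-c$ creation/annihilation operators on $\fHp$ whose kernel is the corresponding reduced block of $\Am-p^{\otimes m}\Am p^{\otimes m}$ and is relatively bounded by $\sum_i(T_i+1)$. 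These are estimated exactly as $\boldKo$--$\boldKf$ and $\d\Gamma_\perp(v)$ in the proof of Lemma~\ref{lem:K} and as in \eqref{eqn:FockHN}: Lemma~\ref{lem:aux:new} bounds the $\fHp$-string (with the bounded part of the kernel) by $(\Np+1)^{(2m-c)/2}$, while the possibly unbounded part is absorbed, as for $\d\Gamma_\perp(v)$, by passing to the associated $\hH$-terms and invoking Lemma~\ref{lem:prelim:On} together with the identity $\UNp\big(\sum_{j=1}^N\hH_j\big)\UNp^*=\boldKz$ (which holds because $\hH\varphi=0$); this produces at most one factor $\boldKz^{1/2}$ and hence, via $\boldKz\leq\id+\FockHz-\boldKo-\boldKt-\boldKtbar$ and Lemma~\ref{lem:K}, at most $\norm{\FockHz\bPhi}+\fC\norm{(\Np+1)\bPhi}$. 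On $\FNp$, where $\Np\leq N$, $N-\Np\leq N$ and $\binom{N}{m}^{-1}\binom{j}{m}\leq1$ on the $j$-particle sector, the prefactor $\binom{N}{m}^{-1}\sim N^{-m}$, the factor $N^{c/2}$ from the $\varphi$-legs, and the bound $(\Np+1)^{(2m-c)/2}\leq N^{\,m-(c+1)/2}(\Np+1)^{1/2}$ multiply to the net power $N^{-1/2}$, leaving only $(\Np+1)^{1/2}$, respectively one $\boldKz$-factor, acting on $\bPhi$; summing over the finitely many leg-patterns gives the claim.

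The delicate (though routine) point is the leg-pattern bookkeeping and the accompanying $N$-power accounting: one must check, pattern by pattern, that $\binom{N}{m}^{-1}$, the $(N-\Np)$-powers produced by the substitution rules, and the $\Np$-powers produced by Lemma~\ref{lem:aux:new} combine to exactly one factor $N^{-1/2}$ while no more than first powers of $\Np$ and of $\FockHz$ survive. The single gained factor $N^{-1/2}$ is precisely the absence of the fully condensed block $p^{\otimes m}\big(\Am-p^{\otimes m}\Am p^{\otimes m}\big)p^{\otimes m}$. The mechanism is already fully visible for $m=1$, where one computes that $\mathcal{A}^{(1)}_N-\expAo$ conjugates under $\UNp$ to $N^{-1}\big(-\expAo\Np+\ad(qA^{(1)}\varphi)\sqrt{N-\Np}+\sqrt{N-\Np}\,a(q(A^{(1)})^{*}\varphi)+\d\Gamma_\perp(qA^{(1)}q)\big)$ on $\FNp$, each of whose four terms is readily seen to be of order $N^{-1/2}\big(\norm{(\Np+1)\bPhi}+\norm{\FockHz\bPhi}\big)$.
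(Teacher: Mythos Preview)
Your strategy is genuinely different from the paper's. You transport to Fock space via the substitution rules and decompose $\cAm-\expAm$ into all leg-patterns with at least one $q$, mimicking the expansion of $\FockHN$. The paper instead stays in $\fHN$ and uses only the coarser splitting $1=p_{j_1}\cdots p_{j_m}+(1-p_{j_1}\cdots p_{j_m})$ on both sides of $\Amj$, producing just three terms; the $N^{-1/2}$ then comes from $\|q_1\cdots q_\l\psi_N\|_{\fHN}\leq N^{-1/2}2^{\l/2}\|(\Np+1)^{1/2}\bPhi\|$ for the bounded parts and from Lemma~\ref{lem:prelim:On:new} (permutation symmetry over all $N$ coordinates turns $\|\sum_{j=1}^m h_j\psi_N\|$ into $\sqrt{m/N}\,\|\boldKz\bPhi\|$) for the unbounded part, at the price of a combinatorial overlap argument for the $(1-p^{\otimes m})\Am p^{\otimes m}$ term. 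Your power-counting $\binom{N}{m}^{-1}\cdot N^{c/2}\cdot N^{(2m-c-1)/2}=N^{-1/2}$ is correct, and the $m=1$ case is right.

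There is, however, a gap for $m\geq 2$. Your treatment of the ``possibly unbounded part'' invokes the $\d\Gamma_\perp(v)$ estimate and Lemma~\ref{lem:prelim:On}, but neither applies directly to \emph{non-number-conserving} Fock operators with merely relatively bounded kernel --- e.g.\ the $m=2$ block with two $q$'s on the left and one on the right, whose kernel $\psi\mapsto q^{\otimes 2}A^{(2)}(\psi\otimes\varphi)$ is only relatively $h$-bounded and has no analogue among $\boldKo$--$\boldKf$ (the $\boldKth$-kernel is bounded). Also note that $A^{(m)}$ satisfies an \emph{operator} bound $\|A^{(m)}\psi\|\leq\fC\|\sum(T_j+1)\psi\|$, not a form bound like $|v|^2\leq C(1-\Delta)$; via Lemma~\ref{lem:prelim:On:2} this produces a full power of $\boldKz$ rather than $\boldKz^{1/2}$. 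This is harmless for the final estimate since $\|\boldKz\bPhi\|\leq\|\FockHz\bPhi\|+\fC\|(\Np+1)\bPhi\|$, but it means the $\d\Gamma_\perp(v)$ analogy does not transfer verbatim. These blocks can be bounded by a direct sector-by-sector estimate (yielding $\fC\|\Np^{\,r}\boldKz\bPhi\|+\fC\|\Np^{\,r+1/2}\bPhi\|$ with the right $r$ for your $N$-accounting), so your argument can be completed, but this is work you have not indicated.
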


\begin{proof}
In the following, we abbreviate 
$$\psi_N:=\UNp^*\bPhi\,.$$
Decomposing $1=p_{j_1}\mycdots p_{j_m}+(1-p_{j_1}\mycdots p_{j_m})$ and observing that 
\begin{equation}
p_{j_1}\mycdots p_{j_m}\Amj p_{j_1}\mycdots p_{j_m} 
= \expAm\, p_{j_1}\mycdots p_{j_m}
\end{equation} 
yields
\begin{subequations}\label{eqn:A2}
\begin{eqnarray}
\norm{\FockAmred\bPhi} \hspace{-1pt}
&\leq&\hspace{-1pt}\tbinom{N}{m}^{-1}\Big\|\expAm\sum\limits_{1\leq j_1<\dots<j_m\leq N}(1-p_{j_1}\mycdots p_{j_m})\psi_N\Big\|_{\fHN}\label{eqn:A2:1}\\
&&\hspace{-1pt}+\tbinom{N}{m}^{-1}\Big\|\sum\limits_{1\leq j_1<\dots<j_m\leq N}\Amj(1-p_{j_1}\mycdots p_{j_m})\psi_N\Big\|_{\fHN}\label{eqn:A2:2}\\
&&\hspace{-1pt}+\tbinom{N}{m}^{-1}\Big\|\sum\limits_{1\leq j_1<\dots<j_m\leq N}(1-p_{j_1}\mycdots p_{j_m})\Amj p_{j_1}\mycdots p_{j_m}\psi_N\Big\|_{\fHN}\,.\qquad\quad\label{eqn:A2:3}
\end{eqnarray}
\end{subequations}
To estimate the contributions in \eqref{eqn:A2}, observe first that
\begin{equation}
\norm{\Am\varphi^{\otimes m}}_{\fH^m} \leq \fC
\end{equation} 
by Lemma \ref{lem:prelim:On:1} because $\hH \varphi=0$. Further, it was shown in  \cite[Lemma 3.2]{corr} that
\begin{equation}
\norm{q_1\mycdots q_\l\psi_N}_\fHN \leq \left(N^{-\l}+2^\l N^{-\l}\lr{\bPhi,\Np^\l\bPhi}_{\FNp}\right)^\frac12
\leq N^{-\frac12} 2^\frac{\l}{2}\norm{(\Np+1)^\frac12\bPhi}_\FNp
\end{equation}
for $\l\in\{1\mydots N\}$ because $\frac{\Np}{N}\leq 1$ as operator on $\FNp$.
Hence, by permutation symmetry of $\psi_N$, it holds that
\begin{eqnarray}
\eqref{eqn:A2:1} 
&\leq&\fC \norm{(1-p_1\mycdots p_m)\psi_N}_{\fHN}\nonumber\\
&\leq&\fC\sum\limits_{\l=1}^m\binom{m}{\l}\norm{q_1\mycdots q_\l\, p_{\l+1}\mycdots p_m\psi_N}_{\fHN}\nonumber\\
&\leq&\fC(m)\, N^{-\frac12} \norm{(\Np+1)^\frac12\bPhi}_\FNp\,. \label{eqn:A2:4}
\end{eqnarray}
For \eqref{eqn:A2:2}, Lemma~\ref{lem:prelim:On:1} implies
\begin{eqnarray}
\eqref{eqn:A2:2}
&\leq&\norm{\Amom(1-p_1\mycdots p_m)\psi_N}_\fHN\nonumber\\
&\leq& \fC\left(\Big\|\sum\limits_{j=1}^m\hH_j(1-p_1\mycdots p_m)\psi_N\Big\|_\fHN +\norm{(1-p_1\mycdots p_m)\psi_N}_\fHN\right)\nonumber\\
&\leq& \fC(m)\left(\Big\|\sum\limits_{j=1}^m  \hH_j \psi_N\Big\|_\fHN + N^{-\frac12}  \norm{(\Np+1)^\frac12\bPhi}_\FNp\right)\label{eqn:A2:6}
\end{eqnarray}
since $\hH_j=q_j\hH_jq_j$ and $q_j(1-p_1\mycdots p_m) = q_j$.
For the first term, Lemma \ref{lem:prelim:On:new} yields
\begin{equation}
\Big\|\sum\limits_{j=1}^m \hH_j \psi_N\Big\|_\fHN
\leq \sqrt{\frac{m}{N}} \Big\|\sum\limits_{j=1}^N\hH_j\psi_N\Big\| \leq \fC(m)N^{-\frac12}\left(\norm{\FockHz\bPhi}_\FNp + \norm{(\Np+1)\bPhi}_\FNp\right)
\end{equation}
because $\hH\varphi=0$ implies that $\boldKz=\d\Gamma_\perp(h)=\UNp^*\boldKz\UNp$
and since
\begin{eqnarray}
\norm{\FockHz\bPhi}_{\Fp} \geq \norm{\boldKz\bPhi} -\fC\,\norm{(\Np+1)\bPhi}_{\Fp}\,\label{eqn:A1:2}
\end{eqnarray}
by Lemma \ref{lem:K}.
Finally, for $m\ll N$,
\begin{subequations}
\begin{eqnarray}
&&\hspace{-1cm}\left(\tbinom{N}{m}\eqref{eqn:A2:3}\right)^2\nonumber\\
&=&\tbinom{N}{m} \sum\limits_{1\leq j_1<\dots<j_m\leq N}\Big\langle\psi_N,p_1\mycdots p_m\Amom(1-p_1\mycdots p_m)(1-p_{j_1}\mycdots p_{j_m})\nonumber\\
&&\hspace{5cm}\times\,\Am_{j_1\mydots j_m}p_{j_1}\mycdots p_{j_m}\psi_N\Big\rangle_\fHN\nonumber\\
&=&\tbinom{N}{m}\sum\limits_{\l=0}^m\tbinom{N-m}{\l}\tbinom{m}{\l}\big\langle\psi_N,p_1\mycdots p_m\Amom(1-p_1\mycdots p_m)(1-p_{\l+1}\mycdots p_{\l+m})\nonumber\\
&&\hspace{5cm}\times \,A^{(m)}_{\l+1\mydots\l+m}\,p_{\l+1}\mycdots p_{\l+m}\,\psi_N\Big\rangle_\fHN\nonumber\\
&\leq&\tbinom{N}{m}\sum\limits_{\l=0}^{m-1}\tbinom{N-m}{\l}\tbinom{m}{\l} \norm{\Amom\varphi^{\otimes m}}^2_{\fH^m}\norm{\bPhi}^2_\FNp\nonumber\\[5pt]
&&+ \tbinom{N}{m}\tbinom{N-m}{m}\Big\langle\psi_N,p_1\mycdots p_m\Amom(1-p_1\mycdots p_m)(1-p_{m+1}\mycdots p_{2m})\nonumber\\
&&\hspace{5cm}\times \,\Am_{m+1\mydots 2m}p_{m+1}\mycdots p_{2m}\psi_N\Big\rangle_\fHN\nonumber\\
&\leq& \fC(m)\tbinom{N}{m}^2 N^{-1} \left(\norm{\bPhi}^2_{\FNp} + \norm{(\Np+1)^\frac12\bPhi}_\FNp^2  \right)\,,
\end{eqnarray}
\end{subequations}
where we used that
\begin{eqnarray}
\tbinom{N}{m} \sum\limits_{\l=0}^{m-1}\tbinom{N-m}{\l}\tbinom{m}{\l}
\leq m\,2^m\tbinom{N}{m}^2\frac{\binom{N-m}{m-1}}{\binom{N}{m}}
\leq \fC(m)\tbinom{N}{m}^2 N^{-1}
\end{eqnarray}
and that
\begin{eqnarray}
&&\hspace{-1cm}\lr{\psi_N,p_1\mycdots p_m\Amom(1-p_1\mycdots p_m)(1-p_{m+1}\mycdots p_{2m})\Am_{m+1\mydots 2m}\,p_{m+1}\mycdots p_{2m}\psi_N}_\fHN\nonumber\\
&=&\Big\langle(1-p_{m+1}\mycdots p_{2m})\psi_N,p_1\mycdots p_m\Amom\Am_{m+1\mydots 2m}\,p_{m+1}\mycdots p_{2m}\nonumber\\
&&\hspace{6cm}\times(1-p_1\mycdots p_m)\psi_N\Big\rangle_\fHN\nonumber\\
&\leq&\norm{\Amom\varphi^{\otimes m}}_{\fH^m}^2\norm{(1-p_1\mycdots p_m)\psi_N}^2_{\fHN}\nonumber\\
&\leq& \fC(m) \left(  N^{-\frac12}\norm{(\Np+1)^\frac12\bPhi}_\FNp\right)^2
\end{eqnarray}
as in \eqref{eqn:A2:4}.
\end{proof}

\subsubsection{Resolvent estimates}
\begin{lem}\label{lem:aux}
Let
$$ \FockIn\in\left\{ \id\,,\,\Pzn\,,\,\Qzn\right\}$$
and   $z\in\gan$.
\lemit{
\item\label{lem:aux:ResHz:op}
It holds that
\begin{subequations}
\begin{eqnarray}
\Big\|\ResInz\Big\|_{\cL(\Fp)}\leq \fC(n)\,, 
\end{eqnarray}
and, for sufficiently large $N$,
\begin{equation}
\Big\|\RestHz\Big\|_{\cL(\Fp)} \leq \fC(n)\,.
\end{equation}
\end{subequations}
\item \label{lem:aux:moments:O}
Let $b\geq 0$. Then
\begin{subequations}
\begin{eqnarray}
\Big\|(\Np+1)^{b+1}\ResInz\bPhi\Big\| 
&\leq& \fC(n,b) \norm{(\Np+1)^{b}\bPhi} \,,\\
\Big\|(\Np+1)^b\FockHz\ResInz\bPhi\Big\|  
&\leq&\fC(n,b)\norm{(\Np+1)^{b}\bPhi}  \,.
\end{eqnarray}
\end{subequations}
}
\end{lem}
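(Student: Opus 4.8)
The plan is to handle the two resolvents in part~(a) separately, and then to reduce part~(b) to a Bogoliubov conjugation together with the moment bounds already at our disposal.

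For the resolvent of $\FockHz$ everything follows from the spectral theorem and the geometry of the contour $\gan$ fixed in \eqref{eqn:gamma:n}--\eqref{eqn:fgn}. By Lemma~\ref{lem:known:results:FockHz:3} the gaps $\gz^{(n-1)},\gz^{(n)}$ are positive, so every $z\in\gan$ satisfies $\mathrm{dist}(z,\sigma(\FockHz))\geq\fgn$, with equality only at $\Ezn$; hence $\onorm{(z-\FockHz)^{-1}}\leq\fgn^{-1}=\fC(n)$. Restricting $\FockHz$ to the range of $\Qzn$ deletes the eigenvalue $\Ezn$ but leaves this distance estimate intact, while $\Pzn(z-\FockHz)^{-1}=\Pzn/(z-\Ezn)$ has norm exactly $\fgn^{-1}$. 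For $\tFockH=\FockHminus$ I would first observe that it is self-adjoint and, by \eqref{FockHminus}, block-diagonal with respect to $\Fp=\FNp\oplus\FgNp$, since the $\Np$-dependent prefactors multiplying the particle-number-changing operators $\boldKt^{(*)},\boldKth^{(*)}$ vanish precisely on the sectors that would couple the two subspaces. On $\FNp$ one has $\tFockH=\FockHN$, whose eigenvalues cluster near those of $\FockHz$ by Lemma~\ref{lem:known:results:FockHN:E}, so $\mathrm{dist}(\gan,\sigma(\FockHN))\geq\tfrac12\fgn$ for $N$ large. On $\FgNp$ one uses $\boldKz=\d\Gamma_\perp(\hH)\geq\gapH\Np\geq\gapH(N+1)$, together with the relative bounds of Lemma~\ref{lem:K} and the positive-type hypothesis of Assumption~\ref{ass:v}, to absorb the remaining (in part $\lN$-suppressed) contributions and show $\tFockH|_{\FgNp}$ is bounded below by a constant tending to $+\infty$; hence $\gan$ is disjoint from $\sigma(\tFockH)$, giving $\onorm{(z-\tFockH)^{-1}}\leq 2\fgn^{-1}=\fC(n)$ for $N$ large.

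For part~(b), the cases $\FockIn=\Pzn$ and $\FockIn=\Qzn=\id-\Pzn$ reduce to the case $\FockIn=\id$ via the exact identities $\Pzn(z-\FockHz)^{-1}=\Pzn/(z-\Ezn)$ and $\FockHz\Pzn(z-\FockHz)^{-1}=\Ezn\Pzn/(z-\Ezn)$ and the bound $\onorm{(\Np+1)^b\Pzn}\leq\fC(n,b)$ from \eqref{eqn:moments:Chiz}. For $\FockIn=\id$ I would conjugate with the Bogoliubov transformation $\BogUz$ diagonalizing $\FockHz$ (Lemma~\ref{lem:BT:diag}), so that $(z-\FockHz)^{-1}=\BogUz^*(z-\d\Gamma_\perp(D)-\Ezz)^{-1}\BogUz$. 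Since $D\geq d^{(0)}>0$ one has $\Np+1\leq\fC(\d\Gamma_\perp(D)+1)$, and $\d\Gamma_\perp(D)$ commutes with $\Np$, so a joint diagonalization together with the scalar estimate $\sup_{\mu\in\sigma(\d\Gamma_\perp(D))}\tfrac{\mu/d^{(0)}+1}{|z-\Ezz-\mu|}\leq\fC(n)$ (valid for $z\in\gan$ because $|z-\Ezz-\mu|\geq\fgn$ and grows like $\mu$) yields $\onorm{(\Np+1)(z-\d\Gamma_\perp(D)-\Ezz)^{-1}}\leq\fC(n)$. Commuting $(\Np+1)^b$ past the resolvent and pushing the remaining powers of $\Np$ through $\BogUz,\BogUz^*$ with Lemma~\ref{lem:number:BT} gives $\norm{(\Np+1)^{b+1}(z-\FockHz)^{-1}\bPhi}\leq\fC(n,b)\norm{(\Np+1)^b\bPhi}$. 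The second inequality of~(b) then follows from $\FockHz\FockIn(z-\FockHz)^{-1}=-\FockIn+z\FockIn(z-\FockHz)^{-1}$ (or $=\Ezn\Pzn/(z-\Ezn)$ when $\FockIn=\Pzn$), using $\norm{(\Np+1)^b\FockIn\bPhi}\leq\fC(n,b)\norm{(\Np+1)^b\bPhi}$, $|z|\leq\fC(n)$, and the first inequality with $b$ replaced by $b-1$ (the case $b=0$ being part~(a)).

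The main obstacle is the $\FgNp$ part of the $\tFockH$ resolvent bound in~(a): away from the $N$-particle sector $\boldKz$ and $\boldKf$ are genuinely unbounded, and one must check that neither the $\lN$-suppressed terms nor the $\Np$-dependent prefactors push a piece of $\sigma(\tFockH)$ into the disk bounded by $\gan$. This is exactly where the coercivity $\boldKz\geq\gapH\Np$, the relative bounds of Lemma~\ref{lem:K}, the positive-type assumption on $v$, and the requirement that $N$ be large all enter; everything else is routine bookkeeping with the spectral theorem and Lemma~\ref{lem:number:BT}.
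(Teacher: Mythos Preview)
Your approach to part~(b) and to the $\FockHz$-resolvent in~(a) matches the paper's: Bogoliubov conjugation with $\BogUz$, commutativity of $\Np$ with the diagonalized Hamiltonian, and Lemma~\ref{lem:number:BT} to move powers of $\Np$ across $\BogUz$. Your reduction of the $\Pzn,\Qzn$ cases to $\FockIn=\id$ is a harmless reorganization of the paper's uniform treatment (the paper instead uses that $\BogUz\FockIn(z-\FockHz)^{-1}\BogUz^*$ commutes with $\Np$ for all three choices of $\FockIn$ simultaneously), and both arguments finish~(b) with the identity $\FockHz\ResInz=-\FockIn+z\ResInz$.

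For the $\tFockH$-resolvent in~(a) you go further than the paper, whose proof is a single sentence citing only the eigenvalue convergence of $\FockHN$ (Lemma~\ref{lem:known:results:FockHN:E}). You correctly identify the block-diagonality with respect to $\FNp\oplus\FgNp$ and that $\tFockH|_{\FNp}=\FockHN$. The gap is in your $\FgNp$ argument. The coercivity $\boldKz\geq\gapH\Np$ is only linear in the particle number, but on a sector with $\Np=k\gg N$ the remaining contributions are \emph{not} $\lN$-small: the coefficient $1-\tfrac{\Np-1}{N-1}=\tfrac{N-k}{N-1}$ multiplying $\boldKo$ is large and negative, producing a form contribution of order $-\onorm{K}\,k^{2}/N$, and $\tfrac{1}{N-1}\boldKf$ is likewise not dominated by $\gapH k$ via Lemma~\ref{lem:K} alone (that bound involves $\boldKz^{1/2}(\Np+1)^{3/2}$, hence is also super-linear in $k$). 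So the conclusion ``$\tFockH|_{\FgNp}$ bounded below by a constant tending to $+\infty$'' does not follow from the ingredients you list, and neither the positive-type hypothesis nor ``$N$ large'' rescues the estimate on sectors $k\gg N$. You have correctly isolated the obstacle but not overcome it; the paper's one-line proof does not treat the $\FgNp$ block at all, so this is a point left implicit there as well.
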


\begin{proof}
By definition \eqref{eqn:fgn}  of $\fgn$, it follows that
\begin{eqnarray}
\inf\limits_{\substack{z\in\gan\\\lambda\in\sigma(\FockHz)}}|z-\lambda|
&=&\min\left\{\big|z-\Ezn\big|\,,\,\big|z-\Ez^{(n-1)}\big|\,,\,\big|z-\Ez^{(n+1)}\big|\right\}=\fgn\,,
\end{eqnarray}
which implies the first  part of (a), and the second part follows with Lemma~\ref{lem:known:results:FockHN:E}.
For part (b), recall that there exists a Bogoliubov transformation $\BogUz$   diagonalizing $\FockHz$ (Lemma~\ref{lem:BT:diag}), i.e.,
\begin{equation}\label{eqn:U:diag:Hz}
\left[\BogUz^*(\Np+1)\BogUz,\FockHz\right]=0\,,\qquad
\left[\BogUz \ResInz \BogUz^*\,,\,\Np\right]=0\,.
\end{equation}
As a consequence, Lemma \ref{lem:Np:ls:FockHz} implies that
\begin{equation}
\BogUz^*(\Np+1)^k\BogUz \leq \fC^k \left(\FockHz-\Ezz+1\right)^k\,,
\end{equation}
hence 
\begin{eqnarray}
\ResInzbar\BogUz^*(\Np+1)^{2}\BogUz\ResInz
&\leq & \fC^{2}\left|\ResInz\right|^{2}\left(|\FockHz-z|+|z-\Ezz+1|\right)^{2}\nonumber\\
&\leq& \fC(n)^{2}
\end{eqnarray}
because  $|z-\Ezz+1|\leq |\Ezn|+\fgn + |\Ezz|+1\leq \fC(n)$.
Consequently, Lemma \ref{lem:number:BT} leads for $b\geq 1$ to the estimate
\begin{eqnarray}
\Big\|(\Np+1)^b\ResInz\bPhi\Big\| 
&=&\Big\|(\Np+1)^b\BogUz^*\BogUz\ResInz\BogUz^*\BogUz\bPhi\Big\| \nonumber\\
&\leq&\fC(b)\Big\|(\Np+1)(\Np+1)^{b-1}\BogUz\ResInz\BogUz^*\BogUz\bPhi\Big\| \nonumber\\
&=&\fC(b)\Big\|(\Np+1)\BogUz\ResInz\BogUz^*(\Np+1)^{b-1}\BogUz\bPhi\Big\| \nonumber\\
&\leq&\fC(n,b)\norm{(\Np+1)^{b-1}\BogUz\bPhi} \nonumber\\
&\leq& \fC(n,b)\norm{(\Np+1)^{b-1}\bPhi} \,.
\end{eqnarray}
The second statement of (b) is a consequence of the triangle inequality since
\begin{equation*}
\FockHz\ResInz=-\FockIn+z\ResInz\,. \qedhere
\end{equation*}
\end{proof}

\subsubsection{Bounds for moments of $\Np$ and $\boldKf$ with respect to $\Pn$}

In this section, we show that moments of $\Np$ with respect to both $\Chin$ and $\boldKf\Chin$ are bounded uniformly in $N$.

\begin{lem}\label{lem:bootstrap}
Let $\Chin\in\fEn$ and $b\geq 0$. Then 
\lemit{
\item
\begin{equation}
\lr{\Chin,(\Np+1)^b\Chin}  \leq \fC(n,b)\,, 
\end{equation}
\item 
\begin{equation}
\norm{(\Np+1)^b\boldKf\Chin}  \leq \fC(n,b)\,.
\end{equation}
}
\end{lem}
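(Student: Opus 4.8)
The plan is to prove both parts by induction on the integer $b\geq1$; for part~(a) the case of general real $b\geq0$ then follows from $(\Np+1)^b\leq(\Np+1)^{\lceil b\rceil}$ and $\norm{\Chin}=1$, and similarly for part~(b). The base case $b\leq1$ of~(a) is \eqref{eqn:moments:Chi:1}. As preliminary inputs I would record the \emph{crude} bound $\lr{\Chin,(\Np+1)^c\Chin}\leq\fC(n,c)N^{(c-1)/3}$ for all $c\geq0$, obtained by iterating \eqref{eqn:moments:Chi:a:priori} (already enough to make the operator expressions below well-defined for each fixed $N$), together with the \emph{uniform} bound $\lr{\Chin,(\Np+1)^c\Chin}\leq\fC(n,b)$ for $c\leq b-1$ which is available once the inductive hypothesis of~(a) at order $b-1$ has been established.

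For the induction step of~(a), fix $b\geq2$, assume the statement at order $b-1$, and pick an expansion order $a=a(b)\in\N_0$ to be chosen at the end. Applying Proposition~\ref{prop:exp:P} in bilinear-form sense to $\FockA=(\Np+1)^b$ gives
\begin{equation*}
\lr{\Chin,(\Np+1)^b\Chin}=\sum_{\l=0}^a\lN^{\l/2}\Tr\big((\Np+1)^b\Pnl\big)+\lN^{(a+1)/2}\Tr\big((\Np+1)^b\FockBPn(a)\big)+\lN^{(a+1)/2}\Tr\big((\Np+1)^b\FockBQn(a)\big)\,.
\end{equation*}
The terms of the first sum are bounded uniformly in $N$: by Definition~\ref{def:Pna} each $\Pnl$ is a finite sum of products of the operators $\FockOn_k$ and $\FockHj$ beginning and ending with the rank-$\dzn$ projector $\Pzn$, so its $(\Np+1)^b$-trace is controlled, via Lemmas~\ref{lem:K}, \ref{lem:K:a:norms:estimate:remainders} and~\ref{lem:aux}, by the norms of $\Chizn$ under products of $\Np$-powers and $\FockHz$, which are bounded by Lemma~\ref{lem:moments:Chiz}. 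The same inputs, together with $\big\|(z-\tFockH)^{-1}\big\|_\op,\big\|\Qzn(z-\FockHz)^{-1}\big\|_\op\leq\fC(n)$ (Lemma~\ref{lem:aux}) and the order-one circumference of $\gan$, show that $\Tr((\Np+1)^b\FockBQn(a))$ is bounded uniformly in $N$, since by~\eqref{B:Q} every summand still contains a factor $\Pzn$ and therefore acts only on $\Chizn$.

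The remaining term $\Tr((\Np+1)^b\FockBPn(a))$ is the only one still involving $\Chin$, and it is the crux. After exchanging trace and contour integral and using $\Pn(z-\tFockH)^{-1}=\sum_\nu(z-\Enu)^{-1}\Pn_\nu$, each summand is, up to the order-one factor $\oint_{\gan}(z-\Enu)^{-1}\dz$ and a finite sum over eigenprojections of $\Pn$, of the form $\lr{\FockRnu\Chin,(z-\FockHz)^{-1}\FockHjo(z-\FockHz)^{-1}\cdots\FockH_{j_m}(z-\FockHz)^{-1}(\Np+1)^b\Chin}$ with $\nu+|\bj|=a$. I would bound this in three moves. (i) On the right I trade $(z-\FockHz)^{-1}(\Np+1)^b$ for $(\Np+1)^{b-1}(z-\FockHz)^{-1}$ modulo commutators via $[(z-\FockHz)^{-1},(\Np+1)^b]=(z-\FockHz)^{-1}[\FockHz,(\Np+1)^b](z-\FockHz)^{-1}$ and Lemmas~\ref{lem:aux} and~\ref{lem:prelim:commutators}; iterating through the chain (each resolvent lowers, each $\FockH_j$ raises the $\Np$-count by a bounded amount, cf.\ Lemma~\ref{lem:K:a:norms:estimate:remainders}) reduces the right-hand vector to one whose $\Np$-content is a fixed moment of $\Chin$, controlled by the crude bound. (ii) For the left-hand vector I use the identity $\lN^{(\nu+1)/2}\FockRnu=\FockHN-\sum_{j=0}^\nu\lN^{j/2}\FockHj$ on $\FNp$ (Proposition~\ref{lem:expansion:HN} and $\FockHminus=\FockHN$ on $\FNp$); splitting the prefactor as $\lN^{(a+1)/2}=\lN^{(\nu+1)/2}\lN^{(a-\nu)/2}$ and acting on $\Chin$ turns $\lN^{(\nu+1)/2}\FockRnu\Chin$ into $\Enu\Chin-\sum_{j=0}^\nu\lN^{j/2}\FockHj\Chin$, of uniformly bounded norm because $\norm{\FockHz\Chin}\leq\fC(n)$ is itself estimated this way while $\lN^{j/2}\norm{\FockHj\Chin}$ is small for $j\geq1$ by Lemma~\ref{lem:K:a:norms:estimate:remainders} and the crude bound. (iii) Wherever $\boldKz=\d\Gamma_\perp(\hH)$ acts directly on a power of $\Np$ applied to $\Chin$ (this occurs in the $\boldKf$- and $\boldKz^{1/2}$-contributions hidden in the $\FockHj$ and in the reduction of step~(i)), I use $\boldKz\leq C_2^{-1}(\FockHN+\varepsilon(N))$ from Assumption~\ref{ass:cond}, the eigenvalue relation $\FockHN\Chin=\Enu\Chin$, and Lemma~\ref{lem:prelim:commutators}; this is the point where $\varepsilon(N)\leq CN^{1/3}$ enters, costing at most a fixed power $N^{1/3}$. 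Collecting powers, $\big|\lN^{(a+1)/2}\Tr((\Np+1)^b\FockBPn(a))\big|\leq\fC(n,a,b)N^{-(a+1)/2+\beta(a,b)}$ with $\beta$ affine in $a$ of slope strictly below $1/2$; choosing $a=a(b)$ large enough makes the exponent negative, closing the induction step and proving~(a). (For $b\leq2$ the order $a=0$ already works, which is the illustrative case of the strategy section.)

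Part~(b) follows by an analogous bootstrap over $b$, now with~(a) fully at hand. Using $[\boldKf,\Np]=0$ and the operator inequality $0\leq\boldKf\leq\fC\big((\Np+1)^{3/2}\d\Gamma_\perp(\hH)(\Np+1)^{3/2}+(\Np+1)^4\big)$, the quantity $\norm{(\Np+1)^b\boldKf\Chin}^2=\lr{\Chin,\boldKf(\Np+1)^{2b}\boldKf\Chin}$ reduces to the $(\Np+1)^4$-term, bounded by~(a), and to $\d\Gamma_\perp(\hH)$-expectations in states $(\Np+1)^{b+3/2}\boldKf^{1/2}\Chin$, which I control by $\d\Gamma_\perp(\hH)\leq C_2^{-1}(\FockHN+\varepsilon(N))$, the eigenvalue equation, Lemma~\ref{lem:prelim:commutators} and~(a), at the cost of one power $N^{1/3}$; this residual power is removed exactly as in~(a) by inserting one further step of the $\Pn$-expansion around $\Pzn$ and exploiting the compensating prefactor $\lN^{1/2}$, iterating to higher order for larger $b$. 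The main obstacle throughout is this control of $\FockBPn$ — equivalently, of the moments of $\boldKf\Chin$ — in the absence of any \emph{a priori} $N$-uniform bound on $\boldKf\Chin$; it is overcome by combining the crude moment bound~\eqref{eqn:moments:Chi:a:priori}, the algebraic identity $\lN^{(a+1)/2}\FockR_a=\FockHN-\sum_{j=0}^a\lN^{j/2}\FockHj$ on $\FNp$, the commutator and resolvent estimates, and the precise calibration $\varepsilon(N)=\mathcal{O}(N^{1/3})$ of Assumption~\ref{ass:cond}.
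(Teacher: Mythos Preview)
Your plan to take the expansion order $a=a(b)$ large and let the prefactor $\lN^{(a+1)/2}$ absorb the $N$-powers produced by the crude moment bound does not close. The decisive claim ``$\beta$ is affine in $a$ with slope strictly below $1/2$'' is not justified, and with the very estimates you invoke it is in fact false. Follow the $\FockBPn$-term through the chain using Lemmas~\ref{lem:K:a:norms:estimate:remainders} and~\ref{lem:aux} (this is \eqref{eqn:B:P:v:unbounded}): each block $\FockH_j(z-\FockHz)^{-1}$ raises the $\Np$-weight by $j/2+1$, so after $m$ blocks with $|\bj|=a-\nu$ and one extra resolvent you arrive at $\norm{(\Np+1)^{\,b-1+|\bj|/2+m}\FockRnu\Chin}$, worst case $b-1+\tfrac32(a-\nu)$; adding $\FockRnu$ gives a power $p\sim b+\tfrac32 a$ on $\Chin$. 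The crude bound \eqref{eqn:moments:Chi:a:priori} converts each unit of $\Np$-power into a factor $N^{1/3}$, so $\norm{(\Np+1)^p\Chin}\lesssim N^{p/3}\sim N^{a/2}$: slope \emph{exactly} $1/2$, cancelled identically by $N^{-(a+1)/2}$. Nothing improves as $a$ grows. (Your step~(ii) does not rescue this: for the $\nu=a$, $m=0$ summand there is no chain at all, and your own identity turns $\lN^{(a+1)/2}\FockR_a\Chin$ into $\Enu\Chin-\sum_j\lN^{j/2}\FockHj\Chin$, whose leading piece $\Enu\lr{\Chin,(z-\FockHz)^{-1}(\Np+1)^b\Chin}$ carries \emph{no} small prefactor and is of the order of the moment you are trying to bound.)

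The paper does not vary $a$; it takes $a=0$ throughout and instead performs a Cauchy--Schwarz split $(\Np+1)^{b+1}=(\Np+1)^{b/2}\cdot(\Np+1)^{b/2+1}$, so that only \emph{half} of the unknown moment sits on the $\Chin$-side of the inner product. The $\boldKf$-part of $\FockRz$ --- precisely the obstruction you attempt to route around in steps~(ii)--(iii) --- is handled by the iterative inequality \eqref{eqn:lem:boot:8}, which repeatedly halves the excess $\Np$-power until Lemma~\ref{lem:aux} and the crude bound suffice, yielding the clean recursion $\Tr\,\Pn(\Np+1)^{b+1}\leq\fC(n,b)\,\Tr\,\Pn(\Np+1)^{b}$. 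This split-and-iterate mechanism is the idea your proposal is missing; without it the bootstrap does not close for any choice of $a$. The same remark applies to part~(b): there the paper again uses $a=0$, now feeding in the established part~(a) together with one more instance of the same device to remove the residual $N^{1/6}$, rather than increasing the expansion order.
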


\begin{proof}
\noindent\textbf{Part (a).}
Proposition \ref{prop:exp:P} with $a=0$ implies that
\begin{subequations}\label{eqn:lem:boot}
\begin{eqnarray}
&&\hspace{-1.5cm}\Tr\left(\Pn(\Np+1)^{b+1}\right)\nonumber\\
&=&\Tr\left(\Pzn(\Np+1)^{b+1}\right)\label{eqn:lem:boot:1}\\
&&+\lN^\frac12\Tr\left(\frac{1}{2\pi\i}\goint\frac{1}{z-\Ezn}\frac{\Qn}{z-\tFockH}\FockRz\Pzn(\Np+1)^{b+1}\dz\right)\qquad\quad\label{eqn:lem:boot:2}\\
&&+\lN^\frac12 \Tr\left(\frac{1}{2\pi\i}\goint\frac{\Pn}{z-\FockHminus}\FockRz\ResHzz(\Np+1)^{b+1}\dz\right).\;\;\;\qquad\label{eqn:lem:boot:3}
\end{eqnarray}
\end{subequations}
For the first term, note that $\eqref{eqn:lem:boot:1}\leq \fC(n,b)$ by Lemma \ref{lem:moments:Chiz}. 
Denoting by $\{\Chiznm\}_{m=1}^{\dzn}$ some orthonormal basis of $\fEzn$ and interchanging trace and integral by Fubini's theorem, we estimate the second term as
\begin{eqnarray}
&&\hspace{-1cm}\left|\eqref{eqn:lem:boot:2}\right|\nonumber\\
&\leq&\lN^\frac12\fgn \sup\limits_{z\in\gan}\left(\left|\frac{1}{z-\Ezn}\right|  \sum\limits_{m=1}^{\dzn}\big\|(\Np+1)^{b+1}\Chiznm\big\| \Big\|\frac{\Qn}{z-\tFockH}\Big\|_{\op}\norm{\FockRz\Chiznm}\right) \nonumber\\
&\leq& N^{-\frac12} \fC(n,b) 
\end{eqnarray}
by Lemmas \ref{lem:moments:Chiz}, \ref{lem:aux:ResHz:op}, \ref{lem:estimate:remainders} and \ref{lem:prelim:commutators} and since $\FockHz\Chiznm=\Ezn\Chiznm$.
Similarly, we find for the last term
\begin{subequations}
\begin{eqnarray}
&&\hspace{-1cm}\left|\eqref{eqn:lem:boot:3}\right|\nonumber\\
&\leq& \lN^\frac12 \fgn \sup\limits_{\substack{z\in\gan\\\nu\in\In}}\left(\left|\frac{1}{z-\Enu}\right| \sum\limits_{m=1}^{\dzn}\left| \lr{\Chinm,\FockRz\ResHzz(\Np+1)^{b+1}\Chinm}\right|\right)\nonumber\\
&\leq&\fC(n) N^{-\frac12}\sum\limits_{m=1}^{\dzn}\left|\lr{\Chinm,\FockRza\ResHzz(\Np+1)^{b+1}\Chinm}\right|\qquad\label{eqn:lem:boot:4}\\
&&+\fC(n)N^{-1}\sum\limits_{m=1}^{\dzn}\left|\lr{\Chinm,\boldKf\ResHzz(\Np+1)^{b+1}\Chinm}\right|\qquad\label{eqn:lem:boot:5}
\end{eqnarray}
\end{subequations}
for $\left\{\Chinm\right\}_{m=1}^{\dzn}$ some orthonormal basis of $\fEn$ and for $\FockRz=\FockRza+\lN^\frac12\boldKf$ as defined in Proposition~\ref{lem:expansion:HN}.
In \eqref{eqn:lem:boot:4}, we obtain the bound
\begin{eqnarray}
&&\hspace{-1.5cm}\left|\lr{\Chinm,(\Np+1)^{b+1}\ResHzz\FockRza\Chinm}\right|\nonumber\\
&\leq& \norm{(\Np+1)^\frac{b}{2}\Chinm}\norm{(\Np+1)^{\frac{b}{2}+1}\ResHzz\FockRza\Chinm}\nonumber\\
&\leq&\fC(n,b)\norm{(\Np+1)^\frac{b}{2}\Chinm}\Big(\norm{(\Np+1)^{\frac{b+3}{2}}\Chinm}\nonumber\\
&&\hspace{4.5cm}+N^{-\frac12}\norm{(\Np+1)^\frac{b+4}{2}\Chinm}\Big)
\end{eqnarray}
by Lemmas \ref{lem:aux:moments:O} and \ref{lem:estimate:remainders}.
Since
\begin{equation}\label{eqn:lem:boot:9}
\norm{(\Np+1)^\frac{b+\l}{2}\Chinm}\leq \fC(n,b+\l) N^\frac{\l}{6}\norm{(\Np+1)^\frac{b}{2}\Chinm}
\end{equation}
for all $\l\in\N_0$ by Lemma \ref{lem:moments:Chi}, it follows that
\begin{equation}
\eqref{eqn:lem:boot:4}\leq \fC(n,b)\norm{(\Np+1)^\frac{b}{2}\Chinm}^2\,.
\end{equation}
Since $[\boldKf,\Np]=0$, the sum in \eqref{eqn:lem:boot:5} can be estimated as
\begin{subequations}
\begin{eqnarray}
&&\hspace{-0.8cm}\left|\lr{\Chinm,\boldKf\ResHzz(\Np+1)^{b+1}\Chinm}\right|\nonumber\\
&\leq& \norm{(\Np+1)^\frac{b}{2}\Chinm}\norm{\boldKf(\Np+1)^{-\frac{b}{2}}\ResHzz(\Np+1)^{b+1}\Chinm}\nonumber\\
&\leq&\fC \norm{(\Np+1)^\frac{b}{2}\Chinm}\Big(
\norm{\FockHz(\Np+1)^{\frac{3-b}{2}}\ResHzz(\Np+1)^{b+1}\Chinm}\nonumber\\
&&\qquad+\norm{(\Np+1)^{2-\frac{b}{2}}\ResHzz(\Np+1)^{b+1}\Chinm}\Big)\nonumber\\
&\leq&N^\frac56 \fC(n,b)\norm{(\Np+1)^\frac{b}{2}\Chinm}^2\label{eqn:lem:boot:6}\\
&&+\fC(n,b) \norm{(\Np+1)^\frac{b}{2}\Chinm}\norm{(\Np+1)^{2-\frac{b}{2}}\ResHzz(\Np+1)^{b+1}\Chinm}\,,\qquad\quad\label{eqn:lem:boot:7}
\end{eqnarray}
\end{subequations}
where we used Lemmas \ref{lem:K} and \ref{lem:aux:moments:O}, \eqref{eqn:lem:boot:9} and that
\begin{eqnarray}
&&\hspace{-0.8cm}\norm{\FockHz(\Np+1)^\frac{3-b}{2}\ResHzz\bPhi}\nonumber\\
&\leq&\norm{(\Np+1)^\frac{3-b}{2}\bPhi}+\Big\|\left[\FockHz,(\Np+1)^\frac{3-b}{2}\right]\ResHzz\bPhi\Big\| \nonumber\\
&&+ |z| \,\norm{(\Np+1)^\frac{3-b}{2}\ResHzz\bPhi}\nonumber\\
&\leq&\fC(n,b)\left(\norm{(\Np+1)^\frac{3-b}{2}\bPhi}+\Big\|(\Np+1)^\frac{3-b}{2}\ResHzz\bPhi\Big\|\right)
\end{eqnarray}
by Lemma \ref{lem:prelim:commutators}.
To control \eqref{eqn:lem:boot:7}, we prove by induction that
\begin{equation}\begin{split}
&\norm{(\Np+1)^{2-\frac{b}{2}}\ResHzz(\Np+1)^{b+1}\bPhi} \\
&\;\leq \fC(n,b)\norm{(\Np+1)^\frac{b}{2}\bPhi}^{1-(\frac{1}{2})^k}\norm{(\Np+1)^{-\frac{b}{2}+ 2^{k+1}}\ResHzz(\Np+1)^{b+1}\bPhi}^{(\frac{1}{2})^k}\label{eqn:lem:boot:8}
\end{split}\end{equation}
for all $k\in\N_0$. The base case $k=0$ is obvious. Now assume that \eqref{eqn:lem:boot:8} holds for some $k\in\N_0$. Then
\begin{eqnarray}
&&\hspace{-1cm}\norm{(\Np+1)^{2-\frac{b}{2}}\ResHzz(\Np+1)^{b+1}\bPhi} \nonumber\\
&\leq&\fC(n,b)\norm{(\Np+1)^\frac{b}{2}\bPhi}^{1-(\frac{1}{2})^k}\bigg\langle(\Np+1)^\frac{b}{2}\bPhi,(\Np+1)^{\frac{b}{2}+1}\ResHzz\nonumber\\
&&\hspace{3.5cm}\times\,(\Np+1)^{-b+2^{k+2}}\ResHzz(\Np+1)^{b+1}\bPhi\bigg\rangle^{(\frac12)^{k+1}}\nonumber\\
&\leq&\fC(n,b)\norm{(\Np+1)^\frac{b}{2}\bPhi}^{1-(\frac{1}{2})^{k+1}}\nonumber\\
&&\qquad\times\,\norm{(\Np+1)^{-\frac{b}{2}+ 2^{k+2}}\ResHzz(\Np+1)^{b+1}\bPhi}^{(\frac12)^{k+1}}
\end{eqnarray}
by Lemma \ref{lem:aux:moments:O}.
Now choose $k$ in \eqref{eqn:lem:boot:8} such that $2^{k+2}\geq b+2$, hence $-\frac{b}{2}+2^{k+1}\geq 1$ and consequently
\begin{eqnarray}
&&\hspace{-1.5cm}\norm{(\Np+1)^{-\frac{b}{2}+2^{k+1}}\ResHzz(\Np+1)^{b+1}\bPhi}\nonumber\\
&\leq& \fC(n,b)\norm{(\Np+1)^{\frac{b+2^{k+2}}{2}}\bPhi}^{(\frac12)^k}\nonumber\\
&\leq&\fC(n,b) N^\frac23\norm{(\Np+1)^\frac{b}{2}\Chinm}^\frac{1}{2^k}
\end{eqnarray}
by Lemma \ref{lem:aux:moments:O} and \eqref{eqn:lem:boot:9}.
In summary, 
\begin{eqnarray}
\Tr\left(\Pn(\Np+1)^{b+1}\right)\leq \fC(n,b) \norm{(\Np+1)^\frac{b}{2}\Chinm}^2 \,.
\label{eqn:lem:boot:10}
\end{eqnarray}
Finally, we prove the lemma via the following bootstrap argument:
\begin{itemize}
\item[(1)] 
Lemma \ref{lem:moments:Chi} implies that 
\begin{equation}
\norm{(\Np+1)^\frac12\Chinm}\leq\fC(n)\xRightarrow{\eqref{eqn:lem:boot:10}}
\Tr\left(\Pn(\Np+1)^2\right)\leq\fC(n)\,.
\end{equation}
\item[(2)] By  step (1), 
\begin{equation}
\norm{(\Np+1)\Chinm}\leq \fC(n)\;\xRightarrow{\eqref{eqn:lem:boot:10}}\;
\Tr\left(\Pn(\Np+1)^3\right)\leq \fC(n)
\end{equation}
\item[(b)] By step $(b-1)$, 
\begin{equation}
\norm{(\Np+1)^\frac{b}{2}\Chinm}\leq \fC(n,b)\;\xRightarrow{\eqref{eqn:lem:boot:10}}\;
\Tr\left(\Pn(\Np+1)^{b+1}\right)\leq \fC(n,b)\,.
\end{equation}
\end{itemize}

\noindent\textbf{Part (b).}
Define
$$\boldKf^-:=\boldKf\big|_{\FNp}\oplus 0\,.$$
By Lemma \ref{lem:K} and Assumption \ref{ass:cond}, there exists a constant $c$ such that
\begin{eqnarray}
&&\hspace{-0.8cm}\norm{(\Np+1)^b\boldKf^-\bPhi} ^2\nonumber\\
&=&\norm{(\Np+1)^b\boldKf\bPhi}_\FNp^2
\nonumber\\
&\leq& \fC\left(\norm{(\Np+1)^{b+2}\bPhi}^2_\FNp + \lr{(\Np+1)^{b+2}\bPhi,\d\Gamma_\perp(\hH)(\Np+1)^{b+2}\bPhi}_\FNp\right)\nonumber\\
&\leq&\fC\bigg(\norm{(\Np+1)^{b+2}\bPhi}^2 \nonumber\\
&&\quad+ \lr{(\Np+1)^{b+2}\bPhi,(\FockHN+cN^\frac13)(\Np+1)^{b+2}\bPhi}_\FNp\bigg)\nonumber\\
&\leq&\fC\bigg(N^\frac16\norm{(\Np+1)^{b+2}\bPhi}\nonumber\\
&&\quad+\left|\lr{\bPhi,(\Np+1)^{b+2}\FockHN(\Np+1)^{b+2}\bPhi}_{\FNp}\right|^\frac12\bigg)^2\,.
\label{eqn:K4:a:priori:1}
\end{eqnarray}
In particular, this implies  that
\begin{eqnarray}
\norm{(\Np+1)^b\boldKf\Chin} = \norm{(\Np+1)^b\boldKf^-\Chin}\leq \fC(n,b)N^\frac16\label{eqn:K4:a:priori:2}
\end{eqnarray}
by part (a) and Lemma \ref{lem:prelim:commutators}. To improve this \textit{a priori} bound, we apply a similar argument to the bootstrapping in part (a). 
As in \eqref{eqn:lem:boot}, 
\begin{subequations}
\begin{eqnarray}
&&\hspace{-0.8cm}\Tr\left(\Pn(\Np+1)^{2b}\boldKf^2\right)
\;=\; \Tr\left(\Pn(\Np+1)^{2b}(\boldKf^-)^2\right)\nonumber\\
&=&\Tr\left(\Pzn(\Np+1)^{2b}(\boldKf^-)^2\right)\label{eqn:lem:boot:K4:1}\\
&&+\lN^\frac12\Tr\left(\frac{1}{2\pi\i}\goint\frac{1}{z-\Ezn}\frac{\Qn}{z-\tFockH}\FockRz\Pzn(\Np+1)^{2b}(\boldKf^-)^2\dz\right)\quad\label{eqn:lem:boot:K4:2}\\
&&+\lN^\frac12 \Tr\left(\frac{1}{2\pi\i}\goint\frac{\Pn}{z-\FockHminus}\FockRz\ResHzz(\Np+1)^{2b}(\boldKf^-)^2\dz\right)\label{eqn:lem:boot:K4:3}\,.
\end{eqnarray}
\end{subequations}
Since $[\boldKf^-,\Np]=0$, Lemma \ref{lem:K} implies for the first term that
\begin{equation}
\eqref{eqn:lem:boot:K4:1}=\sum\limits_{m=1}^{\dzn} \norm{\boldKf(\Np+1)^b\Chiznm}^2_\FNp \leq \fC(n,b)\,.
\end{equation}
In \eqref{eqn:lem:boot:K4:2}, this leads for $z\in\gan$ and $\Chiznm\in\fEzn$ to
\begin{eqnarray}
&&\hspace{-1cm}N^{-\frac12}\left|\lr{\boldKf^-(\Np+1)^{2b+2}\Chiznm,\boldKf^-(\Np+1)^{-2}\frac{\Qn}{z-\tFockH}\FockRz\Chiznm}\right|\nonumber\\
&\leq&\fC\, N^{-\frac12}\norm{(\Np+1)^{2b+4}\Chiznm} \nonumber\\
&&\times \left(N^\frac16\Big\|\frac{\Qn}{z-\tFockH}\FockRz\Chiznm\Big\| +\left|\lr{\frac{\Qn}{z-\tFockH}\FockRz\Chiznm,\tFockH\frac{\Qn}{z-\tFockH}\FockRz\Chiznm}\right|^\frac12\right)\nonumber\\
&\leq& \fC(n,b) N^{-\frac13}\,,
\end{eqnarray}
where we used Lemmas \ref{lem:K} and \ref{lem:prelim:commutators} for the left-hand side and \eqref{eqn:K4:a:priori:1} for the right-hand side of the inner product in the first line, as well as Lemmas \ref{lem:aux:ResHz:op} and \ref{lem:estimate:remainders}.
Finally, for \eqref{eqn:lem:boot:K4:3}, \eqref{eqn:K4:a:priori:2} and Lemma \ref{lem:K} imply that
\begin{eqnarray}
&&\hspace{-1cm}N^{-\frac12}\left|\lr{\Chinm,\FockRz\ResHzz(\Np+1)^{2b}(\boldKf^-)^2\Chinm}\right|\nonumber\\
&\leq&N^{-\frac12}\big\|(\Np+1)^{2b}\boldKf\Chinm\big\| \Big\|\boldKf\ResHzz\FockRz\Chinm\Big\|\nonumber\\
&\leq&\fC(n,b) N^{-\frac13} \norm{(\Np+1)^\frac32(\FockRza+\lN^\frac12\boldKf^-)\Chinm}
\;\leq\;\fC(n,b) N^{-\frac13}
\end{eqnarray}
by definition \eqref{FockRz} of $\FockRz$ and by part (a).
In summary, we find
\begin{equation*}
\Tr\,\Pn(\Np+1)^{2b}\boldKf^2 = \sum\limits_{m=1}^{\dzn}\norm{(\Np+1)^b\boldKf\Chinm}^2\leq\fC(n,b)\,.\qedhere
\end{equation*}
\end{proof}

\subsection{Proof of the main results}\label{subsec:proofs:main:results}

In the following, we consider 
$$\FockA\in\big\{\FockAmred\,,\,\id\big\}$$
for $j\in\N_0$. By Lemma  \ref{lem:A}, $\FockA$ satisfies 
\begin{equation}
\norm{\FockA\bPhi}\leq\fC N^\alpha\left(\norm{(\Np+1)\bPhi}+\norm{\FockHz\bPhi}\right)
\end{equation}
for 
\begin{equation}
\alpha=
\begin{cases}
-\frac12&\text{ if } \FockA=\FockAmred\,,\\
\;\,0 &\text{ if } \FockA=\id\,.
\end{cases}
\end{equation}

\subsubsection{Proof of Theorem \ref{thm:exp:P}}\label{subsubsec:proofs:thm:exp:P}
Recall that by Proposition \ref{prop:exp:P}, 
\begin{equation*}
\Tr\,\FockA\Pn=\sum\limits_{\l=0}^a\lN^\frac{\l}{2}\Tr\,\FockA\Pnl +\lN^\frac{a+1}{2}\left(\Tr\,\FockA\FockBPn(a)+\Tr\,\FockA\FockBQn(a)\right)\,,
\end{equation*}
where
\begin{eqnarray*}
&&\hspace{-0.8cm}\FockBPn(a)\nonumber\\
&=&\hspace{-0.1cm}\sum\limits_{\nu=0}^{a}\sum\limits_{m=1}^{a-\nu}\sum\limits_{\substack{\bj\in\N^m\\[2pt]|\bj|=a-\nu}}\frac{1}{2\pi\i}\goint\frac{\Pn}{z-\FockHminus}\,\FockRnu\ResHzz\FockHjo\ResHzz\mycdots\FockH_{j_m} \ResHzz\dz
\end{eqnarray*}
and
\begin{eqnarray*}
&&\hspace{-0.7cm}\FockBQn(a)\nonumber\\
&=&\hspace{-0.1cm}\sum\limits_{\nu=0}^{a} 
\sum\limits_{m=1}^{a-\nu}
\sum\limits_{\substack{\bj\in\N^m\\[2pt]|\bj|=a-\nu}}
\sum\limits_{\l=0}^{m} 
\sum\limits_{\substack{\bk\in\{0,1\}^{m+1}\\|\bk|=\l}}
\frac{1}{2\pi\i}\goint
\frac{\Qn}{z-\tFockH}\FockRnu 
\frac{\FockIn_{k_1}}{z-\FockHz}\FockHjo \mycdots \FockH_{j_m}\frac{\FockIn_{k_{m+1}}}{z-\FockHz}\dz \qquad\quad
\end{eqnarray*}
with $\FockIn_0=\Pzn$ and $\FockIn_1=\Qzn$.

\subsubsection*{Estimates for $\FockBPn(a)$}

Let $\big\{\Chinl\big\}_{\l=1}^{\dzn}$ denote an orthonormal basis of $\fEn$ such that $\FockH\Chinl=E^{(n,\l)}\Chinl$.
Consequently, $\Pn=\sum_{\l=1}^{\dzn}|\Chinl\rangle\langle\Chinl|$, and interchanging trace and contour integral by Fubini's theorem yields
\begin{eqnarray}
\left|\Tr\,\FockA\FockBPn(a)\right| 
&\leq&\fC\sum\limits_{\nu=0}^{a}\sum\limits_{m=1}^{a-\nu}\sum\limits_{\substack{\bj\in\N^m\\[2pt]|\bj|=a-\nu}}
\sum\limits_{\l=1}^{\dzn}
\goint\left|\frac{1}{z-E^{(n,\l)}}\right|\nonumber\\
&&\times\left|\lr{\Chinl,\FockRnu\ResHzz\FockHjo\mycdots\FockH_{j_m} \ResHzz\FockA\Chinl}\right|\dz\,.\quad\label{eqn:B:P:1}
\end{eqnarray}
Lemmas \ref{lem:H:norms} and \ref{lem:aux:moments:O} lead to the estimate
\begin{eqnarray}\label{eqn:B:P:v:unbounded}
\Big\|(\Np+1)^b\FockH_j\frac{\FockIn}{z-\FockHz}\bPhi\Big\|\leq \fC(b,j) \norm{(\Np+1)^{b+\frac{j}{2}+1}\bPhi}
\end{eqnarray}
for $\FockIn\in\{\id,\Pzn,\Qzn\}$, hence
\begin{eqnarray}
&&\hspace{-1.3cm}\left|\lr{\Chinl,\FockRnu\ResHzz\FockHjo\ResHzz\mycdots\FockH_{j_m} \ResHzz\FockA\Chinl}\right|\nonumber\\
&\leq&\norm{\Chinl}\Big\|\FockA\ResHzz\FockH_{j_m}\ResHzz\mycdots\FockHjo\ResHzz\FockRnu
 \Chinl\Big\|\nonumber\\
&\leq&\fC N^\alpha\Big\|\FockH_{j_m}\ResHzz\mycdots\FockHjo\ResHzz\FockRnu
 \Chinl\Big\|\nonumber\\
&\leq&\fC(n,a) N^\alpha \norm{(\Np+1)^{\frac{3}{2}(a-\nu)}\FockRnu\Chinm}
\;\leq\;\fC(n,a) N^\alpha
\end{eqnarray}
by Lemmas \ref{lem:A}, \ref{lem:estimate:remainders} and \ref{lem:bootstrap}. Here, we used that $\FockRz=\FockRza+\lN^\frac12\boldKf$ and that $\FockRo=\FockRoa+\boldKf$, and applied Lemmas \ref{lem:K} and \ref{lem:bootstrap}.
In summary, this yields 
\begin{eqnarray}
\left|\Tr\, \FockA\FockBPn(a)\right|
&\leq& N^\alpha \fC(n,a) \,.
\end{eqnarray}

\subsubsection*{Estimates for $\FockBQn(a)$}
By definition of $\FockBQn$, it follows that
\begin{eqnarray}
\left|\Tr\,\FockA\FockBQn(a)\right|\hspace{-2.4pt}
&\leq& \hspace{-2.4pt}\fC(n)\sum\limits_{\nu=0}^{a} 
\sum\limits_{m=1}^{a-\nu}
\sum\limits_{\substack{\bj\in\N^m\\[2pt]|\bj|=a-\nu}}
\sum\limits_{\l=0}^{m} 
\sum\limits_{\substack{\bk\in\{0,1\}^{m+1}\\|\bk|=\l}}\nonumber\\
&&\hspace{-2.4pt}\times\sup\limits_{z\in\gan}
\left|\Tr\, \FockA\frac{\Qn}{z-\tFockH}\FockRnu 
\frac{\FockIn_{k_1}}{z-\FockHz}\FockHjo\frac{\FockIn_{k_2}}{z-\FockHz}\mycdots \FockH_{j_m}\frac{\FockIn_{k_{m+1}}}{z-\FockHz}
\right|.\qquad\quad\label{eqn:B:Q:4}
\end{eqnarray}
Each term contains at least one projector $\Pzn$, i.e., there exists some $\sigma\in\{1\mydots m+1\}$ such that $k_\sigma=0$. 
Decomposing $\Pzn=\sum_{\mu=1}^{\dzn}|\Chiznmu\rangle\langle\Chiznmu|$
for a basis $\{\Chiznmu\}_{\mu=1}^{\dzn}$ of $\fEzn$ as in Lemma \ref{lem:known:results:FockHz:2}, we obtain
\begin{subequations}
\begin{eqnarray}
&&\hspace{-1.5cm} \left|\Tr\, \FockA\frac{\Qn}{z-\tFockH}\FockRnu 
\frac{\FockIn_{k_1}}{z-\FockHz}\FockHjo\frac{\FockIn_{k_2}}{z-\FockHz}\mycdots \FockH_{j_m}\frac{\FockIn_{k_{m+1}}}{z-\FockHz}
\right|\nonumber\\
&\leq&\fC(n)\sum\limits_{\mu=1}^{\dzn}\left\|\frac{\Qn}{z-\tFockH}\FockRnu 
\frac{\FockIn_{k_1}}{z-\FockHz}\FockHjo\mycdots\frac{\FockIn_{k_{\sigma-1}}}{z-\FockHz}\FockH_{j_{\sigma-1}}\Chiznmu\right\|\label{eqn:B:Q:1}\\
&&\times\left\|\FockA\frac{\FockIn_{k_{m+1}}}{z-\FockHz} \FockH_{j_m}\frac{\FockIn_{k_m}}{z-\FockHz}\mycdots\frac{\FockIn_{k_{\sigma+1}}}{z-\FockHz}\FockH_{j_{\sigma}}\label{eqn:B:Q:2}
\Chiznmu\right\|\,.
\end{eqnarray}
\end{subequations}

Using the estimate \eqref{eqn:B:P:v:unbounded} in combination with Lemmas \ref{lem:K:a:norms:estimate:remainders}, \ref{lem:prelim:commutators}, we find for \eqref{eqn:B:Q:1}
\begin{eqnarray}
&&\hspace{-1.5cm}\bigg\|\frac{\Qn}{z-\tFockH}\FockRnu 
\frac{\FockIn_{k_1}}{z-\FockHz}\FockHjo\mycdots\frac{\FockIn_{k_{\sigma-1}}}{z-\FockHz}\FockH_{j_{\sigma-1}}\Chiznmu\bigg\|\nonumber\\
&\leq&\fC(n,a)\norm{(\Np+1)^\frac{\nu+2\sigma+1+j_1+\dots+j_{\sigma-1}}{2}\Chiznmu}\,
\end{eqnarray}
analogously to above, and for \eqref{eqn:B:Q:2} 
\begin{eqnarray}
&&\hspace{-1cm}\bigg\|\FockA\frac{\FockIn_{k_{m+1}}}{z-\FockHz} \FockH_{j_m}\frac{\FockIn_{k_m}}{z-\FockHz}\mycdots\frac{\FockIn_{k_{\sigma+1}}}{z-\FockHz}\FockH_{j_{\sigma}}\Chiznmu\bigg\|\nonumber\\
&\leq&N^\alpha \fC(n,a)\bigg\|(\Np+1)^\frac{j_{\sigma+1}+\dots+j_{m}+2(m-\sigma)}{2}\FockH_{j_{\sigma}}\Chiznmu\bigg\|\nonumber\\
&\leq&N^\alpha\fC(n,a)\norm{(\Np+1)^\frac{j_\sigma+\dots+ j_m+2(m-\sigma+1)}{2}\Chiznmu}
\end{eqnarray}
since $\FockHz\Chiznmu=\Ezn\Chiznmu$.
Combining both estimates yields with Lemma \ref{lem:moments:Chiz}
\begin{equation}
\left|\Tr\,\FockA\FockBQn(a)\right| \leq\fC(n,a) N^\alpha\,.
\end{equation}

\subsubsection{Proof of Corollary \ref{cor:trace:norm}}
\label{subsec:proofs:cor}
For any bounded operator $\FockA\in\cL(\Fp)$, Proposition \ref{prop:exp:P} implies that
\begin{equation}
\Big|\Tr\,\FockA\Pn -\sum\limits_{\l=0}^a\lN^\frac{\l}{2}\Tr\, \FockA\Pnl\Big| \leq \lN^\frac{a+1}{2}\left(\big|\Tr\,\FockA\FockB^{(n)}_P(a)\big| + \big|\Tr\,\FockA\FockB^{(n)}_Q(a)\big|\right)\,,
\end{equation}
and one infers from the previous section that
\begin{equation}
\big|\Tr\,\FockA\FockB^{(n)}_P(a)\big|+ \big|\Tr\,\FockA\FockB^{(n)}_Q(a)\big| \leq \onorm{\FockA} \fC(n,a)\,.
\end{equation}
Consequently,
\begin{equation*}
\Tr\,\Big|\Pn -\sum\limits_{\l=0}^a\lN^\frac{\l}{2} \Pnl\Big| = 
\sup\limits_{\substack{ \FockA \text{ compact}\\\onorm{\FockA}=1}} \Big|\Tr\,\FockA\Pn -\sum\limits_{\l=0}^a\lN^\frac{\l}{2}\Tr\, \FockA\Pnl\Big|
\leq \lN^\frac{a+1}{2} \fC(n,a)\,. \qed
\end{equation*}

\subsubsection{Proof of Theorem \ref{thm:energy}}
Let us abbreviate
$\goint':=\frac{1}{2\pi\i}\goint$. 
Note first that 
\begin{eqnarray}
\sum\limits_{\nu\in\In}\dnu\Enu 
&=&\Tr\, \FockH\Pn 
\;=\; \Tr\goint'\frac{\FockH}{z-\FockH}\dz
\;=\;\Tr\goint'\frac{z}{z-\FockH}\dz\nonumber\\
&=&\Ezn\Tr\, \Pn+ \Tr\goint'\frac{z-\Ezn}{z-\FockH}\dz\,.
\end{eqnarray}
Since $\Tr\,\Pn=\dzn$ and 
\begin{equation}
\goint'\frac{z-\Ezn}{z-\FockHz}\dz=\Pzn\goint'1\dz+\goint'\frac{\Qzn}{z-\FockHz}(z-\Ezn)\dz= 0\,,
\end{equation}
this implies by Lemma \ref{lem:expansion:resolvent}  that
\begin{subequations}
\begin{eqnarray}
\Tr\,\FockH\Pn 
&=&\dzn\Ezn+\sum\limits_{\l=1}^a\lN^\frac{\l}{2}\sum\limits_{\nu=1}^\l\sum\limits_{\substack{\bj\in\N^\nu\\|\bj|=\l}}\Tr\goint'\ResHzz\FockHjo\ResHzz \mycdots\nonumber\\
&&\hspace{3.5cm}\times\,\FockHjnu\frac{z-\Ezn}{z-\FockHz}\dz\label{eqn:cor:1}\\ 
&&+ \lN^\frac{a+1}{2}\sum\limits_{\nu=0}^a\sum\limits_{m=1}^{a-\nu}\sum\limits_{\substack{\bj\in\N^m\\|\bj|=a-\nu}}\Tr\goint'\RestHz\FockRnu\ResHzz\FockHjo\nonumber\\
&&\hspace{3.5cm}\times\,\ResHzz\mycdots\FockH_{j_m}\frac{z-\Ezn}{z-\FockHz}\dz\,.\qquad\qquad\label{eqn:cor:2}
\end{eqnarray}
\end{subequations}
For $z\in\gan$, it holds that $|z-\Ezn|\leq\fC$, hence  the proof of Theorem \ref{thm:exp:P} for $\FockA=\id$ yields   
\begin{equation}
|\eqref{eqn:cor:2}|\leq\lN^\frac{a+1}{2}\fC(n,a)\,.
\end{equation}
Moreover,  all half-integer powers of $\lN$ in \eqref{eqn:cor:1} vanish by parity: define the unitary map 
\begin{equation}
\UP:\Fock\to\Fock\,,\qquad \UP\ad(f)\UP =\ad(-f)=-\ad(f)
\end{equation}
for any $ f\in\fH$. Clearly, $\UP$ preserves $\Fp$ and acts on the operator-valued distributions $\ad_x$ and $a_x$ as $\UP\ad_x\,\UP=-\ad_x$ and $\UP a_x\,\UP=-a_x$. By definition \eqref{FockHj}, $\FockHj$ contains an even number of creation/annihilation operators for $j$ even and an odd number for $j$ odd, hence
\begin{eqnarray}
\UP\FockHj\UP=(-1)^j\FockHj\,,\qquad
\UP\ResHzz\UP=\ResHzz
\end{eqnarray}
because  $\UP\FockHz^\l\UP=\FockHz^\l$ for any $\l\in\R$.
Consequently,
\begin{eqnarray}
\Tr\, \ResHzz\FockHjo\mycdots\FockHjnu\ResHzz 
&=& \Tr\, \UP\ResHzz\FockHjo\mycdots\FockHjnu\ResHzz \UP\nonumber\\
&=&(-1)^\l\Tr\, \ResHzz\FockHjo\mycdots\FockHjnu\ResHzz\qquad
\end{eqnarray}
for any $\bj$ such that $|\bj|=\l$. This yields 
\begin{equation}
\Tr\,\FockH\Pn = \dzn\Ezn + \sum\limits_{\l=1}^a\lN^\l\sum\limits_{\nu=1}^{2\l} \FockE^{(n)}_{\l,\nu} + \mathcal{O}(\lN^{a+1})
\end{equation}
with
\begin{equation}
\FockE^{(n)}_{\l,\nu}:=\sum\limits_{\substack{\bj\in\N^\nu\\|\bj|=2\l}}\goint'\Tr\left(\ResHzz\right)^2\FockHjo\ResHzz\mycdots\FockHjnu(z-\Ezn)\dz\,.\label{eqn:cor:B}
\end{equation}
For $\nu=1$, one computes
\begin{equation}
\FockE^{(n)}_{\l,1}=\goint'\Tr\,\Pzn\FockH_{2\l}\frac{\dz}{z-\Ezn}=\Tr\,\Pzn\FockH_{2\l}\,.
\end{equation}
For $\nu\geq2$, we decompose each identity as $1=\Pzn+\Qzn$ and order the summands according to the number $k$ of projections $\Qzn$, which yields
\begin{equation}
\FockE^{(n)}_{\l,\nu}
=\sum\limits_{k=1}^{\nu-2}\FockE^{(n)}_{\l,\nu,k}+\FockE^{(n)}_{\l,\nu,\nu-1}
\end{equation}
with
\begin{eqnarray}
\FockE^{(n)}_{\l,\nu,\nu-1}
&=&\sum\limits_{\substack{\bj\in\N^\nu\\|\bj|=2\l}}\Tr\,\Pzn\FockHjo\FockOno\mycdots\FockOno\FockHjnu
\end{eqnarray}
for $\FockOn_m$ as in Definition \ref{def:Pna}, and 
\begin{subequations}
\begin{eqnarray}
\FockE^{(n)}_{\l,\nu,k}\hspace{-3pt}&=&\hspace{-7pt}\sum\limits_{\substack{\bj\in\N^\nu\\|\bj|=2\l}}\Bigg(\goint'
\Tr\,\Pzn\left[\FockHjo\tFockOno\mycdots\FockH_{j_k}\tFockOno\FockH_{j_{k+1}}\Pzn\mycdots\Pzn\FockHjnu\right]_\mathrm{p}\nonumber\\
&&\hspace{5cm}\times\,\frac{\dz}{(z-\Ezn)^{\nu-k}}\label{eqn:cor:A:1}\\
&&\hspace{-7pt}+\goint'\Tr\,\tFockOnt\left[\FockHjo\tFockOno\mycdots\FockH_{j_{k-1}}\tFockOno\FockH_{j_k}\Pzn\mycdots\Pzn\FockHjnu\right]_\mathrm{p}\nonumber\\
&&\hspace{5cm}\times\frac{\dz}{(z-\Ezn)^{\nu-k-1}}
\Bigg)\qquad\quad\label{eqn:cor:A:2}
\end{eqnarray}
\end{subequations}
for $k\leq\nu-2$.
Here, we abbreviated
$$\tFockOn_m=\frac{\Qzn}{(z-\FockHz)^m}\,,$$ 
and the notation $[\cdot]_\mathrm{p}$ indicates the sum of all possibilities to distribute the operators $\Pzn$ over the slots between the operators $\FockHj$. By cyclicity of the trace, 
\begin{eqnarray}
&&\hspace{-1cm}\sum\limits_{\substack{\bj\in\N^\nu\\|\bj|=2\l}}\Tr\,\Pzn\left[\FockHjo\tFockOno\mycdots\FockH_{j_{k}}\tFockOno\FockH_{j_{k+1}}\Pzn\mycdots\Pzn\FockHjnu\right]_\mathrm{p}\nonumber\\
&=& \sum\limits_{\substack{\bj\in\N^\nu\\|\bj|=2\l}}\frac{\nu-k}{\nu}
\Tr\left[\tFockOno\FockHjo\mycdots\tFockOno\FockH_{j_{k}}\Pzn\FockH_{j_{k+1}}\mycdots\Pzn\FockHjnu\right]_\mathrm{p}\,,\label{eqn:cor:cyclicity}
\end{eqnarray}
which can be seen by observing that the first line is a sum of $\binom{\nu-1}{k}$ terms while the sum in the second line has $\binom{\nu}{k}=\frac{\nu}{\nu-k}\binom{\nu-1}{k}$ addends. Next, we note that for any $f$ which is holomorphic in the interior of $\gan$,  the residue theorem implies that
\begin{equation}\label{eqn:cor:residue}
\goint' f(z)\frac{\dz}{(z-\Ezn)^{\nu-k}} = \frac{1}{\nu-k-1}\goint' f'(z)\frac{\dz}{(z-\Ezn)^{\nu-k-1}}\,.
\end{equation}
Since 
\begin{equation}\label{eqn:cor:3}
\frac{\d^m}{\dz^m}\tFockOno=(-1)^m m!\,\tFockOn_{m+1}\,,
\end{equation}
it follows that
\begin{eqnarray}
&&\hspace{-1cm}\sum\limits_{\substack{\bj\in\N^\nu\\|\bj|=2\l}}\frac{\d}{\dz}\Tr\left[\tFockOno\FockHjo\mycdots\tFockOno\FockH_{j_{k}}\Pzn\FockH_{j_{k+1}}\mycdots\Pzn\FockHjnu\right]_\mathrm{p}\nonumber\\
&=& -\nu\sum\limits_{\substack{\bj\in\N^\nu\\|\bj|=2\l}}\Tr\,\tFockOnt\left[\FockHjo\tFockOno\mycdots\tFockOno\FockH_{j_{k}}\Pzn\FockH_{j_{k+1}}\mycdots\Pzn\FockHjnu\right]_\mathrm{p}\label{eqn:cor:derivative}
\end{eqnarray}
because, by the product rule, the first line is a sum of $k\binom{\nu}{k}=\nu\binom{\nu-1}{k-1}$ terms. Integrating by parts yields
\begin{eqnarray}
&&\hspace{-0.7cm}\FockE^{(n)}_{\l,\nu,k}\nonumber\\
&=&\sum\limits_{\substack{\bj\in\N^\nu\\|\bj|=2\l}}
\frac{1}{\nu}\goint'\Tr\left[\tFockOno\FockHjo\tFockOno\mycdots\tFockOno\FockH_{j_{k}}\Pzn\mycdots\Pzn\FockHjnu\right]_\mathrm{p}\frac{\dz}{(z-\Ezn)^{\nu-k}}\nonumber\\
&=&\sum\limits_{\substack{\bj\in\N^\nu\\|\bj|=2\l}}
\frac{1}{\nu-k}\goint'\Tr\,\Pzn\left[\FockHjo\tFockOno\mycdots\FockH_{j_{k}}\tFockOno\FockH_{j_{k+1}}\Pzn\mycdots\Pzn\FockHjnu\right]_\mathrm{p}\nonumber\\
&&\hspace{7cm}\times\frac{\dz}{(z-\Ezn)^{\nu-k}}\,.\qquad\quad
\end{eqnarray}
Consequently, the residue theorem and \eqref{eqn:cor:3} lead to
\begin{eqnarray}
&&\hspace{-0.8cm}\FockE^{(n)}_{\l,\nu,k}\nonumber\\
&=&\sum\limits_{\substack{\bj\in\N^\nu\\|\bj|=2\l}}\frac{1}{(\nu-k)!}\frac{\d^{\nu-k-1}}{\dz^{\nu-k-1}}\Tr\,\Pzn\left[\FockHjo\tFockOno\mycdots\tFockOno\FockH_{j_{k+1}}\Pzn\mycdots\Pzn\FockHjnu\right]_\mathrm{p}\Big|_{z=\Ezn}\nonumber\\
&=&\sum\limits_{\substack{\bj\in\N^\nu\\|\bj|=2\l}}\frac{(-1)^{\nu-k-1}}{\nu-k}\sum\limits_{\substack{\bm\in\N_0^k\\|\bm|=\nu-k-1}}\Tr\,\Pzn\left[\FockHjo\FockOn_{m_1+1}\mycdots\FockOn_{m_k+1}\FockH_{j_{k+1}}\Pzn\mycdots\Pzn\FockH_{j_\nu}\right]_\mathrm{p}\nonumber\\
&=&\sum\limits_{\substack{\bj\in\N^\nu\\|\bj|=2\l}}\frac{(-1)^{\nu-k-1}}{\nu-k}\sum\limits_{\substack{\bm\in\N^k\\|\bm|=\nu-1}}\Tr\,\Pzn\left[\FockHjo\FockOn_{m_1}\mycdots\FockOn_{m_k}\FockH_{j_{k+1}}\Pzn\mycdots\Pzn\FockHjnu\right]_\mathrm{p}.\qquad
\end{eqnarray}
Recall that the subscript ``p" indicates the sum over all possibilities to distribute $\Pz$. In particular, this implies that all empty slots are subsequently filled up with the tuple $(\FockOn_{m_1},\dots,\FockOn_{m_k})$ without permuting the positions of the $\FockOn_{m_j}$.
Using the notation $\FockOnz=-\Pzn$, one can equivalently write
\begin{eqnarray}
&&\hspace{-0.9cm}\sum\limits_{k=1}^{\nu-1}\FockE^{(n)}_{\l,\nu,k} \nonumber\\
&=&\sum\limits_{\substack{\bj\in\N^\nu\\|\bj|=2\l}}\sum\limits_{k=1}^{\nu-1}\sum\limits_{\substack{\bm\in\N^k\\|\bm|=\nu-1}}
\frac{1}{\nu-k}\Tr\,\Pzn\left[\FockHjo\FockOn_{m_1}\mycdots\FockH_{j_k}\FockOn_{m_k}\FockH_{j_{k+1}}\FockOnz\mycdots\FockOnz\FockHjnu\right]_\mathrm{p}\nonumber\\
&=&\sum\limits_{\substack{\bj\in\N^\nu\\|\bj|=2\l}}
\sum\limits_{\substack{\bm\in\N_0^{\nu-1}\\|\bm|=\nu-1}}
\frac{1}{\kappa(\bm)}\Tr\,\Pzn\FockHjo\FockOn_{m_1}\mycdots\FockH_{j_{\nu-1}}\FockOn_{m_{\nu-1}}\FockHjnu\,,
\end{eqnarray}
where we denoted by $\kappa(\bm)-1$ the number of operators $\FockOnz$. 
Finally, in case of a non-degenerate eigenvalue $\Ezn$, some terms  vanish by parity, which leads to the simplified expressions \eqref{eqn:cor:explicit}.\qed

\appendix
\section{Excitation Hamiltonian}\label{appendix:Hamiltonian}

For $h$ and $\eH$ as in Lemma~\ref{lem:hH} and $W$ as defined in \eqref{eqn:W(x,y)}, i.e.,
$$
W(x_1,x_2) = v(x_1-x_2) - \left(v*\varphi^2\right)(x_1) - \left(v*\varphi^2\right)(x_2) + \lr{\varphi,v*\varphi^2\varphi}\,,
$$
it follows that
\begin{equation}
H_N = N\eH + \sum_{j=1}^N h_j + \lN \sum\limits_{1\leq i<j\leq N} W(x_i,x_j)\,.
\end{equation}
We denote by $\{\varphi_n\}_{n\geq0}$, $\varphi_0=\varphi$, an eigenbasis for $h$ 
and abbreviate
\begin{eqnarray}
h_{mn}&:=&\lr{\varphi_m, h \varphi_n}\,,\\
W_{mnpq}&:=&\int\dx\dy\,\overline{\varphi_m(x)}\,\overline{\varphi_n(y)}W(x,y)\varphi_p(x)\varphi_q(y)\,,
\end{eqnarray}
and $a^{\sharp}_m := a^{\sharp}(\varphi_m)$.
Since $h_{m0}=h_{0n}=0$ and $h_{mn}=0$ for $m\neq n$, it follows that
\begin{eqnarray}
H_N &=& N \eH + \sum_{m,n \geq 0} h_{mn} \ad_m a_n + \frac{\lN}{2} \sum_{m,n,p,q \geq 0} W_{mnpq} \ad_m \ad_n a_p a_q \nonumber \\
&=&  N \eH + \sum_{m > 0} h_{mm} \ad_m a_m + \frac{\lN}{2}  W_{0000} \ad_0 \ad_0 a_0 a_0 \nonumber\\
&&+ \left( \lN \sum_{m > 0} W_{000m} \ad_0 \ad_0 a_0 a_m + \hc \right) \nonumber\\
&& + \frac{\lN}{2}\left(\sum\limits_{m,n>0} W_{m0n0}\ad_m\ad_0 a_n a_0 +\hc\right)+ \left( \frac{\lN}{2} \sum_{m,n > 0} W_{mn00} \ad_m \ad_n a_0 a_0 + \hc \right) \nonumber\\
&&+ \frac{\lN}{2} \sum_{m,n > 0} \left( W_{0mn0} \ad_0 \ad_m a_n a_0 + W_{m00n}\ad_m\ad_0 a_0 a_n\right)\nonumber\\
&& + \left( \lN \sum_{m,n,p > 0} W_{mnp0} \ad_m \ad_n a_p a_0 + \hc \right)\nonumber\\
&& + \frac{\lN}{2} \sum_{m,n,p,q > 0} W_{mnpq} \ad_m \ad_n a_p a_q\,.\label{eqn:app:A}
\end{eqnarray}
As $W_{0000}= W_{000m} = W_{m0n0}=0 $, 
$W_{0mn0} = \lr{\varphi_m, K_1\varphi_n}_\fH$, $W_{mn00} = \lr{\varphi_m\otimes \varphi_n, K_2}_{\fH^2}$, and  $W_{mnp0} =\lr{\varphi_m\otimes\varphi_n,K_3\varphi_p}_{\fH^2} $,
\eqref{eqn:FockHN} follows  from \eqref{eqn:app:A} by the substitution rules \eqref{eqn:substitution:rules}.

\section{Asymptotic expansion of the wave function}
\label{appendix:wf}

\begin{theorem}\label{thm:wave:fctn:general}
Let $\fH$ be a Hilbert space, let $\chi\in\fH$ with $\norm{\chi}=1$ and define $P:=|\chi\rangle\langle\chi|$.
Assume that $P$ admits an asymptotic expansion in the small parameter $\varepsilon>0$, i.e., there exists a family of $\varepsilon$-independent operators $\left\lbrace P_\l\right\rbrace_{\l\in\N_0}$ such that, for any $a\in\N_0$,
\begin{equation}\label{eqn:assumption:thm:wf}
\Tr_\fH\Big|P-\sum\limits_{\l=0}^a\varepsilon^\l P_\l\Big| \leq C(a)\,\varepsilon^{a+1}
\end{equation}
for some constant $C(a)>0$ and sufficiently small $\varepsilon$. Moreover, assume that there exists some normalized $\chi_0\in\fH$ such that $P_0=|\chi_0\rangle\langle\chi_0|$.
Then, for a suitable choice of the phase of $\chi_0$, there exists for any $a\in\N_0$ a constant $\tilde{C}(a)>0$ such that 
\begin{equation}
\Big\|\chi-\sum\limits_{\l=0}^a\varepsilon^\l\Chi_\l\Big\| \leq \tilde{C}(a)\varepsilon^{a+1}\,,
\end{equation}
where
\begin{subequations}\label{eqn:ansatz:chil}
\begin{eqnarray}
\chi_\l&:=&\sum\limits_{j=0}^\l \alpha_j\tilde{\chi}_{\l-j} \qquad (\l\geq 1)\,,\label{eqn:ansatz:chil:1}\\
\tilde{\chi}_\l&:=&\sum\limits_{\nu=1}^\l\sum\limits_{\substack{\bj\in\N^\nu\\|\bj|=\l}} P_{j_1}\,\mycdots P_{j_\nu}\chi_0 \qquad (\l\geq 1)\,,\label{eqn:ansatz:chil:2}
\end{eqnarray}
and
\begin{equation}
\alpha_0=1\,,\qquad
\alpha_\l:=-\frac12\sum\limits_{\substack{\bj\in\N_0^4\\j_1,j_2<\l\\|\bj|=\l}} \alpha_{j_1}\alpha_{j_2}\lr{\tilde{\chi}_{j_3},\tilde{\chi}_{j_4}} \qquad (\l\geq 1)\,.\label{eqn:ansatz:chil:3}
\end{equation}
\end{subequations}
\end{theorem}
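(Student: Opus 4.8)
The plan is to use that $P$ is a rank-one projection, so that the unknown vector $\chi$ solves the fixed-point equation $P\chi=\chi$, and to solve this equation by a Neumann-type iteration around $P_0$: this produces the (unnormalised) coefficients $\tilde\chi_\l$, and the scalars $\alpha_j$ then merely implement the normalisation $\norm{\chi}=1$.

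First I would record the elementary consequences of \eqref{eqn:assumption:thm:wf}. Applying it with $a=0$ and using $\onorm{A}\le\Tr_\fH|A|$ gives $\onorm{P-P_0}\le C(0)\varepsilon$, hence $\norm{P\chi_0-\chi_0}=\norm{(P-P_0)\chi_0}\le C(0)\varepsilon$ and therefore $|\lr{\chi,\chi_0}|=\norm{P\chi_0}\ge 1-C(0)\varepsilon>0$ for $\varepsilon$ small. After multiplying $\chi$ by a unimodular constant (this is the ``suitable choice of the phase of $\chi_0$''), we may thus assume that $c_0:=\lr{\chi_0,\chi}\in(0,1]$. Since $P=|\chi\rangle\langle\chi|$ we have $P\chi=\chi$; writing $\delta:=P-P_0$ and $P_0\chi=\lr{\chi_0,\chi}\chi_0=c_0\chi_0$, this becomes $\chi=c_0\chi_0+\delta\chi$.

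Next I would iterate. Substituting the identity into itself $N$ times yields $\chi=c_0\sum_{n=0}^{N}\delta^{n}\chi_0+\delta^{N+1}\chi$, where $\norm{\delta^{N+1}\chi}\le\onorm{\delta}^{N+1}\le(C(0)\varepsilon)^{N+1}$. To expand the main term, use \eqref{eqn:assumption:thm:wf} with $a=N$ in the form $\delta=\sum_{m=1}^{N}\varepsilon^{m}P_m+\rho_N$ with $\onorm{\rho_N}\le C(N)\varepsilon^{N+1}$. Expanding each $\delta^{n}$ multilinearly, every contribution containing at least one factor $\rho_N$, as well as every pure term $\varepsilon^{m_1+\dots+m_n}P_{m_1}\cdots P_{m_n}\chi_0$ with $m_1+\dots+m_n>N$, has norm $\mathcal{O}(\varepsilon^{N+1})$ with a constant depending only on $N$, on $C(N)$ and on $\max_{1\le m\le N}\onorm{P_m}$, and there are boundedly many (in terms of $N$) such terms. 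Collecting the surviving terms by the total power of $\varepsilon$ gives exactly $\sum_{n=0}^{N}\delta^{n}\chi_0=\sum_{k=0}^{N}\varepsilon^{k}\tilde\chi_k+\mathcal{O}(\varepsilon^{N+1})$ with $\tilde\chi_0=\chi_0$ and $\tilde\chi_k$ as in \eqref{eqn:ansatz:chil:2}, hence $\chi=c_0\sum_{k=0}^{N}\varepsilon^{k}\tilde\chi_k+\mathcal{O}(\varepsilon^{N+1})$. It then remains to expand the scalar $c_0$: taking norms and discarding cross terms of order $\varepsilon^{N+1}$ gives $c_0^{-2}=\big\|\sum_{k=0}^{N}\varepsilon^{k}\tilde\chi_k\big\|^{2}+\mathcal{O}(\varepsilon^{N+1})$, whose leading part is a polynomial $p(\varepsilon)=1+\sum_{j\ge1}\beta_j\varepsilon^{j}$ with $\beta_j=\sum_{k+l=j}\lr{\tilde\chi_k,\tilde\chi_l}$ and $p(0)=1$; thus $p^{-1/2}$ is smooth near $\varepsilon=0$ and $c_0=\sum_{j=0}^{N}\alpha_j\varepsilon^{j}+\mathcal{O}(\varepsilon^{N+1})$, and matching powers of $\varepsilon$ in $\big(\sum_{j}\alpha_j\varepsilon^{j}\big)^{2}p(\varepsilon)=1$ shows that the Taylor coefficients $\alpha_j$ obey precisely the recursion \eqref{eqn:ansatz:chil:3}. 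Multiplying the two finite expansions and once more absorbing the terms of degree $>N$ into the error yields $\chi=\sum_{\l=0}^{N}\varepsilon^{\l}\sum_{j=0}^{\l}\alpha_j\tilde\chi_{\l-j}+\mathcal{O}(\varepsilon^{N+1})=\sum_{\l=0}^{N}\varepsilon^{\l}\chi_\l+\mathcal{O}(\varepsilon^{N+1})$, which is the claim with $a=N$.

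\textbf{Main obstacle.} There is no genuine difficulty beyond bookkeeping. The two points requiring a little care are: (i) checking that every discarded term above is $\mathcal{O}(\varepsilon^{a+1})$ with a constant depending only on $a$, on $C(0),\dots,C(a)$ and on $\onorm{P_1},\dots,\onorm{P_a}$, which is why the resulting $\tilde C(a)$ is not uniform in $a$; and (ii) the short formal identity verifying that \eqref{eqn:ansatz:chil:3} reproduces the Taylor coefficients of $\big(1+\sum_{j\ge1}\beta_j\varepsilon^{j}\big)^{-1/2}$. The only structural input is that $P$ and $P_0$ are rank-one projections; no property of the $P_\l$ for $\l\ge1$ is used beyond boundedness.
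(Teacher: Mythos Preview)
Your argument is correct and takes a genuinely different route from the paper's proof. The paper proceeds indirectly through the projectors: it first shows the algebraic identity $P_\ell=\sum_{j=0}^\ell P_jP_{\ell-j}$ (from $P^2=P$), then proves by induction that the operators $P_\ell^{\mathrm{wf}}:=\sum_{k=0}^\ell|\chi_k\rangle\langle\chi_{\ell-k}|$ built from the ansatz coincide \emph{exactly} with the $P_\ell$, and finally converts the trace-norm expansion of $P$ into a norm estimate for $\chi$ via the inequality $\|f-g\|\le\tfrac{1}{\sqrt2}\Tr_\fH\big||f\rangle\langle f|-|g\rangle\langle g|\big|$ for normalized $f,g$ with aligned phases. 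Your Neumann iteration $\chi=c_0\sum_{n\ge0}\delta^n\chi_0+\delta^{N+1}\chi$ bypasses all of this: it produces the $\tilde\chi_k$ directly as the graded pieces of the geometric series, and the $\alpha_j$ emerge transparently as the Taylor coefficients of the normalizing factor $c_0=p(\varepsilon)^{-1/2}$. Your approach is more elementary and makes the origin of the formulas \eqref{eqn:ansatz:chil:1}--\eqref{eqn:ansatz:chil:3} completely explicit, and it uses only the operator-norm consequence $\onorm{\delta}\le C(0)\varepsilon$ of \eqref{eqn:assumption:thm:wf} in the main step; the paper's route, on the other hand, yields the additional exact identity $P_\ell^{\mathrm{wf}}=P_\ell$, which is not needed for the theorem but is of independent interest. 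One small wording point: the phase adjustment is literally on $\chi_0$ in the statement, not on $\chi$, but as you implicitly note, replacing $\chi_0$ by $e^{-i\theta}\chi_0$ multiplies every $\chi_\ell$ by $e^{-i\theta}$ and is therefore equivalent to your choice.
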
 

Before proving Theorem \ref{thm:wave:fctn:general}, let us first formally derive \eqref{eqn:ansatz:chil}. 
Inserting \eqref{eqn:assumption:thm:wf} and the ansatz 
\begin{equation}\label{eqn:chil:formal:ansatz}
\chi=\sum\limits_{\l\geq 0}\varepsilon^\l\chi_\l
\end{equation}
into the equation $P\chi=\chi$
yields  formally
\begin{equation}
\sum\limits_{\l= 0}^\infty\sum\limits_{k=0}^\l\varepsilon^\l P_k\chi_{\l-k} = \sum\limits_{\l= 0}^\infty\varepsilon^\l\chi_\l\,,
\end{equation}
hence
\begin{equation}
\chi_\l -P_0\chi_\l = \sum\limits_{k=1}^\l P_k\chi_{\l-k}
\end{equation}
and consequently
\begin{equation}\label{eqn:chil:formal:derivation}
\chi_\l = \sum\limits_{k=1}^\l P_k\chi_{\l-k} + \alpha_\l\chi_0
\end{equation}
for any $\l\geq0$ and $\alpha_\l\in\C$, $\alpha_0=1$.  By induction over $\l\in\N_0$, one easily verifies that  $\chi_\l$ can equivalently be written as \eqref{eqn:ansatz:chil:1} with $\tilde{\chi}_\l$ given by \eqref{eqn:ansatz:chil:2}, without any further restriction on the parameters $\alpha_\l$. 
It remains to derive the formula \eqref{eqn:ansatz:chil:3} for the (so far free) parameters $\alpha_\l$. To this end, we observe that formally
\begin{equation}
P= |\chi\rangle\langle\chi| = \sum\limits_{\l=0}^\infty\varepsilon^\l\sum\limits_{k=0}^\l|\chi_k\rangle\langle\chi_{\l-k}|\,,
\end{equation}
which motivates the definition
\begin{equation}\label{eqn:P:l:wf}
P_\l^\wf:=\sum\limits_{k=0}^\l|\chi_k\rangle\langle\chi_{\l-k}|\,.
\end{equation}
By \eqref{eqn:ansatz:chil:1}, this can equivalently be expressed as
\begin{equation}
P_\l^\wf=\sum\limits_{k=0}^\l\sum\limits_{i=0}^k\sum\limits_{m=0}^{\l-k} \alpha_i\,\overline{\alpha}_m |\tilde{\chi}_{k-i}\rangle\langle\tilde{\chi}_{\l-k-m}|=\sum\limits_{\substack{\bj\in\N_0^4\\|\bj|=\l}} \alpha_{j_1}\,\overline{\alpha}_{j_2}|\tilde{\chi}_{j_3}\rangle\langle\tilde{\chi}_{j_4}|\,.
\end{equation}
Formally, it is clear that $P_\l^\wf$ are the coefficients in the expansion of $P$, and our goal will be to rigorously establish the equality $P_\l^\wf=P_\l$. 
By \eqref{eqn:assumption:thm:wf} and since $\Tr_\fH P_0=1$, it follows that
\begin{equation}
1=\Tr_\fH P = 1+\sum\limits_{\l=1}^a \varepsilon^\l \Tr_\fH P_\l + \mathcal{O}(\varepsilon^{a+1})\,,
\end{equation}
hence $\Tr_\fH P_\l =0$ for any $\l\geq 1$.
Therefore, we choose the free parameters $\alpha_\l$ such that $\Tr_\fH P_\l^\wf=0$ for any $\l\geq 0$, which implies that 
\begin{equation}
\alpha_\l+\overline{\alpha}_\l =- \sum\limits_{\substack{\bj\in\N_0^4\\j_1,j_2<\l\\|\bj|=\l}} \alpha_{j_1}\,\overline{\alpha}_{j_2}\lr{\tilde{\chi}_{j_3},\tilde{\chi}_{j_4}}=0\,,
\end{equation}
and choosing $\alpha_\l$ real results in \eqref{eqn:ansatz:chil:3}.
Next, we prove an auxiliary lemma:

\begin{lem}\label{lem:thm:wf}
Under the assumptions of Theorem \ref{thm:wave:fctn:general}, it holds for any $\l\in\N_0$ that 
\begin{equation}\label{eqn:lem:thm:wf}
P_\l=\sum\limits_{j=0}^\l P_j P_{\l-j}\,.
\end{equation}
\end{lem}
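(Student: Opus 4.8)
The plan is to use that $P=|\chi\rangle\langle\chi|$ with $\norm{\chi}=1$ is a rank-one orthogonal projection, so $P^2=P$, and to transfer this identity to the expansion coefficients by uniqueness of the asymptotic expansion. First I would record two preliminaries. Each $P_\l$ is trace class: $P$ itself is trace class, and \eqref{eqn:assumption:thm:wf} applied with $a=0,1,2,\dots$ shows inductively that $P_0=P-(P-P_0)$, $\varepsilon P_1=(P-P_0)-(P-P_0-\varepsilon P_1)$, and in general $\varepsilon^\l P_\l$, are differences of trace-class operators. I would also fix the elementary bounds $\norm{AB}_1\le\onorm{A}\norm{B}_1$ and $\norm{AB}_1\le\norm{A}_1\onorm{B}$ for the trace norm $\norm{\cdot}_1:=\Tr_\fH|\cdot|$, together with $\onorm{\cdot}\le\norm{\cdot}_1$.

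Next, fix $a\in\N_0$ and set $R_a:=P-\sum_{\l=0}^a\varepsilon^\l P_\l$, so that $\norm{R_a}_1\le C(a)\varepsilon^{a+1}$ by hypothesis and $\onorm{P-R_a}\le 1+C(a)\varepsilon^{a+1}\le 2$ for $\varepsilon$ small. Plugging this into $P=P^2$ and expanding, $P=(P-R_a)^2+(P-R_a)R_a+R_a(P-R_a)+R_a^2$, the last three terms have trace norm $\mathcal{O}(\varepsilon^{a+1})$ by the Hölder-type bounds above. For the leading term, $(P-R_a)^2=\big(\sum_{\l=0}^a\varepsilon^\l P_\l\big)^2=\sum_{n=0}^a\varepsilon^n\sum_{j=0}^nP_jP_{n-j}+\sum_{n=a+1}^{2a}\varepsilon^n\sum_{\substack{j+k=n\\0\le j,k\le a}}P_jP_k$, and the tail sum is a fixed finite sum of trace-class operators multiplied by powers $\varepsilon^n$ with $n\ge a+1$, hence also $\mathcal{O}(\varepsilon^{a+1})$ in trace norm — this is exactly where I need the $P_\l$ to be individually trace class. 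Combining this with $P=\sum_{n=0}^a\varepsilon^n P_n+R_a$, I get $\sum_{n=0}^a\varepsilon^n\big(P_n-\sum_{j=0}^nP_jP_{n-j}\big)=\mathcal{O}(\varepsilon^{a+1})$ in trace norm, for all sufficiently small $\varepsilon>0$.

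Finally I would invoke uniqueness of asymptotic expansions: since $Q_n:=P_n-\sum_{j=0}^nP_jP_{n-j}$ is independent of $\varepsilon$, the bound $\norm{\sum_{n=0}^a\varepsilon^nQ_n}_1=\mathcal{O}(\varepsilon^{a+1})$ forces $Q_0=0$ (let $\varepsilon\to0$), then $Q_1=0$ (divide by $\varepsilon$, let $\varepsilon\to0$), and so on up to $Q_a=0$; as $a\in\N_0$ is arbitrary, $P_\l=\sum_{j=0}^\l P_jP_{\l-j}$ for every $\l\in\N_0$. I do not anticipate a real obstacle here: the argument is soft, and the only points that need care are the bookkeeping of operator- versus trace-norm factors in the three cross terms and in the Cauchy product, and the fact — used above — that each $P_\l$ is trace class, so that the higher-order part of the Cauchy product is genuinely negligible.
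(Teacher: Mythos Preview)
Your proof is correct and follows essentially the same route as the paper: expand $P=P^2$ using the truncated series plus remainder, collect terms of each order, and deduce $P_\l=\sum_{j=0}^\l P_jP_{\l-j}$ from uniqueness of the asymptotic expansion. The only cosmetic difference is that the paper works in operator norm (which is immediately controlled by the assumed trace-norm bound) and therefore skips your preliminary step of showing each $P_\l$ is trace class.
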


\begin{proof}
By assumption, it holds for any $a\in\N_0$ that
\begin{equation}
P=\sum\limits_{\l=0}^a\varepsilon^\l P_\l + \varepsilon^{a+1} R_a
\end{equation}
for some $R_a\in\cL(\fH)$ with $\onorm{R_a}\leq C(a)$. Since $P^2=P$, this implies that
\begin{equation}
\sum\limits_{\l=0}^a\varepsilon^\l P_\l + \varepsilon^{a+1} R_a=\sum\limits_{\l=0}^a\varepsilon^\l\left(\sum\limits_{m=0}^\l P_m P_{\l-m}\right) +\varepsilon^{a+1}\tilde{R}_a
\end{equation}
with
\begin{equation}
\tilde{R}_a = \sum\limits_{\l=0}^a\sum\limits_{m=0}^{\l-1}\varepsilon^m P_\l P_{m+a+1-\l} + \sum\limits_{k=0}^a\varepsilon^k\left(R_a P_k+P_k R_a\right) + \varepsilon^{a+1} R_aR_a\,.
\end{equation}
Consequently, it holds for any $a\in\N_0$ that 
\begin{equation}\label{eqn:lem:thm:wf:1}
\bigg\|\sum\limits_{\l=0}^a\varepsilon^\l\left(P_\l-\sum\limits_{m=0}^\l P_m P_{\l-m}\right)\bigg\|_\mathrm{op}\leq \varepsilon^{a+1} \onorm{R_a-\tilde{R}_a} \leq C(a)\varepsilon^{a+1}\,,
\end{equation}
and \eqref{eqn:lem:thm:wf} follows by induction over $a\in\N$. 
\end{proof}

\noindent \textbf{Proof of Theorem \ref{thm:wave:fctn:general}.}
We prove Theorem \ref{thm:wave:fctn:general} in two steps: first, we show that the operators $P_\l^\wf$ from \eqref{eqn:P:l:wf}, which are constructed from the ansatz \eqref{eqn:ansatz:chil} for the functions $\chi_\l$, equal the coefficients $P_\l$ in the expansion \eqref{eqn:assumption:thm:wf} of $P$; second, we estimate the difference between the truncated power series with coefficients $\chi_\l$ and the function $\chi$.

\begin{claim}\label{claim:Pl=Plwf}
Under the assumptions of Theorem \ref{thm:wave:fctn:general},   it holds for any $\l\in\N_0$ that
\begin{equation}\label{eqn:Pl=Plwf}
P_\l^\wf=P_\l\,.
\end{equation}
\end{claim}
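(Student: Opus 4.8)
The plan is to prove \eqref{eqn:Pl=Plwf} by strong induction on $\l\in\N_0$. The base case $\l=0$ is immediate: $P_0^\wf=|\chi_0\rangle\langle\chi_0|=P_0$ by hypothesis. For the induction step, assume $P_j^\wf=P_j$ for all $j<\l$; I must show $P_\l^\wf=P_\l$. The key tool is Lemma~\ref{lem:thm:wf}, which gives $P_\l=\sum_{j=0}^\l P_j P_{\l-j}$, and the analogous combinatorial identity that $P_\l^\wf$ satisfies by construction. Indeed, from $\chi_\l=\sum_{k=1}^\l P_k\chi_{\l-k}+\alpha_\l\chi_0$ (equivalently \eqref{eqn:chil:formal:derivation}, which one verifies directly from the ansatz \eqref{eqn:ansatz:chil}), one computes, using $P_\l^\wf=\sum_{k=0}^\l|\chi_k\rangle\langle\chi_{\l-k}|$, that $P_\l^\wf$ decomposes into a part built from $P_j^\wf$ with $j<\l$ — which by the induction hypothesis equals the corresponding $P_j$ — together with the two "boundary" terms $|\chi_\l\rangle\langle\chi_0|+|\chi_0\rangle\langle\chi_\l|$ minus the overcounted $|\chi_0\rangle\langle\chi_0|$ contributions, controlled by $\alpha_\l$.

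Concretely, I would substitute $\chi_\l = \sum_{k=1}^{\l}P_k\chi_{\l-k}+\alpha_\l\chi_0$ into $|\chi_\l\rangle\langle\chi_0| + |\chi_0\rangle\langle\chi_\l| - |\chi_0\rangle\langle\chi_0|\,(\text{correction})$ and reorganize. Using the induction hypothesis $P_k=P_k^\wf=\sum_{i=0}^k|\chi_i\rangle\langle\chi_{k-i}|$ for $k<\l$, and the already-known $P_0=|\chi_0\rangle\langle\chi_0|$, one rewrites $\sum_{k=1}^{\l}P_k|\chi_{\l-k}\rangle\langle\chi_0|$ so that the terms combine into $\sum_{j=0}^{\l}P_jP_{\l-j}^\wf$ up to the $\alpha_\l$-dependent scalar terms; by the defining property $\Tr P_\l^\wf=0$ (which is exactly how $\alpha_\l$ in \eqref{eqn:ansatz:chil:3} was chosen) these scalar terms are pinned down. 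Comparing with Lemma~\ref{lem:thm:wf}, $P_\l = \sum_{j=0}^\l P_jP_{\l-j}$, and noting $P_{\l-j}=P_{\l-j}^\wf$ for $j\geq1$ by induction while the $j=0$ term is handled separately using $P_0\chi_\l$ and the relation $\chi_\l-P_0\chi_\l=\sum_{k=1}^\l P_k\chi_{\l-k}$, one finds that $P_\l^\wf$ and $P_\l$ satisfy the same closed recursion with the same normalization, forcing $P_\l^\wf=P_\l$.

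The main obstacle I anticipate is the bookkeeping in matching the two recursions: $P_\l=\sum_{j=0}^\l P_jP_{\l-j}$ has a "symmetric" self-convolution structure, whereas the recursion \eqref{eqn:chil:formal:derivation} for $\chi_\l$ is "one-sided" ($\chi_\l$ built from $P_k\chi_{\l-k}$), so the identification $P_\l^\wf=P_\l$ is not a term-by-term match but requires using $P_0^2=P_0$, the induction hypothesis, and the trace-zero condition $\Tr P_\l=\Tr P_\l^\wf=0$ simultaneously. A clean way to organize this is to first establish that $P_\l^\wf$ satisfies $P_\l^\wf-P_0P_\l^\wf-P_\l^\wf P_0 = \sum_{j=1}^{\l-1}P_j^\wf P_{\l-j}^\wf - 2\,\mathrm{Re}(\alpha_\l)\,P_0 + (\text{lower terms in }\alpha)$ purely formally from the ansatz, then observe that $P_\l$ satisfies the identical relation with $P_j^\wf$ replaced by $P_j$ (by Lemma~\ref{lem:thm:wf} together with $P_0P_\l+P_\l P_0 = P_\l - \sum_{j=1}^{\l-1}P_jP_{\l-j}$, itself a rearrangement of Lemma~\ref{lem:thm:wf}), invoke the induction hypothesis to equate the right-hand sides, and finally use that the map $Q\mapsto Q - P_0Q - QP_0$ on the relevant finite-dimensional (rank $\leq$ something) subspace, together with the trace constraint, has the projections $P_0^{\perp}\,\cdot\,P_0^\perp$, $P_0\,\cdot\,P_0$ components determined and the cross terms fixed — so $P_\l^\wf-P_\l$, which lies in the kernel and has trace zero and the correct $P_0$-diagonal part, must vanish.
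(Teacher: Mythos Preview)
Your approach is essentially the paper's: induction, substitute the recursion $\chi_\l=\sum_{k=1}^\l P_k\chi_{\l-k}+\alpha_\l\chi_0$, invoke Lemma~\ref{lem:thm:wf} and the induction hypothesis to reduce $P_\l^\wf-P_\l$ to a residual term, and use the trace condition $\Tr P_\l^\wf=\Tr P_\l=0$ to pin down $\alpha_\l$. Your middle paragraph, where you substitute the recursion only into the $|\chi_\l\rangle\langle\chi_0|$ side, is exactly what the paper does and would work.

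However, the ``clean way'' you propose in the last paragraph has a genuine gap. Showing that $D:=P_\l^\wf-P_\l$ lies in the kernel of $Q\mapsto Q-P_0Q-QP_0$ only tells you that $D$ is block off-diagonal in the decomposition $P_0\oplus(1-P_0)$. Since every off-diagonal self-adjoint operator automatically has trace zero, the trace constraint adds nothing, and there are plenty of nonzero self-adjoint off-diagonal operators (e.g.\ $|\chi_0\rangle\langle\psi|+|\psi\rangle\langle\chi_0|$ with $\psi\perp\chi_0$). What actually closes the argument is a stronger, \emph{asymmetric} conclusion: the concrete substitution shows
\[
P_\l^\wf-P_\l=(1-P_0)\,X\,P_0
\]
for some operator $X$ (a sum of rank-one terms ending in $\langle\chi_0|$), i.e.\ only the $(1-P_0,P_0)$ block survives. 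Then $P_0D=0$ and $D(1-P_0)=0$; taking the adjoint of the first (both $P_\l$ and $P_\l^\wf$ are self-adjoint) gives $DP_0=0$, and adding yields $D=0$. This is precisely what the paper does. Your one-sided substitution plan would deliver this asymmetry; the symmetrized kernel-of-a-map formulation loses it.
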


\begin{proof}
We prove \eqref{eqn:Pl=Plwf} by induction over $\l\in\N_0$. By Lemma \ref{lem:thm:wf} and since $\Tr_\fH P_1=0$, we conclude that $\Tr_\fH P_0P_1=0$ and consequently $\alpha_1=0$. Hence, $\chi_1=\tilde{\chi}_1 = P_1\chi_0$, and \eqref{eqn:P:l:wf} and Lemma \ref{lem:thm:wf} imply that $P_1^\wf=P_1$.
Now assume \eqref{eqn:Pl=Plwf}   for some $\l\in\N$. Then, by \eqref{eqn:chil:formal:derivation},
\begin{eqnarray}
P_{\l+1}^\wf
&=&\sum\limits_{k=0}^{\l}|\chi_k\rangle\langle\chi_{\l+1-k} |+ |\chi_{\l+1}\rangle\langle\chi_0|\nonumber\\
&=&\sum\limits_{j=1}^{\l+1}\sum\limits_{k=0}^{\l+1-j}|\chi_k\rangle\langle\chi_{\l+1-k-j}|P_j + \sum\limits_{k=1}^\l(\alpha_{\l+1-k}+P_{\l+1-k})|\chi_k\rangle\langle\chi_0|\nonumber\\
&& + 2\alpha_{\l+1}P_0 + P_{\l+1}P_0\nonumber\\
&=&\sum\limits_{j=1}^{\l+1}P_{\l+1-j}^\wf P_j+ \sum\limits_{k=1}^\l(P_{\l+1-k}+\alpha_{\l+1-k})|\chi_k\rangle\langle\chi_0|\nonumber\\
&& + 2\alpha_{\l+1}P_0+P_{\l+1}P_0\,.
\end{eqnarray}
By induction hypothesis and Lemma \ref{lem:thm:wf}, 
\begin{equation}
\sum\limits_{j=1}^{\l+1}P_{\l+1-j}^\wf P_j + P_{\l+1}P_0 
= \sum\limits_{j=0}^{\l+1}P_{\l+1-j} P_j 
= P_{\l+1}\,,
\end{equation}
hence
\begin{equation}\label{eqn:Pl=Plwf:1}
P_{\l+1}^\wf = P_{\l+1}+\sum\limits_{k=1}^\l(P_{\l+1-k}+\alpha_{\l+1-k})|\chi_k\rangle\langle\chi_0| + 2\alpha_{\l+1}P_0\,.
\end{equation}
By construction, $\Tr_{\fH}P_\l^\wf=\Tr_{\fH}P_\l=0$ for any $\l\geq 1$. Consequently, taking the trace of \eqref{eqn:Pl=Plwf:1} yields
\begin{equation}
\alpha_{\l+1}=-\frac12\sum\limits_{k=1}^\l\lr{\chi_0,(P_{\l+1-k}+\alpha_{\l+1-k})\chi_k},
\end{equation}
which implies that
\begin{equation}
P_{\l+1}^\wf = P_{\l+1}+(1-P_0)\sum\limits_{k=1}^\l(P_{\l+1-k}+\alpha_{\l+1-k})|\chi_k\rangle\langle\chi_0|\,.
\end{equation}
Finally,
\begin{equation}\label{eqn:Pl=Plwf:2}
P_0P_{\l+1}^\wf = P_0 P_{\l+1}\,,\qquad P_{\l+1}^\wf (1-P_0)=P_{\l+1}(1-P_0)
\end{equation}
and, since both $P_{\l+1}$ and $P_{\l+1}^\wf$ are self-adjoint, the first equality  implies that $P_{\l+1}^\wf P_0 = P_{\l+1}P_0$. Adding this to the second equality in \eqref{eqn:Pl=Plwf:2} concludes the proof of Claim~\ref{claim:Pl=Plwf}.
\end{proof}

\begin{claim}
Under the assumptions of Theorem \ref{thm:wave:fctn:general}, it holds for any $a\in\N_0$ that
\begin{equation}
\Big\|\chi-\sum\limits_{\l=0}^a\varepsilon^\l \chi_\l\Big\|_\fH \leq \tilde{C}(a) \,\varepsilon^{a+1}\,.
\end{equation}
\end{claim}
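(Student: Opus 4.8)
The plan is to read off the approximant $\psi_a:=\sum_{\l=0}^a\varepsilon^\l\chi_\l$ directly from the vector $P\chi_0$, rather than from a squared–norm estimate (the latter would only produce the rate $\varepsilon^{(a+1)/2}$). First I would record the operator–norm consequence of \eqref{eqn:assumption:thm:wf}, namely $P=\sum_{\l=0}^a\varepsilon^\l P_\l+\varepsilon^{a+1}R_a$ with $\onorm{R_a}\le C(a)$ (since $\onorm{\,\cdot\,}\le\|\cdot\|_{\mathrm{Tr}}$), and that each $P_\l$ is bounded — comparing the bounds for $a$ and $a-1$ gives $\onorm{P_\l}\varepsilon^\l\le C(\l-1)\varepsilon^\l+C(\l)\varepsilon^{\l+1}$ — so the vectors $\chi_\l$ from \eqref{eqn:ansatz:chil} are well defined with finite norm and $\|\psi_a\|\to\|\chi_0\|=1$ as $\varepsilon\to0$. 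Applying the operator identity to $\chi_0$ and inserting $P_\l=P_\l^{\mathrm{wf}}=\sum_{i=0}^\l|\chi_i\rangle\langle\chi_{\l-i}|$ from Claim~\ref{claim:Pl=Plwf}, then re-indexing and absorbing into the remainder all cross terms of order $\ge a+1$, yields
\begin{equation}\label{eqn:wf:plan:1}
P\chi_0\;=\;C_a\,\psi_a+\rho_a\,,\qquad C_a:=\sum_{m=0}^a\varepsilon^m c_m\,,\qquad c_m:=\langle\chi_m,\chi_0\rangle\,,\qquad \|\rho_a\|\le K(a)\,\varepsilon^{a+1}\,,
\end{equation}
with $c_0=1$.

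Next I would exploit that $P=|\chi\rangle\langle\chi|$, so $P\chi_0=\chi\,\langle\chi,\chi_0\rangle$; setting $\mu:=\langle\chi_0,\chi\rangle$ one has $|\mu|^2=\langle\chi_0,P\chi_0\rangle=\langle\chi_0,P_0\chi_0\rangle+\mathcal O(\varepsilon)=1+\mathcal O(\varepsilon)>0$ for small $\varepsilon$, so after an appropriate choice of the phase of $\chi_0$ (equivalently of $\chi$) we may assume $\mu=|\mu|>0$, and then $P\chi_0=\mu\chi$. The second ingredient is that all $c_m$ are real: the reversal $(j_1,\dots,j_\nu)\mapsto(j_\nu,\dots,j_1)$ is a bijection of $\{\bj\in\N^\nu:|\bj|=k\}$ and each $P_j$ is self-adjoint, so $\overline{\langle\chi_0,\tilde\chi_k\rangle}=\langle\chi_0,\tilde\chi_k\rangle$; since the $\alpha_j$ are real by construction, $c_m=\sum_{j=0}^m\alpha_j\,\overline{\langle\chi_0,\tilde\chi_{m-j}\rangle}\in\R$. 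Hence $C_a\in\R$ with $C_a\to1$, so $C_a\ge\tfrac12$ for small $\varepsilon$. Taking $\langle\chi_0,\cdot\rangle$ of \eqref{eqn:wf:plan:1} and using $\langle\chi_0,\psi_a\rangle=\overline{C_a}=C_a$ gives $\mu^2=\langle\chi_0,P\chi_0\rangle=C_a^2+\mathcal O(\varepsilon^{a+1})$, whence $\mu=C_a+\mathcal O(\varepsilon^{a+1})$ by the Lipschitz property of $\sqrt{\,\cdot\,}$ near $C_a\ge\tfrac12$. Dividing \eqref{eqn:wf:plan:1} by $\mu$, and using $\|\psi_a\|\le2$ and $\mu\ge\tfrac12$ for small $\varepsilon$,
\begin{equation*}
\Big\|\chi-\psi_a\Big\|\;\le\;\Big|\frac{C_a}{\mu}-1\Big|\,\|\psi_a\|+\frac{\|\rho_a\|}{\mu}\;\le\;4\,|C_a-\mu|+2K(a)\,\varepsilon^{a+1}\;\le\;\tilde C(a)\,\varepsilon^{a+1}\,,
\end{equation*}
which is the assertion.

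The genuinely delicate point is the reality of the $c_m$, i.e.\ that $C_a$ carries no phase. Without it, combining $P\chi_0=\mu\chi$ with \eqref{eqn:wf:plan:1} would only pin down $\chi$ up to the phase $\arg C_a=\mathcal O(\varepsilon)$, which is far coarser than the claimed $\varepsilon^{a+1}$; this is exactly where the specific real choice of the coefficients $\alpha_\l$ in \eqref{eqn:ansatz:chil:3} and the self-adjointness of the $P_j$ are used. The only other subtlety is the desire for a full power $\varepsilon^{a+1}$ rather than $\varepsilon^{(a+1)/2}$, which is why I would argue with the vector identity \eqref{eqn:wf:plan:1} and the scalar estimate $\mu=C_a+\mathcal O(\varepsilon^{a+1})$ directly, instead of expanding $\|\chi-\psi_a\|^2=2-2\operatorname{Re}\langle\chi,\psi_a\rangle+\mathcal O(\varepsilon^{a+1})$.
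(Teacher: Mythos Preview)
Your argument is correct and takes a genuinely different route from the paper. The paper does not evaluate $P\chi_0$ directly; instead it invokes the standard rank-one bound $\|f-g\|\le\tfrac{1}{\sqrt 2}\Tr\big||f\rangle\langle f|-|g\rangle\langle g|\big|$ (for a suitable relative phase), applied with $f=\chi$ and $g=n_{\varepsilon,a}\psi_a$ the \emph{normalized} truncation. The trace norm is then estimated by the hypothesis together with Claim~\ref{claim:Pl=Plwf}, and the normalization defect $|1-n_{\varepsilon,a}|$ is controlled via $\Tr P_\l=0$, giving $n_{\varepsilon,a}^{-2}=1+\mathcal O(\varepsilon^{a+1})$.

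Your approach stays at the vector level: from $P\chi_0=\mu\chi$ and the identity $P\chi_0=C_a\psi_a+\mathcal O(\varepsilon^{a+1})$ (obtained from Claim~\ref{claim:Pl=Plwf} by re-indexing), the problem reduces to the scalar estimate $\mu=C_a+\mathcal O(\varepsilon^{a+1})$, which follows from $\mu^2=C_a^2+\mathcal O(\varepsilon^{a+1})$ once $C_a$ is real. The paper's proof avoids any such reality computation because it compares projectors rather than vectors; conversely, your proof is self-contained (it does not rely on the trace-norm/vector-norm inequality for rank-one projections) and makes explicit where the self-adjointness of the $P_j$ and the real choice of the $\alpha_\l$ are used. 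Both arguments reach the full order $\varepsilon^{a+1}$ without the square-root loss, and both rest on Claim~\ref{claim:Pl=Plwf} as the essential input.
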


\begin{proof}
By \eqref{eqn:assumption:thm:wf},  all operators $P_\l$ are bounded uniformly in $\varepsilon$. Recall that for any normalized $f,g\in\fH$, it holds that $\norm{f-g}_\fH\leq 1/\sqrt{2}\,\Tr_\fH\big||f\rangle\langle f|-|g\rangle\langle g|\big|$ for a suitably chosen relative phase. By construction, the phase of all $\chi_\l$ is determined by the phase of $\chi_0$.
Hence, setting $n_{\varepsilon,a}:=\norm{\sum_{\l=0}^a\varepsilon^\l \chi_\l}^{-1}$, Claim \ref{claim:Pl=Plwf} implies for a suitable choice of the phase of $\chi_0$ that 
\begin{eqnarray}
\Big\|\chi-\sum\limits_{\l=0}^a\varepsilon^\l\chi_\l\Big\|_\fH
&\leq&\frac{1}{\sqrt{2}}\,\Tr_\fH\Big|P-n_{\varepsilon,a}^2\sum\limits_{\l=0}^a \sum\limits_{k=0}^a \varepsilon^{\l+k}|\chi_\l\rangle\langle\chi_k|\Big|+\left|\frac{1-n_{\varepsilon,a}}{n_{\varepsilon,a}}\right|\nonumber\\
&\leq&\frac{1}{\sqrt{2}}\,\Tr_\fH\Big|P-\sum\limits_{\l=0}^a\varepsilon^\l P_\l^\wf\Big|
+\frac{\varepsilon^{a+1}}{\sqrt{2}}\sum\limits_{\l=0}^a\sum\limits_{j=1}^\l\norm{\chi_\l}_\fH\norm{\chi_{a+j-\l}|}_\fH\nonumber\\
&&+\left|\frac{1-n_{\varepsilon,a}^2}{\sqrt{2}\,n_{\varepsilon,a}^2}\right|+\left|\frac{1-n_{\varepsilon,a}}{n_{\varepsilon,a}}\right|\nonumber\\
&\leq&  \tilde{C}(a)\,\varepsilon^{a+1}  
\end{eqnarray}
by \eqref{eqn:assumption:thm:wf} and \eqref{eqn:ansatz:chil}. Besides,  we used that
\begin{equation}
n_{\varepsilon,a}^{-2} 
= \Tr_\fH \Big|\sum\limits_{\l=0}^a\varepsilon^\l\chi_\l\Big\rangle\Big\langle\sum\limits_{k=0}^a\varepsilon^k\chi_k\Big|
=\Tr_\fH\Big(\sum\limits_{\l=0}^a\varepsilon^\l P_\l\Big) + \varepsilon^{a+1}R_{\varepsilon,a}
=1+\varepsilon^{a+1}R_{\varepsilon,a}
\end{equation}
with 
\begin{equation}
R_{\varepsilon,a}=\sum\limits_{\l=0}^a\sum\limits_{j=1}^\l\varepsilon^{j-1}\Tr_\fH|\chi_\l\rangle\langle\chi_{a+j-\l}|\,,\qquad
|R_{\varepsilon,a}|\leq C(a)
\end{equation}
for some constant $C(a)$, which implies that $\left|\frac{1-n_{\varepsilon,a}^2}{n_{\varepsilon,a}^2}\right|\leq C(a)\varepsilon^{a+1}$ as well as $\left|\frac{1-n_{\varepsilon,a}}{n_{\varepsilon,a}}\right|\leq C(a)\varepsilon^{a+1}$.
\end{proof}

\noindent\textbf{Financial support.} 
L.B.\ gratefully acknowledges funding from the Eu\-ro\-pean Union’s Horizon 2020 research and innovation programme under the Marie Sk{\textl}o\-dows\-ka-Curie Grant Agreement No.~754411. 
R.S.\ was supported by the European Research Council (ERC) under the European Union’s Horizon 2020 research and innovation programme (Grant Agreement No.~694227).\\

\renewcommand{\bibname}{References}
\bibliographystyle{abbrv}
    \bibliography{bib_file}
\end{document}